\newcommand{\pow}{\mathcal{P}}
\newcommand{\dist}{\mathcal{D}}
\newcommand{\place}{\underline{\phantom{n}}\,} 
\newcommand{\Sets}{\mathbf{Sets}}
\newcommand{\Ord}{\mathbf{Ord}}
\newcommand{\Meas}{\mathbf{Meas}}
\newcommand{\op}{\mathrm{op}}
\newcommand{\uniquefp}[1]{(\!|#1|\!)}
\newcommand{\bigbullet}{\raisebox{.8mm}{\;\circle*{6}}\!\!}
\newcommand{\middlebullet}{\raisebox{1mm}{\,\circle*{4}}\!\!}
\newcommand{\Reach}{\mathrm{Reach}}
\newcommand{\defarrow}{\stackrel{\text{def.}}{\Leftrightarrow}}
\newcommand{\id}[0]{\mathrm{id}}
\newcommand{\sem}[1]{\llbracket#1\rrbracket}
\newcommand\Acc{\mathsf{Acc}}
\newcommand{\Tree}[0]{\mathsf{Tree}}
\newcommand{\TreeSigma}[0]{\mathsf{Tree}_\Sigma}
\newcommand{\Treefin}[0]{\mathsf{Tree}_{\mathrm{fin}}}
\newcommand{\empseq}[0]{\langle\rangle}
\newcommand{\branch}{\mathsf{branch}}
\newcommand{\RunTree}{\mathrm{RunTree}}
\newcommand{\AccTree}{\mathrm{AccTree}}
\newcommand{\myparagraph}[1]{\noindent{\bf #1.\ }}
\newcommand{\supp}{\mathrm{supp}}
\newcommand{\Setsto}[2]{{#2}^{#1}}
\newcommand{\treeprefix}{\preceq}
\newcommand{\treepostfix}{\succeq}
\newcommand{\Ninf}{\mathbb{N}_\infty}
\newcommand{\tpg}{\mathsf{g}}
\newcommand{\pts}{\mathsf{p}}
\newcommand{\btr}{\mathsf{b}}
\newcommand{\tree}{\mathsf{t}}
\newcommand{\nonnegreals}{[0,\infty)}
\newcommand{\nonnegrealsinf}{[0,\infty]}
\newcommand{\unitreals}{[0,1]}
\newcommand{\Ftpg}{F_\tpg}
\newcommand{\Omegatpg}{\{0,1\}}
\newcommand{\sigmatpg}{{\sigma_\tpg}}
\newcommand{\rtpgz}{r_{\tpg,\mathfrak{z}}}
\newcommand{\qtpgz}{q_{\tpg,\mathfrak{z}}}
\newcommand{\Rtpgz}{\Ord_{\leq\mathfrak{z}}}
\newcommand{\Fpts}{F_\pts}
\newcommand{\Omegapts}{\unitreals}
\newcommand{\sigmapts}{{\sigma_\pts}}
\newcommand{\rpts}{r_\pts}
\newcommand{\rptspe}{r'_{\pts,\varepsilon}}
\newcommand{\rptspa}{r'_{\pts\mathsf{m},\alpha}}
\newcommand{\rptswg}{r_{\mathsf{nc},\gamma}}
\newcommand{\qpts}{q_\pts}
\newcommand{\qbtr}{q_\btr}
\newcommand{\qptsp}{q'_{\pts}}
\newcommand{\qptsw}{q_{\mathsf{nc}}}
\newcommand{\ppts}{p_\pts}
\newcommand{\pptsp}{p'_\pts}
\newcommand{\Rpts}{\dist\Ninf}
\newcommand{\Rptsp}{{\nonnegrealsinf}}
\newcommand{\Rptspm}{{[\alpha\delta,\infty]}}
\newcommand{\Rptsw}{{\unitreals}}
\newcommand{\botRtpgz}{\bot_{\Rtpgz}}
\newcommand{\botRpts}{\bot_{\Rpts}}
\newcommand{\topRpts}{\top_{\Rpts}}
\newcommand{\topRptsp}{\top_{\Rptsp}}
\newcommand{\botRptsw}{\bot_{\Rptsw}}
\newcommand{\leqOmegatree}{\mathrel{\leq}}
\newcommand{\Omegatree}{\{0,1\}}
\newcommand{\sigmatree}{{\sigma_\tree}}
\newcommand{\rtree}{r_\tree}
\newcommand{\Rtree}{\Treefin^{\bot}}
\newcommand{\botRtree}{\bot_\tree}
\newcommand{\qtree}{q_\tree}
\newcommand{\FtreeSigma}{F_{\tree,\Sigma}}
\newcommand{\Ftree}{F_{\tree}}
\newcommand{\leqRtree}{\mathrel{\sqsubseteq_{\Rtree}}}
\newcommand{\treecomb}{\mathsf{combine}}
\newcommand{\leqOmegatpg}{\mathrel{\leq}}
\newcommand{\leqOmegapts}{\mathrel{\leq}}
\newcommand{\leqptsw}{\mathrel{\leq}}
\newcommand{\leqRtpgz}{\mathrel{\sqsubseteq_\Ord}}
\newcommand{\leqRpts}{\mathrel{\sqsubseteq_{\Rpts}}}
\newcommand{\geqRpts}{\mathrel{\sqsupseteq_{\Rpts}}}
\newcommand{\leqRptsp}{\mathrel{\sqsubseteq_{\Rptsp}}}
\newcommand{\geqRptsp}{\mathrel{\sqsupseteq_{\Rptsp}}}
\newcommand{\leqRptspm}{\mathrel{\sqsubseteq_{\Rptspm}}}
\newcommand{\leqRptsw}{\mathrel{\leq}}
\newcommand{\bpts}{b}
\newcommand{\bptsw}{b}
\newcommand{\nadd}{\mathrel{\mathrm{\widehat{+}}}}
\newcommand{\proofmark}{\noindent\textit{Proof.}\,\;}
\newcommand{\mypushright}[1]{\ifmeasuring@#1\else\omit\hfill$\displaystyle#1$\fi\ignorespaces}
\newcommand{\mypushleft}[1]{\ifmeasuring@#1\else\omit$\hspace*{1mm}\displaystyle#1$\hfill\fi\ignorespaces}
\newlength{\currentparindent}
\newif\ifignore 
\newcommand{\auxproof}[1]{
\ifignore\mbox{}\newline
\textbf{BEGIN: AUX-PROOF} \dotfill\newline
{#1}\mbox{}\newline
\textbf{END: AUX-PROOF}\dotfill\newline
\fi}
\theoremstyle{plain}
\theoremstyle{definition}
\newtheorem{mydefinition}{Definition}[section]
\theoremstyle{plain}
\newtheorem{mylemma}[mydefinition]{Lemma}
\newtheorem{myproposition}[mydefinition]{Proposition}
\newtheorem{mytheorem}[mydefinition]{Theorem}
\newtheorem{mycorollary}[mydefinition]{Corollary}
\newtheorem{mysublemma}[mydefinition]{Sublemma}
\theoremstyle{definition}
\newtheorem{myremark}[mydefinition]{Remark}
\newtheorem{myexample}[mydefinition]{Example}
\newtheorem{mynotation}[mydefinition]{Notation}
\newtheorem{myassumption}[mydefinition]{Assumption}
\theoremstyle{remark}
\def\myqed{\qed}
\begin{document}
%
\title{Categorical Liveness Checking\\by Corecursive Algebras}

\author{
\IEEEauthorblockN{Natsuki Urabe\IEEEauthorrefmark{1}\quad Masaki Hara
}
\IEEEauthorblockA{
Dept.\ Computer Science,
The University of Tokyo, Japan\\
\{urabenatsuki, qnighy\}@is.s.u-tokyo.ac.jp\\
\IEEEauthorrefmark{1}JSPS Research Fellow}
\and
\IEEEauthorblockN{Ichiro Hasuo
}
\IEEEauthorblockA{National Institute of Informatics, Japan\\
i.hasuo@acm.org}
}



\maketitle

\pagestyle{plain}

\begin{abstract}
\emph{Final coalgebras} as ``categorical greatest fixed points'' play a central role in the theory of coalgebras. Somewhat analogously, most proof methods studied therein have focused on \emph{greatest} fixed-point properties like safety and bisimilarity. Here we make a step towards categorical proof methods for \emph{least} fixed-point properties over dynamical systems modeled as coalgebras.  Concretely, we seek a categorical axiomatization of well-known proof methods for liveness, namely \emph{ranking functions} (in nondeterministic settings) and \emph{ranking supermartingales} (in probabilistic ones). We find an answer in a suitable combination of coalgebraic simulation (studied previously by the authors) and \emph{corecursive algebra} as a classifier for (non-)well-foundedness.
\end{abstract}


\section{Introduction}\label{sec:intro}

\subsection{Backgrounds}
\label{subsec:backgrounds}
Verification of liveness, much like that of safety,
is a prototypical problem that underlines verification of more complex alternating fixed-point specifications.
 Liveness means that something ``good'' eventually occurs,
 while safety means that anything ``bad'' never occurs.

\begin{wrapfigure}[4]{r}{2.6cm}
\vspace{-.65cm}
\small
\begin{equation}\label{eq:introfig1}
\hspace{-1.4cm}
\begin{xy}
(0,0)*=<2mm>[Fo]++!U{x_0} = "x0",
(0,8)*=<2mm>[]++!D{x_1}*{\bigbullet} = "x1",
(8,0)*=<2mm>[Fo]++!U{x_2} = "x2",
(8,8)*=<2mm>[Fo]++!D{x_3} = "x3",
(16,0)*=<2mm>[Fo]++!U{x_4} = "x4",
(16,8)*=<2mm>[Fo]++!D{x_5} = "x5",
\ar @{->} ^{} (-5,0)*+{};"x0"*=<3mm>{}
\ar @{->} ^{} "x0"*+{};"x2"*=<3mm>{}
\ar @{->} ^{} "x2"*+{};"x3"*=<3mm>{}
\ar @{->} ^{} "x3"*+{};"x1"*=<3mm>{}
\ar @{->} ^{} "x1"*+{};"x0"*=<3mm>{}
\ar @{->} ^{} "x5"*+{};"x3"*=<3mm>{}
\ar @{->} ^{} "x4"*+{};"x5"*=<3mm>{}
\ar @{->} ^{} "x2"*+{};"x4"*=<3mm>{}
\end{xy}
\hspace{-.5cm}
\end{equation}
\end{wrapfigure}
\vspace{1mm}
\subsubsection{Ranking Functions}
As an example, 
suppose that we are given a transition system as in the figure (\ref{eq:introfig1}).
Here 
$x_1$ 
is an \emph{accepting} state that represents a good event.
The reachability problem---a typical example of liveness checking problems---asks 
the following:
``Does there exist a path from the initial state $x_0$ to $x_1$?''
%
The answer is yes: $x_1$ is reachable by the path $\to\!x_0\!\to\! x_2\!\to\! x_3\!\to\! x_1$.
Note that 
the path does
 not refer to
the states $x_4$ and $x_5$.

\begin{wrapfigure}[4]{r}{2.8cm}
\vspace{-.65cm}
\small
\begin{equation}\label{eq:introfig2}
\hspace{-1.4cm}
\begin{xy}
(0,0)*=<2mm>[Fo]++!U{x_0} = "x0",
(0,8)*=<2mm>[]++!D{x_1}*{\bigbullet} = "x1",
(8,0)*=<2mm>[F]++!U{x_2} = "x2",
(8,8)*=<2mm>[Fo]++!D{x_3} = "x3",
(16,0)*=<2mm>[Fo]++!U{x_4} = "x4",
(16,8)*=<2mm>[Fo]++!D{x_5} = "x5",
\ar @{->} ^{} (-5,0)*+{};"x0"*=<3mm>{}
\ar @{->} ^{} "x0"*+{};"x2"*=<3mm>{}
\ar @{->} ^{} "x2"*+{};"x3"*=<3mm>{}
\ar @{->} ^{} "x3"*+{};"x1"*=<3mm>{}
\ar @{->} ^{} "x1"*+{};"x0"*=<3mm>{}
\ar @{->} ^{} "x5"*+{};"x3"*=<3mm>{}
\ar @{->} ^{} "x4"*+{};"x5"*=<3mm>{}
\ar @{->} ^{} "x2"*+{};"x4"*=<3mm>{}
\end{xy}
\hspace{-.5cm}
\end{equation}
\end{wrapfigure}
In the example above, we assumed that the system is controlled in an \emph{angelic} manner:
we can choose the next state 
to eventually reach a good state.
However, real-world systems often contain \emph{demonic} branching, too, where the next state is chosen 
to avoid a good state.
Such a system can be modeled as a \emph{two-player game} played by angelic and demonic players.
The figure 
(\ref{eq:introfig2})
illustrates an example.
At the state $x_2$ 
the next move is chosen 
by the demonic player.
The answer to the 
reachability problem
 is again yes:
 no matter if $x_3$ or $x_4$ is chosen as the successor of $x_2$,
the angelic player can force reaching $x_1$ (by $\to x_3\to x_1$ and $\to x_4\to x_5\to x_3\to x_1$).

Numerous methods are known for such liveness checking problems (e.g.\ \cite{schuppanB06livenesschecking,claessenS12livenesschecking,nallaGMBK14effectiveliveness}).
A well-known method is the one using a \emph{ranking function}~\cite{Floyd1967Flowcharts}. 
%
For a two-player game, a ranking function is typically defined as a function
$b:X\to \Ninf$, from the state space $X$ to the set $\Ninf=\mathbb{N}\cup\{\infty\}$,
that satisfies
the following conditions: 
(i) for each non-accepting state $x$ of the angelic player,
there exists a successor state $x'$ such that $b(x)\ge b(x') + 1$;
and (ii) for each non-accepting state $y$ of the demonic player,
we have $b(y)\ge b(y') + 1$
for each successor state $y'$ of $y$.
It is known that \emph{soundness} holds:
existence of a ranking function $b$ such that $b(x)<\infty$ implies that,
regardless of the demonic player's choice, 
the angelic player can 
construct a path $x=x_0\!\to\!x_1\!\to\!\cdots$ that eventually reaches 
an accepting state. 
The well-foundedness of $\mathbb{N}$ is crucial here: 
 we have $b(x_i)>b(x_{i+1})$ for each $i$ before an accepting state is reached; and 
an infinite descending chain is impossible in $\mathbb{N}$. 
For example, in the two-player game in (\ref{eq:introfig2}), the ranking function 
$b=[x_0\mapsto 5, x_1\mapsto 0, x_2\mapsto 4,x_3\mapsto 1,x_4\mapsto 3,x_5\mapsto 2]$ 
 ensures that $x_1$ is reachable from $x_0$.
Intuitively, 
the value $b(x)$ bounds the number of steps from $x$ to 
an accepting state.

%
\begin{wrapfigure}[4]{r}{2.0cm}
\vspace{-.5cm}
\small
\begin{equation}\label{eq:introfig3}
\hspace{-1.4cm}
\begin{xy}
(0,0)*=<2mm>[Fo]++!L{x_0} = "x0",
(0,8)*=<2.5mm>++!L{x_1}*{\bigbullet} = "x1",
%
%
\ar @{->} ^{} (0,-5)*+{};"x0"*=<3mm>{}
\ar @{->} _{\frac{1}{2}} "x0"*+{};"x1"*=<3mm>{}
\ar @{->} @(ul,dl)_(.2){\frac{1}{2}} "x0"*+{};"x0"*=<3mm>{}
\end{xy}
\hspace{-.5cm}
\end{equation}
\end{wrapfigure}
\vspace{1mm}
\subsubsection{Ranking Supermartingales}
One can consider 
 liveness checking problems also for
probabilistic systems. 
A typical example is the \emph{almost-sure reachability problem}:
let us consider
the \emph{probabilistic transition system} (PTS) as in the figure (\ref{eq:introfig3}).
In the almost-sure reachability problem, we want to know if the accepting state $x_1$ is reached with probability $1$.
In the PTS in (\ref{eq:introfig3}), the answer is yes,
though there exists a path that does not visit any accepting state at all 
(namely $\to x_0\to x_0\to\cdots$, but this occurs with probability $0$).

A notion analogous to that of
ranking function is also known for probabilistic systems, 
namely \emph{ranking supermartingales}~\cite{chakarovS13probabilisticprobram,fioritiH15probterm}.
%
For a fixed positive real $\varepsilon>0$, 
an ($\varepsilon$-additive) ranking supermartingale is a function 
$b':X\to \nonnegrealsinf$, from the state space $X$ to the set $\nonnegrealsinf$ of extended non-negative real numbers,
that satisfies the following condition.
\begin{displaymath}
{\textstyle\forall x\in X\setminus \Acc.\;\;b'(x)\,\geq\,\Bigl(\sum_{x'\in X} \mathrm{Prob}(x\!\to\!x')\cdot b'(x')\Bigr)+\varepsilon}
\end{displaymath}
Here $\mathrm{Prob}(x\!\to\!x')$ denotes the probability with which the system makes a transition from $x$ to $x'$.
This means that
for each state $x\in X$, 
the expected value of $b'$ decreases by at least $\varepsilon$ after a transition.
The existence of a ranking supermartingale $b'$ such that $b'(x)<\infty$ implies that 
the expected value of the number of steps from $x$ to an accepting state is finite (specifically it is no bigger than 
${b'(x)/\varepsilon}$).
From this it easily follows that 
an accepting state is visited almost surely.

\vspace{1mm}
\subsubsection{Coalgebras and Algebras}
This paper aims to understand, 
 in the categorical terms of \emph{(co)algebra}, 
essences of liveness checking methods like ranking functions and ranking supermartingales.
Coalgebras are commonly used for modeling state-based dynamics in the categorical
language
(see e.g.~\cite{jacobs16CoalgBook,Rutten00a}).
Formally, 
for an endofunctor $F$ over a category $\mathbb{C}$, an \emph{$F$-coalgebra} is an arrow $c$ of the type $c:X\to FX$.
We can regard $X$ as a state space, $F$ as a specification of the branching type,
and $c:X\to FX$ as a transition function.
By changing the functor $F$  we can represent various kinds of transition types
(see Fig.~\ref{fig:variousFunctors}).
%
%
It is also known that, using  coalgebras, we can generalize various automata-theoretic notions and techniques (such as
behavioral equivalence~\cite{sokolova11probSysSurvey}, bisimulation~\cite{aczelM89finalcoalgebra} and
simulation~\cite{hasuo06genericforward})
to various systems (e.g.\ nondeterministic, probabilistic, and weighted ones).

A dual notion, i.e.\ an arrow of type $a:FX\to X$, is known as an \emph{$F$-algebra}. 
In this paper, it is used to capture \emph{properties} (or \emph{predicates}) over a system  represented as a coalgebra.
\begin{figure}[t] 
\small
\vspace{-.3cm}
\begin{center}
\begin{tabular}{c|c}
functor $F$ & $c:X\to FX$ represents \\ \hline\hline
$(\place)^\Sigma\times \{0,1\}$ & deterministic automaton \\ \hline
$\pow^2(\place)\times\{0,1\}$ & two-player game \\ \hline
$\dist(\place)\times\{0,1\}$ & probabilistic transition system (PTS)
\end{tabular}
\end{center}
\vspace{-2mm}
\caption{Coalgebraic representations of transition systems. Here $\pow$ and $\dist$ denote the powerset and the distribution 
functors respectively (Def.~\ref{def:concreteFunctors}).}
\vspace{-3mm}
\label{fig:variousFunctors}
\end{figure}

\subsection{Contributions}
\label{subsec:contributions}
We contribute 
 a categorical axiomatization of ``ranking functions'' that is behind the well-known methods that we have sketched. It combines: \emph{corecursive algebras} as value domains (that are, like $\Ninf$, suited to detect well-foundedness) and lax homomorphisms (like in \emph{coalgebraic simulations}~\cite{hasuo06genericforward,urabeH15CALCO}).
Based on the axiomatization we develop a general theory; our main result  is \emph{soundness}, i.e.\ that existence of a categorical ranking function indeed witnesses liveness (identified with a least fixed-point property). 
We also exploit our general theory and derive two new notions of ``ranking functions'' as instances. 
The two concrete definitions are new to the best of our knowledge.

We shall now briefly sketch our general theory, illustrating key notions and the backgrounds from which we derive them. 

\vspace{1mm}
\subsubsection{Corecursive Algebras for (Non-)Well-Foundedness} 
In the (conventional) definition of a ranking function $b:X\to \Ninf$, 
well-foundedness of $\mathbb{N}= \Ninf\setminus\{\infty\}$
plays an important role as
it ensures that no path can continue infinitely (without hitting an accepting state).
Similarly, for a ranking supermartingale, it is crucial that $[0,\infty) = [0,\infty]\setminus\{\infty\}$ has no infinite sequence
that decreases everywhere at least by $\varepsilon>0$.
 In unifying the two notions,
we need to categorically capture well-foundedness. 

\begin{wrapfigure}[3]{r}{2cm}
\small
\vspace{-6mm}
\hspace{-2mm}
 \begin{xy}
 \xymatrix@R=1.3em@C=1.6em{
 {F X} \ar@{}[dr]|{=} \ar@{-->}[r]^{F \uniquefp{c}_r} 
 & {F R}  \ar[d]_{r} \\
 {X} \ar[u]_{c}  \ar@{-->}[r]^{\uniquefp{c}_r}  & {R} } 
 \end{xy}
\end{wrapfigure}
Our answer comprises suitable use of  \emph{corecursive
algebras}~\cite{caprettaUV09corecursivealgebra}.  An $F$-algebra $r:FR\to R$ is
said to be corecursive if from an arbitrary coalgebra $c:X\to FX$ there 
exists a unique coalgebra-algebra homomorphism $\uniquefp{c}_r$ (see the
diagram).  Corecursive algebras have been previously used to
describe general structured
corecursion~\cite{caprettaUV09corecursivealgebra} (see also  Rem.~\ref{rem:relatedWorkCorecursiveAlgebra}). 
Our use of them in this paper seems novel:
$r$ being corecursive means that
the function $\Phi_{c,r}\colon f\mapsto r\circ Ff\circ c$ has a unique fixed point; in particular its least and greatest fixed points coincide\footnote{Examples abound in computer science---especially in domain theory---where similar coincidences play important roles. They include: limit-colimit coincidence~\cite{SmythP82} and initial algebra-final coalgebra coincidence~\cite{Freyd91,hasuo07generictrace}.}; we find this feature of corecursive algebras suited for their use as categorical ``classifiers'' for (non-)well-foundedness.


\vspace{1mm}
\subsubsection{Modalities and Least Fixed-Point Properties}\label{subsubsec:introModality}
Liveness properties such as reachability and termination are all instances of \emph{least fixed-point properties}: once a proper \emph{modality} $\heartsuit_{\sigma}$ is fixed, the property in question is described by a least fixed-point formula $\mu u.\, \heartsuit_{\sigma} u$. The way we categorically formulate these constructs, as shown below, is nowadays standard (see e.g.~\cite{Hasuo15CMCSJournVer,HinoKH016}). As the base category $\mathbb{C}$ we use $\Sets$ in this paper (although extensions e.g.\ to $\Meas$ would not be hard). 
\begin{itemize}
 \item We fix a domain $\Omega\in \mathbb{C}$ of \emph{truth values} (e.g.\ $\Omega=\{0,1\}$), and a \emph{property} over  $X\in \mathbb{C}$ is an arrow $u\colon X\to \Omega$. 
 \item A (state-based, dynamical) system is a coalgebra $c\colon X\to FX$ for a suitable functor $F\colon\mathbb{C}\to\mathbb{C}$. 
 \item A modality $\heartsuit_{\sigma}$ is interpreted as an $F$-algebra\footnote{
 We use the same functor $F$ for  
 coalgebras (systems) and algebras (modalities).
 This characterization is used in~\cite{Hasuo15CMCSJournVer,HinoKH016}
 and also found in many coalgebraic modal logic papers (e.g.~\cite{SchroederP09}). 
 However, for some examples, it comes more natural to use  functors $F$ and $G$ together with a natural transformation $\alpha:F\Rightarrow G$, and to
 model a system and a modality as $c:X\to FX$ and $\sigma:G\Omega\to\Omega$ respectively. 
 This modeling induces our current one as 
 $\sigma$ and $\alpha$
 together induce 
 an $F$-algebra $F\Omega\mathrel{\raisebox{-0.7mm}{$\overset{\alpha_\Omega}{\to}$}}G\Omega
 \mathrel{\raisebox{-0.7mm}{$\overset{\sigma}{\to}$}}\Omega$ 
 (cf.\ \S{}\ref{sec:nonTotalRD}).
  }
  $\sigma\colon F\Omega\to \Omega$ over $\Omega$
 (see Example~\ref{example:concreteDefTwoPlayer} and Prop.~\ref{prop:constBehSituationPTS} for examples). 
 \item Assuming some syntax is given, we should be able to derive the interpretation $\sem{\heartsuit_{\sigma}\varphi}_{c}$ of a modal formula from $\sem{\varphi}_{c}$. In the current (purely semantical) framework this goes as follows. 
Given a property $u\colon X\to \Omega$, we define the property $\Phi_{c,\sigma}(u)\colon X\to \Omega$ by the composite
 \begin{displaymath}
\Phi_{c,\sigma}(u) 
 \;=\;
 \bigl(\,
  X\stackrel{c}{\to} FX\stackrel{Fu
}{\to} F\Omega\stackrel{\sigma}{\to}
 \Omega
\,\bigr)\enspace.
 \end{displaymath}
 \item Assuming a suitable order structure $\sqsubseteq_{\Omega}$ on $\Omega$ and additional monotonicity requirements, the correspondence 
   \begin{math}
    \Phi_{c,\sigma}\colon \Omega^{X}\to\Omega^{X}
   \end{math}
   has the least fixed point. It is denoted by $\sem{\mu\sigma}_{c}\colon X\to \Omega$; intuitively it is the interpretation $\sem{\mu u.\, \heartsuit_{\sigma}u}_{c}$ of the  formula $\mu u.\, \heartsuit_{\sigma}u$ in the system $c$. 
\end{itemize}
 Concrete examples are in~\S{}\ref{subsec:sfpSem}. 
Another standard categorical modeling of a modality (see e.g.~\cite{SchroederP09}) is by a \emph{predicate lifting}, i.e.\ a natural transformation $\sigma_{X}\colon
\Omega^{X}\Rightarrow \Omega^{FX}$. It corresponds to our  modeling 
via the Yoneda lemma; see e.g.~\cite{Hasuo15CMCSJournVer}.

 \begin{wrapfigure}[4]{r}{2.2cm}
\small
\vspace{-5mm}
\begin{equation}\label{eq:introLFP}
\vspace{-0mm}
\hspace{-13mm}
 \begin{xy}
 \xymatrix@R=1.6em@C=1.8em{
 {F X} \ar@{}[dr]|{=_\mu} \ar@{->}[r]^{F \llbracket\mu\sigma\rrbracket_c} 
 & {F \Omega}  \ar[d]_{\sigma} \\
 {X} \ar[u]_{c}  \ar@{->}[r]_{\llbracket\mu\sigma\rrbracket_c}
  & {\Omega}} 
 \end{xy}
\hspace{-6mm}
\end{equation}
\end{wrapfigure}
\vspace{1mm}
\subsubsection{Ranking Functions, Categorically}\label{subsubsec:introRankingFuncCat}
Our modeling is summarized  on the right: the liveness property $\sem{\mu\sigma}_{c}$ in question is the \emph{least} arrow (with respect to the order on $\Omega$) that makes the square commute (note the subscript $=_{\mu}$). 

The liveness checking problem is then formulated as follows: given an arrow $h\colon X\to \Omega$, we would like to decide if $h\sqsubseteq_{\Omega} \sem{\mu\sigma}_{c}$ holds. Here $\sqsubseteq_{\Omega}$ denotes the pointwise extension of the order on $\Omega$. For example, let's say we want to check the assertion that a specific state $x_{0}\in X$ satisfies the liveness property $\sem{\mu\sigma}_{c}$.
In this case we would define the above ``assertion'' $h\colon X\to \Omega$, where $\Omega=\{0\sqsubseteq_{\Omega} 1\}$, by: $h(x_{0})=1$ and $h(x)=0$ for all $x\neq x_{0}$. 

\begin{wrapfigure}[5]{r}{3.2cm}
\vspace{-.3cm}
\small
 \begin{xy}
 \xymatrix@R=1.6em@C=1.8em{
 {F X} \ar@{}[dr]|{\rotatebox{90}{\scriptsize$\sqsubseteq$}} \ar[r]_{F b} 
 & {F R} \ar[d]_{r} \ar@{}[dr]|{\sqsubseteq} 
 \ar[r]_{F q}  
 & {F \Omega} \ar[d]_{\sigma} & \\
 {X} \ar[u]_{c}  \ar[r]^{b} 
\ar@{}[rr]_{\raisebox{-.1cm}{\rotatebox{270}{\scriptsize$\sqsupseteq$}}}
 \ar@/_1.8em/[rr]_(.85){
  h 
} 
 & {R} \ar[r]^{q} & {\Omega}  \ar@{}[ru]|{\begin{minipage}{0.1\hsize}
\begin{equation}\label{eq:diagramIntroRD}
\end{equation}
\end{minipage}} & 
 }
 \end{xy}
\end{wrapfigure}
Our categorical framework of  ranking function-based verification goes as follows.
\begin{itemize}
 \item We fix a \emph{ranking domain}---the value domain for ranking functions---to be an algebra $r\colon FR\to R$ together with a lax homomorphism $q\colon R\to \Omega$ (from $r$ to $\sigma$, in the right square in~(\ref{eq:diagramIntroRD})). The latter $q$ identifies  $r\colon FR\stackrel{}{\to} R$ as a ``refinement'' of the modality $\sigma\colon F\Omega\stackrel{}{\to}\Omega$. A crucial  requirement is that $r$ is \emph{corecursive}, making it suited for detecting well-foundedness. 
\end{itemize}
\phantom{hoge}

\vspace{-1.4em}
 \begin{itemize}
  \item  A (categorical) \emph{ranking arrow} for a coalgebra $c\colon X\stackrel{}{\to} FX$ is then defined to be a lax homomorphism $b\colon X\to R$, in the sense shown in the left square in~(\ref{eq:diagramIntroRD}). 
  \item Our soundness theorem says: given an assertion  $h\colon X\to \Omega$,  in order to establish $h\sqsubseteq\sem{\mu\sigma}_{c}$, it suffices to find a ranking arrow $b\colon X\to R$ such that $h\sqsubseteq q\circ b$ (see~(\ref{eq:diagramIntroRD})) .
 \end{itemize}

 This way the problem of verifying a least fixed-point property is reduced to finding a witness $b$. Note that  the requirement on the ranking arrow $b$---namely $b\sqsubseteq r\circ Fb\circ c$---is \emph{local} (it only involves one-step transitions) and hence easy to check. 

Our main technical contribution is a proof for soundness.  There we argue in terms of inequalities between arrows---much like in~\cite{hasuo06genericforward,urabeH15CALCO}---relying on  fundamental order-theoretic results on  fixed points (Knaster--Tarski and Cousot--Cousot, see~\S{}\ref{subsec:lfpgfp}). Corecursiveness of $r$ is crucial there.

\vspace{1mm}
\subsubsection{Concrete Examples}
 Ranking functions for two-player games (\S{}\ref{subsec:backgrounds}) are easily seen to be an instance of our categorical notion, for suitable $F,\Omega$ and $R$.

Our second example in~\S{}\ref{subsec:backgrounds}---additive ranking supermartingales in the probabilistic setting---is itself not an example, however.  Analyzing its reason we are led to a few variations of the definition, among which some seem new. We discuss these variations: their relationship, advantages and disadvantages. 

\subsection{Organization of this Paper}
In \S{}\ref{sec:prelim}
we introduce preliminaries on: our running examples (two-player games and PTSs); liveness checking methods for them; (co)algebras; and least and greatest fixed points. 
Our main contribution  is in \S{}\ref{sec:catTheoryRF}, where our categorical developments are accompanied (for illustration) by  concrete examples from two-player games. The entailments of our general framework in the probabilistic setting are described in~\S{}\ref{subsec:examplePTS}. 
Finally
 in~\S{}\ref{sec:conclusions} we conclude.

Some details and  proofs are deferred to the appendices. 
%

\section{Preliminaries}\label{sec:prelim}

%
In this paper $\nonnegreals$ and $\nonnegrealsinf$ denote
the sets $\{a\in\mathbb{R}\mid a\geq 0\}$ and $\{a\in\mathbb{R}\mid a\geq 0\}\cup\{\infty\}$ respectively.
We extend the ordinary order $\leq$ over $\mathbb{R}$ to 
$\nonnegrealsinf$
by regarding $\infty$ as the greatest element.
We write $\Ninf$ for $\mathbb{N}\cup\{\infty\}$.
For a function $\varphi:X\to[0,1]$, its \emph{support} $\{x\in X\mid \varphi(x)>0\}$ is
denoted by $\supp(\varphi)$.

\subsection{Two-player Games and Ranking Functions}\label{subsec:TPGRF}
Our \emph{two-player games} are
 played by an angelic player $\max$ and a demonic player $\min$.

\begin{mydefinition}[two-player game]
A \emph{(two-player) game structure} is
a triple $G=(X_{\max},X_{\min},\tau)$ 
 of 
a set $X_{\max}$ of states of the player $\max$,
a set $X_{\min}$ of states of the player $\min$,
and 
a transition relation $\tau\subseteq X_{\max}\times X_{\min}\cup X_{\min}\times X_{\max}$.

\fussy
A \emph{strategy of the player $\max$} 
is a partial function $\alpha:X_{\max}\times (X_{\min}\times X_{\max})^*\rightharpoonup X_{\min}$
such that
for each $x_0y_1x_1\ldots \allowbreak y_{n} x_n\in X_{\max}\times (X_{\min}\times X_{\max})^*$, 
if $\alpha(x_0y_1x_1\ldots y_{n} x_n)$ is defined then $(x_n,\alpha(x_0y_1x_1\ldots y_{n} x_n))\in \tau$.
%
A \emph{strategy of the player $\min$}
is a partial function 
$\beta:X_{\max}\times(X_{\min}\times X_{\max})^*\times X_{\min}\rightharpoonup X_{\max}$ 
such that 
for each $x_0y_1x_1\ldots y_{n-1} x_{n-1} y_n\in X_{\max}\times (X_{\min}\times X_{\max})^*\times X_{\min}$, 
if $\beta(x_0y_1x_1\ldots y_{n-1} x_{n-1} y_n)$ is defined then  $(y_n,\beta(x_0y_1x_1\ldots y_{n-1} x_{n-1} y_n))\in \tau$.

For $x\in X_{\max}$ and a pair of strategies $\alpha$ and $\beta$ of the players $\max$ and $\min$, respectively,
the \emph{run induced by $\alpha$ and $\beta$ from $x$} is 
a possibly infinite sequence 
\begin{math}
 \rho^{\alpha,\beta,x}= x_0y_1x_1y_2x_2\ldots
\end{math}
that is an element of the set
\begin{displaymath}
\begin{array}{l}
 (X_{\max}\times (X_{\min}\times X_{\max})^{*}\times\{\bot_{\max}\})\\
\cup\; (X_{\max}\times (X_{\min}\times X_{\max})^{*}\times X_{\min}\times\{\bot_{\min}\})\\
\cup\; (X_{\max}\times X_{\min})^\omega
\end{array}
\end{displaymath}
and is inductively defined as follows:
for $n=0$, $x_0=x$; and 
for $n>0$,
\begin{align*}
y_n&=\small
\begin{cases}
\bot_{\max} & \hspace{-2.0cm}
(\text{$\alpha(x_0y_1x_1\ldots y_{n-1} x_{n-1})$ is undefined}) \\
\alpha(x_0y_1x_1\ldots y_{n-1} x_{n-1}) &  \text{(otherwise), and} 
\end{cases} \\
%
%
x_n &= \small\begin{cases}
\bot_{\min} & \hspace{-1.3cm}
(\text{$\beta(x_0y_1x_1\ldots x_{n-1} y_n)$ is undefined}) \\
\beta(x_0y_1x_1\ldots x_{n-1} y_n) & (\text{otherwise})\,.
\end{cases}
\end{align*}
The symbol $\bot_{\max}$ (resp.\ $\bot_{\min}$) represents the end of the run
at $\max$'s (resp.\ $\min$'s) turn: it means the player 
got stuck.
\end{mydefinition}
%
%
Once an initial state $x$ and strategies $\alpha$ and $\beta$ of the players $\max$ and $\min$ are fixed, 
a run $\rho^{\alpha,\beta,x}$ is determined.
There are different ways to determine the ``winner'' of a run, including:\ $\max$ wins if $\min$ gets stuck; 
$\max$ wins if he does not get stuck;
$\max$ wins if some specified states are visited infinitely many times (the \emph{B\"uchi condition}), etc.
In this paper where we focus on liveness, we choose the following (rather simple) winning condition:
the player $\max$ wins if an \emph{accepting state} is reached or the player $\min$ gets stuck.
Studies of 
more complex conditions  (like the B\"uchi condition) are left as future work.

\begin{mydefinition}[reaching set]\label{def:winnerTPGNew}

Let $G=(X_{\max},X_{\min},\tau)$ 
be a two-player game structure.
We fix a set $\Acc\subseteq X_{\max}$ of \emph{accepting states}.
%
%
%
%
A run 
$
\rho^{\alpha,\beta,x}=
x_0y_1x_1\ldots$ on $(X_{\max},X_{\min},\tau)$ is 
\emph{winning} with respect to $\Acc$ for the player $\max$ if 
\begin{itemize}
\item 
$x_n\in\Acc$ for some $n$; or

\item 
$\rho^{\alpha,\beta,x}$ is a finite sequence whose
last letter is $\bot_{\min}$\,.
\end{itemize}
%
We define the \emph{reaching set} $\Reach_{G,\Acc}\subseteq X_{\max}$ 
by: 
\begin{equation} \label{eq:def:winnerTPG3}
\small
\Reach_{G,\Acc}=
\left\{
x\in X\;\middle|\;
\begin{aligned}
&\exists \alpha:\text{strategy of $\max$}.\;\\
&\quad\forall \beta:\text{strategy of $\min$}.\, \\
& \quad\quad\text{$\rho^{\alpha,\beta,x}$ is winning wrt.\ $\Acc$}
\end{aligned}
\right\}\,.
\end{equation}
\end{mydefinition}

\begin{wrapfigure}[5]{r}{2.3cm}
\vspace{-.5cm}
\scriptsize
\begin{xy}
(0,0)*=<2mm>[Fo]++!L{x_0} = "x0",
(12,0)*=<2mm>[Fo]++!L{x_1} = "x1",
(0,16)*=<2.5mm>++!L{x_2}*{\bigbullet} = "x2",
(9,11)*=<2mm>[Fo]+!DR{x_3} = "x3",
(15,11)*=<2mm>[Fo]++!L{x_4} = "x4",
(-3,8)*=<2mm>[F]++!L{y_0} = "y0",
(3,8)*=<2mm>[F]++!L{y_1} = "y1",
(12,5)*=<2mm>[F]++!L{y_2} = "y2",
(12,16)*=<2mm>[F]++!L{y_3} = "y3",
\ar @{->} ^{} "x0"*+{};"y0"*=<3mm>{}
\ar @{->} ^{} "x0"*+{};"y1"*=<3mm>{}
\ar @{->} @/_1.4mm/^{} "x1"*+{};"y1"*=<3mm>{}
\ar @{->} ^{} "x1"*+{};"y2"*=<3mm>{}
\ar @{->} ^{} "x3"*+{};"y3"*=<3mm>{}
\ar @{->} ^{} "y0"*+{};"x2"*=<3mm>{}
\ar @{->} @/_1.4mm/^{} "y1"*+{};"x1"*=<3mm>{}
\ar @{->} ^{} "y1"*+{};"x2"*=<3mm>{}
\ar @{->} ^{} "y2"*+{};"x3"*=<3mm>{}
\ar @{->} ^{} "y2"*+{};"x4"*=<3mm>{}
\end{xy}
\end{wrapfigure}
\vspace{2mm}
\noindent
\begin{minipage}{0.25\hsize}
\begin{myexample}\label{example:TPGSem}
\end{myexample}
\end{minipage}\!\!
\fussy
We define a game structure $G=(X_{\max},X_{\min},\tau)$ by 
$X_{\max}=\{x_0,x_1,x_2,x_3,x_4\}$, 
$X_{\min}=\{y_0,y_1,y_2,\allowbreak y_3\}$ and
$\tau=\{(x_0,y_0),(x_0,y_1),\allowbreak(x_1,y_1),\allowbreak(x_1,y_2),\allowbreak(x_3,y_3),\allowbreak(y_0,x_2),(y_1,x_1),(y_1,x_2),\allowbreak(y_2,x_3),(y_2,x_4)\}$.
Let $\Acc=\{x_2\}$.
The situation is shown on the right.
%
Then 
 the reaching set is $\Reach_{G,\Acc}=\{x_0,x_2,x_3\}$.\!\!\!
\vspace{1mm}\!

\vspace{1mm}
\myparagraph{Ranking functions}
Suppose that we are given a game structure $G=(X_{\max},X_{\min},\tau)$ and a set $\Acc\subseteq X_{\max}$ of accepting states,
and want to prove that 
a state $x$ is included in $\Reach_{G,\Acc}$. 
%
A \emph{ranking function} 
is a standard proof method in such a setting.
%
There are variations in the definition of ranking function~\cite{Floyd1967Flowcharts,urbanM14abstractdomain,mannaP84adequateproof}:
in this paper we use the following.

\begin{mydefinition}[ranking function]\label{def:rankFunc}
Let $G=(X_{\max},X_{\min},\tau)$ be a game structure and $\Acc\subseteq X_{\max}$.
We fix an ordinal $\mathfrak{z}$ 
and let $\Rtpgz=\{\mathfrak{n}\mid\mathfrak{n}\leq\mathfrak{z}\}$ be the set of ordinals smaller than or equal to $\mathfrak{z}$.
A function $b:X_{\max}\to \Rtpgz$ is called
a \emph{ranking function} (for $G$ and $\Acc$) 
if it satisfies
\begin{displaymath}
{\textstyle
\min_{y\colon (x,y)\in\tau}\sup_{x'\colon (y,x')\in\tau} b(x')\nadd 1\;\leq\; b(x)
}
\end{displaymath}
 for each $x\in X_{\max}\setminus\Acc$,
where $b(x')\nadd 1= 
\min\{ b(x')+1,\mathfrak{z}\}$ denotes addition truncated at $\mathfrak{z}$.
\end{mydefinition}

The following well-known theorem states soundness, i.e.\ 
that a ranking function witnesses reachability.

\begin{mytheorem}[soundness, see e.g.~\cite{Floyd1967Flowcharts}]
\label{thm:soundnessRankConv}
Let $\mathfrak{z}$ be an ordinal, and let
 $b:X\to \Rtpgz$ 
be a ranking  function
for $G$ and $\Acc$.
Then 
$b(x)<\mathfrak{z}$ 
(i.e.\ $b(x)\neq\mathfrak{z}$) implies
$x\in\Reach_{G,\Acc}$.
%
\qed
\end{mytheorem}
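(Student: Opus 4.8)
The plan is to read off from the ranking function $b$ a \emph{positional} strategy $\alpha$ for the player $\max$, and then to certify that it is winning from $x$ whenever $b(x)<\mathfrak{z}$, using well-foundedness of the ordinals as the only non-bookkeeping ingredient. First I would define $\alpha$. Fix $x\in X_{\max}\setminus\Acc$ with $b(x)<\mathfrak{z}$. The set $\{y\mid(x,y)\in\tau\}$ must be nonempty: otherwise the $\min$ in Def.~\ref{def:rankFunc} ranges over the empty set and evaluates to the top ordinal $\mathfrak{z}$, forcing $\mathfrak{z}\le b(x)$ and contradicting $b(x)<\mathfrak{z}$. Since any nonempty set of ordinals has a least element, I let $\alpha$ choose---depending only on the current state $x$---a successor $y$ attaining the minimum, so that
\[
\textstyle\sup_{x'\colon(y,x')\in\tau}\bigl(b(x')\nadd 1\bigr)\;\le\;b(x).
\]
On accepting states and on states of rank $\mathfrak{z}$ the value of $\alpha$ is immaterial. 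In particular, this shows $\max$ never gets stuck at a non-accepting state of rank below $\mathfrak{z}$.

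The key step is the following descent property. Fix an arbitrary $\min$-strategy $\beta$ and consider $\rho^{\alpha,\beta,x}=x_0y_1x_1\cdots$ with $x_0=x$ and $b(x_0)<\mathfrak{z}$. I claim that if $x_0,\dots,x_n$ are all non-accepting and $\min$ does not get stuck at $y_{n+1}$, then $b(x_{n+1})<b(x_n)<\mathfrak{z}$. Indeed, by the choice of $y_{n+1}=\alpha(\cdots x_n)$ and $(y_{n+1},x_{n+1})\in\tau$ we have $b(x_{n+1})\nadd 1\le\sup_{x'}(b(x')\nadd1)\le b(x_n)<\mathfrak{z}$. Since the truncated value lies strictly below $\mathfrak{z}$, the truncation cannot have taken effect, i.e.\ $b(x_{n+1})\nadd1=b(x_{n+1})+1$, whence $b(x_{n+1})+1\le b(x_n)$ and so $b(x_{n+1})<b(x_n)<\mathfrak{z}$. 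The point I would stress is exactly this: restricting attention to ranks below $\mathfrak{z}$ neutralizes the truncated successor $\nadd$ and turns the ranking inequality into a genuine strict descent of ordinals.

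Finally I would conclude by contradiction. Suppose $\rho^{\alpha,\beta,x}$ is not winning; then no $x_n$ lies in $\Acc$, and the run does not end in $\bot_{\min}$. A finite non-winning run must end in $\bot_{\max}$, i.e.\ $\max$ is stuck at some non-accepting $x_n$; but the descent property gives $b(x_0)>b(x_1)>\cdots>b(x_n)$, all below $\mathfrak{z}$, and a non-accepting state of rank $<\mathfrak{z}$ has a successor, so $\max$ is not stuck---contradiction. The only remaining case is an infinite run in which every $x_n$ is non-accepting and no $\min$-move is stuck; there the descent property yields an infinite strictly decreasing chain $b(x_0)>b(x_1)>\cdots$ of ordinals, impossible by well-foundedness. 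Hence every run $\rho^{\alpha,\beta,x}$ is winning, and since $\beta$ was arbitrary, $x\in\Reach_{G,\Acc}$. The main obstacle is not conceptual but careful bookkeeping: handling the empty-set conventions for $\min$ and $\sup$, and controlling the interaction of $\nadd$ with the top ordinal $\mathfrak{z}$ so that the descent is strict; once this is pinned down, the well-foundedness argument is immediate.
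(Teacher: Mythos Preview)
Your proof is correct. The paper does not give a self-contained proof at the point where the theorem is stated (it is cited as a classical result), but Remark~\ref{rem:posStratSuffices} sketches precisely your argument: synthesize a positional $\max$-strategy by taking the $\mathrm{arg\,min}$ and argue that every resulting run is winning. Your write-up fills in the details the remark omits---nonemptiness of the successor set when $b(x)<\mathfrak{z}$, neutralizing the truncation $\nadd$ below $\mathfrak{z}$ to obtain a strict ordinal descent, and the case split on finite versus infinite non-winning runs---all of which are handled correctly.

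The paper's own derivation of this theorem, however, is different and is one of the points of the paper: Example~\ref{example:rankDomTwoPlayerGame} (with the supporting propositions in the appendix) shows that $(\rtpgz,\qtpgz,\leqRtpgz)$ is a ranking domain in the sense of Def.~\ref{def:rankingdom}, that ranking functions in the sense of Def.~\ref{def:rankFunc} coincide with categorical ranking arrows, and that $\Reach_{G,\Acc}$ coincides with $\sem{\mu\sigmatpg}_{c}$; Thm.~\ref{thm:soundnessRankConv} then drops out as an instance of the general categorical soundness Thm.~\ref{thm:soundnessranking}. Your direct route is elementary and explicitly constructive (it produces the winning strategy); the categorical route buys uniformity---the same abstract argument simultaneously yields soundness for the probabilistic and tree-automaton instances in \S\ref{subsec:examplePTS} and the appendix.
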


\begin{myexample}\label{example:rankFuncConv}
For the game in Example~\ref{example:TPGSem},
let $\mathfrak{z}=\omega$ and 
define a function $b:X\to \Rtpgz$ by 
$b(x_0)=1$, $b(x_2)=0$ and $b(x_1)=b(x_3)=b(x_4)=\omega$.
Then $b$ is a ranking function.
Hence by Thm.~\ref{thm:soundnessRankConv},
we have $x_0\in\Reach_{G,\Acc}$.
\end{myexample}

\emph{Completeness} (the converse of Thm.~\ref{thm:soundnessRankConv}) does not hold.
A counterexample is given later in Example~\ref{example:incompletetpg}.

\begin{myremark}\label{rem:posStratSuffices}
A strategy $\alpha:X_{\max}\times (X_{\min}\times X_{\max})^*\rightharpoonup X_{\min}$
is said to be \emph{positional} if its outcome depends only on the last state of the input, i.e.\
$x_n=x'_{n'}$ implies $\alpha(x_0y_1\ldots \allowbreak y_{n} x_n)=\alpha(x'_0y'_1\ldots \allowbreak y'_{n'} x'_{n'})$\,.
%
It is known that a positional strategy suffices as long as  we  consider reaching sets, i.e.\
the set $\Reach_{G,\Acc}$ in Def.~\ref{def:winnerTPGNew} is unchanged if we replace ``$\exists \alpha:\text{strategy of $\max$}$'' in (\ref{eq:def:winnerTPG3}) with 
``$\exists \alpha:\text{positional strategy of $\max$}$'' (see e.g.\ \cite{gradelW06positionalpriorities}).

A ranking function allows us to synthesize such a positional strategy.
Let $x\in X_{\max}$ and $b:X_{\max}\to\Rtpgz$  be a ranking function s.t.\ $b(x)<\infty$.
We define a strategy $\alpha$ for $\max$ by
\[
\alpha(x_0y_1\ldots y_n x_n)={\textstyle
\mathrm{arg\,min}_{y\colon (x_{n},y)\in\tau}\sup_{x'\colon (y,x')\in\tau} b(x')}\,.
\]
Then it is a positional strategy such that for each strategy $\beta$ of $\min$, 
the run $\rho^{\alpha,\beta,x}$ is winning wrt.\ $\Acc$ for $\max$.
\end{myremark}

\subsection{Probabilistic Transition Systems and Ranking Supermartingales}\label{subsec:PTSRF}

\begin{mydefinition}[PTS]\label{def:PTSNew}
A \emph{probabilistic transition system} (PTS) is
a pair $M=(X,\tau)$
 of a set $X$ and
a transition function $\tau: X\to \dist X$. 
Here
$\dist X=\{d:X\to[0,1]\mid \sum_{x\in X}d(x)= 1\}$
 is the set of  probability distributions over $X$.
%
\end{mydefinition}

\begin{mydefinition}[reachability probability]\label{def:winnerMDPNew}
Let $M=(X,\tau)$ be a PTS.
We fix a set $\Acc\subseteq X$ of \emph{accepting states}.
%
For each $x\in X$ and $n\in\mathbb{N}$, 
we define a value $f_n(x)\in\unitreals$ by:
\begin{align*}
&f_n(x)=\\
&
\sum\left\{
\prod_{i=0}^{k-1}
\tau(x_i)(x_{i+1})
\,\middle|\,
{\small
\begin{aligned}
&k\leq n-1, x_0,\ldots,x_{k}\in X, \\
& x_0=x,  x_i\notin \Acc \text{ ($\forall i\in [0,k-1])$,}\;\\
&\qquad\text{and}\;x_k\in\Acc
\end{aligned}
}
\right\}\,.
\end{align*}
Note that if $x\in\Acc$ then $f_n(x)=1$.
%
As the sequence $\bigl(f_n(x)\bigr)_{n\in\mathbb{N}}$
is increasing for each $x\in X$, 
we can define a function $f:X\to\unitreals$ by: 
\begin{displaymath}
f(x)=
\lim_{n\to\infty}
f_n(x) \,.
\end{displaymath}
%
The function $f$ is called the \emph{reachability probability function} 
with respect to $M$ and $\Acc$, and
is denoted by $\Reach_{M,\Acc}$.
\end{mydefinition}

Here
 the value $f_n(x)\in[0,1]$ is the probability with which an accepting state is reached within $n$ steps from $x$.

\begin{wrapfigure}[6]{r}{2.0cm}
\vspace{-.5cm}
\small
\begin{xy}
(0,2)*=<2mm>[Fo]++!UL{x_0} = "x0",
(-5,9)*=<2mm>[Fo]++!L{x_1} = "x1",
(-5,18)*=<2.5mm>++!R{x_2}*{\bigbullet} = "x2",
(5,9)*=<2mm>[Fo]++!R{x_3} = "x3",
\ar @{->} ^{\frac{1}{2}} "x0"*+{};"x1"*=<3mm>{}
\ar @{->} _{\frac{1}{2}} "x0"*+{};"x3"*=<3mm>{}
\ar @{->} @(ul,dl)_(.2){\frac{1}{2}} "x1"*+{};"x1"*=<3mm>{}
\ar @{->} _{\frac{1}{2}} "x1"*+{};"x2"*=<3mm>{}
\ar @{->} @(ur,dr)^(.5){1} "x2"*+{};"x2"*=<3mm>{}
\ar @{->} @(ur,dr)^(.2){\frac{1}{2}} "x3"*+{};"x3"*=<3mm>{}
\end{xy}
\end{wrapfigure}
\vspace{1mm}
\noindent
\begin{minipage}{0.28\hsize}
\begin{myexample}\label{example:PTSNew}
\end{myexample}
\end{minipage}\!\!
We define a PTS $M=(X,\tau)$ by 
$X=\{x_0,x_1,x_2,x_3\}$, and
$\tau(x_0)=[x_1\mapsto \frac{1}{2},x_3\mapsto \frac{1}{2}]$,
$\tau(x_1)=[x_1\mapsto \frac{1}{2},x_2\mapsto \frac{1}{2}]$,
$\tau(x_2)=[x_2\mapsto 1]$, and
$\tau(x_3)=[x_3\mapsto 1]$.
Let $\Acc=\{x_2\}$.
The situation is as shown on the right.
Then $\Reach_{M,\Acc}:X\to [0,1]$ assigns $\frac{1}{2}$ to $x_0$, $1$ to $x_1$ and $x_2$,  and $0$ to $x_3$.
%
\vspace{1mm}


Let us consider 
the \emph{almost-sure reachability problem} for PTS. 
Given a PTS $M=(X,\tau)$, a set $\Acc\subseteq X$ of accepting states
and an (initial) state $x\in X$,
we want to prove that $\Reach_{M,\Acc}(x)=1$.
For this problem,
a ranking function-like notion
called \emph{ranking supermartingale}~\cite{chakarovS13probabilisticprobram} is known.
There are several variations in
its definition.
We follow the definition in~\cite{fioritiH15probterm};
a variation can be found in~\cite{chakarov2016deductive}.

\begin{mydefinition}[$\varepsilon$-additive ranking supermartingale]\label{def:multAddSMPTS}
Let $M=(X,\tau)$ be a PTS and $\Acc\subseteq X$ be the set of accepting states. 
Let $\varepsilon >0$ be  a real number. 
A function $b':X\to \Rptsp$ is 
an \emph{$\varepsilon$-additive ranking supermartingale (for $M$ and $\Acc$)}  
if 
\begin{displaymath}
{\textstyle
\bigl(\,\sum_{x'\in \supp(\tau(x))} \tau(x)(x')\cdot b'(x')\,\bigr)
+\varepsilon\;\leq\; b'(x)}
\end{displaymath}
 holds for each $x\in X\setminus \Acc$.
\end{mydefinition}
\noindent
Intuitively an $\varepsilon$-additive ranking supermartingale $b'$ bounds the expected number of steps to accepting states: specifically it is no bigger than $b'(x)/\varepsilon$.

\begin{mytheorem}[\cite{fioritiH15probterm}]\label{thm:soundnessRSMConvPTS}
Let $b':X\to \Rptsp$ be an $\varepsilon$-additive ranking supermartingale for $M$ and $\Acc$.
Then $b'(x)<\infty$ implies $\Reach_{M,\Acc}(x)=1$.
\qed
\end{mytheorem}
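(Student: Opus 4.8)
The plan is to show that the defining inequality of Def.~\ref{def:multAddSMPTS} forces the probability of \emph{not} having reached $\Acc$ to be summable over time, so that this probability vanishes as $n\to\infty$. I work directly with the path semantics of Def.~\ref{def:winnerMDPNew}, treating accepting states as absorbing (which leaves $\Reach_{M,\Acc}$ unchanged, since once $\Acc$ is hit the run already counts as successful). For $n\in\mathbb{N}$ and $y\in X\setminus\Acc$ let $\mu_n(y)$ be the probability that the run from $x$ sits at $y$ after $n$ steps without having visited $\Acc$ earlier, i.e.\ $\mu_n(y)=\sum\{\prod_{i=0}^{n-1}\tau(x_i)(x_{i+1})\mid x_0=x,\ x_n=y,\ x_0,\ldots,x_n\notin\Acc\}$. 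Writing $q_n=\sum_{y\notin\Acc}\mu_n(y)$ for the total surviving mass, one checks $q_n=1-f_{n+1}(x)$ in the notation of Def.~\ref{def:winnerMDPNew}, so that $\Reach_{M,\Acc}(x)=1-\lim_{n}q_n$; it therefore suffices to prove $\lim_n q_n=0$.

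To do this I introduce the ``potential'' $E_n=\sum_{y\notin\Acc}\mu_n(y)\,b'(y)\in\nonnegrealsinf$. If $x\in\Acc$ the statement is trivial (then $q_0=0$ and the non-increasing sequence $q_n$ is identically $0$), so assume $x\notin\Acc$, giving the base case $E_0=b'(x)<\infty$. The key step is to expand $\mu_{n+1}(z)=\sum_{y\notin\Acc}\mu_n(y)\,\tau(y)(z)$ for $z\notin\Acc$, substitute into $E_{n+1}$, interchange the two nonnegative sums (Tonelli), and apply the supermartingale inequality $\sum_{z\in X}\tau(y)(z)\,b'(z)\le b'(y)-\varepsilon$ at each $y\notin\Acc$ (discarding the nonnegative summands over $\Acc$ only lowers the left-hand side). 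This yields
\[
E_{n+1}\;\le\;\sum_{y\notin\Acc}\mu_n(y)\bigl(b'(y)-\varepsilon\bigr)\;=\;E_n-\varepsilon\,q_n .
\]

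By induction $E_n$ stays finite and nonnegative, so telescoping gives $\varepsilon\sum_{n=0}^{N-1}q_n\le E_0-E_N\le b'(x)$ for every $N$; letting $N\to\infty$ we obtain $\sum_n q_n\le b'(x)/\varepsilon<\infty$, whence $q_n\to 0$ and $\Reach_{M,\Acc}(x)=1$. (The same estimate incidentally recovers the expected-number-of-steps bound $b'(x)/\varepsilon$ mentioned just before the theorem, since $\sum_n q_n$ is exactly that expectation.) I expect the only delicate point to be bookkeeping in the extended reals $\nonnegrealsinf$: the $\mu_n$ are merely subprobability measures, $b'$ may equal $\infty$, and $\supp(\tau(y))$ may be infinite. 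All of these are dissolved by nonnegativity—Tonelli licenses every interchange—together with the inductive finiteness of $E_n$ inherited from $b'(x)<\infty$ (finiteness of $b'$ propagates along positive-probability non-accepting paths, which is precisely what keeps each $E_n<\infty$). No measure-theoretic martingale convergence is invoked; the computation is just the elementary discrete form of the optional-stopping estimate for the supermartingale $b'(X_{n\wedge T})+\varepsilon(n\wedge T)$, where $T$ is the hitting time of $\Acc$.
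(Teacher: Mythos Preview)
Your argument is correct: the potential $E_n=\sum_{y\notin\Acc}\mu_n(y)\,b'(y)$ decreases by at least $\varepsilon q_n$ at each step, the telescoping bound $\sum_n q_n\le b'(x)/\varepsilon$ follows, and $q_n\to 0$ gives $\Reach_{M,\Acc}(x)=1$. The bookkeeping in $\nonnegrealsinf$ is handled correctly by the inductive finiteness of $E_n$ together with Tonelli.

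Your route, however, is quite different from the paper's. In the main text the theorem is simply cited from~\cite{fioritiH15probterm}; the paper's own derivation (in~\S\ref{subsec:soundARSapp}) is \emph{categorical}: it shows that an $\varepsilon$-additive ranking supermartingale $b'$ induces a distribution-valued ranking arrow $b:X\to\dist\Ninf$ (Prop.~\ref{prop:rankDomPTSConnect}, via the map $\ppts(\gamma)=\varepsilon\sum_a a\,\gamma(a)$ and a dualised version of the soundness argument), and then invokes the general Thm.~\ref{thm:soundnessranking} for the corecursive ranking domain $(\rpts,\qpts,\leqRpts)$. Your proof is the classical Lyapunov/optional-stopping estimate carried out by hand; it is shorter, entirely elementary, and yields the expected-steps bound $b'(x)/\varepsilon$ directly. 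The paper's proof is longer and more abstract, but that abstraction is precisely the point of the paper: it demonstrates that the soundness of $\varepsilon$-additive supermartingales---whose natural value domain $[0,\infty]$ is \emph{not} corecursive---can nevertheless be recovered from the categorical framework by factoring through the corecursive domain $\dist\Ninf$.
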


\begin{myexample}\label{example:RSMConvPTS}
For the PTS in Example~\ref{example:PTSNew},
we define $b':X\to \Rptsp$ by 
$b'(x_0)=\infty$, $b'(x_1)=2$, $b'(x_2)=0$ and $b'(x_3)=\infty$.
Then $b'$ is a 1-additive ranking supermartingale. 
Hence by Thm.~\ref{thm:soundnessRSMConvPTS},
we have $\Reach_{M,\Acc}(x_1)=1$.
\end{myexample}

\subsection{Categorical Preliminaries}\label{subsec:corecAlg}
%
We assume that  readers are familiar with basic categorical notions.
For more details, see e.g.~\cite{mac98categoriesfor,jacobs16CoalgBook}.
%

\begin{mydefinition}[(co)algebra]\label{def:algCoalg}
Let $F:\mathbb{C}\to\mathbb{C}$ be an endofunctor on a category $\mathbb{C}$.
An \emph{$F$-coalgebra} is a pair $(X,c)$ of an object $X$ in $\mathbb{C}$ and an arrow $c$ of the type $c:X\to FX$.
An \emph{$F$-algebra} is a pair $(A,a)$ of an object $A$ in $\mathbb{C}$ and an arrow $a$ of the type $a:FA\to A$.
\end{mydefinition}

In this paper we exclusively use the category $\Sets$ of sets and functions as the base category $\mathbb{C}$ (although extensions e.g.\ to $\Meas$ would not be hard). We would be interested in endofunctors composed by the following.

\begin{mydefinition}[$\pow$, $\dist$ and $(\place)\times C$]\label{def:concreteFunctors}
The \emph{powerset functor}  $\pow:\Sets\to\Sets$ is such that:
\begin{itemize}
\item for each $X\in\Sets$, $\pow X=\{A\subseteq X\}$; and
\item for each $f:X\to Y$ and $A\in\pow X$, $(\pow f)(A)=\{y\in Y\mid \exists x\in A.\, f(x)=y\}$. 
\end{itemize}
The \emph{(discrete) distribution functor}  $\dist:\Sets\to\Sets$  is:
\begin{itemize}
\item for each $X\in\Sets$, $\dist X=\{d:X\to[0,1]\mid \sum_{x\in X}d(x)= 1\}$; and
\item for each $f:X\to Y$, $\delta\in \dist X$ and $y\in Y$, 
$(\dist f)(\delta)(y)=\sum_{x\in f^{-1}(y)}\delta(x)$.
\end{itemize}
For  $C\in\Sets$, the functor $(\place)\times C:\Sets\to\Sets$ is:
\begin{itemize}
\item for  $X\in\Sets$, $X\times C=\{(x,c)\mid x\in X,c\in C\}$; and
\item for each $f:X\to Y$, $x\in X$ and $c\in C$, $(f\times C)(x,c)=(f(x),c)$.
\end{itemize}
\end{mydefinition}
We combine these functors for modeling transition types of various kinds of systems (Fig.~\ref{fig:variousFunctors}). For two-player games we use the functor
 $\Ftpg=\pow^2(\place)\times \{0,1\}$.
It works as follows.
\begin{itemize}
\item For each set $X$, $\Ftpg X=\{\Gamma\subseteq \pow X\}\times \{0,1\}$.
\item For each function $f:X\to Y$, $\Ftpg f:\Ftpg X\to \Ftpg Y$ is defined by $\Ftpg f(\Gamma,t)=(\{\{f(x)\mid x\in A\}\mid A\in\Gamma\},t)$.
\end{itemize}
The correspondence between $\Ftpg$-coalgebras and two-player games will be spelled out in Def.~\ref{def:TPGcoalg}.

The key idea in this paper is to use a \emph{corecursive algebra} as a classifier for (non-)well-foundedness.

\begin{wrapfigure}[3]{r}{2cm}
\small
\vspace{-6mm}
\hspace{-2mm}
 \begin{xy}
 \xymatrix@R=1.3em@C=1.6em{
 {F X} \ar@{}[dr]|{=} \ar@{-->}[r]^{F f} 
 & {F R}  \ar[d]_{r} \\
 {X} \ar[u]_{c}  \ar@{-->}[r]^{f}  & {R} } 
 \end{xy}
\end{wrapfigure}
\vspace{2mm}
\noindent
\begin{minipage}{0.74\hsize}
\begin{mydefinition}[corecursive algebra, \cite{caprettaUV09corecursivealgebra}]\label{def:corecAlg}
\end{mydefinition}
\end{minipage}\!\!
An $F$-algebra $r:FR\to R$ is  \emph{corecursive} if given an arbitrary coalgebra $c:X\to FX$,
there exists a unique arrow $f:X\to R$ such that $f=r\circ Ff\circ c$.

\begin{myremark}\label{rem:relatedWorkCorecursiveAlgebra}

 The connection between corecursive algebras and (non-)well-foundedness
 has been hinted by some existing results. For example, 
 for set functors preserving monos and inverse image diagrams,
 \emph{recursive
 coalgebras}---the categorical dual of corecursive algebras used for
 general structured recursion in~\cite{OSIUS197479}---are known to
 coincide with  \emph{well-founded coalgebras}%
~\cite{taylor99practicalFoundation}, where well-foundedness
 is categorically modeled in terms of ``inductive components.'' 
 For more general categories,
 it is known that if a functor preserves monos then well-foundedness implies recursiveness,
 but its converse does not necessarily hold~\cite{adamekMMS13wpc}.
 The dual
 of this result, between corecursive and \emph{anti-founded} algebras,
 is pursued in~\cite{caprettaUV09corecursivealgebra} but with limited
 success. 
 
 In~\cite{Levy15} the notion of \emph{co-founded part} of an algebra is introduced, with a main theorem that the co-founded part of an injectively structured corecursive algebra carries a final coalgebra. The result is used for characterizing a final coalgebra as that of suitable modal formulas. Despite its name, co-founded parts have little to do with our current view of corecursive algebras here as well-foundedness classifiers.

Discussions on other works on corecursive algebra are found in~\S{}\ref{sec:nonTotalRD}  in the appendix.
\end{myremark}

\subsection{Verification of Least/Greatest Fixed-Point Properties}\label{subsec:lfpgfp}
The following results are fundamental in the studies of fixed-point specifications.
\begin{mytheorem}\label{thm:KTCC}
 Let $L$ be a  complete lattice, 
 and  $f\colon L\to L$ be a monotone
 function. 
\begin{enumerate}
 \item {} (Knaster--Tarski) The set of prefixed points (i.e.\ those $l\in L$ such that $f(l)\sqsubseteq l$) forms a complete lattice. Moreover its least element is (not only a prefixed but) a fixed point, that is, the least fixed point $\mu f$. 
 \item {} (Cousot--Cousot~\cite{cousotC79})
Consider the  (transfinite) sequence
 \begin{math}
  \bot\sqsubseteq f(\bot) \sqsubseteq \cdots \sqsubseteq f^{\mathfrak{a}}(\bot)\sqsubseteq \cdots
 \end{math}
 where, for a limit ordinal $\mathfrak{a}$, we define
 $f^{\mathfrak{a}}(\bot)=\bigsqcup_{\mathfrak{b}<\mathfrak{a}}f^{\mathfrak{b}}(\bot)$. The
 sequence eventually stabilizes and its limit is the least fixed point
 $\mu f$. 
 \myqed
\end{enumerate}
\end{mytheorem}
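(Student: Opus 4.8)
The plan is to treat the two parts separately, as each is a classical fixed-point result that I would prove directly from the lattice structure of $L$ and the monotonicity of $f$, with no appeal to extra machinery.

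For the Knaster--Tarski part, I would work with the set $P=\{l\in L\mid f(l)\sqsubseteq l\}$ of prefixed points. The first step is to show that $P$ is closed under arbitrary meets taken in $L$: given $S\subseteq P$ and $m=\bigsqcap S$, monotonicity yields $f(m)\sqsubseteq f(s)\sqsubseteq s$ for every $s\in S$ (the second inequality because $s$ is prefixed), so $f(m)$ is a lower bound of $S$ and hence $f(m)\sqsubseteq m$; thus $m\in P$. Since $\top\in P$ trivially, $P$ is a poset with a top element that is closed under meets, and the standard argument then makes it a complete lattice (joins in $P$ are recovered as meets of upper bounds). The second step is to isolate the least element $p_{0}=\bigsqcap P$ and upgrade it from a prefixed point to a fixed point: from $f(p_{0})\sqsubseteq p_{0}$ and monotonicity we get $f(f(p_{0}))\sqsubseteq f(p_{0})$, so $f(p_{0})\in P$; minimality of $p_{0}$ forces $p_{0}\sqsubseteq f(p_{0})$, and the two inequalities give $f(p_{0})=p_{0}$. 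Being the least prefixed point, $p_{0}$ is \emph{a fortiori} the least fixed point $\mu f$.

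For the Cousot--Cousot part, I would first verify by transfinite induction that the sequence $\bigl(f^{\mathfrak{a}}(\bot)\bigr)_{\mathfrak{a}}$ is increasing: the base case uses that $\bot$ is least, the successor case uses monotonicity of $f$, and the limit case is immediate from the supremum definition. Stabilization then follows because $L$ is a set---a \emph{strictly} increasing ordinal-indexed chain would yield an injection from the class of all ordinals into $L$, which is impossible---so there is some $\mathfrak{a}$ with $f^{\mathfrak{a}+1}(\bot)=f^{\mathfrak{a}}(\bot)$, making $\ell=f^{\mathfrak{a}}(\bot)$ a fixed point. Finally I would show $\ell=\mu f$: since $\ell$ is a fixed point, $\mu f\sqsubseteq\ell$; conversely a second transfinite induction shows $f^{\mathfrak{b}}(\bot)\sqsubseteq\mu f$ for all $\mathfrak{b}$ (using $f(\mu f)=\mu f$ at successor stages and closure of $\sqsubseteq\mu f$ under suprema at limits), so $\ell\sqsubseteq\mu f$ and hence $\ell=\mu f$.

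The step I expect to be the main obstacle is the stabilization claim in the second part: phrasing cleanly \emph{why} the transfinite chain cannot strictly increase forever. Everything else is routine manipulation of inequalities, but this point relies on the set-versus-proper-class distinction (equivalently, a Hartogs-number or Burali-Forti argument) and should be stated with care rather than waved through.
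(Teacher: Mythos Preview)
Your proposal is correct and follows the standard textbook arguments for both parts. However, the paper does not actually supply a proof of this theorem: it is stated as a classical result (with a citation to Cousot--Cousot~\cite{cousotC79} for the second part) and simply marked \qed, so there is no ``paper's own proof'' to compare against. Your write-up would serve perfectly well as the omitted proof.
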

\noindent
For the greatest fixed point $\nu f$ we have the dual results. From these four results---Knaster--Tarski and Cousot--Cousot, for $\mu$ and $\nu$---we derive the following four ``proof principles.''
\begin{mycorollary}\label{cor:KTCCmunu}
Under the conditions of Thm.~\ref{thm:KTCC}:
 \begin{itemize} \setlength{\itemindent}{1cm}
 \item[(KT$\mu$)] $f(l)\sqsubseteq l$ implies $\mu f\sqsubseteq l$.
 \item[(KT$\nu$)] $l\sqsubseteq f(l)$ implies $l\sqsubseteq \nu f$.
 \item[(CC$\mu$)] $f^{\mathfrak{a}}(\bot)\sqsubseteq \mu f$ for each ordinal $\mathfrak{a}$.
 \item[(CC$\nu$)] $\nu f\sqsubseteq f^{\mathfrak{a}}(\top)$ for each ordinal $\mathfrak{a}$. \myqed
 \end{itemize}
\end{mycorollary}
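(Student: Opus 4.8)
The plan is to derive all four principles directly from Theorem~\ref{thm:KTCC}, treating the two Knaster--Tarski items as immediate consequences of the lattice-theoretic characterization and the two Cousot--Cousot items by a short transfinite induction; the $\nu$-cases then follow by the order-dual argument, appealing to the dual halves of Theorem~\ref{thm:KTCC} already noted in the text (``for the greatest fixed point $\nu f$ we have the dual results'').

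First I would prove (KT$\mu$). Assuming $f(l)\sqsubseteq l$, the element $l$ is by definition a prefixed point. By the Knaster--Tarski part of Theorem~\ref{thm:KTCC}, the prefixed points form a complete lattice whose least element is $\mu f$; since $\mu f$ is below every prefixed point, $\mu f\sqsubseteq l$. This is a one-line argument, and (KT$\nu$) is its order-dual: if $l\sqsubseteq f(l)$ then $l$ is a postfixed point, and by the dual of Knaster--Tarski the greatest postfixed point is $\nu f$, whence $l\sqsubseteq\nu f$.

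Next I would prove (CC$\mu$) by transfinite induction on $\mathfrak{a}$, establishing $f^{\mathfrak{a}}(\bot)\sqsubseteq\mu f$. The base case $\bot\sqsubseteq\mu f$ is trivial. For a successor ordinal, the induction hypothesis $f^{\mathfrak{a}}(\bot)\sqsubseteq\mu f$ together with monotonicity of $f$ and the fixed-point equation $f(\mu f)=\mu f$ gives $f^{\mathfrak{a}+1}(\bot)=f\bigl(f^{\mathfrak{a}}(\bot)\bigr)\sqsubseteq f(\mu f)=\mu f$. For a limit ordinal $\mathfrak{a}$, the defining equation $f^{\mathfrak{a}}(\bot)=\bigsqcup_{\mathfrak{b}<\mathfrak{a}}f^{\mathfrak{b}}(\bot)$ together with the induction hypotheses $f^{\mathfrak{b}}(\bot)\sqsubseteq\mu f$ for all $\mathfrak{b}<\mathfrak{a}$ exhibits $\mu f$ as an upper bound of the joined family, so the join lies below $\mu f$. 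The item (CC$\nu$) is once more the order-dual, obtained by running the same induction on the decreasing chain starting from $\top$ and using $\nu f$ as a lower bound.

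The calculations are routine; the only step needing genuine care is the limit-ordinal case, where I must invoke the explicit definition of $f^{\mathfrak{a}}(\bot)$ as a join and use that $\mu f$ dominates every earlier term of the chain. I would also make explicit at the outset that (KT$\nu$) and (CC$\nu$) rest on the dual statements of Theorem~\ref{thm:KTCC}, so that no independent argument beyond reversing the order is required for them.
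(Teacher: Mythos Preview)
Your proof is correct. The paper does not provide an explicit proof of this corollary: it simply appends a \qed{} to the statement, treating all four items as immediate consequences of Theorem~\ref{thm:KTCC} and its order-dual. Your derivation is precisely the standard unwinding of those consequences, so there is nothing to compare---you have spelled out what the paper leaves tacit.

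One minor remark: for (CC$\mu$) you could alternatively argue directly from Theorem~\ref{thm:KTCC}.2 without a separate transfinite induction. Since the theorem already asserts that the chain $\bot\sqsubseteq f(\bot)\sqsubseteq\cdots$ is increasing and stabilizes at $\mu f$, every term of the chain lies below the stabilization value $\mu f$. Your inductive argument is of course also valid and has the virtue of not relying on the stabilization clause, only on the fact that $\mu f$ is a fixed point.
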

Among these four, however, only two are applicable in verification: 
our goal is to show that an assertion $h$ is \emph{below} a fixed point (see~\S{}\ref{subsubsec:introRankingFuncCat}); the rules (KT$\nu$) and (CC$\mu$) are for \emph{under-approximation} and thus serve our goal; but the other two are for \emph{over-approximating} the fixed point in question. 

It is these order-theoretic principles behind (namely CC and KT) that cause the difference between the proof methods for liveness (lfp's) and safety (gfp's).
 The role of
 ordinals $\mathfrak{a}$---equivalence classes of well-ordered sets---in (CC$\mu$) can be
 discerned in the definitions of  ranking functions/supermartingales. These proof methods for liveness are in a sharp
 contrast with those for safety, in which
  finding an \emph{invariant} (i.e.\ a post-fixed point $l$ in (KT$\nu$))
 suffices.

The basic idea behind the current contribution---liveness checking by combination of coalgebraic simulation and corecursive algebra---can be laid out as follows. For verification it is convenient if we can rely on \emph{certificates} whose constraints are locally checkable. Their examples include invariants, various notions of (bi)simulation and a general notion of coalgebraic simulation; they are all postfixed points in a suitable sense. They should thus be able to witness only gfp's (not lfp's) in view of 
Cor.~\ref{cor:KTCCmunu}. Here we leverage the lfp-gfp coincidence in corecursive algebras to make coalgebraic simulations witness lfp's too. The lfp-gfp coincidence might seem a serious restriction but it is a common phenomenon in many ``interesting'' structures in computer science (as we discussed at the end of \S{}\ref{subsubsec:introRankingFuncCat})


\section{Categorical Ranking Functions}\label{sec:catTheoryRF}
Here we present our general categorical framework for ranking function-based liveness checking.



\subsection{Running Example: Two-Player Games}\label{subsec:runningexample}
In this section, in order to provide abstract notions with intuitions, we use two-player games (\S{}\ref{subsec:TPGRF}) as 
a running example.
We use the functor $\Ftpg=\pow^2(\place)\times \{0,1\}:\Sets\to\Sets$ to model them as coalgebras (\S{}\ref{subsec:corecAlg}, here $\tpg$ stands for ``game'').

\begin{mydefinition}\label{def:TPGcoalg}
Given a $\Ftpg$-coalgebra $c:X\to \Ftpg X$,
we define 
a game structure $G^c=(X^c_{\max},X^c_{\min},\tau^c)$ and a set $\Acc^c\subseteq X_{\max}$ of 
accepting states as follows: $X^c_{\max}=X$, $X^c_{\min}=\pow X$, 
$\tau^c=\{(x,A)\mid x\in X,  A\in c_1(x)\}\cup \{(A,x')\mid A\in \pow X, x'\in A\}$ and 
$\Acc^c=\{x\in X\mid c_2(x)=1\}$.
Here we write $c(x)=(c_1(x),c_2(x))\in \pow^2 X\times\{0,1\}$ for every $x$.

Conversely, given a game structure $G=(X_{\max},X_{\min},\tau)$ and a set $\Acc\subseteq X_{\max}$,
we define an $\Ftpg$-coalgebra $c^{G,\Acc}:X\to \Ftpg X$ as follows: $X=X_{\max}$ and
$c^{G,\Acc}(x)=(\{\{x'\in X_{\max}\mid (y,x')\in \tau \} \mid y\in X_{\min},(x,y)\in \tau\}, t)$ where $t$ is $1$ if $x\in \Acc$ and $0$ otherwise.
\end{mydefinition}
\noindent
 The  above two transformations 
constitute an embedding-projection pair: games and $\Ftpg$-coalgebra are almost equivalent; the former have additional freedom (in the choice of the set $X_{\min}$) that is however inessential.

Throughout the rest of this section, each categorical notion is accompanied by a concrete example in terms of two-player games.  
For readability, the details of these examples (they are all straightforward) are
 deferred 
to \S{}\ref{subsec:appendixTPG} in the appendix.
The other running example (PTSs) will be discussed later in 
\S{}\ref{subsec:examplePTS}.

\subsection{Modalities and Least Fixed-Point Properties, Categorically}\label{subsec:sfpSem}
Towards a categorical framework in which a soundness theorem is proved on the categorical level of abstraction, we need categorical modeling of modalities and least fixed-point properties. 
Our modeling here follows~\cite{Hasuo15CMCSJournVer,HinoKH016};
it has been sketched in~\S{}\ref{subsubsec:introModality}.


The following function is heavily used in our developments.

\vspace{2mm}
%
\noindent
\begin{minipage}{0.40\hsize}
\begin{mydefinition} [$\Phi_{c,a}$]
\label{def:Phicsigma}
\end{mydefinition}
\end{minipage}\!\!
%
Let $F:\Sets\to\Sets$,
$c:X\to FX$ 
be a coalgebra
and  $a:FA\to A$ be an algebra.
We define a function $\Phi_{c,a}:\Setsto{X}{A}\to\Setsto{X}{A}$ by
 $\Phi_{c,a}(f)=a\circ Ff\circ c$, that is,
\vspace{-2mm}
\begin{displaymath}
\begin{matrix}
\left(
 \begin{xy}
 \xymatrix@R=1.6em@C=3.0em{
 {X}\ar[r]^{f}   & {A} 
 }
 \end{xy}
 \right)
& 
\stackrel{\Phi_{c,a}}{\longmapsto }
& 
\Biggl(\!\!
\raisebox{4mm}{\small
 \begin{xy}
 \xymatrix@R=.8em@C=2.2em{
 {F X} 
 \ar[r]^{F f} 
 & {F A} \ar[d]_{a}  \\
 {X} \ar[u]_{c}   & {A} 
 }
 \end{xy}
 }\!
 \Biggr)
\end{matrix}\enspace.
\end{displaymath}
\noindent
Then corecursiveness (Def.~\ref{def:corecAlg}) is rephrased as follows:
 $r:FR\to R$ is corecursive
if and only if
the function $\Phi_{c,r}$ has a unique fixed point for each 
$c:X\to FX$.

Our categorical modeling of \emph{modality} is as follows.
%
\begin{mydefinition}[a truth-value domain and an $F$-modality]\label{def:modality}
A \emph{truth-value domain} is 
a poset $(\Omega,\sqsubseteq_\Omega)$.
If 
the order 
is clear from the context 
we simply write $\Omega$.
For a functor $F:\Sets\to\Sets$,
an \emph{$F$-modality} over the truth-value domain $\Omega$ is 
an $F$-algebra $\sigma:F\Omega\to\Omega$.
\end{mydefinition}


\begin{myexample}\label{example:concreteDefTwoPlayer}
For
two-player games (i.e.\  $\Ftpg$-coalgebras)
a natural  truth-value domain is given by $(\Omegatpg,\leqOmegatpg)$ where
$1$ stands for ``true.''
On top of this domain a natural
$\Ftpg$-modality $\sigmatpg:\Ftpg\Omegatpg\to\Omegatpg$ is given as follows.
%
\begin{displaymath}
\sigmatpg(\Gamma,t)=
\begin{cases}
1 & (t=1) \\
\max_{A\in\Gamma}\min_{a\in A}a & (\text{otherwise})
\end{cases}
\end{displaymath}
%
Here, in $(\Gamma,t)\in \Ftpg X = \pow^2 X\times \{0,1\}$, $t\in\{0,1\}$ indicates if the current state is accepting or not ($t=1$ if yes). The second case in the above definition of $\sigmatpg(\Gamma,t)$ reflects the intention that, in $\Gamma\in \pow(\pow X)$, the first $\pow$ is for the angelic player $\max$'s choice while the second $\pow$ is for the demonic $\min$'s.
\end{myexample}
%


Using an $F$-modality $\sigma$, liveness is categorically characterized as a least fixed-point property.

\begin{wrapfigure}[4]{r}{2cm}
\small
\vspace{-5mm}
\hspace{-2mm}
 \begin{xy}
 \xymatrix@R=1.6em@C=1.8em{
 {F X} \ar@{}[dr]|{=_\mu} \ar@{->}[r]^{F \llbracket\mu\sigma\rrbracket_c} 
 & {F \Omega}  \ar[d]_{\sigma} \\
 {X} \ar[u]_{c}  \ar@{->}[r]_{\llbracket\mu\sigma\rrbracket_c}
& {\Omega}} 
 \end{xy}
\end{wrapfigure}
\vspace{2mm}
\noindent
\begin{minipage}{0.42\hsize}
\begin{mydefinition}[$\sem{\mu\sigma}_c$]\label{def:lfpsem}
\end{mydefinition}
\end{minipage}\!\!
Let $(\Omega,\sqsubseteq_\Omega)$ be a truth-value domain and $\sigma:F\Omega\to\Omega$ be a modality.
We say 
that $\sigma$  
\emph{has least fixed points}
if for each $c:X\to FX$, the least fixed point 
of $\Phi_{c,\sigma}:\Setsto{X}{\Omega}\to\Setsto{X}{\Omega}$ (Def.~\ref{def:Phicsigma})---with respect to the pointwise extension 
of the order $\sqsubseteq_{\Omega}$---exists.
The least fixed point is called the \emph{(coalgebraic) least fixed-point property} in $c$ specified by $\sigma$, 
and is denoted by $\sem{\mu\sigma}_c:X\to\Omega$.\!
\vspace{2mm}

\begin{myexample}\label{example:concreteDefTwoPlayerSem}
The $\Ftpg$-modality $\sigmatpg:\Ftpg\Omegatpg\to\Omegatpg$ in Example~\ref{example:concreteDefTwoPlayer} has 
least fixed points (this follows from Prop.~\ref{prop:sigmaLFP} later). 
 For each coalgebra $c:X\to\Ftpg X$,
the least fixed-point property $\sem{\mu\sigmatpg}_{c}:X\to\{0,1\}$ is concretely described by:
\begin{displaymath}
\sem{\mu\sigmatpg}_{c}(x)=
\begin{cases}
1 & (\text{$x\in\Reach_{G^c,\Acc^c}$}) \\
0 & (\text{otherwise}).
\end{cases}
\end{displaymath}
Conversely, for each pair of a game structure $G$ and a set $\Acc$,
$\Reach_{G,\Acc}$ is described by $\sem{\mu\sigmatpg}_{c^{G,\Acc}}$.
See Prop.~\ref{prop:concreteDefTwoPlayer} for a proof. This way we characterize reachability in two-player games in categorical terms.
%
%
%
\end{myexample}


\subsection{Ranking Domains and Ranking Arrows}
As we described in~\S{}\ref{subsubsec:introRankingFuncCat}, we understand \emph{liveness checking} as the task of determining if $h\sqsubseteq \sem{\mu\sigma}_{c}$, for a given assertion $h\colon X\to \Omega$. Here we introduce our categorical machinery for providing witnesses to such satisfaction of liveness.


For simplicity of arguments we assume the following.
%
\begin{myassumption}\label{asm:behDomainrank}
Let $F:\Sets\to\Sets$.
We assume that a truth-value domain $(\Omega,\sqsubseteq_\Omega)$ and an $F$-modality
$\sigma:F\Omega\to\Omega$ over $\Omega$ satisfy the following conditions.
%
\begin{enumerate}
%

\item\label{asm:behDomain4and3}
The poset $(\Omega,\sqsubseteq_\Omega)$ is a complete lattice.

%
%
%
%
%

\item\label{asm:behDomain1and2}
For each $F$-coalgebra $c:X\to FX$, the function $\Phi_{c,\sigma}:\Setsto{X}{\Omega}\to\Setsto{X}{\Omega}$ in
Def.~\ref{def:lfpsem} is monotone with respect to the pointwise extension of $\sqsubseteq_{\Omega}$.

\end{enumerate}
\end{myassumption}
%
\noindent
These assumptions are mild. For example, Cond.~\ref{asm:behDomain1and2} is satisfied if: $F\Omega$ has an order structure; $\sigma\colon F\Omega\to \Omega$ is monotone; and the action $F_{X,\Omega}\colon \Omega^X\to (F\Omega)^{FX}$ of $F$ on arrows is monotone, too. 
Cond.~\ref{asm:behDomain4and3} in the above implies that $\Setsto{X}{\Omega}$ is a complete lattice.
%
Thus 
we can  construct a transfinite sequence 
$\bot_\Omega
\sqsubseteq
\Phi_{c,\sigma}(\bot_\Omega) 
\sqsubseteq
  \cdots
\sqsubseteq
  \Phi_{c,\sigma}^{\mathfrak{a}}(\bot_\Omega)
\sqsubseteq
  \cdots$ 
as in Thm.~\ref{thm:KTCC}.2, to
obtain the least fixed point of $\Phi_{c,\sigma}$ as its
limit.

\begin{myproposition}\label{prop:sigmaLFP}
Under the conditions in Asm.~\ref{asm:behDomainrank},
 $\sigma$ has least fixed points (in the sense of Def.~\ref{def:lfpsem}).
\qed
\end{myproposition}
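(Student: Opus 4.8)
The plan is to reduce the claim directly to the Knaster--Tarski theorem (Thm.~\ref{thm:KTCC}.1): the two conditions of Asm.~\ref{asm:behDomainrank} are tailored precisely to supply its hypotheses, so the proof is essentially a repackaging of the order-theoretic machinery of \S{}\ref{subsec:lfpgfp} at the abstract level. Concretely, I would fix an arbitrary coalgebra $c:X\to FX$ and argue that $\Phi_{c,\sigma}:\Setsto{X}{\Omega}\to\Setsto{X}{\Omega}$ is a monotone endofunction on a complete lattice, whence its least fixed point exists.

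First I would establish that $\Setsto{X}{\Omega}$, under the pointwise extension of $\sqsubseteq_\Omega$, is a complete lattice. This follows from Cond.~\ref{asm:behDomain4and3} (that $(\Omega,\sqsubseteq_\Omega)$ is a complete lattice): for any family $\{f_i\}_{i\in I}$ of arrows $X\to\Omega$, the pointwise assignments $x\mapsto\bigsqcup_{i}f_i(x)$ and $x\mapsto\bigsqcap_{i}f_i(x)$ are well-defined by completeness of $\Omega$, and one checks routinely that they are the supremum and infimum of $\{f_i\}$ in $\Setsto{X}{\Omega}$. This is exactly the consequence already flagged in the remark immediately following Asm.~\ref{asm:behDomainrank}. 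Next, Cond.~\ref{asm:behDomain1and2} supplies directly that $\Phi_{c,\sigma}$ is monotone with respect to this pointwise order.

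With both hypotheses in hand, Knaster--Tarski (Thm.~\ref{thm:KTCC}.1) yields a least fixed point of $\Phi_{c,\sigma}$. Since $c$ was arbitrary, this holds for every coalgebra, which is precisely the assertion that $\sigma$ has least fixed points in the sense of Def.~\ref{def:lfpsem}. As an equivalent route one could instead invoke the Cousot--Cousot part (Thm.~\ref{thm:KTCC}.2) and realize the least fixed point as the limit of the transfinite ascending chain $\bot_\Omega\sqsubseteq\Phi_{c,\sigma}(\bot_\Omega)\sqsubseteq\cdots\sqsubseteq\Phi_{c,\sigma}^{\mathfrak{a}}(\bot_\Omega)\sqsubseteq\cdots$, exactly as anticipated in the paragraph preceding the proposition. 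There is no genuine obstacle here; if anything merits care it is only the standard verification that suprema and infima in $\Setsto{X}{\Omega}$ are computed pointwise, so that completeness transfers from $\Omega$ to the function space.
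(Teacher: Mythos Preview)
Your proposal is correct and matches the paper's own reasoning: the paper gives no formal proof (the statement carries only a \qed), but the paragraph immediately preceding it spells out exactly the argument you give, namely that Cond.~\ref{asm:behDomain4and3} makes $\Setsto{X}{\Omega}$ a complete lattice and then Thm.~\ref{thm:KTCC} (via the transfinite chain from $\bot_\Omega$) yields the least fixed point of the monotone map $\Phi_{c,\sigma}$. Your primary route invokes Knaster--Tarski while the paper's surrounding text emphasizes the Cousot--Cousot construction, but you already note these are equivalent, so there is no substantive difference.
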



\begin{myexample}\label{example:FsigmaSatisfyAsmTPG}
The data $\Ftpg,\sigmatpg$ for two-player games satisfy the assumptions:
%
see Prop.~\ref{prop:FsigmaSatisfyAsmTPG} (in the appendix) for a proof.
\end{myexample}




We are ready to introduce the key notions.

\vspace{2mm}
\noindent
\begin{minipage}{0.65\hsize}
\begin{mydefinition}[ranking domains]\label{def:rankingdom}
\end{mydefinition}
\end{minipage}\!\!
We assume 
Asm.~\ref{asm:behDomainrank}.
Let $r:FR\to R$ be an $F$-algebra,
$q:R\to \Omega$ 
be an arrow, and
%
$\sqsubseteq_R$ be a partial order on $R$.
Note that for each set $X$, the order $\sqsubseteq_\Omega$ (resp.\ $\sqsubseteq_R$) extends to the one between functions $X\to \Omega$ (resp.\ $X\to R$)
%
\begin{wrapfigure}[4]{r}{4.0cm}
\vspace{-3mm}
\mbox{
\small
 \begin{xy}
 \xymatrix@R=1.6em@C=1.9em{
 {F X} \ar@{}[dr]|{\rotatebox{90}{$\sqsubseteq$}} \ar[r]_{F b} 
 & {F R} \ar[d]_{r} \ar@{}[dr]|{\sqsubseteq} 
 \ar[r]_{F q}  & {F \Omega} \ar[d]_{\sigma} &\\
 {X} \ar[u]_{c}  \ar[r]^{b} 
 & {R} \ar[r]^{q} & {\Omega} 
	\ar@{}[ru]|(.9){\hspace{-5mm}\begin{minipage}{0.1\hsize}
	\begin{equation}\label{eq:diagramRD}
	\end{equation}
	\end{minipage}} 
 &
 }
 \end{xy}
 }
\end{wrapfigure}
 in a pointwise manner.

A triple $(r,q,\sqsubseteq_R)$ is called 
a \emph{ranking domain} for $F$ and $\sigma$ if the following conditions are satisfied.
\begin{enumerate}
\item\label{item:def:rankingdom2}
We have $q\circ r\sqsubseteq_{\Omega} \sigma\circ Fq$ between arrows $FR\to \Omega$
(the square on the right in~(\ref{eq:diagramRD})).

\item\label{item:def:rankingdom4}
The same conditions as in Asm.~\ref{asm:behDomainrank} 
hold for $r$, i.e.\ 
%
\begin{enumerate}


\item\label{item:def:rankingdom40and3}
the poset $(R,\sqsubseteq_R)$ 
is a complete lattice; and

%
%
%
%

\item\label{item:def:rankingdom41and2}
for each $c:X\to FX$, the function $\Phi_{c,r}:\Setsto{X}{R}\to\Setsto{X}{R}$ 
(Def.~\ref{def:Phicsigma}) is monotone.

\end{enumerate}

\item\label{item:def:rankingdom3}
The function $q:R\to \Omega$ is monotone (i.e.\ $a\sqsubseteq_R b\,\Rightarrow q(a)\sqsubseteq_\Omega q(b)$),
strict (i.e.\ $q(\bot_R)=\bot_\Omega$) and
 continuous (i.e.\ for each 
  subset $K\subseteq R$,
we have $q(\bigsqcup_{a\in K} a)=\bigsqcup_{a\in K}q(a)$). 

\item\label{item:def:rankingdom7}
The algebra $r:FR\to R$ is corecursive.
\end{enumerate}
%

\vspace{2mm}


Cond.~\ref{item:def:rankingdom4} in the definition ensures that the least fixed point of
$\Phi_{c,r}$ 
arises from the approximation sequence in Thm.~\ref{thm:KTCC}.2.
Cond.~\ref{item:def:rankingdom3} ensures that
this least fixed point
is preserved by $q$. In particular we insist on strictness---this is much like in domain theory~\cite{Plotkin83PisaNotes}.
The most significant in Def.~\ref{def:rankingdom} 
is the corecursiveness of $r$ (Cond.~\ref{item:def:rankingdom7}): it makes $r$ a refinement of $\sigma$ that is suited for detecting well-foundedness.

\begin{mydefinition}[ranking arrows]\label{def:rankingarrow}
Let $(r,q,\sqsubseteq_R)$ be a ranking domain for $F$ and $\sigma$; and $c:X\to FX$ be a coalgebra. An arrow $b:X\to R$ is called a \emph{(coalgebraic) ranking arrow}
for $c$ with respect to $(r,q,\sqsubseteq_R)$
if it satisfies 
$b\sqsubseteq_{R} \Phi_{c,r}(b)=r\circ Fb\circ c$ (the square  on the left
in~(\ref{eq:diagramRD})).
\end{mydefinition}

Now we give a soundness theorem for  (categorical) ranking arrows. This is the main theorem of this paper;
its proof 
demonstrates the role of the corecursiveness assumption.

\vspace{2mm}
\noindent
\begin{minipage}{.49\hsize}
\begin{mytheorem}[soundness]\label{thm:soundnessranking}
\end{mytheorem}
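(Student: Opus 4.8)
The plan is to prove the slightly stronger inequality $q \circ b \sqsubseteq_\Omega \sem{\mu\sigma}_c$ between arrows $X\to\Omega$; the stated conclusion $h \sqsubseteq_\Omega \sem{\mu\sigma}_c$ then follows immediately by transitivity from the hypothesis $h \sqsubseteq_\Omega q\circ b$. Recalling that $\sem{\mu\sigma}_c$ is by Def.~\ref{def:lfpsem} the least fixed point $\mu\Phi_{c,\sigma}$, the whole argument is organized around the chain
\[
q\circ b \;\sqsubseteq_\Omega\; q\circ(\mu\Phi_{c,r}) \;\sqsubseteq_\Omega\; \mu\Phi_{c,\sigma} \;=\; \sem{\mu\sigma}_c ,
\]
so it suffices to establish the first two inequalities.

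First I would establish the leftmost inequality, and this is where corecursiveness does the decisive work. By Def.~\ref{def:rankingarrow} a ranking arrow is a post-fixed point, $b \sqsubseteq_R \Phi_{c,r}(b)$; since $R^X$ is a complete lattice by Cond.~\ref{item:def:rankingdom40and3} and $\Phi_{c,r}$ is monotone by Cond.~\ref{item:def:rankingdom41and2}, the rule (KT$\nu$) of Cor.~\ref{cor:KTCCmunu} gives $b \sqsubseteq_R \nu\Phi_{c,r}$. On its own this only bounds the \emph{greatest} fixed point from below, which is useless for a least fixed-point property. The corecursiveness of $r$ (Cond.~\ref{item:def:rankingdom7}), in the reformulation given below Def.~\ref{def:Phicsigma}, says that $\Phi_{c,r}$ has a \emph{unique} fixed point, hence its greatest and least fixed points coincide, $\nu\Phi_{c,r}=\mu\Phi_{c,r}$. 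Therefore $b \sqsubseteq_R \mu\Phi_{c,r}$, and applying the monotone map $q$ (Cond.~\ref{item:def:rankingdom3}) pointwise yields $q\circ b \sqsubseteq_\Omega q\circ(\mu\Phi_{c,r})$. This lfp--gfp collapse is precisely what lets a locally checkable certificate $b$ witness a least fixed-point property.

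The central step is the middle inequality $q\circ(\mu\Phi_{c,r}) \sqsubseteq_\Omega \mu\Phi_{c,\sigma}$, which I would prove by transfinite induction comparing the Cousot--Cousot approximants of the two least fixed points (Thm.~\ref{thm:KTCC}.2). Concretely I would show $q\circ \Phi_{c,r}^{\mathfrak{a}}(\bot_R) \sqsubseteq_\Omega \Phi_{c,\sigma}^{\mathfrak{a}}(\bot_\Omega)$ for every ordinal $\mathfrak{a}$. The base case is exactly the strictness $q(\bot_R)=\bot_\Omega$ (Cond.~\ref{item:def:rankingdom3}). For the successor step, writing $g=\Phi_{c,r}^{\mathfrak{a}}(\bot_R)$, I would compute
\[
q\circ\Phi_{c,r}(g) \;=\; q\circ r\circ Fg\circ c \;\sqsubseteq_\Omega\; \sigma\circ Fq\circ Fg\circ c \;=\; \Phi_{c,\sigma}(q\circ g),
\]
where the inequality is the lax-homomorphism Cond.~\ref{item:def:rankingdom2} ($q\circ r \sqsubseteq_\Omega \sigma\circ Fq$) precomposed with $Fg\circ c$, and the last equality uses functoriality $Fq\circ Fg=F(q\circ g)$. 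The induction hypothesis together with monotonicity of $\Phi_{c,\sigma}$ (Asm.~\ref{asm:behDomainrank}.\ref{asm:behDomain1and2}) then bounds this by $\Phi_{c,\sigma}^{\mathfrak{a}+1}(\bot_\Omega)$. The limit step is where continuity of $q$ (Cond.~\ref{item:def:rankingdom3}) is needed: it lets $q$ commute with the transfinite joins defining the approximants, so both suprema can be compared termwise via the induction hypothesis. Passing to a sufficiently large ordinal at which both sequences have stabilized gives $q\circ(\mu\Phi_{c,r}) \sqsubseteq_\Omega \mu\Phi_{c,\sigma}$, completing the chain.

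I expect the transfinite induction to be the main technical obstacle, and in particular the bookkeeping that keeps the comparison at \emph{matching} ordinal stages, rather than merely bounding each $R$-approximant by the final $\Omega$-fixed point, since it is this stage-matching that makes the limit step go through cleanly via continuity of $q$. Conceptually, however, the genuinely load-bearing ingredient is the corecursiveness-driven collapse $\nu\Phi_{c,r}=\mu\Phi_{c,r}$ of the second paragraph: it is what bridges the post-fixed-point certificate and the least fixed-point semantics, and it is the only place where Cond.~\ref{item:def:rankingdom7} is used.
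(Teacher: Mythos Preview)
Your proof is correct and follows essentially the same architecture as the paper's: corecursiveness collapses $\nu\Phi_{c,r}=\mu\Phi_{c,r}$ so that the post-fixed point $b$ lies below the unique fixed point, and a transfinite induction along the Cousot--Cousot approximants (using strictness, the lax square $q\circ r\sqsubseteq_\Omega\sigma\circ Fq$, and continuity of $q$) then compares $q\circ\mu\Phi_{c,r}$ with $\sem{\mu\sigma}_c$.

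Two minor deviations are worth noting. First, to obtain $b\sqsubseteq_R\mu\Phi_{c,r}$ the paper iterates $\Phi_{c,r}$ starting from $b$ (Lem.~\ref{lem:cousotCousotExt}) rather than invoking (KT$\nu$); both reach the unique fixed point. Second, the paper's transfinite induction proves the simpler invariant $q\circ\Phi_{c,r}^{\mathfrak{a}}(\bot_R)\sqsubseteq_\Omega\sem{\mu\sigma}_c$ directly, not your stage-matched $q\circ\Phi_{c,r}^{\mathfrak{a}}(\bot_R)\sqsubseteq_\Omega\Phi_{c,\sigma}^{\mathfrak{a}}(\bot_\Omega)$. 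Your remark that stage-matching is what ``makes the limit step go through cleanly'' is not accurate: the paper's limit step is equally clean, since continuity of $q$ gives $q\bigl(\bigsqcup_{\mathfrak{a}<\mathfrak{l}}\Phi_{c,r}^{\mathfrak{a}}(\bot_R)\bigr)=\bigsqcup_{\mathfrak{a}<\mathfrak{l}}q\circ\Phi_{c,r}^{\mathfrak{a}}(\bot_R)\sqsubseteq_\Omega\sem{\mu\sigma}_c$ directly from the induction hypothesis. Your stage-matched invariant is slightly stronger but unnecessary here.
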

\end{minipage}\!\!
Let $(r,q,\sqsubseteq_R)$ be a ranking do-
%
\begin{wrapfigure}[7]{r}{2.0cm}
\vspace{-4mm}
\small
 \begin{xy}
 \xymatrix@R=1.8em@C=2.3em{
 {F X} \ar@{}[dr]|{\rotatebox{90}{$\sqsubseteq$}} \ar[r]_{F b} 
 \ar@/^.8em/[rr]^{F\,\llbracket \mu\sigma\rrbracket_c}
 & {F R} \ar[d]_{r} \ar@{}[dr]|{\sqsubseteq} 
 \ar[r]_{F q}  & {F \Omega} \ar[d]_{\sigma} \\
 {X} \ar[u]_{c}  \ar[r]^{b} \ar@/_.8em/[rr]_{\llbracket \mu\sigma\rrbracket_c} & {R} \ar[r]^{q} & {\Omega} 
 }
 \end{xy}
\end{wrapfigure}
main.
Let $c:X\to FX$ be an $F$-coalgebra and $b:X\to R$ be a ranking arrow for $c$ (i.e.\ $b\sqsubseteq r\circ Fb\circ c$).
Then we have: 
\begin{displaymath}
q\circ b\sqsubseteq_{\Omega} \llbracket \mu\sigma\rrbracket_c\,.
\end{displaymath}

Thus for liveness checking (i.e.\ for proving $h\sqsubseteq \sem{\mu\sigma}_{c}$) it suffices to find a ranking arrow $b$ such that $h\sqsubseteq q\circ b$. 
In the proof of the theorem we use the following generalization of Thm.~\ref{thm:KTCC}.2. It starts from a post-fixed point $l$ (not from $\bot$).

\begin{mylemma}\label{lem:cousotCousotExt}
 Assume the conditions in Thm.~\ref{thm:KTCC}, and let $l$ be a post-fixed point of $f$, i.e.\ $l\sqsubseteq f(l)$.
 Then we can define a transfinite sequence
  \begin{math}
  l\sqsubseteq f(l) \sqsubseteq \cdots \sqsubseteq f^{\mathfrak{a}}(l)\sqsubseteq \cdots
 \end{math} 
 in a similar manner to Thm.~\ref{thm:KTCC}.2.
 The sequence eventually stabilizes and its limit is a (not necessarily least) fixed point of $f$.
 \qed
\end{mylemma}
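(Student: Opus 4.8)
The plan is to imitate the proof of Thm~\ref{thm:KTCC}.2 almost verbatim, the only change being the base point. First I would define the transfinite iterates by $f^{0}(l)=l$, by $f^{\mathfrak{a}+1}(l)=f(f^{\mathfrak{a}}(l))$ at successors, and by $f^{\mathfrak{a}}(l)=\bigsqcup_{\mathfrak{b}<\mathfrak{a}}f^{\mathfrak{b}}(l)$ at limit ordinals; all the joins involved exist because $L$ is a complete lattice. The construction is identical to the one in Thm~\ref{thm:KTCC}.2 except that $\bot$ is replaced by $l$, and $\bot$ was used there only to guarantee $\bot\sqsubseteq f(\bot)$; here the post-fixed-point hypothesis $l\sqsubseteq f(l)$ supplies exactly the same starting inequality.

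Next I would show that the sequence is increasing. The cleanest invariant to carry through a transfinite induction is that every iterate is again a post-fixed point, i.e.\ $f^{\mathfrak{a}}(l)\sqsubseteq f^{\mathfrak{a}+1}(l)$ for every ordinal $\mathfrak{a}$. The base case $\mathfrak{a}=0$ is the hypothesis $l\sqsubseteq f(l)$. At a successor $\mathfrak{a}=\mathfrak{c}+1$ I would apply the monotone $f$ to $f^{\mathfrak{c}}(l)\sqsubseteq f^{\mathfrak{c}+1}(l)$ to obtain $f^{\mathfrak{c}+1}(l)\sqsubseteq f^{\mathfrak{c}+2}(l)$. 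The limit case is the only delicate point: for a limit $\mathfrak{a}$ and each $\mathfrak{b}<\mathfrak{a}$ the induction hypothesis gives $f^{\mathfrak{b}}(l)\sqsubseteq f^{\mathfrak{b}+1}(l)=f(f^{\mathfrak{b}}(l))$, while $f^{\mathfrak{b}}(l)\sqsubseteq f^{\mathfrak{a}}(l)$ holds by definition of the join, so monotonicity of $f$ yields $f^{\mathfrak{b}}(l)\sqsubseteq f(f^{\mathfrak{a}}(l))$; taking the join over all $\mathfrak{b}<\mathfrak{a}$ gives $f^{\mathfrak{a}}(l)\sqsubseteq f(f^{\mathfrak{a}}(l))=f^{\mathfrak{a}+1}(l)$. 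From this single-step inequality the full monotonicity $\mathfrak{b}\leq\mathfrak{a}\Rightarrow f^{\mathfrak{b}}(l)\sqsubseteq f^{\mathfrak{a}}(l)$ follows by a routine induction (chaining successors and using the join at limits).

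Finally I would argue stabilization and that the limit is a fixed point. Since $L$ is a set, the (weakly) increasing ordinal-indexed chain cannot increase strictly beyond the cardinality of $L$, so a standard cardinality argument yields an ordinal $\mathfrak{a}$ with $f^{\mathfrak{a}}(l)=f^{\mathfrak{a}+1}(l)$, i.e.\ $f^{\mathfrak{a}}(l)=f(f^{\mathfrak{a}}(l))$; this value is then a fixed point of $f$ and is the limit of the sequence. Unlike in Thm~\ref{thm:KTCC}.2, it need not be $\mu f$, since $l$ may lie strictly above $\mu f$; hence the statement only claims ``a (not necessarily least) fixed point.'' I expect the only genuine obstacle to be the limit case of the induction above---everything else is a direct transcription of the Cousot--Cousot argument with $\bot$ replaced by $l$.
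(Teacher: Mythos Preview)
Your proposal is correct and follows essentially the same approach as the paper: define the transfinite iterates starting from $l$, use the post-fixed-point hypothesis together with monotonicity of $f$ to get an increasing chain, and then apply a cardinality argument to obtain stabilization at a fixed point. The paper's proof is terser---it jumps straight to the pigeonhole step (choosing $\mathfrak{m}$ with $|L|<|\mathfrak{m}|$ and noting two iterates must coincide, hence all intermediate ones do)---but your more detailed transfinite induction for monotonicity is the content hidden in the paper's phrase ``by monotonicity of $f$ and that $l\sqsubseteq f(l)$.''
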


\vspace{1mm}
\noindent\textit{Proof\;(Thm.\;\ref{thm:soundnessranking}).}
%
By Cond.~\ref{item:def:rankingdom40and3}  in Def.~\ref{def:rankingdom}, the poset $(\Setsto{X}{R},\sqsubseteq_{R})$ is 
a complete lattice.
%
Moreover, by its definition, $b:X\to R$ is a post-fixed point of $\Phi_{c,r}$.
Hence together with Cond.~\ref{item:def:rankingdom41and2}, 
we can construct a transfinite sequence 
$b\sqsubseteq_{R} \Phi_{c,r}(b) \sqsubseteq_{R}  \cdots \sqsubseteq_{R}  \Phi_{c,r}^{\mathfrak{a}}(b)\sqsubseteq_{R}  \cdots$ 
as in Lem.~\ref{lem:cousotCousotExt}.
By Lem.~\ref{lem:cousotCousotExt},
there exists an ordinal $\mathfrak{m}$ such that 
$\Phi_{c,r}^\mathfrak{m}(b)$ is a fixed point of $\Phi_{c,r}$.
By its definition, we have $b\sqsubseteq_{R}\Phi_{c,r}^{\mathfrak{m}}(b)$.

Note here that $r$ is assumed to be a corecursive algebra (Cond.~\ref{item:def:rankingdom7} in Def.~\ref{def:rankingdom}).
Hence $\Phi_{c,\sigma}$ has a unique fixed point;
it is denoted by $\uniquefp{c}_r:X\to R$. 
Then we have:
\begin{equation}
b\sqsubseteq_{R}\Phi_{c,r}^{\mathfrak{m}}(b)=\uniquefp{c}_r\,.
\label{eq:thm:soundnessranking1}
\end{equation}

\begin{wrapfigure}[6]{r}{2.0cm}
\vspace{-5mm}
\small
 \begin{xy}
 \xymatrix@R=1.8em@C=2.3em{
 {F X} 
 \ar@/_.6em/[r]_(.4){F b} \ar[r]^(.65){F\uniquefp{c}_r} 
 \ar@/^1.6em/[rr]^(.8){F\,\llbracket \mu\sigma\rrbracket_c}
 & {F R} \ar[d]_{r} \ar@{}[dr]|{\sqsubseteq} 
 \ar[r]_{F q}  & {F \Omega} \ar[d]_{\sigma} \\
 {X} \ar[u]_{c}  \ar@/^.6em/[r]^(.6){b} \ar[r]_(.7){\uniquefp{c}_r}  \ar@/_1.6em/[rr]_(.8){\llbracket \mu\sigma\rrbracket_c} & {R} \ar[r]^{q} & {\Omega} 
 }
 \end{xy}
\end{wrapfigure}
%
By Cond.~\ref{item:def:rankingdom40and3}  in Def.~\ref{def:rankingdom}, $\Setsto{X}{R}$ is a complete lattice.
Hence we can also define $\Phi_{c,r}^\mathfrak{a}(\bot_{R}):X\to R$ for each ordinal $\mathfrak{a}$ (here $\bot_R$ denotes the least element in $\Setsto{X}{R}$),
and by 
Thm.~\ref{thm:KTCC}.2,
there exists $\mathfrak{m}'$ such that $\Phi_{c,r}^{\mathfrak{m}'}(\bot_R)$ is also a fixed point of $\Phi_{c,r}$.
Hence $\Phi_{c,r}^{\mathfrak{m}'}(\bot_R)=\uniquefp{c}_r$.

%
%
%
%
%

%

We shall now prove that $q\circ\Phi_{c,r}^\mathfrak{a}(\bot_{R})\sqsubseteq \llbracket \mu\sigma\rrbracket_c$ holds for each ordinal $\mathfrak{a}$. This is
by transfinite induction on $\mathfrak{a}$. 

For $\mathfrak{a}=0$, we have:
\begin{align*}
q\circ\Phi_{c,r}^\mathfrak{a}(\bot_{R}) 
&= \;\;\quad q\circ \bot_{R} & (\text{by definition}) \\
&=\;\;\quad \bot_{\Omega} & (\text{by Cond.~\ref{item:def:rankingdom3} in Def.~\ref{def:rankingdom}}) \\
&\sqsubseteq_{\Omega} \llbracket \mu\sigma\rrbracket_c\,. 
\end{align*}

For a successor ordinal $\mathfrak{a}+1$, we have:
\begin{align*}
&q\circ\Phi_{c,r}^{\mathfrak{a}+1}(\bot_{R}) \\
& =\;\;\quad q\circ r\circ F(\Phi_{c,r}^\mathfrak{a}(\bot_{R}))\circ c
 & (\text{by definition}) \\
&\sqsubseteq_{\Omega} \sigma\circ  F(q\circ \Phi_{c,r}^\mathfrak{a}(\bot_{R}))\circ c
& (\text{by Cond.~\ref{item:def:rankingdom2} in Def.~\ref{def:rankingdom}}) \\
&\sqsubseteq_{\Omega} \sigma\circ F(\llbracket \mu\sigma\rrbracket_c)\circ c
 & (\text{by IH and Asm.~\ref{asm:behDomainrank}.\ref{asm:behDomain1and2}}) \\
&=\;\;\quad \llbracket \mu\sigma\rrbracket_c\enspace.
& (\text{$\llbracket \mu\sigma\rrbracket_c$ is a fixed point})
\end{align*}

For a limit ordinal $\mathfrak{l}$, we have:
\begin{align*}\textstyle
&q\circ\Phi_{c,r}^{\mathfrak{l}}(\bot_{R}) \\
\textstyle
& \textstyle = q \bigl(\bigsqcup_{\mathfrak{a}<\mathfrak{l}}\Phi_{c,r}^{\mathfrak{a}}(\bot_{R}) \bigr) 
& (\text{by definition})) \\
\textstyle
&= \textstyle \bigsqcup_{\mathfrak{a}<\mathfrak{l}}\bigl(q\circ \Phi_{c,r}^{\mathfrak{a}}(\bot_{R}) \bigr)
& (\text{by Cond.~\ref{item:def:rankingdom3} in Def.~\ref{def:rankingdom}}) \\
\textstyle
&\sqsubseteq_{\Omega}  \llbracket \mu\sigma\rrbracket_c & (\text{by IH})\,.
\end{align*}
As $\Phi_{c,r}^\mathfrak{m'}(\bot_{R})=\uniquefp{c}_r$, 
the last fact yields
\begin{math}
q\circ\uniquefp{c}_r\;\sqsubseteq_{}\; \llbracket \mu\sigma\rrbracket_c
\end{math}. 
Combining with~(\ref{eq:thm:soundnessranking1}) and the monotonicity of $q$, we obtain
\begin{math}
q\circ b\,\sqsubseteq_{}\,q\circ\uniquefp{c}_r\,\sqsubseteq_{}\, \sem{\mu\sigma}_c
\end{math}.
This concludes the proof.
\qed
\vspace{2mm}


\begin{myremark}
Note that the requirement on ranking arrows---$b\sqsubseteq_{R} r\circ Fb\circ c$ (Def.~\ref{def:rankingarrow})---is a \emph{local} one: it only involves one-step transitions by $c$ and hence is easy to check. 

The condition asserts that $b$ is a suitable \emph{post-fixed point}. In view of the order-theoretic foundations in~\S{}\ref{subsec:lfpgfp} this might seem strange: we are using an invariant-like construct $b$ to witness a \emph{least} fixed point, not a greatest. We are allowed to do so thanks to the corecursiveness of $r\colon FR\to R$---here the least and greatest fixed points for $\Phi_{c,r}$ coincide. It is also crucial that $q$ preserves least fixed points, being strict and continuous.
\end{myremark}

%
\begin{myexample}\label{example:rankDomTwoPlayerGame}
For two-player games as $\Ftpg$-coalgebras, we can define a ranking domain
\begin{displaymath}
\Bigl(
\rtpgz:\Ftpg \Rtpgz\to \Rtpgz,\,
\qtpgz:\Rtpgz\to \Omegatpg,\,
\leqRtpgz
\Bigr)
\end{displaymath}
as follows.

%
%
\begin{enumerate}
\item[1)] 
$\Rtpgz=\{\mathfrak{a}\mid \mathfrak{a}\;\text{is an ordinal s.t.}\;\mathfrak{a}\leq\mathfrak{z}\}$, 
and
\begin{displaymath}
\rtpgz(\Gamma,t)=\begin{cases}
0 & (t=1) \\
\min_{A\in\Gamma}\sup_{\mathfrak{a}\in A}(\mathfrak{a}\nadd 1) & (\text{otherwise})\,.
\end{cases}
\end{displaymath}

\item[2)] 
$\qtpgz(\mathfrak{z})=0$, and 
$\qtpgz(\mathfrak{a})=1$ for any $\mathfrak{a}$ such that
 $\mathfrak{a}<\mathfrak{z}$; 

\item[3)] 
$\mathfrak{a}\leqRtpgz\mathfrak{b}\,\defarrow\,\mathfrak{a}\geq\mathfrak{b}$ (note the directions of inequalities).

\end{enumerate}
Recall that $\mathfrak{a}\nadd 1$ denotes $\min\{\mathfrak{a}+1,\mathfrak{z}\}$. 
%
The triple
$(\rtpgz,\qtpgz,\leqRtpgz)$ 
is indeed a ranking domain (see Prop.~\ref{prop:rankDomTwoPlayerGame} in the appendix).
One can think of the above data as a classifier for (non-)well-foundedness: all the ordinals $\mathfrak{a}<\mathfrak{z}$ are for ``well-founded'' and the maximum ordinal $\mathfrak{z}$ is for ``non-well-founded.'' Observe that the map $\qtpgz$ acts accordingly.

We indeed have the following correspondences.
%
%
%
\begin{enumerate}
\renewcommand{\labelenumi}{\alph{enumi})}
\item\label{item:prop:convRankFuncCoalg1Example}
 $b:X\to \Rtpgz$ is a (categorical) ranking arrow (Def.~\ref{def:rankingarrow}) for an $\Ftpg$-coalgebra 
 $c:X\to\Ftpg X$ wrt.\ $(\rtpgz,\qtpgz,\leqRtpgz)$ iff $b$ is a ranking function for $G^c$ and $\Acc^c$ 
 (in the conventional sense of Def.~\ref{def:rankFunc}).
 
 \item\label{item:prop:convRankFuncCoalg3Example}
 $b:X\to \Rtpgz$ is a ranking function for a game structure $G$ and a set $\Acc$ iff
 $b$ is a ranking arrow for $c^{G,\Acc}$ wrt.\ $(\rtpgz,\qtpgz,\leqRtpgz)$.
 
\item\label{item:prop:convRankFuncCoalg2Example}
 $b(x)<\mathfrak{z}$ iff $\qtpgz\circ b(x)=1$.
\end{enumerate}
Here recall the correspondence in Def.~\ref{def:TPGcoalg}.
A formal statement and its proof are found in Prop.~\ref{prop:convRankFuncCoalg}.

Combined with the characterization in Example~\ref{example:concreteDefTwoPlayerSem}, we conclude that the conventional soundness result (Thm.~\ref{thm:soundnessRankConv})  is an instance of our categorical soundness (Thm.~\ref{thm:soundnessranking}). 
\end{myexample}



\begin{myremark}\label{rem:optimalarrow}
Assume the conditions in Thm.~\ref{thm:soundnessranking}.
As $r:FR\to R$ is a corecursive algebra,
there exists a unique arrow $\uniquefp{c}_r:X\to R$ such that $\uniquefp{c}_r=r\circ F\uniquefp{c}_r\circ c$.
Since  $\uniquefp{c}_r$ is obviously the greatest fixed point of $\Phi_{c,\sigma}$, 
by the Knaster--Tarski theorem (cf.\ Thm.~\ref{thm:KTCC}),
for each (categorical) ranking arrow $b:X\to R$ 
we have $b\sqsubseteq_{R}\uniquefp{c}_r$.
 This means that if $h\sqsubseteq_\Omega q\circ b$ then $h\sqsubseteq_\Omega q\circ \uniquefp{c}_r$.
Therefore we can say that the unique arrow $\uniquefp{c}_r:X\to R$ is the ``optimal'' ranking arrow in the sense that
if $\uniquefp{c}_r$ cannot prove liveness then no (categorical) ranking arrow can prove liveness using $R$ and $q$, either.
For two-player games, the optimal ranking arrow is given by the function assigning a state $x$
 the minimum number of steps from $x$ to $\Acc$.
\end{myremark}

We note
that the converse of Thm.~\ref{thm:soundnessranking} (i.e.\ \emph{completeness}) does not necessarily hold.
In other words, it is possible that there exists no ranking arrow $b:X\to R$ such that 
$q\circ b=\llbracket\mu\sigma\rrbracket_c$. 
Here is a counterexample. 


\begin{wrapfigure}[4]{r}{2.2cm}
\vspace{-.4cm}
\small
\begin{xy}
(0,0)*=<1.8mm>[]++!D{x_0}*{\middlebullet} = "x0",
(5,0)*=<1.5mm>[F]++!D{} = "y1",
(10,0)*=<1.5mm>[Fo]++!D{x_1} = "x1",
(15,0)*=<1.5mm>[F]++!D{} = "y2",
(20,0)*=<1.5mm>[Fo]++!D{x_2} = "x2",
(25,0)*=<1.5mm>[F]++!D{} = "y3",
(29,0)*=<1.5mm>[Fo]++!D{x_3} = "x3",
(20,-8)*=<1.5mm>[Fo]++!L{x_\omega} = "yo",
(12,-8)*=<1.5mm>[F]++!L{} = "xo",
%
%
(35,0)*{\normalsize\dots},
%
%
\ar @{->} ^{} "y1"*+{};"x0"*=<3mm>{}
\ar @{->} ^{} "x1"*+{};"y1"*=<3mm>{}
\ar @{->} ^{} "y2"*+{};"x1"*=<3mm>{}
\ar @{->} ^{} "x2"*+{};"y2"*=<3mm>{}
\ar @{->} ^{} "y3"*+{};"x2"*=<3mm>{}
\ar @{->} ^{} "x3"*+{};"y3"*=<3mm>{}
\ar @{->} ^{} (34,0)*+{};"x3"*=<3mm>{}
\ar @{->} ^{} "xo"*+{};"x0"*=<3mm>{}
\ar @{->} ^{} "xo"*+{};"x1"*=<3mm>{}
\ar @{->} ^{} "xo"*+{};"x2"*=<3mm>{}
\ar @{->} ^{} "xo"*+{};"x3"*=<3mm>{}
\ar @{->} ^{} "yo"*+{};"xo"*=<3mm>{}
\end{xy}
\end{wrapfigure}
\vspace{2mm}
\noindent
\begin{minipage}{.3\hsize}
\begin{myexample}\label{example:incompletetpg}
\end{myexample}
\end{minipage}\!\!
We define an $\Ftpg$-coalgebra $c:X\to \Ftpg X$ by
$X=\{x_{\mathfrak{a}}\mid \mathfrak{a}\leq\omega\}$,
$c(x_0)=(\emptyset,1)$ and $c(x_{\mathfrak{a}})=(\{\{x_\mathfrak{b}\mid \mathfrak{b}<\mathfrak{a}\}\},0)$ 
for each $\mathfrak{a}>0$.
Note that in the corresponding game structure $G_c$, all the choices are made by the player $\min$. 
Then we have $\sem{\mu\sigmatpg}_c(x_\omega)=1$ because of well-foundedness of $\omega$.
However, the unique arrow $\uniquefp{c}_{r_{\tpg,\omega}}:X\to
\Ord_{\leq\omega}$ such that 
$\uniquefp{c}_{r_{\tpg,\omega}}=\Phi_{c,\sigmatpg}(\uniquefp{c}_{r_{\tpg,\omega}})$  assigns, to each state $x_{\mathfrak{a}}$, the ordinal $\mathfrak{a}$.
This means that $q_{\tpg,\omega}\circ \uniquefp{c}_{r_{\tpg,\omega}}(x_\mathfrak{\omega})=q_{\tpg,\omega}(\omega)=0$. 
Thus $q_{\tpg,\omega}\circ \uniquefp{c}_{r_{\tpg,\omega}}<\sem{\mu\sigmatpg}_{c}$. 

Similarly, for every ordinal number $\mathfrak{z}$,
we can construct an $\Ftpg$-coalgebra whose reachability cannot be proved by the ranking domain $\rtpgz:\Ftpg\Rtpgz\to\Rtpgz$.
%
\vspace{2mm}

By cardinality arguments we can show that
 sort of ``completeness'' holds in the example above, in the following sense: for every $\Ftpg$-coalgebra $c$
there exists an ordinal $\mathfrak{z}$ such that the reachability of $c$ is provable by the ranking domain $\rtpgz$.
However, in this paper we use the term ``completeness'' in a different sense in which we fix the domain $R$ of ranking functions
in advance.

Here is a categorical sufficient 
 condition for completeness.

\vspace{2mm}
\noindent
\begin{minipage}{\hsize}
\begin{myproposition}[a sufficient condition for completeness]\label{prop:completeness}
\end{myproposition}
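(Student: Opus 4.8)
The plan is to prove completeness by showing that the ``optimal'' ranking arrow $\uniquefp{c}_r$ of Rem.~\ref{rem:optimalarrow} is \emph{exact}, i.e.\ that $q\circ\uniquefp{c}_r=\sem{\mu\sigma}_c$; completeness is then immediate, since for any assertion $h$ with $h\sqsubseteq_\Omega\sem{\mu\sigma}_c$ the arrow $b:=\uniquefp{c}_r$ is a ranking arrow (every fixed point of $\Phi_{c,r}$ is \emph{a fortiori} a post-fixed point, Def.~\ref{def:rankingarrow}) and witnesses $h\sqsubseteq_\Omega q\circ b$. I expect the extra hypothesis of the proposition to be exactly that $q$ is a \emph{strict} (rather than merely lax) homomorphism, i.e.\ that the reverse $\sigma\circ Fq\sqsubseteq_\Omega q\circ r$ of Cond.~\ref{item:def:rankingdom2} in Def.~\ref{def:rankingdom} also holds, so that $q\circ r=\sigma\circ Fq$.

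One inequality is free: applying soundness (Thm.~\ref{thm:soundnessranking}) to the ranking arrow $\uniquefp{c}_r$ gives $q\circ\uniquefp{c}_r\sqsubseteq_\Omega\sem{\mu\sigma}_c$. For the reverse I would show that $q\circ\uniquefp{c}_r$ is a \emph{pre-fixed point} of $\Phi_{c,\sigma}$ and then invoke (KT$\mu$) from Cor.~\ref{cor:KTCCmunu}. Using that $\uniquefp{c}_r$ is the unique (hence a) fixed point of $\Phi_{c,r}$, so $r\circ F\uniquefp{c}_r\circ c=\uniquefp{c}_r$, together with functoriality of $F$ and the hypothesis $\sigma\circ Fq\sqsubseteq_\Omega q\circ r$, I would compute
\begin{align*}
\Phi_{c,\sigma}(q\circ\uniquefp{c}_r)
&=\sigma\circ F(q\circ\uniquefp{c}_r)\circ c
=\sigma\circ Fq\circ F\uniquefp{c}_r\circ c\\
&\sqsubseteq_\Omega q\circ r\circ F\uniquefp{c}_r\circ c
=q\circ\uniquefp{c}_r\,.
\end{align*}
By (KT$\mu$) this pre-fixed point dominates the least fixed point, so $\sem{\mu\sigma}_c\sqsubseteq_\Omega q\circ\uniquefp{c}_r$, and combining with the soundness inequality yields the desired equality.

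The single load-bearing step is the ``$\sqsubseteq_\Omega$'' in the display, which replaces $\sigma\circ Fq$ by $q\circ r$ and is licensed \emph{only} by the strictness hypothesis; in an arbitrary ranking domain one merely has the opposite inequality (Cond.~\ref{item:def:rankingdom2}), and it is precisely this gap that Example~\ref{example:incompletetpg} exploits---there the truncation $\nadd$ at $\mathfrak{z}$ makes $\sigma_\tpg\circ Fq_{\tpg,\omega}\sqsubseteq_\Omega q_{\tpg,\omega}\circ r_{\tpg,\omega}$ fail, letting $\uniquefp{c}_{r_{\tpg,\omega}}$ undervalue the well-founded state $x_\omega$. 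Consequently I anticipate no genuine obstacle in the argument itself; the only routine checks are that $\uniquefp{c}_r$ qualifies as a ranking arrow and that $\sem{\mu\sigma}_c$ lies below every pre-fixed point (both immediate from Def.~\ref{def:rankingarrow}, Asm.~\ref{asm:behDomainrank}, and Knaster--Tarski, Thm.~\ref{thm:KTCC}.1). The real content is conceptual: strictness of $q$ collapses the lax right-hand square of~(\ref{eq:diagramRD}) into a commuting one, transporting the unique fixed point of $\Phi_{c,r}$ onto a (pre-)fixed point of $\Phi_{c,\sigma}$ and thereby forcing the optimal ranking arrow to be tight.
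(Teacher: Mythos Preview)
Your proposal is correct and follows essentially the same route as the paper. The paper uses the full equality $q\circ r=\sigma\circ Fq$ to conclude directly that $q\circ\uniquefp{c}_r$ is a \emph{fixed point} of $\Phi_{c,\sigma}$ (whence $\sem{\mu\sigma}_c\sqsubseteq_\Omega q\circ\uniquefp{c}_r$ by minimality), and then combines with soundness; you instead use only the reverse inequality $\sigma\circ Fq\sqsubseteq_\Omega q\circ r$ to obtain a \emph{pre}-fixed point and invoke (KT$\mu$)---a harmless and slightly sharper variant, since the forward inequality is already Cond.~\ref{item:def:rankingdom2} of a ranking domain.
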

\end{minipage}
%
Let $(r,q,\sqsubseteq_{R})$ be a ranking domain,
$c:X\to F X$ be an
$F$-coalgebra, and
$\uniquefp{c}_r:X\to R$ be the unique arrow 
such 
\begin{wrapfigure}[6]{r}{3.8cm}
\small
\vspace{-.3cm}
 \begin{xy}
 \xymatrix@R=1.8em@C=2.4em{
 {F X} \ar@{}[dr]|{=} \ar[r]_{F \uniquefp{c}_r} 
 \ar@/^.6em/[rr]^{F\,\llbracket \mu\sigma\rrbracket_c}
 & {F R} \ar[d]_{r} \ar@{}[dr]|{=} 
 \ar[r]_{F q}  & {F \Omega} \ar[d]_{\sigma} \\
 {X} \ar[u]_{c}  \ar[r]^{\uniquefp{c}_r} \ar@/_.6em/[rr]_{\llbracket \mu\sigma\rrbracket_c} & {R} \ar[r]^{q} & {\Omega} 
 }
 \end{xy}
\end{wrapfigure}
that $\uniquefp{c}_r=r\circ F \uniquefp{c}_r\circ c$.
Assume that we have the equality
\begin{equation} \label{eq:completenessCond}
q\circ r=\sigma\circ Fq\enspace,
\end{equation}
instead of an inequality, in the square on the right.
Then we have $q\circ \uniquefp{c}_r=\sem{\mu\sigma}_c$. 
\qed
\vspace{2mm}
%

\noindent
Intuitively, the equality~(\ref{eq:completenessCond}) means that $r$ approximates the modality $\sigma$ in an adequate way.
The result implies that, in case $h\colon X\to \Omega$ satisfies $h\sqsubseteq \sem{\mu\sigma}_{c}$, the latter inequality can always be witnessed by some ranking arrow (namely $\uniquefp{c}_r$). This is completeness of the proof method of categorical ranking arrows.
An example of a complete ranking domain will be given in \S{}\ref{subsec:distributionValuedSuperMartingale}.



\auxproof{
\section{Categorical Ranking Arrows for Two-player Games}\label{subsec:exampleTPG}
For two-player games, we have already given definitions of a functor, a truth-value domain,
a modality and a ranking domain along the categorical development in the previous section.
We have also shown that the reaching set of a two-player game is captured categorically.
In this section, we 
describe
how the soundness of the conventional notion of ranking function is proved in our framework.
We will also give an example which shows that the ranking domain is not complete.

We first show that the triple in Def.~\ref{def:rankDomTwoPlayerGame} is indeed a ranking domain.

\begin{myexample}\label{example:FsigmaSatisfyAsmTPG}
The truth-value domain $(\Omegatpg,\leqOmegatpg)$ and
the modality $\sigmatpg:\Ftpg\Omegatpg\to\Omegatpg$ in Example~\ref{example:concreteDefTwoPlayer}
satisfy Asm.~\ref{asm:behDomainrank}.
\qed
\end{myexample}

\subsection{Ranking Functions, Categorically}
We fix $\Ftpg$ and $\sigmatpg$ as in 
the previous section.
In this section, we 
describe a existing notion of ranking function (Def.~\ref{def:rankFunc}) in our coalgebraic framework.

\begin{myproposition}\label{prop:FsigmaSatisfyAsmTPG}
The truth-value domain $(\Omegatpg,\leqOmegatpg)$ and
the modality $\sigmatpg:\Ftpg\Omegatpg\to\Omegatpg$ in Def.~\ref{def:concreteDefTwoPlayer}
satisfy Asm.~\ref{asm:behDomainrank}.
\qed
\end{myproposition}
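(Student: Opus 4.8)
The plan is to verify the two clauses of Asm.~\ref{asm:behDomainrank} separately for the truth-value domain $\Omega=\Omegatpg$ and the modality $\sigma=\sigmatpg$ of Example~\ref{example:concreteDefTwoPlayer}. Clause~\ref{asm:behDomain4and3} (complete lattice) is immediate, and clause~\ref{asm:behDomain1and2} (monotonicity of $\Phi_{c,\sigmatpg}$) reduces to the monotonicity of $\max$ and $\min$ once the definition of $\Phi$ is unwound.

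For Cond.~\ref{asm:behDomain4and3} I would simply observe that $(\Omegatpg,\leqOmegatpg)$ is the two-element chain $0\leqOmegatpg 1$. A finite totally ordered set is trivially a complete lattice, with $\bot_\Omega=0$, $\top_\Omega=1$, and $\bigsqcup K=\max K$, $\bigsqcap K=\min K$ for every $K\subseteq\Omegatpg$ (under the conventions $\max\emptyset=0$, $\min\emptyset=1$). Nothing further is needed.

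For Cond.~\ref{asm:behDomain1and2} the task is to show that $\Phi_{c,\sigmatpg}$ is monotone for every $\Ftpg$-coalgebra $c\colon X\to\Ftpg X$. I would unwind Def.~\ref{def:Phicsigma}: writing $c(x)=(c_1(x),c_2(x))\in\pow^2 X\times\{0,1\}$ as in Def.~\ref{def:TPGcoalg} and applying the functorial action of $\Ftpg$, for any $f\colon X\to\Omegatpg$ one computes
\[
\Phi_{c,\sigmatpg}(f)(x)
= \sigmatpg\bigl(\Ftpg f(c(x))\bigr)
= \begin{cases}
1 & (c_2(x)=1),\\
\max_{A\in c_1(x)}\min_{x'\in A} f(x') & (\text{otherwise}),
\end{cases}
\]
where the last equality uses that $\min$ over the image $\{f(x')\mid x'\in A\}$ agrees with $\min_{x'\in A}f(x')$, and that collapsing coinciding image sets does not change the outer $\max$. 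Now given $f\sqsubseteq g$, i.e.\ $f(x')\leq g(x')$ for all $x'$, I would compare pointwise at each $x$: in the accepting case both sides are $1$; otherwise monotonicity of $\min$ in each argument gives $\min_{x'\in A}f(x')\leq\min_{x'\in A}g(x')$ for every $A\in c_1(x)$, and then monotonicity of $\max$ yields $\max_{A}\min_{x'\in A}f(x')\leq\max_{A}\min_{x'\in A}g(x')$. Hence $\Phi_{c,\sigmatpg}(f)\sqsubseteq\Phi_{c,\sigmatpg}(g)$. (Equivalently, one could invoke the sufficient condition stated just after Asm.~\ref{asm:behDomainrank}, equipping $\Ftpg\Omegatpg$ with the order under which both $\sigmatpg$ and each action $f\mapsto\Ftpg f$ are monotone; the computation above is exactly what that amounts to.)

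I do not expect a genuine obstacle, as the whole statement is a routine verification. The only points needing care are the degenerate cases of the $\max$-$\min$ expression: when $c_1(x)=\emptyset$ the outer $\max$ is over the empty family and equals $0=\bot_\Omega$, and when some $A\in c_1(x)$ is empty the inner $\min$ equals $1=\top_\Omega$. In both cases the value is constant in $f$, so monotonicity is unaffected; I would state these conventions explicitly to keep the pointwise comparison valid in all cases.
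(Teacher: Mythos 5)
Your proposal is correct and follows essentially the same route as the paper, whose own proof is just the terse observation that Cond.~\ref{asm:behDomain4and3} is easy (the two-element chain is a complete lattice) and that Cond.~\ref{asm:behDomain1and2} follows from the monotonicity of the $\max$ and $\min$ operations appearing in $\sigmatpg$. You merely spell out what the paper leaves implicit---the unwinding of $\Phi_{c,\sigmatpg}$ through the action of $\Ftpg$ on arrows, and the degenerate conventions $\max\emptyset=0$, $\min\emptyset=1$---which is careful but not a different argument.
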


\begin{myproposition}\label{prop:rankDomTwoPlayerGame}
We define $\sigmatpg$ as in Def.~\ref{def:concreteDefTwoPlayer}.
We fix an ordinal $\mathfrak{z}$.
We define 
an $\Ftpg$-algebra $\rtpgz:\Ftpg \Rtpgz\to \Rtpgz$, 
a function $\qtpgz:F\Rtpgz\to \Omegatpg$ and 
a partial order $\leqRtpgz$ over $\Rtpgz$ as follows.
%
\begin{enumerate}
\item[1a)] 
$\Rtpgz=\{\mathfrak{a}\mid \mathfrak{a}\,\text{is an ordinal s.t.}\,\mathfrak{a}\leq\mathfrak{z}\}$. 

\item[1b)] 
$\rtpgz$ is defined by
\begin{displaymath}
\rtpgz(\Gamma,t)=\begin{cases}
0 & (t=1) \\
\min_{A\in\Gamma}\sup_{\mathfrak{a}\in A}(\mathfrak{a}\nadd 1) & (\text{otherwise})
\end{cases}
\end{displaymath}
Here $\mathfrak{a}\nadd 1$ denotes $\min\{\mathfrak{a}+1,\mathfrak{z}\}$.

\item[2)] 
$\qtpgz:\Rtpgz\to\Omegatpg$ is defined by 
\begin{displaymath}
q(\mathfrak{a})=\begin{cases}
0 & (\mathfrak{a}=\mathfrak{z}) \\
1 & (\text{otherwise})\,.
\end{cases}
\end{displaymath}

\item[3)] 
For $\mathfrak{a},\mathfrak{b}\in\Rtpgz$,
$\mathfrak{a}\leqRtpgz\mathfrak{b}\,\defarrow\,\mathfrak{a}\geq\mathfrak{b}$ (note the directions of inequalities).

\end{enumerate}

Then 
$(\rtpgz,\qtpgz,\leqRtpgz)$ 
is a ranking domain.
\end{myproposition}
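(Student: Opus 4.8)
The plan is to verify the four defining conditions of a ranking domain (Def.~\ref{def:rankingdom}) for the triple $(\rtpgz,\qtpgz,\leqRtpgz)$ in turn, treating corecursiveness of $\rtpgz$ (Cond.~\ref{item:def:rankingdom7}) as the real content and the remaining conditions as routine unwinding of the definitions of $\rtpgz$, $\qtpgz$ and $\sigmatpg$.

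First I would dispatch the order-theoretic and lax-homomorphism conditions. Since $\Rtpgz$ consists of the ordinals $\le\mathfrak z$, it is a complete lattice in the usual order, and reversing the order to $\leqRtpgz$ preserves completeness; note that then $\bot=\mathfrak z$ and that a $\leqRtpgz$-join is an ordinary infimum, which for a nonempty set is the attained minimum. This gives Cond.~\ref{item:def:rankingdom40and3}. For Cond.~\ref{item:def:rankingdom41and2} I argue pointwise: if $f\ge g$ in the ordinary order (i.e.\ $f\leqRtpgz g$), then the truncated successor $\nadd 1$, $\sup$ and $\min$ are all monotone, so $\Phi_{c,\rtpgz}(f)\leqRtpgz\Phi_{c,\rtpgz}(g)$. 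For Cond.~\ref{item:def:rankingdom3}, monotonicity and strictness of $\qtpgz$ are immediate from its definition, and for continuity the point is that the $\leqRtpgz$-join of a nonempty $K$ is $\min_\le K$ (attained), so I can compare $\qtpgz(\min_\le K)$ with $\max_{a\in K}\qtpgz(a)$ by a case split on whether $\min_\le K=\mathfrak z$; the empty-$K$ case coincides with strictness. For Cond.~\ref{item:def:rankingdom2} I evaluate both composites on $(\Gamma,t)\in\Ftpg\Rtpgz$: the case $t=1$ is trivial, and for $t=0$ the inequality $\qtpgz(\min_A\sup_{\mathfrak a\in A}(\mathfrak a\nadd 1))\le\max_A\min_{\mathfrak a\in A}\qtpgz(\mathfrak a)$ reduces, when its left side is $1$, to exhibiting a witnessing $A$ all of whose elements are $<\mathfrak z$; this is the $A$ attaining the outer $\min$, using that $\sup_{\mathfrak a\in A}(\mathfrak a\nadd 1)<\mathfrak z$ forces every $\mathfrak a<\mathfrak z$.

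The heart of the proof is corecursiveness of $\rtpgz$, i.e.\ that for each $c:X\to\Ftpg X$ the map $\Phi_{c,\rtpgz}$ has a \emph{unique} fixed point. Existence is free: $\Phi_{c,\rtpgz}$ is monotone on the complete lattice $\Rtpgz^X$ by Cond.~\ref{item:def:rankingdom41and2}, so Thm.~\ref{thm:KTCC} yields a $\leqRtpgz$-least fixed point $f$. For uniqueness I take an arbitrary fixed point $g$; as $f$ is $\leqRtpgz$-least we have $g\le f$ in the ordinary order, and it remains to prove $f\le g$. I plan to show, by transfinite induction on the ordinal $\mathfrak a$, the statement ``$g(x)=\mathfrak a<\mathfrak z\implies f(x)=\mathfrak a$''. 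The key step: when $g(x)=\mathfrak a<\mathfrak z$ and $x\notin\Acc^c$, the outer $\min$ in the fixed-point equation is attained at some $A^\ast\in\Gamma_x$ with $\sup_{x'\in A^\ast}(g(x')\nadd 1)=\mathfrak a<\mathfrak z$; since $\mathfrak a<\mathfrak z$ the truncation is inactive, so every $x'\in A^\ast$ satisfies $g(x')+1\le\mathfrak a$, i.e.\ $g(x')<\mathfrak a$, whence the induction hypothesis gives $f(x')=g(x')$ on $A^\ast$. Feeding this back yields $f(x)\le\mathfrak a=g(x)$, and with $g\le f$ we conclude $f(x)=\mathfrak a$. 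Finally the states with $g(x)=\mathfrak z$ are forced to $f(x)=\mathfrak z$ by $g\le f\le\mathfrak z$, so $f=g$.

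I expect the main obstacle to be precisely this uniqueness argument: it is where the specific shape of $\rtpgz$—the truncated successor $\nadd 1$ together with the reversed ordinal order—does the work. The delicate points are that the outer $\min$ is genuinely attained (valid for ordinals, and needed to produce $A^\ast$) whereas the inner $\sup$ need not be (harmless, since only the inequality $g(x')\nadd 1\le\mathfrak a$ is used), and that the truncation must be checked to be inactive below $\mathfrak z$ so that the induction can descend. Everything else is a direct, if slightly tedious, calculation.
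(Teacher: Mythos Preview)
Your proposal is correct and follows essentially the same approach as the paper: verify the routine conditions by unwinding the definitions, and establish corecursiveness of $\rtpgz$ by transfinite induction on the ordinal values of fixed points. The only minor difference is that the paper proves the symmetric statement ``$f_1(x)=\mathfrak a\Leftrightarrow f_2(x)=\mathfrak a$'' for two arbitrary fixed points, whereas you take $f$ to be the $\leqRtpgz$-least fixed point (so that $g\le f$ comes for free from Knaster--Tarski) and only prove the remaining direction; your asymmetric version is arguably cleaner, and in particular handles limit ordinals $\mathfrak a$ without the paper's slightly informal ``$\mathfrak a-1$''.
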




By Prop.~\ref{prop:rankDomTwoPlayerGame} and soundness of a (categorical) ranking arrow (Thm.~\ref{thm:soundnessranking}),
we have the following. 
\begin{multline}\label{eq:eqTPG}
\exists b.\,\text{$b$ is a (categorical) ranking arrow wrt.\ $(\rtpgz,\qtpgz,\leqRtpgz)$}\;\\\text{and}\; \qtpgz\circ b(x)=1
\\\Rightarrow\; 
\sem{\mu\sigmatpg}_{c^{G,\Acc}}(x)=1
\end{multline}
In contrast, 
the soundness theorem of a (conventional) ranking function (Thm.~\ref{thm:soundnessRankConv}) states 
as follows.
\begin{multline}\label{eq:eqTPG2}
\exists b.\,\text{$b$ is a ranking function}\;\text{and}\; b(x)<\mathfrak{z}\\\Rightarrow\; 
x\in\Reach_{G,\Acc}\,.
\end{multline}
By Prop.~\ref{prop:concreteDefTwoPlayer}, the right-hand sides of (\ref{eq:eqTPG}) and (\ref{eq:eqTPG2}) are equivalent.
The proposition below shows that the left-hand sides 
are equivalent, too.

\begin{myproposition}\label{prop:convRankFuncCoalg}
For the ranking domain $(\rtpgz,\qtpgz,\leqRtpgz)$ in
 Prop.~\ref{prop:rankDomTwoPlayerGame},
we have the followings.
\begin{enumerate}
\item 
 $b$ is a ranking function in the sense of Def.~\ref{def:rankFunc} iff 
 $b$ is a (categorical) ranking arrow in the sense of Def.~\ref{def:rankingarrow}.
\item 
 $b(x)<\mathfrak{z}$ iff $\qtpgz\circ b(x)=1$.
\qed
\end{enumerate}
\end{myproposition}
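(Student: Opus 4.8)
The plan is to prove both items by directly unfolding the definition of the ranking domain $(\rtpgz,\qtpgz,\leqRtpgz)$ from Example~\ref{example:rankDomTwoPlayerGame} and matching the resulting inequalities against Def.~\ref{def:rankFunc}, using the coalgebra/game translation of Def.~\ref{def:TPGcoalg}. The single computation that drives everything is the pointwise evaluation of $\Phi_{c,\rtpgz}(b)=\rtpgz\circ\Ftpg b\circ c$, so I would carry that out first.

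First I would fix a coalgebra $c\colon X\to\Ftpg X$, write $c(x)=(c_1(x),c_2(x))\in\pow^2 X\times\{0,1\}$, and compute $\bigl(\Ftpg b\circ c\bigr)(x)=\bigl(\{\{b(x')\mid x'\in A\}\mid A\in c_1(x)\},\,c_2(x)\bigr)$ from the functor action of $\Ftpg$. Applying $\rtpgz$ and splitting on the flag, I get $\Phi_{c,\rtpgz}(b)(x)=0$ when $c_2(x)=1$, and
\begin{displaymath}
\Phi_{c,\rtpgz}(b)(x)=\min_{A\in c_1(x)}\sup_{x'\in A}\bigl(b(x')\nadd 1\bigr)
\end{displaymath}
when $c_2(x)=0$. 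By Def.~\ref{def:TPGcoalg} the family $c_1(x)$ is exactly $\{\{x'\mid (y,x')\in\tau\}\mid (x,y)\in\tau\}$, so the outer $\min$ ranges over the $\min$-successors $y$ of $x$ and the inner $\sup$ over the $\max$-successors $x'$ of each such $y$. Hence for $x\notin\Acc$ the value $\Phi_{c,\rtpgz}(b)(x)$ coincides verbatim with the left-hand side of the inequality in Def.~\ref{def:rankFunc}.

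To finish item~1 I would account for the reversed order: since $\mathfrak{a}\leqRtpgz\mathfrak{b}$ means $\mathfrak{a}\geq\mathfrak{b}$, the ranking-arrow condition $b\leqRtpgz\Phi_{c,\rtpgz}(b)$ unfolds pointwise to $b(x)\geq\Phi_{c,\rtpgz}(b)(x)$ for every $x\in X$. For $x\in\Acc$ (i.e.\ $c_2(x)=1$) the right-hand side is $0$, so this holds automatically and contributes no constraint---matching the fact that Def.~\ref{def:rankFunc} only constrains states off $\Acc$. For $x\notin\Acc$ the inequality is precisely the ranking-function condition by the previous paragraph. Thus $b$ is a ranking arrow iff $b$ is a ranking function, for the correspondence of Def.~\ref{def:TPGcoalg} in either direction (the two translations form an embedding--projection pair). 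Item~2 is then immediate: as $\Rtpgz=\{\mathfrak{a}\mid\mathfrak{a}\leq\mathfrak{z}\}$, we have $b(x)<\mathfrak{z}$ iff $b(x)\neq\mathfrak{z}$, which by the definition of $\qtpgz$ is exactly $\qtpgz(b(x))=1$.

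I expect the only delicate points to be bookkeeping rather than mathematics: getting the functor action $\Ftpg b$ right, translating the index sets correctly (outer $\min$ over $\min$-states, inner $\sup$ over their $\max$-successors) via Def.~\ref{def:TPGcoalg}, and checking the placement of $\nadd 1$, which sits under the $\sup$ in both $\rtpgz$ and Def.~\ref{def:rankFunc}, so the two expressions agree on the nose. The one step worth spelling out explicitly is the accepting-state case, where one must observe that $\Phi_{c,\rtpgz}(b)(x)=0$ renders the ranking-arrow inequality vacuous, so that no spurious condition is imposed on $\Acc$.
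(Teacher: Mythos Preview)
Your proposal is correct and follows essentially the same approach as the paper: unfold $\Phi_{c,\rtpgz}(b)(x)=\rtpgz\circ\Ftpg b\circ c(x)$ using the definitions of $\Ftpg$ and $\rtpgz$, translate the index sets via Def.~\ref{def:TPGcoalg}, flip the inequality because $\leqRtpgz$ reverses $\leq$, and observe that the accepting case is vacuous; item~2 is immediate from the definition of $\qtpgz$. The paper's proof is the same chain of equivalences, just written as a single display rather than as prose.
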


Therefore we can prove soundness of a (conventional) ranking function
(Thm.~\ref{thm:soundnessRankConv}) using our categorical framework as follows.
%
\begin{align*}
&\exists b.\,\text{$b$ is a ranking function}\;\text{and}\; b(x)<\mathfrak{z} \\ 
&\overset{\text{Prop.~\ref{prop:convRankFuncCoalg}}}{\Leftrightarrow} 
\exists b.\,\text{$b$ is a ranking arrow wrt.\ $(\rtpgz,\qtpgz,\leqRtpgz)$}\;\text{and}\; \\
&\qquad\qquad\qquad \qtpgz\circ b(x)=1 \\
& \overset{\text{Thm.~\ref{thm:soundnessranking}}}{\Rightarrow} \sem{\mu\sigmatpg}_{c^{G,\Acc}} (x)=1 \\
&\overset{\text{Prop.~\ref{prop:concreteDefTwoPlayer}}}{\Leftrightarrow}  x\in\Reach_{G,\Acc}\,.
\end{align*}

To conclude this section,
we give a counterexample that shows that the ranking domain 
$(\rtpgz,\qtpgz,\leqRtpgz)$ does not satisfy completeness.
By Prop.~\ref{prop:completeness}, this means that the inequality in
Cond.~\ref{item:def:rankingdom2} of Def.~\ref{def:rankingdom} is satisfied strictly.

\begin{myexample}\label{example:incompletetpg}
Let $X=\{x_{\mathfrak{a}}\mid \mathfrak{a}\leq\mathfrak{z}\}$. 
We define an $\Ftpg$-coalgebra $c:X\to FX$ by 
$c(x_0)=(\emptyset,1)$ and $c(x_{\mathfrak{a}})=(\{\{x_\mathfrak{a'}\mid \mathfrak{a'}<\mathfrak{a}\}\},0)$ 
for each $\mathfrak{a}>0$.
Note that in the corresponding two-player game $G_c$, all the choices are made by the player $\min$. 
Then we have $\sem{\mu\sigmatpg}_c(x_\mathfrak{z})=1$.
However, the unique arrow $\uniquefp{c}_r:X\to \Rtpgz$ such that 
$\uniquefp{c}_r=\Phi_{c,\sigmatpg}(\uniquefp{c}_r)$  assigns $\mathfrak{z}$ to $x_{\mathfrak{z}}$,
and this means that $q\circ \uniquefp{c}_r(x_\mathfrak{z})=0$.
\end{myexample}
}

\section{Categorical Ranking Arrows for Probabilistic Transition Systems}\label{subsec:examplePTS}
We shall now investigate what our categorical framework in~\S{}\ref{sec:catTheoryRF} entails in the probabilistic setting of~\S{}\ref{subsec:PTSRF}. It turns out that the well-known definition of ranking supermartingale (\emph{$\varepsilon$-additive} ones in Def.~\ref{def:multAddSMPTS}) is \emph{not} an instance. Here we study some variations of the definition of ranking supermartingale; two among them (\emph{distribution-valued} and \emph{non-counting} ones, that are new to our knowledge) exhibit the nice categorical properties in~\S{}\ref{sec:catTheoryRF}. We also discuss some relationships between those variations, showing that the soundness of $\varepsilon$-additive ranking supermartingales (Def.~\ref{def:multAddSMPTS}) can  nevertheless be proved via the categorical arguments in~\S{}\ref{sec:catTheoryRF}.

\subsection{Probabilistic Transition Systems as Coalgebras}\label{subsubsec:PTSasCoalg}
To represent a PTS as a coalgebra,
we use the functor $\Fpts:\Sets\to\Sets$ ($\pts$ stands for ``probability'') defined as follows. 
%
%
\begin{mydefinition}[$\Fpts$]\label{def:PTScoalg}
We let $\Fpts=\dist(\place)\times\{0,1\}$, where $\dist$ is the (discrete) distribution functor in Def.~\ref{def:concreteFunctors}. 
%

For an $\Fpts$-coalgebra $c:X\to \Fpts X$,
we define a PTS $M^c=(X^c,\tau^c)$ and a set $\Acc^c\subseteq X^c$ of accepting states 
by:
$X^c=X$,
$\tau^c(x)=c_1(x)$ for each $x\in X^c$,
 and $\Acc^c=\{x\in X\mid c_2(x)=1\}$.
Here we write $c(x)=(c_1(x),c_2(x))\in\dist X\times\{0,1\}$ for every $x$.

Conversely, for a PTS $M=(X,\tau)$ and a set $\Acc\subseteq X$,
we define an $\Fpts$-coalgebra $c^{M,\Acc}:X\to \Fpts X$ by 
$c^{M,\Acc}(x)=(\tau(x), t)$ where $t$ is $1$ if $x\in \Acc$ and $0$ otherwise.
\end{mydefinition}
%
\noindent
 These correspondences are indeed bijective.
Analogously to Example~\ref{example:concreteDefTwoPlayerSem}, we characterize 
 reachability probabilities of a PTS as a coalgebraic least fixed-point property.

\begin{myproposition}\label{prop:constBehSituationPTS}
We 
define an $\Fpts$-modality $\sigmapts:\Fpts\Omegapts\to\Omegapts$ 
over the truth-value domain $(\Omegapts,\leq)$ 
as follows.
 Here $[0,1]$ is the unit interval and $\leq$ is the usual order on it.
\begin{displaymath}
\sigmapts(\delta,t)\;=\;
\begin{cases}
1 & (t=1) \\
\sum_{a\in\supp(\delta)} a\cdot\delta(a)  & (\text{otherwise})\,.
\end{cases}
\end{displaymath}
Note here that $\delta\in \dist [0,1]$ and $t\in \{0,1\}$.

Then $\sigmapts$ 
satisfies Asm.~\ref{asm:behDomainrank} and thus
has least fixed points (in the sense of Def.~\ref{def:lfpsem}).
 The lfp property $\llbracket\mu\sigmapts\rrbracket_c$ 
coincides with  the 
reachability probability function $\Reach_{M^c,\Acc^c}:X\to \Omegapts$ of the corresponding PTS (Def.~\ref{def:winnerMDPNew}). 
\qed
\end{myproposition}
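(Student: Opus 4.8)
The plan is to prove the two assertions of Prop.~\ref{prop:constBehSituationPTS} in turn: first that $\sigmapts$ satisfies Asm.~\ref{asm:behDomainrank}, and then that the induced least fixed point is the reachability probability function. For Asm.~\ref{asm:behDomainrank}, Cond.~\ref{asm:behDomain4and3} is immediate, since $([0,1],\leq)$ is a complete lattice. For Cond.~\ref{asm:behDomain1and2} I would first unfold $\Phi_{c,\sigmapts}$ into a concrete form. Writing $c(x)=(\delta_x,t_x)\in\dist X\times\{0,1\}$ and using that the pushforward satisfies $(\dist u)(\delta_x)(a)=\sum_{x'\colon u(x')=a}\delta_x(x')$, a short computation collapses the expectation over $\dist[0,1]$ to an expectation over $X$:
\begin{displaymath}
\Phi_{c,\sigmapts}(u)(x)=
\begin{cases}
1 & (x\in\Acc^c)\\
\sum_{x'\in X}\delta_x(x')\cdot u(x') & (\text{otherwise}).
\end{cases}
\end{displaymath}
Monotonicity in $u$ is then evident: on accepting states the value is constantly $1$, and on non-accepting states $u\mapsto\sum_{x'}\delta_x(x')\,u(x')$ is a nonnegatively weighted sum, hence monotone. (Alternatively one invokes the sufficient criterion noted after Asm.~\ref{asm:behDomainrank}, checking separately that $\sigmapts$ and the action of $\dist$ on arrows are monotone.) Prop.~\ref{prop:sigmaLFP} then guarantees that $\sigmapts$ has least fixed points.

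For the second assertion I would compare the Cousot--Cousot approximants (Thm.~\ref{thm:KTCC}.2) of $\Phi_{c,\sigmapts}$ from $\bot$ (the constant-$0$ function) with the functions $f_n$ of Def.~\ref{def:winnerMDPNew}. Setting $g_n:=\Phi_{c,\sigmapts}^{\,n}(\bot)$, I claim $g_n=f_n$ for all $n\in\mathbb{N}$, by induction on $n$: the base case $g_0=0=f_0$ is clear, and for the step the concrete form of $\Phi_{c,\sigmapts}$ above is exactly the one-step recurrence obeyed by $f_{n+1}$, namely $f_{n+1}(x)=1$ for $x\in\Acc^c$ and $f_{n+1}(x)=\sum_{x'}\tau^c(x)(x')\,f_n(x')$ otherwise. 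The latter recurrence I would obtain by decomposing each accepting path counted in $f_{n+1}(x)$ (for $x\notin\Acc^c$, so the path has length $\geq 1$) according to its first transition $x\to x'$: the residual path from $x'$ has length $\leq n-1$ and is exactly what $f_n(x')$ counts.

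It remains to pass to the limit, and this is the main obstacle. The least fixed point $\sem{\mu\sigmapts}_c$ is the limit of an a priori transfinite Cousot--Cousot sequence, whereas $\Reach_{M^c,\Acc^c}=\lim_{n\to\infty}f_n=\bigsqcup_n g_n$ is only the supremum of the first $\omega$ approximants. To reconcile them I would show that $\Phi_{c,\sigmapts}$ is $\omega$-continuous, i.e.\ preserves suprema of pointwise-increasing $\omega$-chains. On accepting states this is trivial; on a non-accepting state $x$ it reduces to the interchange $\sum_{x'}\delta_x(x')\,(\bigsqcup_n u_n)(x')=\bigsqcup_n\sum_{x'}\delta_x(x')\,u_n(x')$, which is precisely the monotone convergence theorem for the countable weighted series, and this is the step requiring care because $\delta_x$ may have infinite support. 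Granting $\omega$-continuity, the sequence stabilises at stage $\omega$, since $\Phi_{c,\sigmapts}(\bigsqcup_n g_n)=\bigsqcup_n g_{n+1}=\bigsqcup_n g_n$; thus $\bigsqcup_n g_n$ is a fixed point, and being reached from $\bot$ it is the least one. Combining the two parts, $\sem{\mu\sigmapts}_c=\bigsqcup_n g_n=\bigsqcup_n f_n=\Reach_{M^c,\Acc^c}$, which completes the argument.
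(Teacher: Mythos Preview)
Your argument is correct. The route differs from the paper's, though the underlying ingredients are the same. The paper proceeds top-down: it first shows directly that $f:=\Reach_{M^c,\Acc^c}$ is a fixed point of $\Phi_{c,\sigmapts}$ (unfolding $\Phi_{c,\sigmapts}(f)(x)$ and swapping the sum over successors with the limit $\lim_n f_n$, which is the same monotone-convergence step you isolate), and then shows minimality by taking an arbitrary fixed point $f'$ and proving $f_n\leq f'$ for all $n$ by induction on $n$; passing to the limit gives $f\leq f'$. You instead work bottom-up through the Cousot--Cousot approximants, identifying $\Phi_{c,\sigmapts}^n(\bot)$ with $f_n$ and then using $\omega$-continuity to argue that the transfinite chain already stabilises at $\omega$. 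The paper's organisation avoids naming $\omega$-continuity as a separate lemma (minimality is obtained by a direct Knaster--Tarski-style comparison), while yours makes the link to Thm.~\ref{thm:KTCC}.2 explicit and cleanly localises the analytic input to the single interchange of $\sum$ and $\sup$. Both are equally valid; yours is slightly longer but arguably more transparent about where the order-theoretic and measure-theoretic steps enter.
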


\subsection{Known Variations: $\varepsilon$-Additive and $\alpha$-Multiplicative Ranking Supermartingales}\label{subsec:additiveAndMultiplicativeSupermartingales}

The definition of ranking supermartingale that we have reviewed 
(\emph{$\varepsilon$-additive} ones in Def.~\ref{def:multAddSMPTS})
is not an instance of our categorical notion (Def.~\ref{def:rankingdom}). 
Specifically, its value domain (the interval $[0,\infty]$ with a suitable $\Fpts$-algebraic structure) fails to be corecursive.
As a result,  soundness of additive ranking supermartingale (Thm.~\ref{thm:soundnessRSMConvPTS}) 
cannot be directly proved using our categorical soundness theorem (Thm.~\ref{thm:soundnessranking}).

The following is an attempt to define a ranking domain for 
$\varepsilon$-additive supermartingales. 
Let us fix a real number $\varepsilon>0$ and
define an $\Fpts$-algebra $\rptspe:\Fpts\Rptsp\to \Rptsp$, an arrow $\qptsp:\Rptsp\to\Omegapts$ and
a partial order $\leqRptsp$ over $\Rptsp$
as follows.
\begin{enumerate}

\item[1)] 
For each $(\psi,t)\in \Fpts\Rptsp=\dist\Rptsp\times \{0,1\}$,
\begin{displaymath}
r'_{\pts,\varepsilon}(\psi,t)
=\begin{cases}
0  & (t=1) \\
\bigl(\,\sum_{a\in\supp(\psi)}a\cdot \psi(a)\, \bigr)+\varepsilon & (\text{otherwise})\,.
\end{cases}
\end{displaymath}

\item[2)] 
$\qptsp(\infty)=0$ and $\qptsp(a)=1$ if $a <\infty$. 

\item[3)] 
$a\leqRptsp b\;\defarrow\; a\geq b$ 
(note the direction). 
\end{enumerate}

\begin{myproposition}\label{prop:rankDomPTSTradD}
In this setting, for each $c:X\to \Fpts X$ and $b':X\to \Rptsp$, we have the following.
\begin{enumerate}
\renewcommand{\labelenumi}{\alph{enumi})}
\item\label{item:prop:convRankFuncCoalg1MPTS}
 $b'$ is an $\varepsilon$-additive ranking supermartingale (Def.~\ref{def:multAddSMPTS}) iff 
 $b'$  
 satisfies
 $b'\sqsubseteq_{\Rptsp}\rptspe\circ \Fpts b'\circ c$.
\item\label{item:prop:convRankFuncCoalg2MPTS}
 $b'(x)\neq\infty$ iff $q'_{\pts}\circ b'(x)=1$.
\qed
\end{enumerate}
\end{myproposition}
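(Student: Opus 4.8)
The plan is to verify both biconditionals by unfolding the definition of $\rptspe\circ\Fpts b'\circ c$ pointwise, the one piece of essential bookkeeping being that $\leqRptsp$ is the \emph{reversed} order on $\nonnegrealsinf$: thus the categorical inequality $b'\leqRptsp\rptspe\circ\Fpts b'\circ c$ reads, at each state $x$, as $b'(x)\geq(\rptspe\circ\Fpts b'\circ c)(x)$. Part b) I would dispatch immediately, since by the defining clauses $\qptsp(\infty)=0$ and $\qptsp(a)=1$ for $a<\infty$ we have $\qptsp(b'(x))=1$ exactly when $b'(x)<\infty$, i.e.\ when $b'(x)\neq\infty$.

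For part a), I would compute $(\rptspe\circ\Fpts b'\circ c)(x)$ explicitly. Writing $c(x)=(\tau^c(x),c_2(x))$ as in Def.~\ref{def:PTScoalg} and using $\Fpts b'=\dist b'\times\id$, the value $\Fpts b'(c(x))$ is the pair $(\dist b'(\tau^c(x)),c_2(x))$, where $\psi:=\dist b'(\tau^c(x))\in\dist\nonnegrealsinf$ is the pushforward with $\psi(a)=\sum_{x'\in b'^{-1}(a)}\tau^c(x)(x')$. Applying $\rptspe$ splits into two cases. When $x\in\Acc^c$ (i.e.\ $c_2(x)=1$) the result is $0$, and since $b'(x)\geq 0$ always holds in $\nonnegrealsinf$, the categorical inequality is satisfied vacuously at such $x$---matching the fact that Def.~\ref{def:multAddSMPTS} imposes no constraint on accepting states. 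When $x\notin\Acc^c$ the result is $\bigl(\sum_{a\in\supp(\psi)}a\cdot\psi(a)\bigr)+\varepsilon$.

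The one computational step that needs care is the pushforward-expectation identity
\begin{displaymath}
{\textstyle\sum_{a\in\supp(\psi)}a\cdot\psi(a)\;=\;\sum_{x'\in\supp(\tau^c(x))}\tau^c(x)(x')\cdot b'(x')\,,}
\end{displaymath}
which follows by substituting the definition of $\psi$ and reindexing the sum along the partition of $\supp(\tau^c(x))$ into the fibres $b'^{-1}(a)$; as every summand is a nonnegative element of $\nonnegrealsinf$ the rearrangement is unconditionally valid, and with the convention $\infty\cdot p=\infty$ for $p>0$ the two sides still agree when $b'$ takes the value $\infty$ on a state of positive probability. With this identity, the categorical inequality at $x\notin\Acc^c$ becomes exactly $b'(x)\geq\bigl(\sum_{x'\in\supp(\tau^c(x))}\tau^c(x)(x')\cdot b'(x')\bigr)+\varepsilon$, which is verbatim the defining condition of an $\varepsilon$-additive ranking supermartingale (Def.~\ref{def:multAddSMPTS}); quantifying over all $x$ yields the biconditional. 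I expect this pushforward-expectation identity---not any conceptual issue---to be the only step requiring genuine attention, the rest being a direct unwinding of definitions.
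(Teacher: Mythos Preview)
Your proposal is correct and follows essentially the same approach as the paper's proof: both unwind the definitions of $\rptspe$, $\Fpts$, and the reversed order $\leqRptsp$ to reduce the categorical inequality to the defining supermartingale condition, and both dispatch part~b) directly from the definition of $\qptsp$. You are simply more explicit about the pushforward-expectation identity (which the paper silently absorbs into the step ``by def.\ of $\rptspe$'') and about the vacuous case $c_2(x)=1$, but the argument is the same.
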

Therefore the triple $(\rptspe,\qptsp,\leqRptsp)$ is suited for
accommodating $\varepsilon$-additive supermartingales in our categorical framework. Unfortunately it is not a ranking domain (Def.~\ref{def:rankingdom}).

\begin{myproposition}\label{prop:pseudoRD}
The triple $(\rptspe,\qptsp,\leqRptsp)$ 
introduced above
satisfies the conditions of a ranking domain (Def.~\ref{def:rankingdom}),
except for Cond.~\ref{item:def:rankingdom7}.
\qed
\end{myproposition}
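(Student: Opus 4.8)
The plan is to verify, one at a time, every condition of a ranking domain in Def.~\ref{def:rankingdom} other than corecursiveness (Cond.~\ref{item:def:rankingdom7}), namely Cond.~\ref{item:def:rankingdom2}, Cond.~\ref{item:def:rankingdom4} and Cond.~\ref{item:def:rankingdom3}, working directly from the concrete descriptions of $\rptspe$, $\qptsp$ and $\leqRptsp$. The organising observation is that $\leqRptsp$ is the order-reversal of the usual order on $\Rptsp=\nonnegrealsinf$; hence $\botRptsp=\infty$, and a join with respect to $\leqRptsp$ is an infimum in the usual order. Since $(\nonnegrealsinf,\leq)$ is a complete lattice and order-reversal preserves completeness, $(\Rptsp,\leqRptsp)$ is a complete lattice, which immediately settles Cond.~\ref{item:def:rankingdom40and3}.

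For the monotonicity of $\Phi_{c,\rptspe}$ (Cond.~\ref{item:def:rankingdom41and2}) I would first unfold it using the pushforward formula for $\dist$ (Def.~\ref{def:concreteFunctors}): writing $c(x)=(c_1(x),c_2(x))$, one obtains $\Phi_{c,\rptspe}(f)(x)=0$ when $c_2(x)=1$ and $\Phi_{c,\rptspe}(f)(x)=\bigl(\sum_{x'\in X}c_1(x)(x')\cdot f(x')\bigr)+\varepsilon$ otherwise. Because a weighted sum (expectation) is monotone in its integrand and adding $\varepsilon$ preserves order, $f\geq g$ pointwise in the usual order implies $\Phi_{c,\rptspe}(f)\geq\Phi_{c,\rptspe}(g)$ pointwise; translating through the reversal, this is exactly monotonicity with respect to the pointwise extension of $\leqRptsp$.

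The lax-homomorphism inequality $\qptsp\circ\rptspe\sqsubseteq_{\Omegapts}\sigmapts\circ\Fpts\qptsp$ (Cond.~\ref{item:def:rankingdom2}) I would prove by case analysis on $(\psi,t)\in\Fpts\Rptsp=\dist\Rptsp\times\{0,1\}$. For $t=1$ both composites equal $1$. For $t=0$ the left-hand side is $\qptsp\bigl((\sum_{a}a\cdot\psi(a))+\varepsilon\bigr)$, which is $1$ exactly when the mean $\sum_a a\cdot\psi(a)$ is finite and $0$ otherwise; the right-hand side is $\sigmapts(\dist\qptsp(\psi),0)=\psi\bigl(\nonnegreals\bigr)=1-\psi(\{\infty\})$, since $\qptsp$ is $\{0,1\}$-valued and $\sigmapts$ reads off the mass it assigns to $1$. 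When the mean is finite we must have $\psi(\{\infty\})=0$, so both sides equal $1$; when the mean is infinite the left-hand side is $0$ and the inequality is trivial. I expect this to be the main obstacle, not because it is computationally hard but because one must see the right reformulation (expectation via the pushforward) and observe that the inequality is genuinely \emph{lax}: a heavy-tailed $\psi$ with $\psi(\{\infty\})=0$ but infinite mean makes it strict, which is precisely the phenomenon behind the later failure of completeness (cf.\ Prop.~\ref{prop:completeness}).

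Finally, for Cond.~\ref{item:def:rankingdom3} I would check that $\qptsp$, the ``finiteness threshold'' sending $\infty\mapsto0$ and every finite value to $1$, is monotone, strict and continuous. Monotonicity from $(\Rptsp,\leqRptsp)$ to $(\Omegapts,\leq)$ is clear since $\qptsp$ sends $\botRptsp=\infty$ to $0=\bot_{\Omegapts}$ and everything above it to $1$; strictness is the same computation $\qptsp(\botRptsp)=\qptsp(\infty)=0$. For continuity, a $\leqRptsp$-join of $K\subseteq\Rptsp$ is $\inf_{\leq}K$, so I must show $\qptsp(\inf_{\leq}K)=\bigsqcup_{a\in K}\qptsp(a)=\sup_{\leq}\{\qptsp(a)\mid a\in K\}$; this holds because $\inf_{\leq}K$ is finite iff $K$ contains a finite element iff some $\qptsp(a)=1$, with the empty-set case reducing to strictness. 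Assembling Cond.~\ref{item:def:rankingdom2}, Cond.~\ref{item:def:rankingdom4} and Cond.~\ref{item:def:rankingdom3} then completes the proof, leaving corecursiveness (Cond.~\ref{item:def:rankingdom7}) as the sole condition that fails.
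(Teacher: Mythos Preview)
Your proposal is correct and follows essentially the same route as the paper's proof: a condition-by-condition verification using the order-reversal observation for $\leqRptsp$, the same unfolding of $\Phi_{c,\rptspe}$ via the pushforward, the same case split on $t$ for Cond.~\ref{item:def:rankingdom2} (with the key step that a finite mean forces $\psi(\{\infty\})=0$), and the same ``$\inf K$ is finite iff some element of $K$ is finite'' argument for continuity of $\qptsp$. Your additional remark on the strictness of the lax inequality is a nice touch that the paper does not spell out here.
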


\begin{wrapfigure}[5]{r}{2.3cm}
\vspace{-.2cm}
\small
\begin{xy}
(0,0)*=<2mm>[Fo]++!L{x_0} = "x0",
(-6,6)*=<2mm>[Fo]++!L{x_1} = "x1",
(6,6)*=<2mm>[Fo]++!L{x_2} = "x2",
(0,12)*=<3mm>[]++!L{x_3}*{\bigbullet} = "x3",
\ar @{->} ^{\frac{1}{2}} "x0"*+{};"x1"*=<3mm>{}
\ar @{->} _(.7){\frac{1}{2}} "x0"*+{};"x2"*=<3mm>{}
\ar @{->} @(u,l)_(.4){\frac{1}{2}} "x1"*+{};"x1"*=<3mm>{}
\ar @{->} @(ur,ul)_(.2){1} "x3"*+{};"x3"*=<3mm>{}
\ar @{->} ^{\frac{1}{2}} "x1"*+{};"x3"*=<3mm>{}
\ar @{->} _(.3){1} "x2"*+{};"x3"*=<3mm>{}
\end{xy}
\end{wrapfigure}
\vspace{2mm}
\noindent
\begin{minipage}{0.28\hsize}
\begin{myexample}\label{example:rankSupMartNotCorecNew}
\end{myexample}
\end{minipage}\!\!
We define a coalgebra 
$c:X\to \Fpts X$ by:
$X=\{x_0,x_1,x_2,x_3\}$,
$c(x_0)=([x_1\mapsto \frac{1}{2},x_2\mapsto \frac{1}{2}],0)$,
$c(x_1)=([x_1\mapsto \frac{1}{2},x_3\mapsto \frac{1}{2}],0)$,
$c(x_2)=([x_3\mapsto 1],0)$
and $c(x_3)=([x_3\mapsto 1],1)$.
The corresponding PTS is depicted on the right.

Let $\varepsilon>0$.
For this coalgebra, we define arrows $b_1,b_2:X\to \Rptsp$ by:
$b_1(x_0)=\frac{5}{2}\varepsilon$, $b_1(x_1)=2\varepsilon$, $b_1(x_2)=\varepsilon$ and $b_1(x_3)=0$; and
$b_2(x_0)=\infty$, $b_2(x_1)=\infty$, $b_2(x_2)=\varepsilon$ and $b_2(x_3)=0$.
Both of these qualify as coalgebra-algebra homomorphisms from $c$ to $\rptspe$. Therefore $\rptspe$ is not corecursive. 
%
%
\vspace{1mm}

It is well-known that $\varepsilon$-additive supermartingales (Def.~\ref{def:multAddSMPTS}) witness
\emph{positive almost-sure reachability}~\cite{bournezG05positivealmostsure},
that is, 
the expected number of
steps to accepting states is finite. This is a property strictly stronger than almost-sure reachability (see Example~\ref{example:rankDomExample2}). 
It follows that  $\varepsilon$-additive supermartingales are not complete against
almost-sure reachability.

\begin{wrapfigure}[8]{r}{2.2cm}
\vspace{-.6cm}
\small
\begin{xy}
(0,0)*=<1.5mm>[Fo]++!R{x} = "x0",
(-3,6)*=<1.5mm>[Fo]++!L{} = "x11",
(-3,12)*=<1.8mm>[]++!L{}*{\middlebullet} = "x12",
(3,6)*=<1.5mm>[Fo]++!L{} = "x21",
(3,12)*=<1.5mm>[Fo]++!L{} = "x22",
(3,18)*=<1.5mm>[Fo]++!L{} = "x23",
(3,24)*=<1.8mm>[]++!L{}*{\middlebullet} = "x24",
(15,6)*=<1.5mm>[Fo]++!L{} = "xi1",
(15,12)*=<1.5mm>[Fo]++!L{} = "xi2",
(15,24)*=<1.5mm>[Fo]++!L{} = "xi3",
(15,30)*=<1.8mm>[]++!L{}*{\middlebullet} = "xi4",
(9,13)*{\normalsize\dots},
(22,13)*{\normalsize\dots},
(15,19)*{\vdots},
(18,18)*{\rotatebox{270}{$\overbrace{\hspace{2.2cm}}$}},
(23,18)*{\text{\scriptsize$2^i-1$}},
\ar @{->} ^(.7){\frac{1}{2}} "x0"*+{};"x11"*=<3mm>{}
\ar @{->} _{\frac{1}{4}} "x0"*+{};"x21"*=<3mm>{}
\ar @{->} @/_3mm/_(.8){\frac{1}{2^i}} "x0"*+{};"xi1"*=<3mm>{}
\ar @{->} ^{1} "x11"*+{};"x12"*=<3mm>{}
\ar @{->} ^{1} "x21"*+{};"x22"*=<3mm>{}
\ar @{->} ^{1} "x22"*+{};"x23"*=<3mm>{}
\ar @{->} ^{1} "x23"*+{};"x24"*=<3mm>{}
\ar @{->} ^{1} "xi1"*+{};"xi2"*=<3mm>{}
\ar @{->} ^{} "xi2"*+{};(15,17)*=<3mm>{}
\ar @{->} ^{} (15,19)*+{};"xi3"*=<3mm>{}
\ar @{->} _{1} "xi3"*+{};"xi4"*=<3mm>{}
\ar @{->} @(ur,ul)_(.6){1}  "x12"*+{};"x12"*=<3mm>{}
\ar @{->} @(ul,dl)_(.5){1}  "x24"*+{};"x24"*=<3mm>{}
\ar @{->} @(l,d)_(.5){1}  "xi4"*+{};"xi4"*=<3mm>{}
\end{xy}
\end{wrapfigure}
\vspace{2mm}
\noindent
\begin{minipage}{0.28\hsize}
\begin{myexample}\label{example:rankDomExample2}
\end{myexample}
\end{minipage}\!\!
We define 
$c:X\to \Fpts X$ by 
$X=\{x\}\cup \{x_{i,j}\mid i,j\in\mathbb{N}, 1\leq i, 1\leq j\leq 2^i\}$; and
$c(x)=(\delta,0)$, where $\delta(x_{i,j})$ is $\frac{1}{2^i}$ if $j=0$ and $0$ otherwise, 
$c(x_{i,j})=([x_{i,j+1}\mapsto 1],0)$ for each $i$ and $j<2^i$, and
$c(x_{i,2^i})=([x_{i,2^i}\mapsto 1],1)$.
The corresponding PTS $M^c$ and the set $\Acc^c$ of accepting states 
are shown
on the right. 
This system when run from $x$ is clearly almost-sure terminating; however the expected number of steps to accepting states is infinite.  
\vspace{2mm}

Another known variation of ranking supermartingales is given by
\emph{multiplicative ranking supermartingales}~\cite{chakarov2016deductive}.
%
Let $\alpha\in (0,1)$. 
A function $b:X\to\nonnegrealsinf$ is an \emph{$\alpha$-multiplicative ranking supermartingale} if
we have 
\begin{displaymath}
\forall x\in X\setminus \Acc.\;{\textstyle\sum_{x'\in \supp(\tau(x))} \tau(x)(x')\cdot b(x')
\;\leq\; \alpha\cdot b(x)}\,,
\end{displaymath} 
and moreover there exists $\delta>0$ such that $b(x)\geq\delta$ for each $x\in X\setminus \Acc$. 
For this multiplicative variation, results analogous to Prop.~\ref{prop:rankDomPTSTradD} and 
Prop.~\ref{prop:pseudoRD} hold (see \S{}\ref{subsec:MRSCat} in the appendix for the proofs).
%

\subsection{Distribution-Valued Ranking Functions}\label{subsec:distributionValuedSuperMartingale}
Let us turn to possible instantiations of our categorical framework in the current probabilistic setting. The first uses $\dist\Ninf$, the set of distributions over extended natural numbers, as a ranking domain (instead of $[0,\infty]$). 
In what follows, given a probability distribution $\gamma$ over $\Ninf$ and
$a,b\in\mathbb{N}$, $\gamma([a,b])$ denotes $\sum_{i=a}^{b}\gamma(i)$
and $\gamma([a,\infty))$ denotes $\sum_{i=a}^{\infty}\varphi(i)$.

\begin{myproposition}\label{prop:rankDomPTS}
Recall that $\Ninf=\mathbb{N}\cup\{\infty\}$ and $\dist\Ninf$ collects 
all the distributions over $\Ninf$.
We 
define an $\Fpts$-algebra $\rpts:\Fpts \Rpts\to \Rpts$, a function $\qpts:\Rpts\to\Omegapts$ and
a partial order $\leqRpts$ over $\Rpts$
as follows.
\begin{enumerate}

\item[1)] 
For each $(\Gamma,t)\in \Fpts\Rpts=\dist^2\Ninf\times\{0,1\}$,
\begin{displaymath}
r_{\pts}(\Gamma,t)(a)
=
{\footnotesize
\begin{cases}
1 & \hspace{-6em}(t=1,a=0) \\
0 & \hspace{-6em}(t=1,a> 0\;\text{or}\;t=0,a=0) \\
\sum_{\gamma\in\supp(\Gamma)}\Gamma(\gamma)\cdot \gamma(a-1)  & (t=0, a>0)\,.
\end{cases}}
\end{displaymath}

\item[2)] 
$\qpts(\varphi)=\varphi\bigl([0,\infty)\bigr)$.

\item[3)]\label{item:prop:rankDomMDPNew3}
$\varphi\leqRpts\varphi'\;\defarrow\;\forall a\in\mathbb{N}.\, \varphi([0,a])\leq\varphi'([0,a])$\,.
\end{enumerate}
Then 
the triple $(\rpts,\qpts,\leqRpts)$ is a ranking domain.
Moreover we have $\qpts\circ \rpts=\sigmapts\circ \Fpts\qpts$ (cf.\ (\ref{eq:completenessCond}) in Prop.~\ref{prop:completeness}). 
\qed
\end{myproposition}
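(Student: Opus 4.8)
The plan is to verify the four requirements of Definition~\ref{def:rankingdom} for the triple $(\rpts,\qpts,\leqRpts)$ in turn, and then the sharpened equality $\qpts\circ\rpts=\sigmapts\circ\Fpts\qpts$. Throughout I would identify a distribution $\varphi\in\Rpts=\dist\Ninf$ with its ``cumulative'' data $F_\varphi(a)=\varphi([0,a])$ for $a\in\mathbb{N}$: the finite masses are recovered by $\varphi(a)=F_\varphi(a)-F_\varphi(a-1)$ and the mass at infinity by $\varphi(\infty)=1-\sup_a F_\varphi(a)$, so $\varphi\mapsto F_\varphi$ is injective, and by definition $\varphi\leqRpts\varphi'$ iff $F_\varphi\le F_{\varphi'}$ pointwise. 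In particular $\leqRpts$ is a genuine partial order, with least element $\botRpts=\delta_\infty$ (the Dirac distribution at $\infty$, whose cumulative data is identically $0$). I would first dispatch the equality $\qpts\circ\rpts=\sigmapts\circ\Fpts\qpts$, which settles both Condition~\ref{item:def:rankingdom2} and the ``moreover'' claim (cf.\ Prop.~\ref{prop:completeness}), by a two-case computation. Writing $\qpts(\varphi)=\varphi([0,\infty))$ for the total finite mass, one has $\sigmapts\circ\Fpts\qpts(\Gamma,t)=\sigmapts((\dist\qpts)(\Gamma),t)$, equal to $1$ when $t=1$ and to $\sum_{\gamma}\Gamma(\gamma)\cdot\gamma([0,\infty))$ when $t=0$ (law of the pushforward). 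On the other side $\qpts\circ\rpts(\Gamma,1)=1$ since $\rpts(\Gamma,1)$ is Dirac at $0$, whereas the index shift in $\rpts$ gives $\qpts(\rpts(\Gamma,0))=\sum_{a\ge 1}\sum_{\gamma}\Gamma(\gamma)\gamma(a-1)=\sum_{\gamma}\Gamma(\gamma)\gamma([0,\infty))$; the two sides agree.

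The remaining order-theoretic conditions are then routine in the cumulative picture. For Condition~\ref{item:def:rankingdom40and3}, arbitrary joins and meets are computed pointwise on cumulative data: $\sup_{\varphi\in K}F_\varphi$ and $\inf_{\varphi\in K}F_\varphi$ are nondecreasing and $[0,1]$-valued, hence arise from distributions (the slack $1-\sup_a(\cdot)$ going to $\infty$), and these are exactly the least upper and greatest lower bounds, so $(\Rpts,\leqRpts)$ is a complete lattice. For Condition~\ref{item:def:rankingdom3}, $\qpts$ is strict ($\qpts(\botRpts)=\delta_\infty([0,\infty))=0$) and monotone (immediate from the cumulative order), while continuity reduces to the interchange $\sup_a\sup_{\varphi}F_\varphi(a)=\sup_{\varphi}\sup_a F_\varphi(a)$, valid because each $F_\varphi$ is nondecreasing in $a$. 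For Condition~\ref{item:def:rankingdom41and2} I would unfold $\Phi_{c,\rpts}$: writing $c(x)=(\mu_x,t_x)$, the same pushforward identity gives, at non-accepting $x$, $\Phi_{c,\rpts}(f)(x)([0,a])=\sum_{x'}\mu_x(x')\,f(x')([0,a-1])$ for $a\ge 1$ (and $0$ for $a=0$), from which $f\leqRpts g$ pointwise immediately yields $\Phi_{c,\rpts}(f)\leqRpts\Phi_{c,\rpts}(g)$.

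The main obstacle is Condition~\ref{item:def:rankingdom7}, corecursiveness of $\rpts$, which is also the crux of why this domain works. Fixing $c:X\to\Fpts X$ with $c(x)=(\mu_x,t_x)$, the fixed-point equation $f=\rpts\circ\Fpts f\circ c$ unfolds into: $f(x)=\delta_0$ when $t_x=1$; and $f(x)(0)=0$ with $f(x)(n)=\sum_{x'}\mu_x(x')f(x')(n-1)$ for $n\ge 1$, when $t_x=0$. The essential point is that the index shift makes this a \emph{well-founded} recursion on $n\in\mathbb{N}$: the value $f(x)(n)$ depends only on the values $f(x')(n-1)$ at strictly smaller index. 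Hence by induction on $n$ every finite mass $f(x)(n)$ is uniquely determined, and since $f(x)$ is a distribution, $f(x)(\infty)=1-\sum_{n}f(x)(n)$ is forced too; this gives uniqueness. For existence I would take these inductively defined finite masses and verify $\sum_{n=0}^{N}f(x)(n)\le 1$ for every $N$ by induction on $N$ (the step using $\sum_{x'}\mu_x(x')=1$ together with the induction hypothesis), then assign the remaining mass to $\infty$ to obtain a genuine element of $\dist\Ninf$ solving the equation; concretely $f(x)(n)$ is the probability of first reaching $\Acc^c$ from $x$ in exactly $n$ steps. This establishes that $\rpts$ is corecursive, so $(\rpts,\qpts,\leqRpts)$ is a ranking domain satisfying the equality of Prop.~\ref{prop:completeness} on the nose.
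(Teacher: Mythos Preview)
Your proof is correct and tracks the paper's argument closely: the cumulative description of $\dist\Ninf$, the complete-lattice structure via pointwise sup/inf of $F_\varphi$, the computation establishing $\qpts\circ\rpts=\sigmapts\circ\Fpts\qpts$, and the unfolding of $\Phi_{c,\rpts}$ for monotonicity all match the paper's approach essentially verbatim.

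The one substantive difference is in the \emph{existence} half of corecursiveness. The paper, having already secured the complete-lattice structure and monotonicity of $\Phi_{c,\rpts}$, simply invokes the Cousot--Cousot construction (Thm.~\ref{thm:KTCC}.2) to obtain a least fixed point via the transfinite chain from $\botRpts$; only uniqueness is then argued by the index induction you describe. Your route instead builds the fixed point directly by the same induction on $n$, together with the bound $\sum_{n\le N}f(x)(n)\le 1$. Both are valid: the paper's approach reuses the general order-theoretic machinery already in place (and makes the argument uniform with the two-player-game case), while yours is more self-contained and elementary, and makes the concrete interpretation of $f(x)(n)$ as a first-hitting probability transparent.
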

\noindent
Intuitively, 
the value $b(x)([0,a])\in[0,1]$ under-approximates the probability 
with which
an accepting state is reached from $x$ within $a$ steps.
%
%
The definition of $\leqRpts$, which is much like in probabilistic powerdomains (see e.g.~\cite{sDjahromi80CPOs}),
also reflects this intuition.
Here
the Dirac distribution
$\delta_0$ (resp.\ $\delta_\infty$) is
 the greatest (resp.\ least) element. 

The definition of ranking arrow (Def.~\ref{def:rankingarrow}) instantiates to the following---as one sees by straightforward calculation---when we fix a ranking domain to be the one in Prop.~\ref{prop:rankDomPTS}.
\begin{mydefinition}[distribution-valued ranking function]\label{def:distRSM}
Let $M=(X,\tau)$ be a PTS and $\Acc\subseteq X$.
A \emph{distribution-valued ranking function} 
is  $b:X\to \Rpts$ such that: 
\begin{displaymath}
\textstyle
\sum_{x'\in \supp(\tau)}\tau(x)(x')\cdot b(x')\bigl([0,a-1]\bigr)  \;\geq\; b(x)\bigl([0,a]\bigr)
\end{displaymath}
for each $x\in X\setminus \Acc$ and $a\in\Ninf$.
Here we let $b(x')([0,-1])=0$.
\end{mydefinition}
\noindent
By Thm.~\ref{thm:soundnessranking} (soundness) we have the following:
given a PTS $c\colon X\to \Fpts X$ and an ``assertion'' $h\colon X\to [0,1]$, if there exists a distribution-valued ranking function $b\colon X\to \dist \Ninf$ such that $h\leq \qpts\circ b$, then we can conclude that $h\le \Reach_{M^c,\Acc^c}$. Here  $\Reach_{M^c,\Acc^c}$ is given by reachability probabilities and coincides with 
 $\sem{\mu\sigmapts}_{c}$  (Prop.~\ref{prop:constBehSituationPTS}). 

We note that \emph{quantitative verification}  is possible using distribution-valued ranking functions. For example an assertion $h\colon X\to [0,1]$ can be such that $h(x)=1/2$; by finding a suitable arrow $b$ we conclude $h\le \Reach_{M^c,\Acc^c}$, that is, that the reachability probability from $x$  is at least $1/2$.\footnote{The problem solved here is more precisely the \emph{threshold reachability checking} problem; and ranking function-based proof methods for the problem would be viable options especially when the state space $X$ is infinite.} Such a quantitative assertion cannot be verified using $\varepsilon$-additive supermartingales: as in Thm.~\ref{thm:soundnessRSMConvPTS} they only witness the \emph{qualitative} property of (positive) almost-sure reachability.

\begin{wrapfigure}[4]{r}{2cm}
\vspace{-.5cm}
\small
\begin{xy}
(0,0)*=<2mm>[Fo]++!LU{x_0} = "x0",
(0,8)*=<2.5mm>++!L{x_1}*{\bigbullet} = "x1",
(8,6)*=<2mm>[Fo]++!D{x_2} = "x2",
\ar @{->} @(ul,dl)_(.5){\frac{1}{3}} "x0"*+{};"x0"*=<3mm>{}
\ar @{->} _{\frac{1}{3}} "x0"*+{};"x1"*=<3mm>{}
\ar @{->} _{\frac{1}{3}} "x0"*+{};"x2"*=<3mm>{}
\ar @{->} @(ul,dl)_(.5){1} "x1"*+{};"x1"*=<3mm>{}
\ar @{->} @(ur,dr)^(.8){1} "x2"*+{};"x2"*=<3mm>{}
\end{xy}
\end{wrapfigure}
\vspace{2mm}
\noindent
\begin{minipage}{0.28\hsize}
\begin{myexample}\label{example:RptsNonAS}
\end{myexample}
\end{minipage}\!\!
We define a PTS $M=(X,\tau)$ by $X=\{x_0,x_1,x_2\}$, 
$\tau(x_0)=[x_0\mapsto \frac{1}{3},x_1\mapsto \frac{1}{3},x_2\mapsto \frac{1}{3}]$,
 $\tau(x_1)=[x_1\mapsto 1]$ and $\tau(x_2)=[x_2\mapsto 1]$.
Let $\Acc=\{x_1\}$.
%
The function $\bpts:X\to \Rpts$ 
defined by 
$\bptsw(x_0)=
[i\mapsto 1/{3^{i+1}},\infty\mapsto 1/2]$,
$\bptsw(x_1)=[0\mapsto 1]$ and $\bptsw(x_2)=[\infty\mapsto 1]$
is a distribution-valued ranking function.
This allows us to conclude
$\Reach_{M,\Acc}(x_0)\geq (1-\frac{1}{2})=\frac{1}{2}$.
\vspace{2mm}

Finally we exhibit \emph{completeness} of distribution-valued ranking functions. This is an immediate consequence of the categorical result (Prop.~\ref{prop:completeness}) and Prop.~\ref{prop:rankDomPTS}.
\begin{myproposition}\label{prop:ptsCompleteness}
For each $\Fpts$-coalgebra $c:X\to \Fpts X$, there exists a (categorical) ranking arrow $b:X\to\Rpts$ such that
$\qpts\circ b=\sem{\mu\sigmapts}_c$.
\qed
\end{myproposition}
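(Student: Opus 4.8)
The plan is to obtain this as a direct corollary of the abstract completeness theorem (Prop.~\ref{prop:completeness}), with all hypotheses supplied by Prop.~\ref{prop:rankDomPTS}. First I would recall that Prop.~\ref{prop:rankDomPTS} establishes two facts at once: that the triple $(\rpts,\qpts,\leqRpts)$ is a genuine ranking domain for $\Fpts$ and $\sigmapts$, and---crucially---that the \emph{equality} $\qpts\circ\rpts=\sigmapts\circ\Fpts\qpts$ holds, not merely the inequality required by Cond.~\ref{item:def:rankingdom2} of Def.~\ref{def:rankingdom}. This equality is precisely condition~(\ref{eq:completenessCond}), so the hypothesis of Prop.~\ref{prop:completeness} is met.

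Next, since being a ranking domain includes corecursiveness of $\rpts$ (Cond.~\ref{item:def:rankingdom7} of Def.~\ref{def:rankingdom}), for the given coalgebra $c\colon X\to\Fpts X$ there is a unique arrow $\uniquefp{c}_{\rpts}\colon X\to\Rpts$ satisfying $\uniquefp{c}_{\rpts}=\rpts\circ\Fpts\uniquefp{c}_{\rpts}\circ c$. I would then note that this $\uniquefp{c}_{\rpts}$ is in particular a (categorical) ranking arrow in the sense of Def.~\ref{def:rankingarrow}: it satisfies $\uniquefp{c}_{\rpts}\leqRpts\rpts\circ\Fpts\uniquefp{c}_{\rpts}\circ c$ simply because it satisfies the corresponding equality and $\leqRpts$ is reflexive (a fixed point is a fortiori a post-fixed point). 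Thus $\uniquefp{c}_{\rpts}$ is a legitimate witness of the kind the proposition asks for.

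Finally, applying Prop.~\ref{prop:completeness} in this setting yields $\qpts\circ\uniquefp{c}_{\rpts}=\sem{\mu\sigmapts}_c$, so taking $b=\uniquefp{c}_{\rpts}$ gives the desired ranking arrow with $\qpts\circ b=\sem{\mu\sigmapts}_c$. There is essentially no obstacle at this stage: all the substantive work---verifying the complete-lattice structure of $\Rpts$, monotonicity of $\Phi_{c,\rpts}$, the strictness and continuity of $\qpts$, corecursiveness of $\rpts$, and above all the completeness equality $\qpts\circ\rpts=\sigmapts\circ\Fpts\qpts$---has already been discharged inside Prop.~\ref{prop:rankDomPTS}. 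The only nontrivial input from the earlier theory is the observation (used in Prop.~\ref{prop:completeness}, and ultimately resting on the corecursiveness of $\rpts$) that the least and greatest fixed points of $\Phi_{c,\rpts}$ coincide, so that the unique homomorphism $\uniquefp{c}_{\rpts}$ is simultaneously the optimal ranking arrow and the one whose image under $\qpts$ recovers the least fixed-point property exactly.
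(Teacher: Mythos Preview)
Your proposal is correct and follows exactly the approach the paper itself indicates: the paper states just before Prop.~\ref{prop:ptsCompleteness} that it ``is an immediate consequence of the categorical result (Prop.~\ref{prop:completeness}) and Prop.~\ref{prop:rankDomPTS},'' and your write-up simply unpacks that one-line justification in full detail, taking $b=\uniquefp{c}_{\rpts}$ as the witnessing ranking arrow.
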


\begin{myexample}
In Example~\ref{example:rankDomExample2}, 
the function $b:X\to \Rpts$ defined by 
 \begin{displaymath}
 b(x)(n)=
 {\small\begin{cases}
 1/2^{i} & (n=2^i,i>0) \\
 0 & (\text{otherwise})
 \end{cases}}
 \;\;\text{and}\;\;
 b(x_{i,j})=\delta_{2^i-j}\,
 \end{displaymath}
 is a distribution-valued ranking function 
 (here $\delta_{2^i-j}$ denotes a Dirac distribution).
 We have $b(x)\bigl([0,\infty)\bigr)=1$ and thus successfully verify almost-sure reachability from $x$. 
\end{myexample}

In~\S{}\ref{subsec:additiveAndMultiplicativeSupermartingales} we have argued that $\varepsilon$-additive ranking supermartingales (Def.~\ref{def:multAddSMPTS}) is not an instance of our categorical ranking arrow. It is nevertheless possible to prove its soundness (Thm.~\ref{thm:soundnessRankConv}) using the categorical framework---specifically 
by showing that an $\varepsilon$-additive ranking supermartingale gives rise to a 
distribution-valued ranking function (Def.~\ref{def:distRSM}).
Details are found in~\S{}\ref{subsec:soundARSapp} in the appendix.

Similarly our 
framework can prove soundness of $\alpha$-\linebreak
multiplicative ranking supermartingales 
(see \S{}\ref{subsec:soundMRSapp} for the details).

\subsection{$\gamma$-Scaled Non-Counting Ranking Supermartingales}
 The notion of distribution-valued ranking function exhibits pleasant properties like completeness and quantitative assertion checking. A major drawback, however, is the complexity of its value domain $\dist\Ninf$. 

In many realistic verification scenarios a ranking function/supermartingale $b$ would be \emph{synthesized} as follows: the function $b$ is expressed in a predetermined template $b_{\vec{p}}$  (such as polynomials up-to a certain degree)
in which some parameters $\vec{p}$ occur; the requirements on $b_{\vec{p}}$ translate to constraints on $\vec{p}$; and one relies on some optimization solver (for SAT, LP, SDP, etc.) to solve the constraints. It significantly increases the complexity of the workflow
if the value $b(x)$ is a distribution in $\dist\Ninf$ instead of an (extended) real number in $[0,\infty]$.

Here we present another probabilistic instantiation of the categorical framework. It takes values in the unit interval $\Rptsw$.

\begin{myproposition}\label{prop:weakrankDomPTS}
We fix a real number $\gamma\in [0,1)$.
We define an algebra $\rptswg:\Fpts \Rptsw\to \Rptsw$ (here $\mathsf{nc}$ stands for ``non-counting'') as follows.
\begin{displaymath}\small
\rptswg(\varphi,t)=\begin{cases}
1 & (t=1) \\
\gamma\cdot \sum_{a\in \supp(\varphi)} a\cdot\varphi(a) & (\text{otherwise})\,.
\end{cases}
\end{displaymath}
We further define $\qptsw:\Rptsw\to \Omegapts$ by $\qptsw(a)=a$.
Then $(\rptswg,\qptsw,\leqptsw)$, where $\leqptsw$ is the usual on $[0,1]$, is a ranking domain
with respect to the modality $\sigmapts$ (Prop.~\ref{prop:constBehSituationPTS}). 
\qed
\end{myproposition}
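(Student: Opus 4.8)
The plan is to verify the four conditions in the definition of a ranking domain (Def.~\ref{def:rankingdom}) for the triple $(\rptswg,\qptsw,\leqptsw)$ with $R=\Rptsw=[0,1]$. Three of the four are routine, and the real content lies in corecursiveness (Cond.~\ref{item:def:rankingdom7}); that is where the standing hypothesis $\gamma<1$ will be used decisively.

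First I would dispatch the easy conditions. For the lax-homomorphism inequality (Cond.~\ref{item:def:rankingdom2}), observe that $\qptsw=\id_{[0,1]}$, so $\dist\qptsw$ is the identity on $\dist[0,1]$ and the inequality $\qptsw\circ\rptswg\leq\sigmapts\circ\Fpts\qptsw$ reduces to $\rptswg(\varphi,t)\leq\sigmapts(\varphi,t)$ for each $(\varphi,t)$; this is an equality when $t=1$ and, when $t=0$, follows from $\gamma\le 1$ since $\gamma\sum_a a\,\varphi(a)\le\sum_a a\,\varphi(a)$. Cond.~\ref{item:def:rankingdom40and3} (that $([0,1],\leq)$ is a complete lattice) is standard. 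For monotonicity of $\Phi_{c,\rptswg}$ (Cond.~\ref{item:def:rankingdom41and2}) I would rewrite, for $c(x)=(\delta,t)$ with $t=0$, the value $\Phi_{c,\rptswg}(f)(x)=\gamma\sum_{x'}\delta(x')f(x')$ as a $\gamma$-scaled expectation of $f$ under $\delta$; since $\gamma\ge 0$ and $\delta$ assigns nonnegative weights, $f\le g$ pointwise yields $\Phi_{c,\rptswg}(f)\le\Phi_{c,\rptswg}(g)$. Finally, Cond.~\ref{item:def:rankingdom3} is trivial because $\qptsw$ is the identity, hence monotone, strict ($\qptsw(0)=0$), and continuous.

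The heart of the argument is corecursiveness (Cond.~\ref{item:def:rankingdom7}): for every coalgebra $c:X\to\Fpts X$ the operator $\Phi_{c,\rptswg}$ on $[0,1]^X$ must have a \emph{unique} fixed point. I would obtain this by the Banach fixed-point theorem rather than by any order-theoretic route. Equip $[0,1]^X$ with the supremum metric $d(f,g)=\sup_{x\in X}|f(x)-g(x)|$; this is a complete metric space, since a uniform limit of $[0,1]$-valued functions is again $[0,1]$-valued. Writing $c(x)=(\delta_x,t_x)$, the two cases of $\rptswg$ give $\Phi_{c,\rptswg}(f)(x)=1$ when $t_x=1$ and $\Phi_{c,\rptswg}(f)(x)=\gamma\sum_{x'}\delta_x(x')f(x')$ when $t_x=0$. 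In the first case the difference of the two images vanishes; in the second, using that $\delta_x$ is a probability distribution (so $\sum_{x'}\delta_x(x')=1$),
\[
|\Phi_{c,\rptswg}(f)(x)-\Phi_{c,\rptswg}(g)(x)|\;\le\;\gamma\sum_{x'}\delta_x(x')\,|f(x')-g(x')|\;\le\;\gamma\,d(f,g).
\]
Taking the supremum over $x$ shows $\Phi_{c,\rptswg}$ is a $\gamma$-contraction, and since $\gamma<1$, Banach's theorem delivers a unique fixed point. This establishes corecursiveness and completes the verification.

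I expect the contraction estimate to be the decisive step, and it is exactly here that $\gamma<1$ is indispensable: it supplies the contraction factor. By contrast, the $\varepsilon$-additive algebra fails corecursiveness precisely because its update carries no such scaling (cf.\ Example~\ref{example:rankSupMartNotCorecNew}); so the only genuine subtlety is to package the standard Banach argument on the complete metric space $([0,1]^X,d)$, the remaining conditions being immediate once $\qptsw=\id$ is exploited.
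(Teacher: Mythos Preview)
Your proposal is correct. The three ``easy'' conditions are handled exactly as in the paper (the paper likewise reduces Cond.~\ref{item:def:rankingdom2} to $\rptswg\le\sigmapts$ via $\qptsw=\id$, and dispatches Cond.~\ref{item:def:rankingdom3} the same way). The genuine difference is in the corecursiveness argument. The paper first obtains existence of a fixed point order-theoretically, via the transfinite Kleene chain from $\bot$ (relying on Lem.~\ref{lem:tfSeqWellDefPTSDWeak} and Thm.~\ref{thm:KTCC}.2), and then proves uniqueness by a direct estimate: for two fixed points $f,g$ with $h=g-f\ge 0$ it shows $\sup_x h(x)\le\gamma\cdot\sup_x h(x)$, whence $h\equiv 0$. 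You instead package the same contraction estimate into the Banach fixed-point theorem on $([0,1]^X,d_\infty)$, getting existence and uniqueness in one stroke. Your route is more economical for this particular $R=[0,1]$, since the sup metric is complete and the $\gamma$-contraction is immediate; the paper's route has the mild advantage of staying entirely within the order-theoretic machinery already set up (no appeal to metric completeness), and its existence half does not use $\gamma<1$ at all. Either way, the decisive inequality is the same: $\sum_{x'}\delta_x(x')\,|f(x')-g(x')|\le d(f,g)$ because $\delta_x$ is a probability distribution.
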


Thus it makes sense to consider 
ranking arrows (Def.~\ref{def:rankingarrow}) 
with respect to the ranking domain  $(\rptswg,\qptsw,\leqptsw)$. 
By straightforward calculation, their definition unravels as follows.

\begin{mydefinition}[$\gamma$-scaled non-counting ranking supermartingale]\label{def:weakRSM}
Let $M=(X,\tau)$ be a PTS and $\Acc\subseteq X$.
We fix a real number $\gamma$ such that $0\leq\gamma<1$.
A \emph{$\gamma$-scaled non-counting ranking supermartingale} 
is a function $\bptsw:X\to \Rptsw$ such that 
for each $x\in X\setminus \Acc$ we have:
\begin{displaymath}
\textstyle\gamma\cdot\sum_{x'\in X} \tau(x)(x')\cdot \bptsw(x')\;\geq\;  \bptsw(x)\enspace.
\end{displaymath}
\end{mydefinition}
\noindent
This notion of supermartingale seems new. The intuition is that $b(x)$ is a lower bound for the reachability probability. Note that, unlike for the other variations of supermartingales in this section (where we \emph{over}-approximate the number of steps),  reachability probabilities should be \emph{under}-approximated. 

We obtain the following soundness result as an instance of Thm.~\ref{thm:soundnessranking}. Here a non-counting ranking supermartingale  $b$  itself gives lower bounds for reachability probabilities since $\qptsw\colon \Rptsw\to\Omegapts$ is the identity map.




\begin{mycorollary}\label{cor:rankDomPTSconcrete}
Let $M=(X,\tau)$ be a PTS and 
$\bptsw:X\to \Rptsw$ be a $\gamma$-scaled non-counting ranking supermartingale. 
Then we have $\Reach_\Acc(x)\geq  \bptsw(x)$. 
\qed
\end{mycorollary}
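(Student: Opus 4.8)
The plan is to read this corollary as a direct instantiation of the general soundness theorem (Thm.~\ref{thm:soundnessranking}), using the machinery already assembled in Prop.~\ref{prop:weakrankDomPTS} and Prop.~\ref{prop:constBehSituationPTS}. First I would fix the $\Fpts$-coalgebra $c=c^{M,\Acc}:X\to\Fpts X$ corresponding to the given PTS $M$ and accepting set $\Acc$ (Def.~\ref{def:PTScoalg}), and recall from Prop.~\ref{prop:weakrankDomPTS} that $(\rptswg,\qptsw,\leqptsw)$ is a ranking domain for $\Fpts$ and the modality $\sigmapts$. The strategy is then: (i) observe that the hypothesis---$\bptsw$ being a $\gamma$-scaled non-counting ranking supermartingale---is precisely the statement that $\bptsw:X\to\Rptsw$ is a (categorical) ranking arrow for $c$ with respect to this ranking domain; (ii) invoke soundness to deduce $\qptsw\circ\bptsw\sqsubseteq_{\Omegapts}\sem{\mu\sigmapts}_c$; and (iii) simplify both sides.

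For step (i), I would unravel the ranking-arrow inequality $\bptsw\sqsubseteq_{\Rptsw}\rptswg\circ\Fpts\bptsw\circ c$. Pushing $\bptsw$ through $\Fpts=\dist(\place)\times\{0,1\}$ sends $c(x)=(\tau(x),t)$ to $\bigl((\dist\bptsw)(\tau(x)),t\bigr)$; applying $\rptswg$ then yields $1$ when $t=1$ (i.e.\ $x\in\Acc$) and $\gamma\cdot\sum_{a}a\cdot(\dist\bptsw)(\tau(x))(a)$ otherwise. The expectation is preserved under the pushforward, so $\sum_{a}a\cdot(\dist\bptsw)(\tau(x))(a)=\sum_{x'}\tau(x)(x')\cdot\bptsw(x')$. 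Since $\leqptsw$ is the ordinary order on $[0,1]$, the inequality reads $\bptsw(x)\leq 1$ for $x\in\Acc$ (automatic) and $\bptsw(x)\leq\gamma\cdot\sum_{x'}\tau(x)(x')\cdot\bptsw(x')$ for $x\notin\Acc$---exactly Def.~\ref{def:weakRSM}. This is the content already announced by the ``straightforward calculation'' preceding that definition, so step (i) is essentially bookkeeping.

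For steps (ii)--(iii), Thm.~\ref{thm:soundnessranking} gives $\qptsw\circ\bptsw\sqsubseteq_{\Omegapts}\sem{\mu\sigmapts}_c$ directly. Since $\qptsw:\Rptsw\to\Omegapts$ is the identity map $\qptsw(a)=a$, the left-hand side is just $\bptsw$; and by Prop.~\ref{prop:constBehSituationPTS} the right-hand side $\sem{\mu\sigmapts}_c$ coincides with the reachability probability function $\Reach_{M,\Acc}$. As $\sqsubseteq_{\Omegapts}$ is the pointwise usual order, this is exactly $\bptsw(x)\le\Reach_{M,\Acc}(x)$ for every $x\in X$, which is the claim. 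I do not anticipate a genuine obstacle: the only non-formal ingredient is the expectation-under-pushforward identity in step (i), and the rest is a matter of threading the pointwise order through the identity $\qptsw=\mathrm{id}$ and the characterization of $\sem{\mu\sigmapts}_c$ from Prop.~\ref{prop:constBehSituationPTS}.
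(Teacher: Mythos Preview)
Your proposal is correct and follows exactly the approach indicated in the paper: the corollary is stated as an immediate instance of Thm.~\ref{thm:soundnessranking}, using that $(\rptswg,\qptsw,\leqptsw)$ is a ranking domain (Prop.~\ref{prop:weakrankDomPTS}), that $\qptsw$ is the identity, and that $\sem{\mu\sigmapts}_c=\Reach_{M,\Acc}$ (Prop.~\ref{prop:constBehSituationPTS}). Your unraveling in step~(i) is precisely the ``straightforward calculation'' the paper alludes to before Def.~\ref{def:weakRSM}.
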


\begin{myexample}\label{example:MDPhalfProb}
Consider the PTS $M$ and $\Acc$ in Example~\ref{example:RptsNonAS}. 
%
 For each $\gamma\in [0,1)$, 
we define $\bptsw_{\gamma}:X\to \Rptsw$ 
 by 
$\bptsw_{\gamma}(x_0)=\frac{\gamma}{3-\gamma}$, $\bptsw_{\gamma}(x_1)=1$ and $\bptsw_{\gamma}(x_2)=0$. Then 
$\bptsw_{\gamma}$ is seen to be a $\gamma$-scaled non-counting ranking supermartingale.
 By Cor.~\ref{cor:rankDomPTSconcrete} we have 
$\Reach_\Acc(x_0)\geq \frac{\gamma}{3-\gamma}$; 
as this holds for any  $\gamma\in [0,1)$,  by letting $\gamma\to 1$,
we conclude 
$\Reach_\Acc(x_0)\geq \frac{1}{2}$.
\end{myexample}
\noindent
We note that in general a scaling factor $\gamma\in[0,1)$ results in suboptimality of under-approximation of reachability properties: in the last example $\bptsw_{\gamma}(x_{0})=\frac{\gamma}{3-\gamma}$ is smaller than the reachability probability $1/2$. 
Such suboptimality is an issue especially when we aim at \emph{qualitative} verification of almost-sure reachability.
In the last example we exercised an \emph{asymptotic} argument in which we think of $\gamma$  as a free variable and take the limit under $\gamma\to 1$. This strategy can be employed for almost-sure reachability checking.

Finally, the following example demonstrates that application of non-counting supermartingales is not limited to \emph{positive} almost-sure reachability. 
 \begin{myexample}\label{example:rankDomExample3}
We define a PTS $M=(X,\tau)$ and $\Acc\subseteq X$  
as in Example~\ref{example:rankDomExample2}.
For each $\gamma$, if we define $\bptsw_{\gamma}:X\to \Rptsw$ by 
$b_{\gamma}(x)=\sum_{i=1}^{\infty}\frac{\gamma^{2^i}}{2^i}$ and 
$b_{\gamma}(x_{i,j})=\gamma^{2^i-j}$ for each $i$ and $j$, then it is a $\gamma$-scaled non-counting supermartingale.
Hence we have $\Reach_{M,\Acc}(x)\geq\lim_{\gamma\to 1}b_{\gamma}(x)=1$.
 \end{myexample}




Let us summarize the section. We presented four variations of ranking supermartingales: \emph{$\varepsilon$-additive} ones, \emph{$\alpha$-multiplicative} ones, \emph{distribution-valued} ranking functions and $\gamma$-scaled \emph{non-counting} ranking supermartingales. The former two are known in the literature while the latter two seem to be new. The known notions are not instances of our generic definition, but 
their soundness
can be derived via our generic theory (see the end of \S{}\ref{subsec:distributionValuedSuperMartingale}).  Among the (seemingly) new notions, distribution-valued ranking functions enjoy nice properties like completeness (Prop.~\ref{prop:ptsCompleteness}) and support of quantitative reasoning (see Example~\ref{example:RptsNonAS})---at the cost of their complexity 
(they take as values distributions in $\dist\Ninf$). 
Non-counting ranking supermartingales (whose values are simply real numbers) are advantageous in quantitative reasoning
(see Example~\ref{example:MDPhalfProb}) and non-positive almost-sure termination, but the scaling factor $\gamma$ in it leads to suboptimal approximation. Typically one needs to rely on asymptotic arguments to obtain sharp bounds.

\section{Conclusions and Future Work}\label{sec:conclusions}
We have given a categorical account for liveness checking: we identify the essence of ranking function-based proof methods as the combination of corecursive algebras (as value domains) and lax homomorphisms; and for our notion of ranking arrow a soundness theorem has been presented. Our leading examples have been two-player games and probabilistic transition systems; in the course of studying them we were led to (seemingly) new variations of ranking martingales. 

Besides the concrete examples of ``ranking functions'' in this paper, we wish to derive yet other concrete examples from our categorical modeling, so that they provide novel proof methods for various liveness properties. 
A possible direction towards this goal is discussed in \S{}\ref{sec:nonTotalRD} in the appendix,
motivated by categorical closure properties of corecursive algebras.
%
Some abstract categorical questions remain open, too,
 such as characterization of the \emph{non-well-founded part}---that represents failure of liveness properties---of a corecursive algebra. 
We are also interested in 
the relationship between these (expected) results/examples and \emph{productivity} 
for coinductive datatypes 
in functional programming~\cite{AbelP16}. 
Intuitively, the latter is a property that any finite prefix of a coinductively defined data is obtained in finite time.



We have used two-player games (systems with angelic and demonic transitions) and 
PTSs (systems with probabilistic transitions) as leading examples.
A natural direction of future work is to consider stochastic games,
which involve angelic, demonic and probabilistic transitions~\cite{Shapley01101953}. 

In this paper we have focused on least fixed-point properties. Extension to  nested fixed-point specifications---\emph{persistence}, \emph{recurrence}, and general fixed-point formulas---is important future work. There we will need to categorically axiomatize \emph{progress measures} for  parity games (\cite{Jurdzinski00}; see also~\cite{HasuoSC16}).  Possibly relevant to this direction is our recent coalgebraic modeling of B\"uchi and parity acceptance conditions~\cite{urabeSH16coalgebraictrace}.

%



Practical implications of the proposed framework (and concrete ``ranking functions'' derived thereby) shall be investigated, too. We are especially interested in cyber-physical applications in which state spaces are inherently infinite but often allow succinct symbolic presentations (e.g.\ by polynomials). The work closely related to this direction is~\cite{chakarov2016deductive}. 

A categorical account on martingales is also found in recent~\cite{kozen16kolmogorovExtension}, where  a connection between two classic results---Kolmogorov's extension theorem and 
Doob's martingale convergence theorem---is established in categorical terms. The relationship between this work and ours shall be pursued, possibly centered around the notion of final sequence. 

\section*{Acknowledgment}
Thanks are due to Eugenia Sironi and anonymous referees for their useful comments.
The authors are supported by ERATO HASUO Metamathematics for Systems Design Project (No.\
JPMJER1603), JST, and Grants-in-Aid No.\ 15KT0012 \& 15K11984, JSPS. Natsuki Urabe is supported by Grant-in-Aid for JSPS Fellows (No.\ 16J08157).






\clearpage
\appendix


\subsection{\textbf{Formal Discussions for Two-Player Games}}\label{subsec:appendixTPG}
\begin{mydefinition}[$\sigmatpg$]\label{def:concreteDefTwoPlayer}
Let $(\Omegatpg,\leqOmegatpg)$ be a truth-value domain where 
$\leqOmegatpg$ denotes the usual order.
We define an $\Ftpg$-modality $\sigmatpg:\Ftpg \Omegatpg\to\Omegatpg$ by
\begin{displaymath}
\sigmatpg(\Gamma,t)=
\begin{cases}
1 & (t=1) \\
\max_{A\in\Gamma}\min_{a\in A}a & (\text{otherwise})\,.
\end{cases}
\end{displaymath}
\end{mydefinition}

\begin{myproposition}\label{prop:concreteDefTwoPlayer}
We define 
an $\Ftpg$-modality $\sigmatpg:\Ftpg\Omegatpg\to\Omegatpg$ as in Def.~\ref{def:concreteDefTwoPlayer}.
\begin{enumerate}
\item\label{item:prop:concreteDefTwoPlayer1}
The modality $\sigmatpg$ 
has least fixed points,
and for each coalgebra $c:X\to \Ftpg X$, the corresponding least fixed-point property 
$\sem{\mu\sigmatpg}_{c}:X\to\Omegatpg$
is given as follows:
%
\begin{displaymath}
\sem{\mu\sigmatpg}_{c}(x)=
\begin{cases}
1 & (\text{$x\in\Reach_{G^c,\Acc^c}$}) \\
0 & (\text{otherwise})\,.
\end{cases}
\end{displaymath}

\item\label{item:prop:concreteDefTwoPlayer2}
For a game structure $G=(X_{\max},X_{\min},\tau)$ and a set $\Acc\subseteq X_{\max}$,
we have:
\begin{displaymath}
\sem{\mu\sigmatpg}_{c^{G,\Acc}}(x)=
\begin{cases}
1 & (\text{$x\in\Reach_{G,\Acc}$}) \\
0 & (\text{otherwise})\,.
\end{cases}
\end{displaymath}
\end{enumerate}
\end{myproposition}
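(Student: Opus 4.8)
The plan is to compute the approximants of the least fixed point and match them against an ordinal-indexed attractor construction for the reachability game. First I would unfold the monotone map $\Phi_{c,\sigmatpg}$ (Def.~\ref{def:Phicsigma}). Writing $c(x)=(c_1(x),c_2(x))\in\pow^2 X\times\{0,1\}$ and using the action of $\Ftpg$ on arrows, one gets, for any $u\colon X\to\Omegatpg$, that $\Phi_{c,\sigmatpg}(u)(x)=1$ exactly when $x\in\Acc^c$ (i.e.\ $c_2(x)=1$) or there is some $A\in c_1(x)$ with $u(x')=1$ for all $x'\in A$. Here the complete-lattice conventions $\sup\emptyset=0$ and $\inf\emptyset=1$ on $\Omegatpg=\{0,1\}$ are what make the two ``stuck'' clauses of Def.~\ref{def:winnerTPGNew} fall out automatically: if $\emptyset\in c_1(x)$ then $\max$ can send $\min$ into a dead end and $\Phi_{c,\sigmatpg}(u)(x)=1$, whereas if $c_1(x)=\emptyset$ and $x\notin\Acc^c$ then $\max$ is stuck and $\Phi_{c,\sigmatpg}(u)(x)=0$.

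Next, since $\sigmatpg$ satisfies Asm.~\ref{asm:behDomainrank} (Prop.~\ref{prop:FsigmaSatisfyAsmTPG}), the bottom element of $\Omegatpg^X$ is the constant-$0$ function $\bot$ and, by Cousot--Cousot (Thm.~\ref{thm:KTCC}.2), $\sem{\mu\sigmatpg}_c=\bigsqcup_{\mathfrak a}\Phi_{c,\sigmatpg}^{\mathfrak a}(\bot)$. Setting $W_{\mathfrak a}=\{x\mid \Phi_{c,\sigmatpg}^{\mathfrak a}(\bot)(x)=1\}$, the previous step gives the attractor recurrence: $W_0=\emptyset$; $W_{\mathfrak a+1}=\Acc^c\cup\{x\mid \exists A\in c_1(x).\,A\subseteq W_{\mathfrak a}\}$; and $W_{\mathfrak l}=\bigcup_{\mathfrak a<\mathfrak l}W_{\mathfrak a}$ at limits. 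Thus it remains to prove $\bigcup_{\mathfrak a}W_{\mathfrak a}=\Reach_{G^c,\Acc^c}$.

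For the inclusion $\bigcup_{\mathfrak a}W_{\mathfrak a}\subseteq\Reach_{G^c,\Acc^c}$ I would assign to each such $x$ the rank $\rho(x)=\min\{\mathfrak a\mid x\in W_{\mathfrak a}\}$ (never a limit ordinal, since the $W$'s are continuous at limits, hence a successor) and synthesize an angelic strategy that, at a non-accepting $x$ with $\rho(x)=\mathfrak b+1$, plays a witnessing $A\in c_1(x)$ with $A\subseteq W_{\mathfrak b}$; every $\min$-successor then has strictly smaller rank. Well-foundedness of the ordinals forbids an infinite descent, so every play either reaches $\Acc^c$ or forces $\min$ into $\emptyset$, i.e.\ is winning --- exactly the ranking-function argument (cf.\ Rem.~\ref{rem:posStratSuffices} and Thm.~\ref{thm:soundnessRankConv}). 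For the converse inclusion I would argue by contraposition: let $L=\bigcup_{\mathfrak a}W_{\mathfrak a}=\sem{\mu\sigmatpg}_c^{-1}(1)$, a \emph{fixed} point of $\Phi_{c,\sigmatpg}$, and put $Z=X\setminus L$. Fixed-pointness together with the unfolding of $\Phi_{c,\sigmatpg}$ shows that every $x\in Z$ satisfies $x\notin\Acc^c$, $\emptyset\notin c_1(x)$, and $A\cap Z\neq\emptyset$ for every $A\in c_1(x)$. Hence whatever nonempty $A$ the angel plays, $\min$ can stay inside $Z$; this yields a $\min$-strategy producing an infinite run that never visits $\Acc^c$ and never gets stuck, witnessing $x\notin\Reach_{G^c,\Acc^c}$. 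Combining the two inclusions proves Part~1.

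Part~2 then follows by instantiating Part~1 at $c=c^{G,\Acc}$ and observing, via the embedding--projection pair of Def.~\ref{def:TPGcoalg}, that $\Reach_{G^{c^{G,\Acc}},\Acc^{c^{G,\Acc}}}=\Reach_{G,\Acc}$: the round trip only enlarges $X_{\min}$ to $\pow X$, which is inessential because reachability depends on a max-state $x$ solely through the family $c_1(x)$ of its sets of $\min$-successors, and this family is preserved. I expect the main obstacle to be the completeness inclusion, specifically justifying that the complement $Z$ of the least fixed point is a genuinely $\min$-controlled trap avoiding $\Acc^c$; the crux there is to use that $\sem{\mu\sigmatpg}_c$ is a fixed (not merely pre-fixed) point, which is precisely where stabilization in Cousot--Cousot is needed, and to keep the empty-set conventions straight so that the ``$\min$ never gets stuck'' part of the winning condition is handled correctly.
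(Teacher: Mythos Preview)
Your argument is correct, but it is organized differently from the paper's proof. The paper does not compute the transfinite approximants $W_{\mathfrak a}$ at all. Instead it defines $f$ to be the characteristic function of $\Reach_{G^c,\Acc^c}$ and checks \emph{directly} that $f$ is a fixed point of $\Phi_{c,\sigmatpg}$, via a chain of logical equivalences that passes through ``$\exists A\in c_1(x).\,\forall x'\in A.\,f(x')=1$'' and the corresponding statement about combining per-successor winning strategies $(\alpha'_{x'})_{x'\in A}$ into one strategy from $x$. Leastness is then established by exactly your trap argument: for an arbitrary fixed point $f'$, from $f'(x)=0$ one extracts a $\min$-strategy that stays in $\{x'\mid f'(x')=0\}$, hence avoids $\Acc^c$, giving $f(x)=0$. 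So your second inclusion coincides with the paper's leastness step; the difference lies in the first half, where you replace ``$\Reach$ is a fixed point'' by the attractor/ranking construction on the approximants. Your route has the pedagogical advantage of foreshadowing the ranking-function machinery of \S\ref{sec:catTheoryRF}, while the paper's route is shorter because it avoids transfinite induction and the synthesis of a positional strategy along ranks. For Part~2 the paper simply says ``proved in a similar way''; your reduction to Part~1 via the embedding--projection of Def.~\ref{def:TPGcoalg} is a clean alternative, and your remark that identifying $\min$-states with equal successor sets does not affect reachability is the right justification.
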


\begin{proof}\mbox{}
We first prove (\ref{item:prop:concreteDefTwoPlayer1}).
We define  $f:X\to\Omegatpg$ by 
\begin{displaymath}
f(x)=
\begin{cases}
1 & (\text{$x\in\Reach_{G^c,\Acc^c}$}) \\
0 & (\text{otherwise})\,.
\end{cases}
\end{displaymath}
By definition of the least fixed-point property,
it suffices to show that $f$ is the least fixed point of 
the function
$\Phi_{c,\sigmatpg}:\Setsto{X}{\Omegatpg}\to\Setsto{X}{\Omegatpg}$ 
in Def.~\ref{def:Phicsigma}.

We first show that $f$ is a fixed point of $\Phi_{c,\sigmatpg}$.
For each $x\in X$, we have:
\allowdisplaybreaks[4]
\begin{align*}
&\Phi_{c,\sigmatpg}(f)(x)=1 \\
&\Leftrightarrow (\sigmatpg\circ \Ftpg f\circ c)(x)=1 & (\text{by def.\ of $\Phi_{c,\sigmatpg}$}) \\
&\Leftrightarrow c_2(x)=1,\;\text{or}\;\exists A\in c_1(x).\, \forall x'\in A.\, f(x')=1 \hspace{-4cm}& \\
& & (\text{by def.\ of $\sigmatpg$ and $\Ftpg$}) \\
&\Leftrightarrow c_2(x)=1,\;\text{or}\; \\
& \qquad\quad\exists A\in c_1(x).\; \forall x'\in A.\; \\ 
&\qquad\qquad \exists \alpha':\text{a strategy of $\max$}.\; \forall \beta':\text{a strategy of $\min$}.\, \hspace{-4cm}&\\
&\qquad\qquad\quad \text{$\rho^{\alpha',\beta',x'}$ is winning for $\max$} \hspace{-2cm}& (\text{by def.\ of $f$}) \\
&\Leftrightarrow c_2(x)=1,\;\text{or}\; \\
&\qquad \quad\exists A\in c_1(x).\;\\
& \qquad\quad\quad\exists (\alpha'_{x'})_{x'\in A}:\text{a family of strategies of $\max$}.\;  \hspace{-4.5cm} \\ 
&\qquad\qquad\quad \forall x'\in A.\; \;\forall \beta':\text{a strategy of $\min$}.\; \hspace{-4.5cm}&\\
&\qquad\qquad\qquad \text{$\rho^{\alpha'_{x'},\beta',x'}$ is winning for $\max$} \hspace{-2cm}&  \\
&\Leftrightarrow x\in \Acc^c,\;\text{or}\; \\
& \quad\exists y\in X^c_{\min} \;\text{s.t.}\;(x,y)\in \tau^c.\;\\
& \quad\;\exists (\alpha'_{x'})_{x'\in X^c_{\max}\;\text{s.t.}\;(y,x')\in\tau^c}:\text{a family of strategies of $\max$}.\;  \hspace{-4.5cm} \\ 
&\qquad\quad \forall x'\in X^c_{\max}\;\text{s.t.}\;(y,x')\in\tau^c.\; \;\forall \beta':\text{a strategy of $\min$}.\; \hspace{-4.5cm}&\\
&\qquad\qquad \text{$\rho^{\alpha'_{x'},\beta',x'}$ is winning for $\max$} \hspace{-2cm}&  \\
& & (\text{by def.\ of $G^c$ and $\Acc^c$}) \\
&\Leftrightarrow\exists \alpha:\text{a strategy of $\max$}.\; \forall \beta:\text{a strategy of $\min$}.\, \hspace{-4.5cm}&\\
&\qquad\quad \text{$\rho^{\alpha,\beta,x}$ is winning for $\max$}\hspace{-2cm} &  (\text{by def.\ of $\rho^{\alpha,\beta,x}$}) \\
&\Leftrightarrow f(x)=1 \hspace{-2cm}& (\text{by def.\ of $f$}) \,.
\end{align*}
%
Hence $f$ is a fixed point of $\Phi_{c,\sigmatpg}$.

It remains to show that $f:X\to\Omegatpg$ is the least fixed point with respect to the 
pointwise extension of $\leq$.
Let $f':X\to \Omegatpg$ be a fixed point of $\Phi_{c,\sigmatpg}$.
To prove $f\leq f'$, it suffices to prove that
$f'(x)=0$ implies $f(x)=0$ for each $x\in X$.

For each $x'\in X$, we have:
\begin{align*}
&f'(x')=0 \\
&\Leftrightarrow \Phi_{c,\sigmatpg}(f')(x')=0 \hspace{-1cm}& (\text{$f'$ is a fixed point of $\Phi_{c,\sigmatpg}$})\\
&\Leftrightarrow (\sigmatpg\circ \Ftpg f'\circ c)(x')=0 & (\text{by def.\ of $\Phi_{c,\sigmatpg}$}) \\
&\Leftrightarrow c_2(x')=0,\;\;\text{and}\;\;\forall A\in c_1(x').\; \exists x''\in A.\; f'(x'')=0  \hspace{-4.0cm}& \\
& & (\text{by def.\ of $\sigmatpg$ and $\Ftpg$}) \\
&\Leftrightarrow x'\notin \Acc^c ,\;\;\text{and}\\
&\qquad\forall y\in X^c_{\min}\;\text{s.t.}\; (x',y)\in\tau^c.\; \hspace{-3.0cm} \\
&\qquad\quad\exists x''\in X^c_{\max}\;\text{s.t.}\; (y,x'')\in\tau^c.\; f'(x'')=0  \hspace{-9.0cm}& \\
& & (\text{by def.\ of $G^c$ and $\Acc^c$})\,.
\end{align*}
This means that if $f'(x')=0$, then
%
$c_2(x')=0$ and 
for each $A\in c_1(x')$
there exists $x''\in A$ such that $f'(x'')=0$.
%
Hence for each $x\in X$ such that $f'(x)=0$, 
we can define a strategy $\beta_x$ of the player $\min$ so that 
for each strategy $\alpha$ of the player $\max$,
the resulting run $\rho^{\alpha,\beta_x,x}$ from $x$ is not winning for $\max$.

Therefore by the definition of $f$, we have $f(x)=0$.
This concludes the proof. 

The item (\ref{item:prop:concreteDefTwoPlayer2}) is proved in a similar way.
\end{proof}

\subsubsection{Details for Example~\ref{example:FsigmaSatisfyAsmTPG}}\label{subsubsec:details:example:FsigmaSatisfyAsmTPG}

\begin{myproposition}\label{prop:FsigmaSatisfyAsmTPG}
The modality $\sigmatpg:\Ftpg\Omegatpg\to\Omegatpg$ in Def.~\ref{def:concreteDefTwoPlayer}
satisfies Asm.~\ref{asm:behDomainrank}.
\end{myproposition}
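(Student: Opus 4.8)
The plan is to verify the two conditions of Asm.~\ref{asm:behDomainrank} separately; for this concrete data both are elementary.

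For Cond.~\ref{asm:behDomain4and3}, I would simply observe that $\Omegatpg=\{0,1\}$ ordered by $0\leqOmegatpg 1$ is the two-element chain, which is trivially a complete lattice: every subset has a join and a meet, with $\bot=0$ and $\top=1$.

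For Cond.~\ref{asm:behDomain1and2}, I would prove monotonicity of $\Phi_{c,\sigmatpg}$ directly. Fix a coalgebra $c:X\to\Ftpg X$ and two properties $u,u':X\to\Omegatpg$ with $u\sqsubseteq u'$ pointwise, i.e.\ $u(x)\leq u'(x)$ for all $x$. For each $x$, write $c(x)=(\Gamma,t)$ with $\Gamma\in\pow^2 X$ and $t\in\{0,1\}$. Unfolding the action of $\Ftpg$ on the arrow $u$ and then the definition of $\sigmatpg$ (Def.~\ref{def:concreteDefTwoPlayer}), we get $\Phi_{c,\sigmatpg}(u)(x)=\sigmatpg\bigl(\Ftpg u(c(x))\bigr)$, which equals $1$ when $t=1$ and equals $\max_{A\in\Gamma}\min_{x'\in A}u(x')$ when $t=0$. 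The case $t=1$ is immediate, since then both $\Phi_{c,\sigmatpg}(u)(x)$ and $\Phi_{c,\sigmatpg}(u')(x)$ are $1$. For $t=0$, I would invoke that over any fixed index set both $\min$ and $\max$ are monotone: from $u\leq u'$ pointwise we get $\min_{x'\in A}u(x')\leq\min_{x'\in A}u'(x')$ for each $A\in\Gamma$, and taking $\max$ over $A\in\Gamma$ preserves the inequality. Hence $\Phi_{c,\sigmatpg}(u)(x)\leq\Phi_{c,\sigmatpg}(u')(x)$ for every $x$, establishing the required monotonicity.

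I do not expect any serious obstacle here; the only point requiring care is to unfold the action $\Ftpg u(\Gamma,t)=(\{\{u(x'):x'\in A\}:A\in\Gamma\},t)$ correctly before applying $\sigmatpg$, so that the inner $\min$ ranges over the images $u(x')$ and the outer $\max$ over the family indexed by $A\in\Gamma$. Alternatively one could appeal to the sufficient criterion noted just after Asm.~\ref{asm:behDomainrank}—equip $\Ftpg\Omegatpg$ with a suitable order and check that $\sigmatpg$ and the action $F_{X,\Omega}$ are monotone—but the direct case analysis above is the most economical route.
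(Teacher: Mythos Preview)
Your proof is correct and follows essentially the same approach as the paper: the paper dismisses Cond.~\ref{asm:behDomain4and3} as easy and handles Cond.~\ref{asm:behDomain1and2} by invoking monotonicity of the functions $\max$ and $\min$ appearing in $\sigmatpg$, which is exactly the core of your argument. Your version simply unfolds the case analysis in more detail than the paper's one-line proof.
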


\begin{proof}
%
It is easy to see that Cond.~\ref{asm:behDomain4and3} is satisfied.
By monotonicity of the functions $\max$ and $\min$ that are used in $\sigmatpg$, Cond.~\ref{asm:behDomain1and2} is satisfied.
\end{proof}

\subsubsection{Details of Example~\ref{example:rankDomTwoPlayerGame}}\label{subsubsec:details:example:rankDomTwoPlayerGame}

\begin{myproposition}\label{prop:rankDomTwoPlayerGame}
We define an $\Ftpg$-modality $\sigmatpg$ as in Def.~\ref{def:concreteDefTwoPlayer},
and fix an ordinal $\mathfrak{z}$.
We define an $\Ftpg$-algebra $\rtpgz:\Ftpg \Rtpgz\to \Rtpgz$, 
a function $\qtpgz:\Ftpg\Rtpgz\to \Omegatpg$ and 
a partial order $\leqRtpgz$ over $\Rtpgz$ as follows.
%
\begin{enumerate}
\item[1)] 
$\Rtpgz=\{\mathfrak{a}\mid \mathfrak{a}\;\text{is an ordinal s.t.}\;\mathfrak{a}\leq\mathfrak{z}\}$,
and 
%
$\rtpgz:\Ftpg\Rtpgz\to\Rtpgz$ is defined by
\begin{displaymath}
\rtpgz(\Gamma,t)=\begin{cases}
0 & (t=1) \\
\min_{A\in\Gamma}\sup_{\mathfrak{a}\in A}(\mathfrak{a}\nadd 1) & (\text{otherwise})\,.
\end{cases}
\end{displaymath}
Here $\mathfrak{a}\nadd 1$ denotes $\min\{\mathfrak{a}+1,\mathfrak{z}\}$.

\item[2)] 
$\qtpgz:\Rtpgz\to\Omegatpg$ is defined by 
\begin{displaymath}
\qtpgz(\mathfrak{a})=\begin{cases}
0 & (\mathfrak{a}=\mathfrak{z}) \\
1 & (\text{otherwise})\,.
\end{cases}
\end{displaymath}

\item[3)] 
For $\mathfrak{a},\mathfrak{b}\in\Rtpgz$,
$\mathfrak{a}\leqRtpgz\mathfrak{b}\,\defarrow\,\mathfrak{a}\geq\mathfrak{b}$ (note the directions).

\end{enumerate}
Then the triple
$(\rtpgz,\qtpgz,\leqRtpgz)$ 
is a ranking domain.
\end{myproposition}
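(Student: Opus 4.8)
The plan is to verify in turn the four defining conditions of a ranking domain (Def.~\ref{def:rankingdom}): the refinement inequality (Cond.~\ref{item:def:rankingdom2}), the complete-lattice and monotonicity requirements on $\rtpgz$ (Cond.~\ref{item:def:rankingdom4}), the strictness/monotonicity/continuity of $\qtpgz$ (Cond.~\ref{item:def:rankingdom3}), and finally corecursiveness of $\rtpgz$ (Cond.~\ref{item:def:rankingdom7}). Throughout I keep in mind that $\leqRtpgz$ reverses the ordinal order, so that the bottom of $(\Rtpgz,\leqRtpgz)$ is the ordinal $\mathfrak{z}$, the top is $0$, and a join in $\leqRtpgz$ is an ordinal infimum (which exists and is attained because every set of ordinals is well-ordered). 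Only the last condition carries real content; the first three are finite case analyses and standard order-theoretic checks.

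For Cond.~\ref{item:def:rankingdom2} I would split on the flag $t$. When $t=1$ both sides evaluate to $1$, since $\qtpgz(\rtpgz(\Gamma,1))=\qtpgz(0)=1$ and $\sigmatpg(\cdot,1)=1$. When $t=0$ the inequality $\qtpgz(\min_{A\in\Gamma}\sup_{\mathfrak a\in A}(\mathfrak a\nadd 1))\le \max_{A\in\Gamma}\min_{\mathfrak a\in A}\qtpgz(\mathfrak a)$ reduces to the observation that if some $A\in\Gamma$ has $\sup_{\mathfrak a\in A}(\mathfrak a\nadd 1)<\mathfrak z$, then every $\mathfrak a\in A$ satisfies $\mathfrak a<\mathfrak z$, i.e.\ $\qtpgz\equiv 1$ on that $A$; the degenerate cases $A=\emptyset$ and $\Gamma=\emptyset$ are checked directly. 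For Cond.~\ref{item:def:rankingdom4}, the interval of ordinals $\le\mathfrak z$ is a complete lattice and reversing the order preserves this, while monotonicity of $\Phi_{c,\rtpgz}$ follows from monotonicity of $\nadd 1$, $\sup$ and $\min$ together with functoriality, exactly as in the remark following Asm.~\ref{asm:behDomainrank}. For Cond.~\ref{item:def:rankingdom3}, strictness is $\qtpgz(\bot_{\Rtpgz})=\qtpgz(\mathfrak z)=0$, and monotonicity and continuity follow because a join in $\leqRtpgz$ is an ordinal minimum $\min K$, and $\qtpgz(\min K)=1$ iff some element of $K$ lies below $\mathfrak z$ iff $\bigsqcup_{\mathfrak a\in K}\qtpgz(\mathfrak a)=1$ in $\{0,1\}$ (the empty case giving $0=0$). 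Well-ordering is what guarantees the minimum is attained.

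The main work is corecursiveness (Cond.~\ref{item:def:rankingdom7}): for every $c:X\to\Ftpg X$ the operator $\Phi_{c,\rtpgz}$ must have a unique fixed point (Def.~\ref{def:corecAlg}, as rephrased after Def.~\ref{def:Phicsigma}). Existence is immediate from Cond.~\ref{item:def:rankingdom4} and Knaster--Tarski (Thm.~\ref{thm:KTCC}.1), giving a least fixed point $\rho$. For uniqueness I would show that an arbitrary fixed point $f$ coincides with $\rho$ by two transfinite inductions on ordinal values. Unfolding, a fixed point satisfies $f(x)=0$ on $\Acc^c$ and $f(x)=\min_{A\in c_1(x)}\sup_{x'\in A}(f(x')\nadd 1)$ otherwise, the outer $\min$ being attained. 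To prove $f(x)\le\rho(x)$ I induct on $\rho(x)$: taking an optimal $A^\ast$ with $\sup_{x'\in A^\ast}(\rho(x')\nadd 1)=\rho(x)$, when $\rho(x)<\mathfrak z$ each $x'\in A^\ast$ has $\rho(x')<\rho(x)$, so the hypothesis gives $f(x')\le\rho(x')$ and hence $f(x)\le\sup_{x'\in A^\ast}(f(x')\nadd 1)\le\rho(x)$ (the case $\rho(x)=\mathfrak z$ being trivial, and $\rho(x)=0$ forcing $A^\ast=\emptyset$ or $x\in\Acc^c$). The reverse inequality $\rho(x)\le f(x)$ is proved symmetrically by induction on $f(x)$.

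The crux, and where I expect the real difficulty, is precisely this uniqueness step, because it is here that well-foundedness of the ordinals is used essentially: a fixed-point value $f(x)=\mathfrak a<\mathfrak z$ forces an angelic choice along which values strictly descend, and only the impossibility of an infinite descending chain of ordinals pins $f$ down to the genuine rank $\rho$. This is the categorical incarnation of the classical soundness argument (Thm.~\ref{thm:soundnessRankConv}), and Example~\ref{example:incompletetpg} confirms the expected shape of the unique fixed point (the state $x_{\mathfrak a}$ receiving value $\mathfrak a$). Conditions~\ref{item:def:rankingdom2}--\ref{item:def:rankingdom3} are by contrast purely mechanical.
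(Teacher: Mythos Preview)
Your proposal is correct and follows essentially the same approach as the paper: both verify the four conditions of Def.~\ref{def:rankingdom} directly, with the only substantial work being corecursiveness, established via transfinite induction on ordinal values of fixed points. The paper organizes the uniqueness argument as a single induction proving $f_1(x)=\mathfrak{a}\Leftrightarrow f_2(x)=\mathfrak{a}$ for $\mathfrak{a}<\mathfrak{z}$, whereas you split it into two inequality inductions (one of which is in fact free from $\rho$ being the $\leqRtpgz$-least fixed point), but the core idea---that a value $<\mathfrak{z}$ is determined by strictly smaller values, so well-foundedness forces agreement---is identical.
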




The most difficult part of the proof is to prove that $r$ is a corecursive algebra
(Cond.~\ref{item:def:rankingdom7} in Def.~\ref{def:rankingdom}).
We prove it separately.
To this end, we first prove 
the following sublemma.

\begin{mylemma}\label{lem:tfSeqWellDefTPG}
The algebra $\rtpgz:\Ftpg \Rtpgz\to \Rtpgz$ 
satisfies 
Cond.~\ref{item:def:rankingdom4} in Def.~\ref{def:rankingdom}.
\end{mylemma}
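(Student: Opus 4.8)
The plan is to verify separately the two sub-conditions that make up Cond.~\ref{item:def:rankingdom4}: that $(\Rtpgz,\leqRtpgz)$ is a complete lattice (Cond.~\ref{item:def:rankingdom40and3}), and that for each coalgebra $c\colon X\to\Ftpg X$ the endofunction $\Phi_{c,\rtpgz}$ on functions $X\to\Rtpgz$ is monotone (Cond.~\ref{item:def:rankingdom41and2}).

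For the lattice claim I would first note that the set $\{\mathfrak{a}\mid\mathfrak{a}\leq\mathfrak{z}\}$ of ordinals, taken with the usual ordinal order $\leq$, is a complete lattice: any subset $S$ has supremum $\sup S$, which is still bounded by $\mathfrak{z}$ because $\mathfrak{z}$ bounds $S$, and infimum $\min S$ (with the convention $\min\emptyset=\mathfrak{z}$). Since $\leqRtpgz$ is by definition the opposite order $\geq$, and the opposite of a complete lattice is again a complete lattice (suprema and infima merely swap roles), $(\Rtpgz,\leqRtpgz)$ is a complete lattice, with least element the ordinal $\mathfrak{z}$ and greatest element $0$.

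For monotonicity I would unfold the definitions. Suppose $f\leqRtpgz g$ pointwise, i.e.\ $g(x)\leq f(x)$ for every $x$ in the usual order; the goal $\Phi_{c,\rtpgz}(f)\leqRtpgz\Phi_{c,\rtpgz}(g)$ amounts to $\rtpgz(\Ftpg g\,(c(x)))\leq\rtpgz(\Ftpg f\,(c(x)))$ for each $x$. Writing $c(x)=(\Gamma,t)$, the case $t=1$ is immediate since both sides equal $0$. For $t=0$, using the action of $\Ftpg$ on arrows one computes $\rtpgz(\Ftpg f\,(c(x)))=\min_{A\in\Gamma}\sup_{x'\in A}\bigl(f(x')\nadd 1\bigr)$, and likewise for $g$. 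The desired inequality then follows by composing three monotonicity facts, all in the usual order: the truncated successor $\place\nadd 1$ is monotone, $\sup$ over a fixed set $A$ is monotone, and $\min$ over a fixed family $\Gamma$ is monotone. Chaining them turns $g\leq f$ pointwise into $\rtpgz(\Ftpg g\,(c(x)))\leq\rtpgz(\Ftpg f\,(c(x)))$, which is exactly the claim after reversing the order.

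I do not expect a genuine obstacle here; the calculation is routine. The only points requiring care are the consistent bookkeeping of the reversed order $\leqRtpgz$ (namely $\geq$), so that the directions of $\sup$, $\min$, and the pointwise hypothesis line up, and the boundary cases---$\sup$ over an empty inner set yielding $0$ and $\min$ over an empty $\Gamma$ yielding $\mathfrak{z}$---which coincide with the lattice extrema and therefore cause no difficulty.
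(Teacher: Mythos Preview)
Your proposal is correct and follows essentially the same decomposition as the paper: verify the complete-lattice condition and then monotonicity of $\Phi_{c,\rtpgz}$. For the monotonicity part you argue directly via monotonicity of $\place\nadd 1$, $\sup$, and $\min$, whereas the paper instead does a small case split on the value $\Phi_{c,\rtpgz}(b_2)(x)$ (treating $0$ separately and then arguing for nonzero values); your chaining argument is arguably cleaner and avoids a subtle step in the paper's version where attainment of the supremum is tacitly used.
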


\begin{proof}
It is easy to see that 
Cond.~\ref{item:def:rankingdom40and3} is satisfied.
%

We prove that Cond.~\ref{item:def:rankingdom41and2} is satisfied.
Let $b_1,b_2:X\to \Rtpgz$ and assume that $b_1\leqRtpgz b_2$.
Let $x\in X$.
As $0$ is the greatest element in $(\Rtpgz,\leqRtpgz)$, 
if $\Phi_{c,\rtpgz}(b_2)(x)=0$ then we have $\Phi_{c,\rtpgz}(b_1)(x)\leqRtpgz\Phi_{c,\rtpgz}(b_2)(x)$.

Assume that $\Phi_{c,\rtpgz}(b_2)(x)=n\in \Rtpgz\setminus\{0\}$.
Let $c(x)=(\Gamma,t)\in \Ftpg X=\pow^2X\times\{0,1\}$.
Then by definition of $\rtpgz$ and $\Ftpg$, $t=0$ and moreover
for all $A\in\Gamma$ there exists $x\in A$ such that $b_2(x)\geq n$.
As $b_1\leqRtpgz b_2$, 
$b_2(x)\geq n$ implies $b_1(x)\geq n$.
Hence we have $\rtpgz\circ \Ftpg b_1(\Gamma,t)\geq n$, and this implies 
$\Phi_{c,\rtpgz}(b_1)(x)\leqRtpgz\Phi_{c,\rtpgz}(b_2)(x)$.
Therefore 
Cond.~\ref{item:def:rankingdom41and2} is satisfied.
\end{proof}

\begin{mylemma}\label{lem:corecAlgNodet}
The algebra $\rtpgz:\Ftpg\Rtpgz\to \Rtpgz$ 
is corecursive.
\end{mylemma}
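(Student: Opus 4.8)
The plan is to prove corecursiveness directly from Def.~\ref{def:corecAlg}, using its reformulation via Def.~\ref{def:Phicsigma}: I must show that for every coalgebra $c\colon X\to\Ftpg X$ the operator $\Phi_{c,\rtpgz}\colon \Rtpgz^{X}\to\Rtpgz^{X}$ has a \emph{unique} fixed point. Writing $c(x)=(c_1(x),c_2(x))\in\pow^2 X\times\{0,1\}$, the equation $f=\Phi_{c,\rtpgz}(f)$ unfolds pointwise to
\[
f(x)=
\begin{cases}
0 & (c_2(x)=1)\\
\min_{A\in c_1(x)}\sup_{x'\in A}\bigl(f(x')\nadd 1\bigr) & (c_2(x)=0).
\end{cases}
\]
Existence is free: by Lemma~\ref{lem:tfSeqWellDefTPG} the poset $(\Rtpgz,\leqRtpgz)$ is a complete lattice and $\Phi_{c,\rtpgz}$ is monotone, hence $\Rtpgz^{X}$ is a complete lattice and Knaster--Tarski (Thm~\ref{thm:KTCC}.1) supplies a fixed point. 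The entire content therefore lies in uniqueness.

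For uniqueness I would analyse the sublevel sets of a fixed point. The first step is to establish, for any fixed point $f$ and any ordinal $\mathfrak{a}<\mathfrak{z}$, the characterization
\[
f(x)\leq\mathfrak{a}
\iff
c_2(x)=1
\ \text{ or }\
\bigl(\,\exists A\in c_1(x).\,\forall x'\in A.\,f(x')<\mathfrak{a}\,\bigr),
\]
which rests on the arithmetic fact that for $\mathfrak{a}<\mathfrak{z}$ one has $f(x')\nadd 1\le\mathfrak{a}\iff f(x')+1\le\mathfrak{a}\iff f(x')<\mathfrak{a}$ (the truncation at $\mathfrak{z}$ is inert strictly below $\mathfrak{z}$), together with the elementary facts that a $\min$ over $A\in c_1(x)$ is $\le\mathfrak{a}$ iff some witnessing $A$ is, and a $\sup$ over $A$ is $\le\mathfrak{a}$ iff every element is. The decisive observation is that the right-hand side depends only on $c$ and on the strictly-lower sublevel set $\{x':f(x')<\mathfrak{a}\}=\bigcup_{\mathfrak{b}<\mathfrak{a}}\{x':f(x')\le\mathfrak{b}\}$.

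Given two fixed points $f,g$, a transfinite induction on $\mathfrak{a}<\mathfrak{z}$ then yields $\{x:f(x)\le\mathfrak{a}\}=\{x:g(x)\le\mathfrak{a}\}$: the induction hypothesis equates the strictly-lower sets, and the displayed characterization lifts the equality one level up. Since knowing these sublevel sets for all $\mathfrak{a}<\mathfrak{z}$ pins down each value exactly when it is $<\mathfrak{z}$, and forces $f(x)=\mathfrak{z}\iff g(x)=\mathfrak{z}$ otherwise, I conclude $f=g$, giving uniqueness and hence corecursiveness. (Intuitively the unique fixed point is the ``optimal'' ranking arrow of Rem.~\ref{rem:optimalarrow}, sending $x$ to the minimal number of steps to $\Acc^c$, truncated at $\mathfrak{z}$.)

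The hard part will be this uniqueness step, and within it the careful bookkeeping around the order reversal $\leqRtpgz$ and the truncated addition $\nadd$ near the boundary value $\mathfrak{z}$: one must verify that the sublevel characterization is valid precisely for $\mathfrak{a}<\mathfrak{z}$, and treat the ``non-well-founded'' marker $\mathfrak{z}$ (including degenerate cases such as $c_1(x)=\emptyset$, where the empty $\min$ evaluates to $\mathfrak{z}$) separately. By contrast the existence half and the completeness/monotonicity prerequisites are routine, since they are already delivered by Lemma~\ref{lem:tfSeqWellDefTPG}.
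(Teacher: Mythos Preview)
Your proposal is correct and follows essentially the same strategy as the paper: existence of a fixed point via Lem.~\ref{lem:tfSeqWellDefTPG} and Knaster--Tarski, and uniqueness by transfinite induction on the ordinal value $\mathfrak{a}<\mathfrak{z}$. The only difference is cosmetic---the paper inducts on the \emph{exact} level sets $\{x:f(x)=\mathfrak{a}\}$ whereas you induct on the \emph{sublevel} sets $\{x:f(x)\le\mathfrak{a}\}$; your formulation is arguably cleaner since it handles limit ordinals uniformly and avoids the paper's informal ``$\mathfrak{a}-1$'' notation.
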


\begin{proof}
Let $c:X\to \Ftpg X$ be an $\Ftpg$-coalgebra.
It suffices to show that $\Phi_{c,\rtpgz}$ (Def.~\ref{def:Phicsigma}) has a unique fixed point.

By Lem.~\ref{lem:tfSeqWellDefTPG},
the poset $(\Setsto{X}{\Rtpgz},\leqRtpgz)$ (here $\leqRtpgz$ denotes the pointwise extension of $\leqRtpgz$ over $\Rtpgz$) 
is a complete lattice.
Therefore we can construct a transfinite sequence 
\begin{multline*}
\botRtpgz\leqRtpgz \Phi_{c,\rtpgz}(\botRtpgz) \leqRtpgz \cdots \\
\leqRtpgz \Phi_{c,\rtpgz}^{\mathfrak{a}}(\botRtpgz)\leqRtpgz \cdots
\end{multline*}
as in 
Thm.~\ref{thm:KTCC}.2.
By Thm.~\ref{thm:KTCC}.2,
there exists an ordinal $\mathfrak{n}$ such that
%
$\Phi_{c,\rtpgz}^\mathfrak{n}(\botRtpgz)$  is the least fixed point of $\Phi_{c,\rtpgz}$. 

It remains to show that this is the unique fixed point.
Let $f_1,f_2:X\to \Rtpgz$ be fixed points of $\Phi_{c,\rtpgz}$.
We prove $f_1(x)=\mathfrak{a}\Leftrightarrow f_2(x)=\mathfrak{a}$ for each $x\in X$ and $\mathfrak{a}<\mathfrak{z}$ 
by transfinite induction on $\mathfrak{a}$.

For 
the base case,
we have:
\begin{align*}
&f_1(x)=0 \\
&\Leftrightarrow  \rtpgz\circ \Ftpg f_1\circ c(x)=0 & (\text{$f_1$ is a fixed point of $\Phi_{c,\rtpgz}$}) \\
&\Leftrightarrow c_2(x)=1 & (\text{by def.\ of $\rtpgz$ and $\Ftpg$}) \\
&\Leftrightarrow \rtpgz\circ \Ftpg f_2\circ c(x)=0 & (\text{by def.\ of $\Ftpg$ and $\rtpgz$}) \\
&\Leftrightarrow f_2(x)=0 & (\text{$f_2$ is a fixed point of $\Phi_{c,\rtpgz}$})\,.
\end{align*}

Let $0<\mathfrak{a}<\mathfrak{z}$ and 
assume $f(x)=\mathfrak{a}'$ iff $g(x)=\mathfrak{a}'$ for each  $x\in X$ and $\mathfrak{a}'<\mathfrak{a}$.
Note here that by $\mathfrak{a}<\mathfrak{z}$, we have $\mathfrak{a}\nadd 1=\mathfrak{a}+1$.
Hence we have:
\begin{align*}
&f_1(x)=\mathfrak{a} \\
&\Leftrightarrow 
\rtpgz\circ \Ftpg f_1\circ c(x)=\mathfrak{a} \hspace{-3.5cm}
& (\text{$f_1$ is a fixed point of $\Phi_{c,\rtpgz}$}) \\
&\Leftrightarrow
\text{$c_2(x)=0$ and }
\min_{A\in c_1(x)} 
\sup_{x'\in A} 
f_1(x')=\mathfrak{a}-1 \hspace{-3.0cm} &\\
& & \text{(by def.\ of $\rtpgz$ and $\Ftpg$)} \\
&\Leftrightarrow
\text{$c_2(x)=0$ and }\min_{A\in c_1(x)} 
\sup_{x'\in A} 
f_2(x')=\mathfrak{a}-1 \hspace{-3.0cm}& \text{(by IH)} \\
&\Leftrightarrow \Phi_{c,\rtpgz}(f_2)(x)=\mathfrak{a} & (\text{by def.\ of $\Phi_{c,\rtpgz}$}) \\
&\Leftrightarrow f_2(x)=\mathfrak{a} & (\text{$f_2$ is a fixed point of $\Phi_{c,\rtpgz}$})\,.
\end{align*}

Hence we have $f_1(x)=\mathfrak{a}$ iff $f_2(x)=\mathfrak{a}$ for each $x\in X$ and $\mathfrak{a}<\mathfrak{z}$.
This immediately implies that $f_1(x)=\mathfrak{z}$ iff $f_2(x)=\mathfrak{z}$ for each $x\in X$.
\end{proof}

\begin{proof}[Proof (Prop.~\ref{prop:rankDomTwoPlayerGame})]
%
We prove that 
the conditions in Def.~\ref{def:rankingdom} are satisfied.

\subsubsection*{Cond.~\ref{item:def:rankingdom2}}
Let $(\Gamma,t)\in \Ftpg\Rtpgz=\pow^2\rtpgz\times\{0,1\}$ and assume that 
$\sigmatpg\circ \Ftpg \qtpgz(\Gamma,t)=0$.
Then by the definitions of $\sigmatpg$, $\qtpgz$ and $\Ftpg$, 
we have $t=0$ and $\forall A\in\Gamma.\;\mathfrak{z}\in A$.
Hence by the definition of $\rtpgz$, we have $\rtpgz(\Gamma,t)=\mathfrak{z}$. 
Therefore
by the definition of $\qtpgz$, we have $\qtpgz\circ \rtpgz(\Gamma,t)=0$.
Hence 
Cond.~\ref{item:def:rankingdom2} is satisfied.
%

 \subsubsection*{Cond.~\ref{item:def:rankingdom4}}
Already proved in Lem.~\ref{lem:tfSeqWellDefTPG}
 
 \subsubsection*{Cond.~\ref{item:def:rankingdom3}}
By the definition of $\qtpgz:R\to\Omegatpg$, for $a_1,a_2\in \Rtpgz$, 
$a_1\leq a_2$ implies $\qtpgz(a_1)\geq \qtpgz(a_2)$ (here each $\leq$ denotes the usual order).
Therefore $\qtpgz$ is monotone.
%

By its definition, $\qtpgz(\mathfrak{z})=0$ and hence $\qtpgz$ is strict.

For a subset $K\subseteq \Rtpgz$,
it is easy to see that
$\bigsqcup_{a\in K}\bigl(\qtpgz(a)\bigr)
\leqOmegatpg
\qtpgz\bigl(\bigsqcup_{a\in K}a\bigr)$ is satisfied.
We prove the opposite direction.
%
Assume that 
$\qtpgz\bigl(\bigsqcup_{a\in K}a\bigr)=1$.
Then by the definition of $\qtpgz$, 
$\bigsqcup_{a\in K}a=\mathfrak{n}$ 
for some $\mathfrak{n}<\mathfrak{z}$.
By the definition of $\leqRtpgz$, 
this implies $a=\mathfrak{n}$ for some $a\in K$.
Therefore we have $\qtpgz(a)=1$, and this implies
$\bigsqcup_{a\in K}\bigl(\qtpgz(a)\bigr)=1$.
Hence
$\qtpgz\bigl(\bigsqcup_{a\in K}a\bigr)
\leqOmegatpg
\bigsqcup_{a\in K}\bigl(\qtpgz(a)\bigr)$
holds, and $\qtpgz$ is 
 continuous.

\subsubsection*{Cond.~\ref{item:def:rankingdom7}}
Already proved in Lem.~\ref{lem:corecAlgNodet}.

Hence $(\rtpgz,\qtpgz,\leqRtpgz)$ is a ranking domain.
\end{proof}

\begin{myproposition}\label{prop:convRankFuncCoalg}
Let $(\rtpgz,\qtpgz,\leqRtpgz)$ be the ranking domain in
 Prop.~\ref{prop:rankDomTwoPlayerGame}.
 For each function $b:X\to\Rtpgz$, 
we have the followings.
\begin{enumerate}
\item\label{item:prop:convRankFuncCoalg1}
Let  $c:X\to\Ftpg X$ be an $\Ftpg$-coalgebra.
A function $b:X\to \Rtpgz$ is a ranking arrow (Def.~\ref{def:rankingarrow}) for $c$ wrt.\ $(\rtpgz,\qtpgz,\leqRtpgz)$ 
if and only if
$b$ is a ranking function (Def.~\ref{def:rankFunc}) for $G^c$ and $\Acc^c$.

 
\item\label{item:prop:convRankFuncCoalg3} 
 Let $G=(X_{\max},x_{\min},\tau)$ be a game structure and  $\Acc\subseteq X_{\max}$ be a set of accepting states.
 A function $b:X_{\max}\to \Rtpgz$ is a ranking function for $G$ and $\Acc$ if and only if
 $b$ is a ranking arrow  for $c^{G,\Acc}$  wrt.\ $(\rtpgz,\qtpgz,\leqRtpgz)$.

\item\label{item:prop:convRankFuncCoalg2}
 $b(x)<\mathfrak{z}$ iff $\qtpgz\circ b(x)=1$.
\end{enumerate}
\end{myproposition}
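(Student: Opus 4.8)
The plan is to prove all three items by directly unfolding the relevant definitions, the only genuine subtleties being the \emph{reversed} order $\leqRtpgz$ (recall $\mathfrak{a}\leqRtpgz\mathfrak{b}\defarrow\mathfrak{a}\geq\mathfrak{b}$) and the dictionary of Def.~\ref{def:TPGcoalg} between a coalgebra and its game structure. First I would compute the action of $\Phi_{c,\rtpgz}$ explicitly. Writing $c(x)=(c_1(x),c_2(x))$ and applying $\Ftpg b$ followed by $\rtpgz$ (using $\Ftpg b(\Gamma,t)=(\{\{b(x')\mid x'\in A\}\mid A\in\Gamma\},t)$), I obtain
\[
\Phi_{c,\rtpgz}(b)(x)=
\begin{cases}
0 & (c_2(x)=1)\\
\min_{A\in c_1(x)}\sup_{x'\in A}(b(x')\nadd 1) & (c_2(x)=0)\,.
\end{cases}
\]

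For item~(1), the ranking-arrow inequality $b\leqRtpgz\Phi_{c,\rtpgz}(b)$ (Def.~\ref{def:rankingarrow}) unfolds, because $\leqRtpgz$ reverses the ordinal order, into the pointwise inequality $\Phi_{c,\rtpgz}(b)(x)\leq b(x)$ in the usual order. For accepting $x$ (where $c_2(x)=1$) this reads $0\leq b(x)$ and is vacuous; for $x\in X\setminus\Acc^c$ (where $c_2(x)=0$) it reads $\min_{A\in c_1(x)}\sup_{x'\in A}(b(x')\nadd 1)\leq b(x)$. Under Def.~\ref{def:TPGcoalg}—where the $\min$-states $y$ with $(x,y)\in\tau^c$ are exactly the sets $A\in c_1(x)$, and their $\max$-successors $x'$ with $(y,x')\in\tau^c$ are exactly the elements $x'\in A$—this is literally the defining inequality of a ranking function (Def.~\ref{def:rankFunc}) for $G^c$ and $\Acc^c$. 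This yields the equivalence in both directions at once.

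Item~(2) is entirely symmetric: I plug $c^{G,\Acc}$ into the same computation. Here $c_1(x)=\{\{x'\mid(y,x')\in\tau\}\mid y\in X_{\min},(x,y)\in\tau\}$, so the surjection $y\mapsto\{x'\mid(y,x')\in\tau\}$ turns $\min_{A\in c_1(x)}$ into $\min_{y\colon(x,y)\in\tau}$ (multiplicities are irrelevant for $\min$), and the ranking-arrow condition again coincides with the ranking-function condition for $G$ and $\Acc$. Item~(3) is immediate from the definition of $\qtpgz$: since every value lies in $\Rtpgz$ and hence satisfies $b(x)\leq\mathfrak{z}$, we have $\qtpgz(b(x))=1$ exactly when $b(x)\neq\mathfrak{z}$, i.e.\ when $b(x)<\mathfrak{z}$.

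The only delicate points—where I expect to spend care rather than real difficulty—are the boundary conventions for empty $\min$ and $\sup$, which must agree on both sides of each equivalence. A $\max$-state with no outgoing move gives $c_1(x)=\emptyset$, so $\min_{A\in\emptyset}(\cdots)=\mathfrak{z}$ (the top of $\Rtpgz$ in the usual order), matching the reading ``$\max$ is stuck, value $\mathfrak{z}$''; and a $\min$-state $y$ with no successor contributes $A=\emptyset$, so $\sup_{x'\in\emptyset}(\cdots)=0$, matching ``$\min$ is stuck, $\max$ wins immediately''. I would check explicitly that the game-side $\min$/$\sup$ in Def.~\ref{def:rankFunc} use these same conventions, so that all degenerate cases line up and the unfolding is an exact identity rather than merely an inequality.
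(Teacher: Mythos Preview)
Your proposal is correct and follows essentially the same approach as the paper: both proofs proceed by directly unfolding the definitions of $\Phi_{c,\rtpgz}$, $\leqRtpgz$, $\rtpgz$, and the coalgebra--game dictionary of Def.~\ref{def:TPGcoalg}, arriving at the ranking-function inequality of Def.~\ref{def:rankFunc}, with items~(\ref{item:prop:convRankFuncCoalg3}) and~(\ref{item:prop:convRankFuncCoalg2}) dispatched as ``similar'' and ``immediate'' respectively. Your additional care about the empty-$\min$/$\sup$ boundary conventions is a welcome refinement that the paper leaves implicit.
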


\begin{proof}\mbox{}
\subsubsection*{\ref{item:prop:convRankFuncCoalg1}}
For $x\in X$ such that $c(x)=(\Gamma,t)$ we write $c_1(x)$ and $c_2(x)$ for $\Gamma$ and $t$ respectively.
We have:
\begin{align*}
&\text{$b:X\to\Rtpgz$ is a ranking arrow for $c$}\hspace{-2.5cm} \\
&\Leftrightarrow \forall x\in X.\; b(x)\leqRtpgz \rtpgz\circ \Ftpg b\circ c(x) \hspace{-4cm}
 &\text{(by Def.~\ref{def:rankingarrow})}\\
&\Leftrightarrow \forall x\in X.\;  \rtpgz\circ \Ftpg b\circ c(x)\leq b(x) \hspace{-4cm}& \\
& & \text{(by the definition of $\leqRtpgz$)} \\
&\Leftrightarrow 
\forall x\in X.\;\bigl( c_2(x)= 0\;\Rightarrow\;\min_{A\in c_1(x)}\sup_{x'\in A} b(x')\nadd 1\leq b(x)\bigr)\hspace{-8cm} \\
& & \text{(by the definitions of $\rtpgz$ and $\Ftpg$)} \\
&\Leftrightarrow 
\forall x\in X^c_{\max}.\;\\
& \qquad\bigl( x\notin \Acc^c\;\Rightarrow\;\min_{y:(x,y)\in\tau^c}\sup_{x':(y,x')\in\tau^c} b(x')\nadd 1\leq b(x)\bigr)\hspace{-8cm} \\
& & \text{(by the definitions of $G^c$ and $\Acc^c$)} \\
&\Leftrightarrow \text{$b:X\to\Rtpgz$ is a ranking function for $G^c$ and $\Acc^c$} \hspace{-8cm} \\
& & (\text{by Def.~\ref{def:rankFunc}})\,.
\end{align*}
%
Hence Cond.~\ref{item:prop:convRankFuncCoalg1} is satisfied.

\subsubsection*{\ref{item:prop:convRankFuncCoalg3}}
This is proved in a similar manner to Cond.~\ref{item:prop:convRankFuncCoalg1}.

\subsubsection*{\ref{item:prop:convRankFuncCoalg2}}
Immediate from the definition of $\qtpgz$.
\end{proof}

\vspace{0.5\baselineskip}
\subsection{\textbf{Proof of Lem.\,\ref{lem:cousotCousotExt}}}
The basic idea of the proof is similar to the proof of Thm.~\ref{thm:KTCC}.2 in \cite{cousotC79}.

\begin{proof}
Let $\mathfrak{m}$ be an ordinal such that $|L|<|\mathfrak{m}|$.
As $|\{f^\mathfrak{a}(l)\in L\mid \mathfrak{a}\leq\mathfrak{m}\}|\leq|L|$, 
there exist ordinals $\mathfrak{a}, \mathfrak{a}'$  
such that $f^\mathfrak{a}(l)=f^{\mathfrak{a}'}(l)$.
Without loss of generality, we assume $\mathfrak{a}<\mathfrak{a}'$\,. 

By monotonicity of $f$ and that $l\sqsubseteq f(l)$, we have 
\[
f^\mathfrak{a}(l)=f^{\mathfrak{a}+1}(l)=\cdots=f^{\mathfrak{a}'}(l)\,.
\]
This concludes the proof.
\end{proof}

\vspace{0.5\baselineskip}
\subsection{\textbf{Proof of Prop.\,\ref{prop:completeness}}}
\begin{wrapfigure}[7]{r}{3.8cm}
\vspace{-.4cm}
 \begin{xy}
 \xymatrix@R=2.0em@C=2.6em{
 {F X} \ar@{}[dr]|{=} \ar[r]_{F \uniquefp{c}_r} 
 \ar@/^.8em/[rr]^{F\,\llbracket \mu\sigma\rrbracket_c}
 & {F R} \ar[d]_{r} \ar@{}[dr]|{=} 
 \ar[r]_{F q}  & {F \Omega} \ar[d]_{\sigma} \\
 {X} \ar[u]_{c}  \ar[r]^{\uniquefp{c}_r} \ar@/_.8em/[rr]_{\llbracket \mu\sigma\rrbracket_c} & {R} \ar[r]^{q} & {\Omega} 
 }
 \end{xy}
\end{wrapfigure}
\vspace{1mm}
\proofmark
By $\uniquefp{c}_r=r\circ F \uniquefp{c}_r\circ c$ and 
 $q\circ r=\sigma\circ Fq$, we have:
 \begin{displaymath}
q\circ \uniquefp{c}_r= \Phi_{c,\sigma}(q\circ \uniquefp{c}_r)\,.
\end{displaymath}
This means that $q\circ \uniquefp{c}_r$ is a fixed point of $\Phi_{c,\sigma}$.
As $\sem{\mu\sigma}_c$ is the least fixed point of $\Phi_{c,\sigma}$ (Def.~\ref{def:lfpsem}),
we have 
\begin{displaymath}
q\circ \uniquefp{c}_r\sqsupseteq_{\Omega}\sem{\mu\sigma}_c\,.
\end{displaymath}
Together with Thm.~\ref{thm:soundnessranking},
we have $q\circ \uniquefp{c}_r=\llbracket\mu\sigma\rrbracket_c$.
\qed
\vspace{1mm}

\vspace{0.5\baselineskip}
\subsection{\textbf{Proof of Prop.~\ref{prop:constBehSituationPTS}}}\label{subsec:proof:prop:constBehSituationPTS}
\begin{proof}
\allowdisplaybreaks[4]
Recall that $\Reach_{\Acc^c}$ is defined as the function $f:X\to [0,1]$ in Def.~\ref{def:winnerMDPNew}.
We first show that this $f$ is a fixed point of $\Phi_{c,\sigmapts}$.
For $x\in X$, we have: 
\begin{align*}
&\Phi_{c,\sigmapts}(f)(x) \\
&= (\sigmapts\circ \Fpts f\circ c)(x) & (\text{by def.\ of $\Phi_{c,\sigmapts}$}) \\
&= \begin{cases}
1 & (c_2(x)=1) \\
{\displaystyle\sum_{x'\in \supp(c_1(x))}c_1(x)(x')\cdot f(x')} & (c_2(x)=0)
\end{cases}\hspace{-3cm}& \\
& &(\text{by def.\ of $\sigmapts$ and $\Fpts$}) \\
&= \begin{cases}
1 & (c_2(x)=1) \\
{\displaystyle\sum_{x'\in \supp(c_1(x))}c_1(x)(x')\cdot 
\lim_{n\to\infty}
f_n(x') 
} 
& (c_2(x)=0)
\end{cases} \hspace{-5cm}& \\
& & (\text{by def.\ of $f$}) \\
&= \begin{cases}
1 & (c_2(x)=1) \\
{\displaystyle
\lim_{n\to\infty}
\sum_{x'\in \supp(c_1(x))}c_1(x)(x')\cdot 
f_n(x') 
} 
& (c_2(x)=0)
\end{cases} \hspace{-5cm}& \\
&= \begin{cases}
1 & (c_2(x)=1) \\
{\displaystyle
\lim_{n\to\infty}
f_{n+1}(x)} & (c_2(x)=0)
\end{cases} \hspace{-2cm}& (\text{by def.\ of $f_n$}) \\
%
&=
\lim_{n\to\infty}
f_n(x) & (\text{by def.\ of $f_n$}) \\
&= f(x) & (\text{by def.\ of $f$})\,.
\end{align*}
Hence $f$ is a fixed point of $\Phi_{c,\sigmapts}$.

It remains to show that $f$ is the least fixed point.
Let $f':X\to\Omegapts$ be a fixed point of $\Phi_{c,\sigmapts}$.

We prove $f(x)\leq f'(x)$ for each $x\in X$.
To this end, 
by the definition of $f$,
it suffices to prove 
$f_n(x) \;\leq\; f'(x)$
 for each 
 $x\in X$ and $n\in\mathbb{N}$.
We prove this by induction on $n$.

For $n=0$, 
it is immediate from that $f_0(x)=0$. 

For $n>0$, we have:
\begin{align*}
&f_n(x) \\
&= \begin{cases}
1 & (c_2(x)=1) \\
{\displaystyle
\sum_{x'\in \supp(c_1(x))}c_1(x)(x')\cdot f_{n-1} (x')
} & (c_2(x)=0) 
\end{cases} \hspace{-3.5cm} 
\\
& &\text{(by def.\ of $f_n$)} \\
&\leq \begin{cases}
1 & (c_2(x)=1) \\
{\displaystyle 
\sum_{x'\in \supp(c_1(x))}}c_1(x)(x')\cdot f'(x') & (c_2(x)=0) 
\end{cases} \hspace{-2.5cm}
&\text{(by IH)} \\
&= (\sigmapts\circ \Fpts f'\circ c)(x) & (\text{by def.\ of $\Fpts$ and $\sigmapts$})\\
&= f'(x) & (\text{$f'$ is a fixed point})\,.
\end{align*}
%
Hence we have $f_n(x) \;\leq\; f'(x)$
 for each  $x\in X$ and $n\in\mathbb{N}$,
 and this implies $f(x)\leq f'(x)$ for each $x\in X$.
 Therefore $f$ is the least fixed point of $\Phi_{c,\sigmapts}$.
\end{proof}



\vspace{0.5\baselineskip}
\subsection{\textbf{Proof of Prop.~\ref{prop:rankDomPTSTradD}}}\label{subsec:proof:prop:rankDomPTSTradD}
\begin{proof}\mbox{}
\subsubsection*{\ref{item:prop:convRankFuncCoalg1MPTS}}
For each $b':X\to\Rptsp$, we have:
\begin{align*}
&\text{$b'$ is an $\varepsilon$-additive ranking supermartingale} \hspace{-2.5cm}&\\
&\Leftrightarrow \;
\bigl(\forall x\in X.\,\\
&\qquad\quad c_2(x)= 0\, \\
&\qquad\qquad \Rightarrow\sum_{x'\in \supp(c_1(x))} c_1(x)(x')\cdot b'(x')+\varepsilon\leq b'(x)\bigr) \hspace{-2.5cm}& \\
& & (\text{by Def.~\ref{def:multAddSMPTS}}) \\
&\Leftrightarrow\; \forall x\in X.\;  \rptspe\circ \Fpts b'\circ c(x)\leq b'(x)\hspace{-2.5cm}& (\text{by def.\ of $\rptspe$}) \\
&\Leftrightarrow\; 
\text{$b'$ is a (categorical) ranking arrow wrt.\ $(\rptspe,\qptsp,\leqRptsp)$} \hspace{-2.5cm} \\
& & (\text{by Def.~\ref{def:rankingarrow}})\,.
\end{align*}
Hence Cond.~\ref{item:prop:convRankFuncCoalg1MPTS} holds.

\subsubsection*{\ref{item:prop:convRankFuncCoalg2MPTS}}
Immediate from the definition of $\qptsp$.
\end{proof}

\vspace{0.5\baselineskip}
\subsection{\textbf{Proof of Prop.~\ref{prop:pseudoRD}}}\label{subsec:proof:lem:pseudoRD}
\begin{proof}\mbox{}
%
\subsubsection*{Cond.~\ref{item:def:rankingdom2}}
Let $(\psi,t)\in \Fpts\Rptsp=\dist \Rptsp\times\{0,1\}$.
Then we have:
\allowdisplaybreaks[4]
\begin{align*}
&\qptsp\circ \rptspe(\psi,t) \\
&= \begin{cases}
\qptsp(0) & (t=1) \\
\qptsp(\sum_{a\in\supp(\psi)} a\cdot\psi(a)+\varepsilon) & (t=0)
\end{cases} \hspace{-3.3cm} &\\
& & (\text{by def.\ of $\rptspe$}) \\
&= \begin{cases}
1 & (t=1\;\text{or}\;\sum_{a\in\supp(\psi)} a\cdot\psi(a)+\varepsilon<\infty)\\
0 & (\text{otherwise})
\end{cases} \hspace{-3.3cm} & \\
& & (\text{by def.\ of $\qptsp$}) \\
&\leq \begin{cases}
1 & (t=1\;\text{or}\;\psi([0,\infty))=1)\\
0 & (\text{otherwise})
\end{cases}\\
&\leq \begin{cases}
1 & (t=1)\\
\psi([0,\infty)) & (\text{otherwise})
\end{cases}\\
&= \begin{cases}
1 & (t=1)\\
1\cdot \dist\qpts(\psi)(1) & (\text{otherwise})
\end{cases} & (\text{by def.\ of $\qptsp$}) \\
&=\sigmapts\circ \Fpts \qptsp(\psi,t) & (\text{by def.\ of $\sigmapts$})\,.
\end{align*}
Hence we have $\qptsp\circ \rptspe \leqOmegapts \sigmapts\circ \Fpts \qptsp$.

\subsubsection*{Cond.~\ref{item:def:rankingdom4}}
It is easy to see that Cond.~\ref{item:def:rankingdom40and3} is satisfied.

%

We prove that Cond.~\ref{item:def:rankingdom41and2} is satisfied.
Let $b_1,b_2:X\to \Rptsp$ and assume that $b_1\leqRptsp b_2$.
For each $x\in X$ 
such that $c(x)=(\varphi,t)\in \Fpts X=\dist X\times\{0,1\}$,
we have:
\begin{align*}
&\Phi_{c,\rptspe}(b_1)(x) \\
&=\rptspe\circ \Fpts b_1(\varphi,t)  & (\text{by def.\ of $\Phi_{c,\rptspe}$}) \\
&= \begin{cases}
1 & (t=1) \\
\sum_{x\in \supp(\varphi)} \varphi(x)\cdot b_1(x)+\varepsilon 
& (t=0)
\end{cases} \hspace{-2cm}& \\
& & (\text{by def.\ of $\rptspe$}) \\ 
&\leq \begin{cases}
1 & (t=1) \\
\sum_{x\in \supp(\varphi)} \varphi(x)\cdot b_2(x)+\varepsilon
& (t=1)
\end{cases} \hspace{-2cm}& \\ 
&= \rptspe\circ \Fpts b_2(\varphi,t)&  (\text{by def.\ of $\rptspe$}) \\ 
&= \Phi_{c,\rptspe}(b_2)(x)&(\text{by def.\ of $\Phi_{c,\rptspe}$})\,.
\end{align*}
Therefore Cond.~\ref{item:def:rankingdom41and2} is satisfied.
%
%

\subsubsection*{Cond.~\ref{item:def:rankingdom3}}
By the definition of $\qptsp$, for each $a_1,a_2\in \Rptsp$, 
$a_1\geq a_2$ implies $\qptsp(a_1)\leq \qptsp(a_2)$ (here each $\leq$ denotes the standard order).
Hence $\qptsp$ is monotone.

By the definition, we have $\qptsp(\infty)=0$.
Hence $\qptsp$ is strict.


We prove that 
$\qptsp$ is 
 continuous.
Namely, for each 
 subset $K\subseteq \Rptsp$, we prove 
\begin{displaymath}
\bigsqcup_{a\in K}\bigl(\qptsp(a)\bigr)
=\qptsp\bigl(\bigsqcup_{a\in K}a\bigr)\,.
\end{displaymath}
It is easy to prove $(\mathrm{LHS})\leq(\mathrm{RHS})$.
We prove the opposite direction.
To this end, by the definition of $\qptsp$, it suffices to prove 
that if the right-hand side is $1$ then the left-hand side is also $1$.
%
We can prove it as follows (here $\bigsqcup$ denotes the supremum with respect to $\leqRptsp$, and
$\bigwedge$ denotes the infimum with respect to the ordinary order over $\Omegapts$).
\begin{align*}
 \qptsp\bigl(\bigsqcup_{a\in K}a\bigr)=1 
&\Rightarrow\; \bigsqcup_{a\in K}a <\infty 
& (\text{by def.\ of $\qptsp$}) \\
&\Rightarrow\; \bigwedge_{a\in K}a <\infty 
& (\text{by def.\ of $\leqRptsp$}) \\
&\Rightarrow\; \exists a\in K.\; a <\infty \\
&\Rightarrow\; \exists a\in K.\; \qptsp(a)=1 
& (\text{by def.\ of $\qptsp$}) \\
&\Rightarrow\; \bigsqcup_{a\in K}\bigl(\qptsp(a)\bigr)=1
\end{align*}
%
Hence $\qptsp$ is 
 continuous.

This concludes the proof.
\end{proof}

\vspace{0.5\baselineskip}
\subsection{\textbf{Multiplicative Ranking Supermartingale, Categorically}}\label{subsec:MRSCat}
\begin{mydefinition}\label{def:multSM}
Let $M=(X,\tau)$ be a PTS.
Let $\alpha\in(0,1)$.
A function $b:X\to\nonnegrealsinf$ is an \emph{$\alpha$-multiplicative ranking supermartingale} if
we have 
\begin{displaymath}
{\textstyle\sum_{x'\in \supp(\tau(x))} \tau(x)(x')\cdot b(x')
\;\leq\; \alpha\cdot b(x)}
\end{displaymath} 
for each $x \in X\setminus \Acc$,
and moreover there exists $\delta>0$ such that $b(x)\geq\delta$ for each $x\in X\setminus \Acc$.
\end{mydefinition}

The following is an attempt to define a ranking domain for 
$\alpha$-multiplicative supermartingales. 
Let us fix real numbers $\alpha\in(0,1)$ and $\delta>0$, and
define an $\Fpts$-algebra $\rptspa:\Fpts\Rptspm\to \Rptspm$, an arrow $\qptsp:\Rptsp\to\Omegapts$ and
a partial order $\leqRptspm$ over $\Rptspm$
as follows.
\begin{enumerate}

\item[1)] 
For each $(\psi,t)\in \Fpts\Rptspm=\dist\Rptspm\times \{0,1\}$,
\begin{multline*}
\rptspa(\psi,t)= \\
\begin{cases}
\alpha\delta  & (t=1) \\
\frac{1}{\alpha}\cdot\bigl(\,\sum_{a\in\supp(\psi)}a\cdot \psi(a)\, \bigr) & (\text{otherwise})\,.
\end{cases}
\end{multline*}

\item[2)] 
$\qptsp(\infty)=0$ and $\qptsp(a)=1$ if $a <\infty$. 

\item[3)] 
$a\leqRptspm b\;\defarrow\; a\geq b$ 
(note the direction). 
\end{enumerate}

The following proposition is analogous to Prop.~\ref{prop:rankDomPTSTradD}.

\begin{myproposition}\label{prop:rankDomPTSTradDMult}
In this setting, for each $c:X\to \Fpts X$,
we have the following.
\begin{enumerate}
\renewcommand{\labelenumi}{\alph{enumi})}
\renewcommand{\theenumi}{\alph{enumi})}
\item\label{item:prop:convRankFuncCoalg1MPTSMult}
Let $b':X\to \Rptspm$ and assume $b'\leqRptspm \rptspa\circ \Fpts b'\circ c$.
If we define $b:X\to[0,\infty]$ by $b(x)=b'(x)$, then $b$ is an $\alpha$-multiplicative ranking supermartingale.

\item\label{item:prop:convRankFuncCoalg1MPTSMult2}
Let $b:X\to[0,\infty]$ be an $\alpha$-multiplicative ranking supermartingale such that $b(x)\geq \delta$ if $x\in X\setminus\Acc$.
If we define $b':X\to\Rptspm$ by $b'(x)=\max\{\alpha\delta,b(x)\}$, then $b'$ 
satisfies $b'\leqRptspm \rptspa\circ \Fpts b'\circ c$.


\item\label{item:prop:convRankFuncCoalg2MPTSMult}
For each $b':X\to \Rptsp$, we have
 $b'(x)\neq\infty$ iff $q'_{\pts}\circ b'(x)=1$.
\end{enumerate}
\end{myproposition}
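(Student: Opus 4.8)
The plan is to treat the three items exactly as in the proof of Prop.~\ref{prop:rankDomPTSTradD}, by unfolding the composite $\rptspa\circ\Fpts b'\circ c$ pointwise and matching it against the defining inequality of Def.~\ref{def:multSM}. For a coalgebra $c:X\to\Fpts X$ I write $c(x)=(c_1(x),c_2(x))=(\tau(x),t)$, so that $c_2(x)=1$ exactly when $x\in\Acc$. For any $g:X\to\Rptspm$ and any $x$ with $c_2(x)=0$, the elementary change-of-variable identity $\sum_{a\in\supp(\dist g(\tau(x)))}a\cdot(\dist g(\tau(x)))(a)=\sum_{x'\in\supp(\tau(x))}\tau(x)(x')\cdot g(x')$ gives
\[
(\rptspa\circ\Fpts g\circ c)(x)=\tfrac{1}{\alpha}\textstyle\sum_{x'\in\supp(\tau(x))}\tau(x)(x')\cdot g(x'),
\]
while $(\rptspa\circ\Fpts g\circ c)(x)=\alpha\delta$ whenever $c_2(x)=1$. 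Recalling that $a\leqRptspm b$ means $a\geq b$, the inequality $g\leqRptspm\rptspa\circ\Fpts g\circ c$ reads $(\rptspa\circ\Fpts g\circ c)(x)\leq g(x)$ for every $x$.

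For item~(\ref{item:prop:convRankFuncCoalg1MPTSMult}) I would apply this with $g=b'$ and set $b:=b'$, now viewed in $\nonnegrealsinf$. At a non-accepting $x$ the reading above becomes $\tfrac{1}{\alpha}\sum\tau(x)(x')\,b(x')\leq b(x)$, i.e.\ $\sum\tau(x)(x')\,b(x')\leq\alpha\cdot b(x)$, which is precisely the supermartingale condition; at accepting $x$ there is nothing to check. The required witness $\delta>0$ with $b(x)\geq\delta$ on $X\setminus\Acc$ is supplied by $\alpha\delta$, since $b'$ lands in $\Rptspm=[\alpha\delta,\infty]$. Item~(\ref{item:prop:convRankFuncCoalg2MPTSMult}) is immediate from the definition of $\qptsp$ and is the verbatim computation of Prop.~\ref{prop:rankDomPTSTradD}\,(\ref{item:prop:convRankFuncCoalg2MPTS}), since $\qptsp(a)=1\iff a<\infty\iff a\neq\infty$.

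The real work is item~(\ref{item:prop:convRankFuncCoalg1MPTSMult2}), which I expect to be the main obstacle. One first checks that $b'=\max\{\alpha\delta,b\}$ lands in $\Rptspm$; the accepting case of the target inequality is then trivial, its left-hand side being the bottom element $\alpha\delta$. The delicate case is $x\in X\setminus\Acc$: since $b(x)\geq\delta>\alpha\delta$ we have $b'(x)=b(x)$, so one must derive $\sum_{x'}\tau(x)(x')\cdot\max\{\alpha\delta,b(x')\}\leq\alpha\cdot b(x)$ from the hypothesis $\sum_{x'}\tau(x)(x')\cdot b(x')\leq\alpha\cdot b(x)$. The difficulty is that truncating $b$ upward to $b'$ can only increase the expectation, and the increase is concentrated on accepting successors $x'$ with $b(x')<\alpha\delta$. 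The plan is to bound this correction by $\alpha\delta$ times the transition mass sent to such successors and then absorb it using the separation $b(x)\geq\delta$ off $\Acc$; concretely I would split $\tau(x)$ according to accepting versus non-accepting support and combine the two hypotheses. This bookkeeping — rather than any conceptual step — is where the argument is genuinely subtle, and it is the precise point where I would scrutinize whether the chosen truncation level $\alpha\delta$ is exactly what is needed for the correction at accepting successors to be fully absorbed by the margin $b(x)\ge\delta$.
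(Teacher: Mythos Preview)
Your treatment of items~(\ref{item:prop:convRankFuncCoalg1MPTSMult}) and~(\ref{item:prop:convRankFuncCoalg2MPTSMult}) is correct and is exactly what the paper does: unfold $\rptspa\circ\Fpts b'\circ c$ and match it against Def.~\ref{def:multSM}, with the lower bound for $b$ on $X\setminus\Acc$ read off from the codomain $\Rptspm=[\alpha\delta,\infty]$.

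Your suspicion about item~(\ref{item:prop:convRankFuncCoalg1MPTSMult2}) is not just warranted, it is decisive: the absorption argument you sketch cannot be completed, because the statement as written is false. Take $X=\{x_0,x_1,x_2\}$, $\Acc=\{x_1\}$, $\tau(x_0)=[x_1\mapsto\tfrac12,x_2\mapsto\tfrac12]$, $\tau(x_1)=\tau(x_2)=[x_1\mapsto1]$, $\alpha=\tfrac12$, $\delta=2$, and $b(x_0)=b(x_2)=2$, $b(x_1)=0$. Then $b$ is an $\alpha$-multiplicative ranking supermartingale with $b\ge\delta$ on $X\setminus\Acc$ (the inequality at $x_0$ holds with equality: $\tfrac12\cdot0+\tfrac12\cdot2=1=\alpha\cdot2$). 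The truncation gives $b'(x_0)=b'(x_2)=2$, $b'(x_1)=\alpha\delta=1$, and
\[
(\rptspa\circ\Fpts b'\circ c)(x_0)=\tfrac{1}{\alpha}\bigl(\tfrac12\cdot1+\tfrac12\cdot2\bigr)=3>2=b'(x_0),
\]
so $b'\leqRptspm\rptspa\circ\Fpts b'\circ c$ fails at $x_0$. Your heuristic explains why: the correction at accepting successors with $b(x')<\alpha\delta$ is strictly positive here, while both the supermartingale inequality and the bound $b(x_0)\ge\delta$ are tight, so there is no margin to absorb it.

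The paper's own proof of this item silently replaces $\sum_{x'}\tau(x)(x')\,b'(x')$ by $\sum_{x'}\tau(x)(x')\,b(x')$ as an equality; since $b'\ge b$ this step is an inequality in the wrong direction, and the gap is precisely the obstruction your counter\-example exhibits. So the difficulty you flagged is not ``bookkeeping'' but a genuine error in the claim: with the construction $b'=\max\{\alpha\delta,b\}$ the conclusion does not hold in general.
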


\begin{proof}\mbox{}
\subsubsection*{\ref{item:prop:convRankFuncCoalg1MPTSMult}}
Let $x\in X\setminus\Acc$.
Then we have: 
\begin{align*}
&\sum_{x'\in \supp(\tau(x))} \tau(x)(x')\cdot b(x') \\ 
&= \sum_{x'\in \supp(c_1(x))} c_1(x)(x')\cdot b'(x') \\
&= \alpha\cdot (\rptspa\circ \Fpts b'\circ c)(x) \\
&\leq \alpha\cdot b'(x) \\
&=\alpha\cdot b(x)\,.
\end{align*}
By the inequality above, we also have 
\[
b(x)\geq \frac{1}{\alpha}\sum_{x'\in \supp(c_1(x))} c_1(x)(x')\cdot b'(x')\,.
\]
As $b'(x')\geq\alpha\delta$ holds for each $x'\in X$, we have
$b(x)\geq\delta$.
Therefore $b$ is an $\alpha$-multiplicative ranking supermartingale.

\subsubsection*{\ref{item:prop:convRankFuncCoalg1MPTSMult2}}
For each $x\in X$, we have:
\begin{align*}
&(\rptspa\circ \Fpts b'\circ c)(x) \\
&= \begin{cases}
\alpha\delta & (c_2(x)=1) \\
\frac{1}{\alpha}\sum_{x'\in \supp(c_1(x))} c_1(x)(x')\cdot b'(x') & (c_2(x)=0) \\
\end{cases}\\
&= \begin{cases}
\alpha\delta & (x\in\Acc) \\
\frac{1}{\alpha}\sum_{x'\in \supp(\tau(x))} \tau(x)(x')\cdot b(x') & (x\in X\setminus \Acc) \\
\end{cases}\\
&\leq \begin{cases}
\alpha\delta & (x\in\Acc) \\
\frac{1}{\alpha}\cdot\alpha b(x) & (x\in X\setminus \Acc) \\
\end{cases}\\
&\leq 
b'(x)\,.
\end{align*}
Therefore by the definition of $\leqRptspm$, we have
$b'\leqRptspm \rptspa\circ \Fpts b'\circ c$.

\subsubsection*{\ref{item:prop:convRankFuncCoalg2MPTSMult}}
Immediate from the definition of $\qptsp$.
\end{proof}

Therefore the triple $(\rptspa,\qptsp,\leqRptspm)$ is suited for
accommodating $\alpha$-multiplicative supermartingales in our categorical framework. 
However it is not a ranking domain.
The following proposition is analogous to Prop.~\ref{prop:pseudoRD}.

\begin{myproposition}\label{prop:pseudoRDMult}
The triple $(\rptspa,\qptsp,\leqRptsp)$ 
introduced above
satisfies the conditions of a ranking domain (Def.~\ref{def:rankingdom}),
except for Cond.~\ref{item:def:rankingdom7}.
\end{myproposition}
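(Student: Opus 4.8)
The plan is to mirror the proof of Prop.~\ref{prop:pseudoRD} almost verbatim, because the algebra $\rptspa$ differs from $\rptspe$ only by the scaling factor $1/\alpha>0$ in the non-accepting case and by the value $\alpha\delta$ (in place of $0$) in the accepting case, while its value domain is the shifted interval $\Rptspm=[\alpha\delta,\infty]$ rather than $[0,\infty]$. None of these changes touches the finiteness-based reasoning on which the original argument rests. Concretely, I would verify Conds.~\ref{item:def:rankingdom2}, \ref{item:def:rankingdom4} and \ref{item:def:rankingdom3} of Def.~\ref{def:rankingdom} in turn, deliberately omitting corecursiveness (Cond.~\ref{item:def:rankingdom7}), which fails here for the same reason as in the additive case (one can build two distinct coalgebra-algebra homomorphisms, as in the $\rptspe$ counterexample).

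For Cond.~\ref{item:def:rankingdom2} I would check $\qptsp\circ\rptspa\leqOmegapts\sigmapts\circ\Fpts\qptsp$ by the same case split on $(\psi,t)\in\Fpts\Rptspm$. When $t=1$ both sides equal $1$, noting $\qptsp(\alpha\delta)=1$ since $\alpha\delta<\infty$. When $t=0$, the left-hand side equals $1$ exactly when $\tfrac1\alpha\sum_{a}a\cdot\psi(a)<\infty$, i.e.\ when $\sum_{a}a\cdot\psi(a)<\infty$; the key observation, unchanged from the additive proof, is that finiteness of this expectation forces $\psi$ to place no mass at $\infty$, so $\psi([\alpha\delta,\infty))=1$, and the right-hand side—which evaluates under $\sigmapts$ to $\psi([\alpha\delta,\infty))$—dominates.

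Conds.~\ref{item:def:rankingdom4} and \ref{item:def:rankingdom3} are routine. Completeness of $(\Rptspm,\leqRptspm)$ (Cond.~\ref{item:def:rankingdom40and3}) holds because $[\alpha\delta,\infty]$ with the reversed order is a complete lattice; monotonicity of $\Phi_{c,\rptspa}$ (Cond.~\ref{item:def:rankingdom41and2}) follows since scaling the expectation by $1/\alpha>0$ is order-preserving in the usual order, so the order-reversing argument carries over intact. For Cond.~\ref{item:def:rankingdom3}, monotonicity and strictness of $\qptsp$ are immediate ($\bot_{\Rptspm}=\infty$ and $\qptsp(\infty)=0$), and continuity is proved by the same implication chain as in Prop.~\ref{prop:pseudoRD}: a reversed-order supremum $\bigsqcup_{a\in K}a$ is the usual-order infimum $\bigwedge_{a\in K}a$, which is $<\infty$ iff some $a\in K$ is $<\infty$, iff some $\qptsp(a)=1$.

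I expect the only step requiring genuine care—and hence the main obstacle—to be the $t=0$ case of Cond.~\ref{item:def:rankingdom2}, where one must track how the reversed order on $\Rptspm$, the thresholding at $\infty$ performed by $\qptsp$, and the expectation computed by $\sigmapts$ interact. The scaling by $1/\alpha$ and the shift to $\alpha\delta$ are harmless precisely because they preserve the finite/infinite dichotomy that $\qptsp$ detects; once this is made explicit, the remaining verifications are mechanical, and the proof can simply point back to the structure of Prop.~\ref{prop:pseudoRD}.
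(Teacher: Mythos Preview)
Your proposal is correct and follows essentially the same approach as the paper: both verify Conds.~\ref{item:def:rankingdom2}--\ref{item:def:rankingdom3} by the same case split on $t$ for Cond.~\ref{item:def:rankingdom2} (using that $\tfrac{1}{\alpha}\sum a\cdot\psi(a)<\infty$ forces $\psi(\infty)=0$), the obvious lattice and monotonicity facts for Cond.~\ref{item:def:rankingdom4}, and a deferral to the argument of Prop.~\ref{prop:pseudoRD} for Cond.~\ref{item:def:rankingdom3}. Your identification of the $t=0$ case of Cond.~\ref{item:def:rankingdom2} as the only non-mechanical step is accurate and matches the paper's emphasis.
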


\begin{proof}\mbox{}
%
\subsubsection*{Cond.~\ref{item:def:rankingdom2}}
Let $(\psi,t)\in \Fpts\Rptspm=\dist \Rptspm\times\{0,1\}$.
Then we have:
\allowdisplaybreaks[4]
\begin{align*}
&\qptsp\circ \rptspa(\psi,t) \\
&= \begin{cases}
\qptsp(\alpha\delta) & (t=1) \\
\qptsp(\frac{1}{\alpha}\cdot\sum_{a\in\supp(\psi)} a\cdot\psi(a)) & (t=0)
\end{cases} \hspace{-3.3cm} &\\
& & (\text{by def.\ of $\rptspa$}) \\
&= \begin{cases}
1 & (t=1\;\text{or}\;\frac{1}{\alpha}\cdot\sum_{a\in\supp(\psi)} a\cdot\psi(a)<\infty)\\
0 & (\text{otherwise})
\end{cases} \hspace{-3.3cm} & \\
& & (\text{by def.\ of $\qptsp$}) \\
&\leq \begin{cases}
1 & (t=1\;\text{or}\;\psi([0,\infty))=1)\\
0 & (\text{otherwise})
\end{cases}\\
&\leq \begin{cases}
1 & (t=1)\\
\psi([0,\infty)) & (\text{otherwise})
\end{cases}\\
&= \begin{cases}
1 & (t=1)\\
1\cdot \dist\qpts(\psi)(1) & (\text{otherwise})
\end{cases} & (\text{by def.\ of $\qptsp$}) \\
&=\sigmapts\circ \Fpts \qptsp(\psi,t) & (\text{by def.\ of $\sigmapts$})\,.
\end{align*}
Hence we have $\qptsp\circ \rptspa \leqOmegapts \sigmapts\circ \Fpts \qptsp$.

\subsubsection*{Cond.~\ref{item:def:rankingdom4}}
It is easy to see that Cond.~\ref{item:def:rankingdom40and3} is satisfied.

%

We prove that Cond.~\ref{item:def:rankingdom41and2} is satisfied.
Let $b_1,b_2:X\to \Rptspm$ and assume that $b_1\leqRptspm b_2$.
For each $x\in X$ 
such that $c(x)=(\varphi,t)\in \Fpts X=\dist X\times\{0,1\}$,
we have:
\begin{align*}
&\Phi_{c,\rptspa}(b_1)(x) \\
&=\rptspa\circ \Fpts b_1(\varphi,t)  & (\text{by def.\ of $\Phi_{c,\rptspa}$}) \\
&= \begin{cases}
1 & (t=1) \\
\frac{1}{\alpha}\cdot\sum_{x\in \supp(\varphi)} \varphi(x)\cdot b_1(x)
& (t=0)
\end{cases} \hspace{-2cm}& \\
& & (\text{by def.\ of $\rptspa$}) \\ 
&\leqRptspm \begin{cases}
1 & (t=1) \\
\frac{1}{\alpha}\cdot\sum_{x\in \supp(\varphi)} \varphi(x)\cdot b_2(x)
& (t=0)
\end{cases} \hspace{-2cm}& \\ 
&= \rptspa\circ \Fpts b_2(\varphi,t)&  (\text{by def.\ of $\rptspa$}) \\ 
&= \Phi_{c,\rptspa}(b_2)(x)&(\text{by def.\ of $\Phi_{c,\rptspa}$})\,.
\end{align*}
Therefore Cond.~\ref{item:def:rankingdom41and2} is satisfied.
%
%

\subsubsection*{Cond.~\ref{item:def:rankingdom3}}
It is proved in a similar manner to the proof of Prop.~\ref{prop:pseudoRD}.
%
%
\end{proof}

\vspace{2mm}
\noindent
\begin{minipage}{0.28\hsize}
\begin{myexample}\label{example:rankSupMartNotCorecNewMult}
\end{myexample}
\end{minipage}\!\!
We define a coalgebra 
$c:X\to \Fpts X$ 
as in Example~\ref{example:rankSupMartNotCorecNew}.
We fix $\alpha\in(0,1)$ and $\delta>0$.
We assume that $\alpha>1/2$.
%
For this coalgebra, we define arrows $b_1,b_2:X\to \Rptspm$ by:
$b_1(x_0)=\frac{1}{2}\frac{1}{\alpha}\bigl(\frac{\alpha\delta}{2\alpha-1}+\delta\bigr)$, $b_1(x_1)=\frac{\alpha\delta}{2\alpha-1}$, $b_1(x_2)=\delta$ and $b_1(x_3)=\alpha\delta$; and
$b_2(x_0)=\infty$, $b_2(x_1)=\infty$, $b_2(x_2)=\delta$ and $b_2(x_3)=\alpha\delta$.
Both of these qualify as coalgebra-algebra homomorphisms from $c$ to $\rptspa$. Therefore $\rptspa$ is not corecursive. 
%
\vspace{2mm}


\vspace{0.5\baselineskip}
\subsection{\textbf{Proof of Prop.~\ref{prop:rankDomPTS}}}\label{subsec:proof:prop:rankDomPTS}
We prove Prop.~\ref{prop:rankDomPTS} in a similar manner to the proof of Prop.~\ref{prop:rankDomTwoPlayerGame}:
we first prove that $r_{\pts}$ is a corecursive algebra separately.
To this end, we 
first prove some lemmas.

\begin{mylemma}\label{lem:orderRposetDNew}
The order $\leqRpts$ in Prop.~\ref{prop:rankDomPTS} is a partial order and
$\Rpts$ is a complete lattice with respect to this order.
\end{mylemma}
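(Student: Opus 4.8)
The plan is to treat $\leqRpts$ as a stochastic-dominance-style order on cumulative mass functions, and to build arbitrary joins by taking pointwise suprema of those cumulative functions. First I would dispatch the partial-order axioms. Reflexivity and transitivity of $\leqRpts$ are immediate from the corresponding properties of $\leq$ on $[0,1]$, applied pointwise in $a\in\mathbb{N}$. The only non-trivial axiom is antisymmetry: suppose $\varphi\leqRpts\varphi'$ and $\varphi'\leqRpts\varphi$, so that $\varphi([0,a])=\varphi'([0,a])$ for every $a\in\mathbb{N}$. Then one recovers all the masses: $\varphi(0)=\varphi([0,0])=\varphi'([0,0])=\varphi'(0)$, and $\varphi(a)=\varphi([0,a])-\varphi([0,a-1])=\varphi'(a)$ for $a\geq 1$; finally the mass at $\infty$ agrees since $\varphi(\infty)=1-\lim_{a\to\infty}\varphi([0,a])=1-\lim_{a\to\infty}\varphi'([0,a])=\varphi'(\infty)$. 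Hence $\varphi=\varphi'$.

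For the lattice structure it suffices to exhibit a supremum for every subset $K\subseteq\Rpts$ (a poset in which all subsets have joins is automatically a complete lattice, and the empty join furnishes the least element). Given $K$, define $G\colon\mathbb{N}\to[0,1]$ by $G(a)=\sup_{\varphi\in K}\varphi([0,a])$. Since each map $a\mapsto\varphi([0,a])$ is non-decreasing, so is $G$; I would then define a candidate distribution $\psi\in\dist\Ninf$ by $\psi(0)=G(0)$, $\psi(a)=G(a)-G(a-1)$ for $a\geq 1$, and $\psi(\infty)=1-\lim_{a\to\infty}G(a)$. Monotonicity of $G$ makes every $\psi(a)$ non-negative, $G\leq 1$ makes $\psi(\infty)\geq 0$, and the finite masses telescope so that $\sum_{i=0}^{a}\psi(i)=G(a)$; consequently the total mass is $\lim_{a\to\infty}G(a)+\psi(\infty)=1$, so $\psi$ is a genuine element of $\dist\Ninf$.

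It then remains to check $\psi=\bigsqcup K$. By construction $\psi([0,a])=G(a)\geq\varphi([0,a])$ for every $\varphi\in K$ and every $a$, so $\psi$ is an upper bound. Conversely, if $\chi$ is any upper bound of $K$ then $\chi([0,a])\geq\varphi([0,a])$ for all $\varphi\in K$, whence $\chi([0,a])\geq\sup_{\varphi\in K}\varphi([0,a])=G(a)=\psi([0,a])$, i.e.\ $\psi\leqRpts\chi$. Thus $\psi$ is the least upper bound. I expect the main obstacle to be the bookkeeping at $\infty$: one must verify that the pointwise supremum $G$ of the cumulative functions is again realizable as the cumulative function of a bona fide distribution, which hinges on $G$ staying non-decreasing (so that the successive differences are legitimate probabilities) and on the ``leftover'' mass $1-\lim_{a\to\infty}G(a)$ being correctly assigned to $\infty$. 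As a sanity check, $K=\emptyset$ yields $G\equiv 0$ and hence $\psi=\delta_\infty$, matching the claim that $\delta_\infty$ is the least element (and dually $\delta_0$ is the greatest).
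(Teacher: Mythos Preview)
Your proof is correct and follows essentially the same approach as the paper: both reduce the partial-order axioms to the corresponding facts for $\leq$ on $[0,1]$, recover a distribution from its cumulative function via successive differences (the paper factors this out as a separate sublemma, you inline it), and build the supremum of $K$ by taking $G(a)=\sup_{\varphi\in K}\varphi([0,a])$ and reconstructing the associated distribution. The only cosmetic difference is that the paper also sketches the existence of infima directly, whereas you invoke the standard fact that all joins suffice for a complete lattice; both are fine.
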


\begin{mysublemma}\label{sublem:LSmeasDNew}
For every 
nondecreasing function 
$G:\mathbb{N}\to [0,1]$, 
there exists a unique distribution
$\varphi$ over $\Ninf$ such that 
$\varphi([0,a])=G(a)$ for each $a\in \mathbb{N}$.
\end{mysublemma}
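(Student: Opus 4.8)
The plan is to read off the values of $\varphi$ from the finite differences of $G$ and to deposit the leftover mass onto the point $\infty$. Concretely, I would set $\varphi(0)=G(0)$, set $\varphi(a)=G(a)-G(a-1)$ for each $a\geq 1$, and set $\varphi(\infty)=1-\lim_{a\to\infty}G(a)$. The first step is to observe that this limit exists: since $G$ is nondecreasing and bounded above by $1$, the value $L:=\lim_{a\to\infty}G(a)=\sup_{a}G(a)$ exists and satisfies $L\leq 1$.

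Next I would verify that $\varphi$ is a genuine element of $\dist\Ninf$. Nonnegativity of $\varphi(a)$ for $a\in\mathbb{N}$ is exactly the monotonicity of $G$ (namely $G(a)\geq G(a-1)$), and $\varphi(\infty)=1-L\geq 0$ because $L\leq 1$. For the total mass, a telescoping computation gives $\sum_{i=0}^{a}\varphi(i)=G(a)$ for every $a\in\mathbb{N}$; hence $\sum_{i\in\mathbb{N}}\varphi(i)=\lim_{a\to\infty}G(a)=L$, and adding $\varphi(\infty)=1-L$ yields total mass $1$, so $\varphi$ is indeed a probability distribution. Moreover, this same telescoping identity $\varphi([0,a])=\sum_{i=0}^{a}\varphi(i)=G(a)$ is precisely the required constraint, so existence is established.

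For uniqueness, I would argue that the constraints pin down every value of $\varphi$. Any distribution $\varphi'$ with $\varphi'([0,a])=G(a)$ for all $a\in\mathbb{N}$ must satisfy $\varphi'(0)=\varphi'([0,0])=G(0)$ and $\varphi'(a)=\varphi'([0,a])-\varphi'([0,a-1])=G(a)-G(a-1)$ for $a\geq 1$, so $\varphi'$ agrees with $\varphi$ on all of $\mathbb{N}$. The value at $\infty$ is then forced by the normalization $\sum_{x\in\Ninf}\varphi'(x)=1$, which gives $\varphi'(\infty)=1-L=\varphi(\infty)$. Hence $\varphi'=\varphi$.

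I do not anticipate a genuine obstacle, since the statement is the discrete analogue of recovering a measure from its cumulative distribution function. The only two points that require a little care are the existence of the limit $L$ (monotone-plus-bounded) and the observation that the finite part may carry mass strictly less than $1$, so that depositing the residual mass $1-L$ at $\infty$ is exactly what turns $\varphi$ into a probability distribution rather than a subdistribution.
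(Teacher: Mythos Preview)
Your proposal is correct and follows essentially the same approach as the paper: define $\varphi$ via the finite differences of $G$ on $\mathbb{N}$ and place the residual mass $1-\lim_{a\to\infty}G(a)$ at $\infty$, then verify the distribution axioms and deduce uniqueness by subtracting consecutive cumulative sums. If anything, your write-up is slightly more careful in explicitly noting that the limit $L$ exists and that the telescoping identity also establishes the constraint $\varphi([0,a])=G(a)$.
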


\begin{proof}
We define a distribution $\varphi$ over $\Ninf$ by 
\begin{displaymath}
\varphi(a)=\begin{cases}
G(a) & (a=0) \\
G(a)-G(a-1)  & (0<a<\infty)\\
1-\lim_{a'\to\infty}G(a') & (a=\infty) \,. 
\end{cases}
\end{displaymath}
As $G$ is nondecreasing, $\varphi(a)\geq 0$ for each $a$.
By its definition, we have $\sum_{a\in\Ninf}\varphi(a)= 1$.
Hence $\varphi$ is a distribution.

Let $\varphi'$ be a distribution such that $\varphi'([0,a])=G(a)$.
%
Then 
we have $\varphi(0)=G(0)=\varphi'([0,0])=\varphi'(0)$.
Moreover for each $a\in\mathbb{N}\setminus\{0\}$, we have:
\begin{displaymath}
\varphi(a)=G(a)-G(a-1)=\varphi'([0,a])-\varphi'([0,a-1])=\varphi'(a)\,.
\end{displaymath}
Therefore we have $\varphi(a)=\varphi'(a)$ for each $a\in\mathbb{N}$,
and this implies $\varphi(\infty)=\varphi'(\infty)$.
Hence uniqueness is proved, and this concludes the proof.
\end{proof}

\begin{proof}[Proof (Lem.~\ref{lem:orderRposetDNew})]
We first prove that $\leqRpts$ is a partial order.
Reflexivity and transitivity are immediate from those of the standard order $\leq$ over $[0,1]$. 
Assume that $\varphi\leqRpts\varphi'$ and $\varphi'\leqRpts\varphi$.
By the definition of $\leqRpts$, we have $\varphi([0,a])=\varphi'([0,a])$ for each $a\in[0,\infty]$.
Then by Sublem.~\ref{sublem:LSmeasDNew}, we have $\varphi=\varphi'$.
Hence antisymmetry is also satisfied.

We prove that 
each subset $K\subseteq \Rpts$ has the least upper bound. We define $G:\mathbb{N}\to[0,1]$ by 
$G(a)=\sup_{\varphi\in K}\varphi([0,a])$.
Note that for each $\varphi\in K$, $a\leq b$ implies $\varphi([0,a])\leq \varphi([0,b])$.
Hence by the monotonicity of supremums, $G$ is nondecreasing.
Therefore by Sublem.~\ref{sublem:LSmeasDNew},
there  exists a unique distribution $\varphi^K\in\dist\Ninf$ such that 
$\varphi^K([0,a])=G(a)$ for each $a\in[0,\infty]$.

We prove that $\varphi^K$ is the least upper bound of $K$.

Let $\varphi\in K$. For each $a\in\mathbb{N}$, we have:
\begin{displaymath} 
\varphi([0,a])\;\leq\;\sup_{\varphi'\in K}\varphi'([0,a])\;=\;\varphi^K([0,a])\,.
\end{displaymath}
Hence by the definition of $\leqRpts$, $\varphi$ is an upper bound of $K$.

Let $\varphi'$ be an upper bound of $K$.
Then by the definition of $\leqRpts$, we have $\varphi([0,a])\leq \varphi'([0,a])$ for each $\varphi\in K$ and $a\in\mathbb{N}$.
Therefore we have $\varphi^K([0,a])\leq \varphi'([0,a])$, and this means 
$\varphi^K\leqRpts\varphi'$  by the definition of $\leqRpts$.
Hence $\varphi$ is the least upper bound of $K$.

In a similar manner, we can prove that 
each $K\subseteq \Rpts$ has the greatest lower bound.
\end{proof}

\begin{mylemma}\label{lem:tfSeqWellDefPTSD}
%
The algebra $\rpts:\Fpts \Rpts\to \Rpts$ in Prop~\ref{prop:rankDomPTS} 
satisfies 
Cond.~\ref{item:def:rankingdom4} in Def.~\ref{def:rankingdom}.
\end{mylemma}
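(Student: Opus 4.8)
The plan is to handle the two halves of Cond.~\ref{item:def:rankingdom4} separately. Cond.~\ref{item:def:rankingdom40and3}---that $(\Rpts,\leqRpts)$ is a complete lattice---is precisely the statement of Lem.~\ref{lem:orderRposetDNew}, so nothing remains to be done there. All the work goes into Cond.~\ref{item:def:rankingdom41and2}, the monotonicity of $\Phi_{c,\rpts}$ with respect to the pointwise extension of $\leqRpts$. Since the order $\leqRpts$ is defined through the cumulative quantities $\varphi([0,a])$, the guiding idea is to rephrase the entire action of $\rpts$ at the level of these cumulative masses, where monotonicity becomes transparent.

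First I would rewrite $\rpts$ in cumulative form. Fix $(\Gamma,t)\in\Fpts\Rpts=\dist^2\Ninf\times\{0,1\}$. For $t=1$ the value $\rpts(\Gamma,1)$ is the Dirac distribution at $0$, so $\rpts(\Gamma,1)([0,a])=1$ for every $a$, independently of $\Gamma$. For $t=0$, summing the defining formula of $\rpts$ over $a'\in\{0,\dots,a\}$ and exchanging the two nonnegative summations yields the identity
\[
\rpts(\Gamma,0)([0,a])=\sum_{\gamma\in\supp(\Gamma)}\Gamma(\gamma)\cdot\gamma([0,a-1]),
\]
with the convention $\gamma([0,-1])=0$, so that $\rpts(\Gamma,0)([0,0])=0$.

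Next I would instantiate $\Gamma=\dist b(\delta)$, where $c(x)=(\delta,t)$ and $\Fpts b(\delta,t)=(\dist b(\delta),t)$. Expanding the pushforward via $\dist b(\delta)(\gamma)=\sum_{x'\colon b(x')=\gamma}\delta(x')$ and collapsing the resulting double sum gives, for $t=0$ and $a\geq 1$,
\[
\Phi_{c,\rpts}(b)(x)([0,a])=\sum_{x'\in\supp(\delta)}\delta(x')\cdot b(x')([0,a-1]),
\]
while $\Phi_{c,\rpts}(b)(x)([0,a])$ equals $1$ (if $t=1$) or $0$ (if $t=0$, $a=0$) in the remaining cases. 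Monotonicity is then immediate: assuming $b_1\leqRpts b_2$ pointwise, we have $b_1(x')([0,a-1])\leq b_2(x')([0,a-1])$ for every $x'$ and $a$, and since the coefficients $\delta(x')$ are nonnegative, the displayed expression is monotone in $b$, the boundary cases contributing equalities. Thus $\Phi_{c,\rpts}(b_1)(x)([0,a])\leq\Phi_{c,\rpts}(b_2)(x)([0,a])$ for all $a$, which is exactly $\Phi_{c,\rpts}(b_1)(x)\leqRpts\Phi_{c,\rpts}(b_2)(x)$; as $x$ is arbitrary, $\Phi_{c,\rpts}$ is monotone.

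The main obstacle is bookkeeping rather than conceptual. The two interchanges of summation---first passing from the pointwise masses $\rpts(\Gamma,0)(a)$ to the cumulative masses $\rpts(\Gamma,0)([0,a])$, then flattening the pushforward along $\dist b$---must be justified by noting that all terms are nonnegative, and care is needed with the off-by-one index shift arising from the successor structure of the algebra (the $a-1$) together with the boundary convention at $a=0$. Once the cumulative-mass identity is established, working uniformly in terms of the CDF-style quantities $\varphi([0,a])$ makes the order $\leqRpts$ fully transparent, and the desired monotonicity reduces to monotonicity of nonnegative linear combinations.
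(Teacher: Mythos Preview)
Your proposal is correct and follows essentially the same approach as the paper: Cond.~\ref{item:def:rankingdom40and3} is dispatched by Lem.~\ref{lem:orderRposetDNew}, and for Cond.~\ref{item:def:rankingdom41and2} the paper too computes $\Phi_{c,\rpts}(b)(x)([0,a])$ in cumulative form, obtaining the same expression $\sum_{x'\in\supp(\delta)}\delta(x')\cdot b(x')([0,a-1])$ (with the boundary cases $t=1$ and $a=0$ handled separately), from which monotonicity is immediate. Your derivation is slightly more explicit about the pushforward step, but the argument is identical in substance.
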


\begin{proof}
It is already proved in Lem.~\ref{lem:orderRposetDNew} that Cond.~\ref{item:def:rankingdom40and3} is satisfied.

%


We prove that Cond.~\ref{item:def:rankingdom41and2} is satisfied.
Let $f_1,f_2:X\to \Rpts$ and assume $f_1\leqRpts f_2$.
Let $x\in X$ and assume that $c(x)=(\varphi,t)\in \Fpts X$.


If $t=1$ then we have:
\begin{displaymath}
 \rpts\circ\Fpts f_1(f)(\varphi,t)=\rpts\circ\Fpts f_2(f)(\varphi,t)=\delta_0.
\end{displaymath}
Therefore we have 
$\Phi_{c,\sigmapts}(f_1)(x)=\Phi_{c,\sigmapts}(f_2)(x)$.

Let $t=0$ and $a\in\mathbb{N}$.
If $a=0$, by the definition of $\rpts$, we have:
\begin{displaymath}
\rpts\circ\Fpts f_1(f)(\varphi,t)([0,a])=\rpts\circ\Fpts f_2(f)(\varphi,t)([0,a])=0\,.
\end{displaymath}

If $a>0$, we have:
\begin{align*}
&\rpts\circ\Fpts f_1(f)(\varphi,t)([0,a]) \\
&= 
\sum_{x\in\supp(\varphi)}\bigl(\varphi(x)\cdot f_1(x)([0,a-1]) \bigr) 
& (\text{by def.\ of $\rpts$})
\\
&\leq 
\sum_{x\in\supp(\varphi)}\bigl(\varphi(x)\cdot f_2(x)([0,a-1]) \bigr)
& (\text{by $f_1\leqRpts f_2$})
\\
%
&= \rpts\circ \Fpts f_2(\varphi,t)([0,a]) & (\text{by def.\ of $\rpts$})\,.
\end{align*}

Therefore by the definition of $\leqRpts$,
 we have $\rpts\circ \Fpts f_1(\varphi,t)\leqRpts\rpts\circ \Fpts f_2(\varphi,t)$. 
 This means that $\Phi_{c,\rpts}$ is monotone.
This concludes the proof.
\end{proof}

\begin{mylemma}\label{lem:rCorecNewDPTSPTS}
The $\Fpts$-algebra $\rpts:\Fpts\Rpts\to \Rpts$ in Prop.~\ref{prop:rankDomPTS} is corecursive.
\end{mylemma}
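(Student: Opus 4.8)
The plan is to mirror the corecursiveness proof for two-player games (Lem.~\ref{lem:corecAlgNodet}): existence of a fixed point of $\Phi_{c,\rpts}$ comes for free from the order-theoretic machinery, so the real work is \emph{uniqueness}. Fix an $\Fpts$-coalgebra $c:X\to\Fpts X$ and write $c(x)=(c_1(x),c_2(x))\in\dist X\times\{0,1\}$. By Def.~\ref{def:corecAlg}, and its rephrasing just after Def.~\ref{def:Phicsigma}, it suffices to show that $\Phi_{c,\rpts}:\Setsto{X}{\Rpts}\to\Setsto{X}{\Rpts}$ has exactly one fixed point. For existence, recall that by Lem.~\ref{lem:tfSeqWellDefPTSD} the poset $(\Setsto{X}{\Rpts},\leqRpts)$ is a complete lattice and $\Phi_{c,\rpts}$ is monotone; hence Thm.~\ref{thm:KTCC}.2 applies and the transfinite sequence $\botRpts\leqRpts\Phi_{c,\rpts}(\botRpts)\leqRpts\cdots$ stabilises at the least fixed point, which in particular exists.

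For uniqueness I would first unfold the fixed-point equation $f=\rpts\circ\Fpts f\circ c$ pointwise. Using the action of $\dist$ on arrows (Def.~\ref{def:concreteFunctors}) one gets the pushforward identity $\sum_{\gamma}\bigl(\dist f(c_1(x))\bigr)(\gamma)\cdot\gamma(a-1)=\sum_{x'\in X}c_1(x)(x')\cdot f(x')(a-1)$, since any $x'$ with $f(x')=\gamma$ contributes $\gamma(a-1)=f(x')(a-1)$. Plugging this into the definition of $\rpts$, the equation $f=\Phi_{c,\rpts}(f)$ becomes: $f(x)=\delta_0$ whenever $c_2(x)=1$; and, whenever $c_2(x)=0$, $f(x)(0)=0$ together with $f(x)(a)=\sum_{x'}c_1(x)(x')\cdot f(x')(a-1)$ for $a>0$.

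Summing over $b\leq a$ then turns this into a recurrence on the \emph{cumulative} values $F_a(x):=f(x)([0,a])$: one finds $F_a(x)=1$ for all $a$ whenever $c_2(x)=1$, while $F_0(x)=0$ and $F_a(x)=\sum_{x'}c_1(x)(x')\cdot F_{a-1}(x')$ whenever $c_2(x)=0$. The key observation is that this recurrence determines the whole family $(F_a)_{a\in\mathbb{N}}$ uniquely by induction on $a$: the base case fixes $F_0$, and the inductive step expresses $F_a(x)$ at every state solely in terms of $F_{a-1}$. Consequently any two fixed points $f_1,f_2$ satisfy $f_1(x)([0,a])=f_2(x)([0,a])$ for every $x\in X$ and every finite $a$. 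Finally I would invoke the uniqueness clause of Sublem.~\ref{sublem:LSmeasDNew}, which says that a distribution on $\Ninf$ is completely determined by its cumulative values on $\mathbb{N}$; hence $f_1=f_2$, and $\rpts$ is corecursive.

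The main obstacle is the uniqueness step, and within it the passage from the one-step fixed-point equation to a recurrence that is genuinely solvable by induction on the step count $a$. The structural feature that makes this work is precisely that $\rpts$ shifts the index by one (the $a-1$ in its definition), so that the $a$-th cumulative value depends only on strictly smaller ones---this is exactly the categorical shadow of well-foundedness that corecursiveness is meant to capture, and it is what fails for the non-corecursive candidate $\rptspe$ of Prop.~\ref{prop:pseudoRD}. Lifting the resulting agreement of cumulative distribution functions back to agreement of the distributions themselves is then a routine appeal to Sublem.~\ref{sublem:LSmeasDNew}.
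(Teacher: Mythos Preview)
Your proposal is correct and follows essentially the same route as the paper: existence via Lem.~\ref{lem:tfSeqWellDefPTSD} and Thm.~\ref{thm:KTCC}.2, then uniqueness by induction on the step index $a$, exploiting that $\rpts$ shifts the argument by one. The only cosmetic difference is that the paper inducts directly on the point values $f(x)(a)$ (concluding that agreement at all finite $a$ forces agreement at $\infty$ since distributions sum to $1$), whereas you pass to cumulative values $f(x)([0,a])$ and close with Sublem.~\ref{sublem:LSmeasDNew}; the two are equivalent.
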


\begin{proof}
Let $c:X\to \Fpts X$ be an $\Fpts$-coalgebra.
It suffices to show that 
the function 
$\Phi_{c,\rpts}:\Setsto{X}{(\Rpts)}\to \Setsto{X}{(\Rpts)}$ (Def.~\ref{def:Phicsigma}) has a unique fixed point.

By Lem.~\ref{lem:tfSeqWellDefPTSD},
the poset $(\Setsto{X}{(\Rpts)},\leqRpts)$ (here $\leqRpts$ denotes the pointwise extension) 
is a complete lattice.
Therefore we can construct a transfinite sequence 
\begin{multline*}
\botRpts\leqRpts \Phi_{c,\rpts}(\botRpts) \leqRpts \cdots \\
\leqRpts \Phi_{c,\rpts}^{\mathfrak{a}}(\botRpts)\leqRpts \cdots
\end{multline*}
as in
Thm.~\ref{thm:KTCC}.2.
By Thm.~\ref{thm:KTCC}.2,
there exists an ordinal $\mathfrak{n}$ such that
$\Phi_{c,\rpts}^\mathfrak{n}(\botRpts)$  is the least fixed point of $\Phi_{c,\rpts}$. 

%

It remains to show that this is a unique fixed point.
Let $f_1,f_2:X\to \Rpts$ be fixed points of $\Phi_{c,\rpts}$.
We prove 
\begin{equation}\label{eq:f1f2eqxa}
f_1(x)(a)=f_2(x)(a)
\end{equation}
 for each $x\in X$ and $a\in\mathbb{N}$.
For each $x\in X$ such that $(\varphi,t)=c(x)$,
we write $c_1(x)$ and $c_2(x)$ for $\varphi\in\dist X$ and $t\in\{0,1\}$ respectively.


We first prove (\ref{eq:f1f2eqxa}) for each $x\in X$ such that $c_2(x)=1$.
For such $x$, we have:
\begin{align*}
f_1(x)(a) 
&= \Phi_{c,\rpts}(f_1)(x) &(\text{$f_1$ is a fixed point}) \\
&= \rpts\circ \Fpts f_1\circ c(x) \\
&=
\delta_0 \\
&= \rpts\circ \Fpts f_2\circ c(x) \\
&= \Phi_{c,\rpts}(f_2)(x) &(\text{$f_2$ is a fixed point}) \\
&=g(x)\,.
\end{align*}
Hence we have $f_1(x)(a)=f_2(x)(a)$ for each $a$.

It remains to prove (\ref{eq:f1f2eqxa}) for each $x\in X$ such that $c_2(x)=0$.
We prove it by the induction on $a$.

If $a=0$ then by the definition of $\rpts$, we have
$f_1(x)(a)=f_2(x)(a)=0$.

Let $a>0$ and assume $f_1(x')(a')=f_2(x')(a')$ for each $x'\in X$ and $a'< a$.
Then we have:
\begin{align*}
&f_1(x)(a) \\
&= \rpts\circ \Fpts f_1\circ c(x)(a)& (\text{$f_1$ is a fixed point}) \\
&= \sum_{x'\in\supp(c_1(x))}c_1(x)(x')\cdot f_1(x')(a-1)\hspace{-1cm} & (\text{by def.\ of $\rpts$}) \\
&= \sum_{x'\in\supp(c_1(x))}c_1(x)(x')\cdot f_2(x')(a-1)\hspace{-1cm} & (\text{by IH})\\
&= \rpts\circ \Fpts f\circ c(x)(a) &  (\text{by def.\ of $\rpts$})\\
&=f_2(x)(a) & (\text{$f_2$ is a fixed point})\,.
\end{align*}

Therefore we have $f_1(x)(a)=f_2(x)(a)$ for each $x\in X$ and $a\in\mathbb{N}$.
Note that this implies that $f_x(x)(\infty)=f_2(x)(\infty)$ for each $x\in X$.
This concludes the proof.
\end{proof}

\begin{proof}[Proof (Prop.~\ref{prop:rankDomPTS})]
We prove that $(\rpts,\qpts,\leqRpts)$ satisfies the conditions in Def.~\ref{def:rankingdom}.

\subsubsection*{Cond.~\ref{item:def:rankingdom2}}
Let $(\Gamma,t)\in \Fpts \Rpts=\dist^2\Ninf\times\{0,1\}$. 
If $t=1$ then by the definitions of $\rpts$ and $\sigmapts$, we have:
\begin{displaymath}
\qpts\circ \rpts(\Gamma,t) = \qpts(\delta_0) = 1 = \sigmapts(\dist q_{\pts}(\Gamma),t)=\sigmapts\circ \Fpts \qpts(\Gamma,t)\,.
\end{displaymath}

Assume $t=0$. Then we have:
\allowdisplaybreaks[4]
\begin{align*}
& \qpts\circ \rpts(\Gamma,t) \\
&= \rpts(\Gamma,1)([0,\infty)) & (\text{by def.\ of $\qpts$}) \\
&= \sum_{a=0}^\infty r_{\pts}(\Gamma,1)(a) \\
&=  \sum_{a=1}^\infty\sum_{\gamma\in\supp(\Gamma)}\Gamma(\gamma)\cdot \gamma(a-1) & (\text{by def.\ of $\rpts$})\\
&=  \sum_{a=0}^\infty\sum_{\gamma\in\supp(\Gamma)}\Gamma(\gamma)\cdot \gamma(a) \\
&=  \sum_{\gamma\in\supp(\Gamma)}\Gamma(\gamma)\cdot \bigl(\sum_{a=0}^\infty\gamma(a)\bigr) \\
&= \sum_{\gamma\in\supp(\Gamma)}\Gamma(\gamma)\cdot q_{\pts}(\gamma) & (\text{by def.\ of $\qpts$})\\
&= \sigmapts\circ \Fpts q_{\pts}(\Gamma,t)& (\text{by def.\ of $\sigmapts$})\,.
\end{align*}
Hence Cond.~\ref{item:def:rankingdom2} is satisfied. 
Note also that we have $\qpts\circ \rpts=\sigmapts\circ \Fpts \qpts$.


\subsubsection*{Cond.~\ref{item:def:rankingdom4}}
We have already shown in Lem.~\ref{lem:tfSeqWellDefPTSD}.

\subsubsection*{Cond.~\ref{item:def:rankingdom3}}
We first prove that $\qpts$ is monotone.
Let $f_1,f_2:X\to \Rpts$ and assume that $f\leqRpts g$.
Then we have:
\begin{align*}
\qpts\circ f_1(x)
&= f_1(x)([0,\infty)) & (\text{by def.\ of $\qpts$})\\
&= \lim_{a\to\infty} f_1(x)([0,a]) \\ 
&\leq \lim_{a\to\infty} f_2(x)([0,a]) & (\text{by $f_1\leqRpts f_2$}) \\
&= f_2(x)([0,\infty)) \\
&= \qpts\circ f_2(x) & (\text{by def.\ of $\qpts$})\,.
\end{align*}
Hence 
we have $\qpts\circ f_1\leqOmegapts \qpts\circ f_2$,
and therefore Cond.~\ref{item:def:rankingdom3} was proved.

%
%
%
%
%
%
By the definition of $\qpts$, we have 
$\qpts(\delta_\infty)=0$.
Therefore 
$\qpts$ is strict.

We prove that $\qpts$ is 
 continuous.
By the definition of $\qpts$, it suffices to 
prove $(\bigsqcup_{\gamma\in K}\gamma)([0,\infty)) = \sup_{\gamma\in K}(\gamma([0,\infty)))$.
Note that the supremum on the left-hand side is taken with respect to $\leqRpts$ over $\Rpts$ while the latter is
taken with respect to the ordinary order over $[0,1]$. 
We have:
\begin{align*}
&\Bigl(\bigsqcup_{\gamma\in K}\gamma\Bigr)([0,\infty)) \\
&= \lim_{a\to\infty}(\bigsqcup_{\gamma\in K}\gamma)([0,a])  \\
&= \lim_{a\to\infty}\sup_{\gamma\in K}(\gamma([0,a])) & (\text{see the proof of Lem.~\ref{lem:orderRposetDNew}})\\
&= \sup_{\gamma\in K}\lim_{a\to\infty}(\gamma([0,a])) \\
&= \sup_{\gamma\in K}(\gamma([0,\infty)))\,.
\end{align*}
Therefore $\qpts$ is 
 continuous.

\subsubsection*{Cond.~\ref{item:def:rankingdom7}}
We have already shown in Lem.~\ref{lem:rCorecNewDPTSPTS}. 

Hence the triple $(\rpts,\qpts,\leqRpts)$ is a ranking domain. 
\end{proof}

\vspace{0.5\baselineskip}
\subsection{\textbf{Soundness of Additive Ranking Supermartingale, Categorically}}\label{subsec:soundARSapp}
In \S{}\ref{subsubsec:PTSasCoalg}, 
we have seen that the triple $(\rptspe,\qptsp,\leqRptsp)$, which captures the definition of $\varepsilon$-additive ranking 
supermartingales, is not a ranking domain because $\rptspe$ is not a corecursive algebra.
Hence we cannot prove soundness of $\varepsilon$-additive ranking supermartingales directly using our categorical framework.
In this section,
we show that 
its soundness is proved via that of distribution-valued ranking functions.
More concretely, we have the following proposition.

\begin{myproposition}\label{prop:rankDomPTSConnect}
We define triples $(\rpts,\qpts,\leqRpts)$ and
$(\rptspe,\qptsp,\leqRptsp)$ as in Prop.~\ref{prop:rankDomPTS} and Prop.~\ref{prop:rankDomPTSTradD} respectively.
Let $c:X\to \Fpts X$ be an $\Fpts$-coalgebra. 
Then we have:
\begin{align*}
&\exists b':X\to \Rptsp.\; b'\leqRptsp \Phi_{c,\rptspe}(b') \\
&\quad\Rightarrow\; \exists b:X\to \Rpts .\; \\
&\qquad\qquad\qquad b \leqRpts \Phi_{c,\rpts}(b),\;\;\text{and}\; \\
&\qquad\qquad\qquad \forall x\in X.\; (q'_{\pts}\circ b'(x)=1\,\Rightarrow q_{\pts}\circ b(x)=1)\,.
\end{align*}
\end{myproposition}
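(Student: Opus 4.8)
The plan is to build the distribution-valued ranking function $b$ directly out of the $\varepsilon$-additive supermartingale $b'$ via a Markov-type tail bound, and then to verify the local condition of Def.~\ref{def:distRSM} by a hands-on computation in which the additive slack $\varepsilon$ is exactly what is needed.

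First I would restate the two sides concretely. By Prop.~\ref{prop:rankDomPTSTradD} the hypothesis $b'\leqRptsp\Phi_{c,\rptspe}(b')$ says precisely that $b'$ is an $\varepsilon$-additive ranking supermartingale, i.e.\ $\bigl(\sum_{x'}c_1(x)(x')\cdot b'(x')\bigr)+\varepsilon\le b'(x)$ for all $x\notin\Acc^c$; in particular $b'(x)\ge\varepsilon$ there. On the target side, $b\leqRpts\Phi_{c,\rpts}(b)$ unravels (as in Def.~\ref{def:distRSM}) to $b$ being a distribution-valued ranking function, $\qptsp\circ b'(x)=1$ means $b'(x)<\infty$, and $\qpts\circ b(x)=1$ means $b(x)([0,\infty))=1$.

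Next I would define $b$ by prescribing its cumulative distribution function. Put $\phi(v,a)=\max\{0,\,1-\tfrac{v}{(a+1)\varepsilon}\}$ and set $G_x(a)=1$ for $x\in\Acc^c$ and $G_x(a)=\phi(b'(x),a)$ otherwise, for $a\in\mathbb{N}$. Each $G_x$ is nondecreasing and $[0,1]$-valued, so Sublem.~\ref{sublem:LSmeasDNew} yields a unique $b(x)\in\dist\Ninf$ with $b(x)([0,a])=G_x(a)$; accepting states give $b(x)=\delta_0$, and non-accepting states give $b(x)([0,0])=0$ because $b'(x)\ge\varepsilon$, matching the base case of Def.~\ref{def:distRSM}. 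The second conclusion is then immediate, since $b'(x)<\infty$ gives $\qpts\circ b(x)=b(x)([0,\infty))=\lim_{a\to\infty}\phi(b'(x),a)=1$.

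The hard part will be verifying, for $x\notin\Acc^c$ and $a\ge 1$, the defining inequality $\sum_{x'}c_1(x)(x')\,b(x')([0,a-1])\ge b(x)([0,a])$. Here I would first reduce to the case $b'(x)<\infty$ (else the right-hand side $b(x)([0,a])=\phi(b'(x),a)=0$), in which the supermartingale inequality keeps every $b'(x')$ finite on the support of $c_1(x)$. Writing the left-hand side as $\sum_{x'}c_1(x)(x')\,G_{x'}(a-1)$ and noting $G_{x'}(a-1)\ge\phi(b'(x'),a-1)$ for all $x'$ — with equality off $\Acc^c$, and for $x'\in\Acc^c$ because $\phi(\cdot,a-1)$ is antitone and $G_{x'}(a-1)=1=\phi(0,a-1)$ — I would drop the outer $\max$ and use $\bigl(\sum_{x'}c_1(x)(x')\,b'(x')\bigr)\le b'(x)-\varepsilon$ to obtain
\begin{align*}
\sum_{x'}c_1(x)(x')\,\phi(b'(x'),a-1)
&\;\ge\; 1-\tfrac{1}{a\varepsilon}\sum_{x'}c_1(x)(x')\,b'(x')\\
&\;\ge\; 1-\tfrac{b'(x)-\varepsilon}{a\varepsilon}\;=\;1-\tfrac{b'(x)}{a\varepsilon}+\tfrac1a\,.
\end{align*}
As the left-hand side is also nonnegative, it remains to check $\max\{0,\,1-\tfrac{b'(x)}{a\varepsilon}+\tfrac1a\}\ge\max\{0,\,1-\tfrac{b'(x)}{(a+1)\varepsilon}\}$; this follows from the elementary identity $\bigl(1-\tfrac{b'(x)}{a\varepsilon}+\tfrac1a\bigr)-\bigl(1-\tfrac{b'(x)}{(a+1)\varepsilon}\bigr)=\tfrac1a\bigl(1-\tfrac{b'(x)}{(a+1)\varepsilon}\bigr)$, which is positive whenever the right-hand maximum is active, the other case being trivial. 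This comparison is the crux: the one-step shift from denominator $a+1$ to $a$ is paid for precisely by the slack $\varepsilon$, and it is what pins down the scaling in $\phi$. The remaining boundary cases are routine — the condition of Def.~\ref{def:distRSM} at $a=\infty$ holds in the limit, and at $x\in\Acc^c$ the inequality $b(x)\leqRpts(\rpts\circ\Fpts b\circ c)(x)=\delta_0$ is automatic since $\delta_0$ is the top of $\leqRpts$.
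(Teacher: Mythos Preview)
Your proof is correct and takes a genuinely different route from the paper's. The paper does not construct $b$ from $b'$ at all: it takes $b=\uniquefp{c}_{\rpts}$, the unique coalgebra--algebra homomorphism into the corecursive algebra $\rpts$, and introduces a comparison map $\ppts\colon\Rpts\to\Rptsp$, $\ppts(\gamma)=\varepsilon\cdot\sum_a a\,\gamma(a)$. It checks that $\ppts$ is an exact algebra morphism ($\ppts\circ\rpts=\rptspe\circ\Fpts\ppts$) and that $\qptsp\circ\ppts\le\qpts$, and then runs a \emph{dual} of the soundness argument (Thm.~\ref{thm:soundnessranking}) to obtain $b'\leqRptsp\sem{\nu\rptspe}_c\leqRptsp\ppts\circ\uniquefp{c}_{\rpts}$; this requires verifying several $\omega^{\op}$-completeness and $\omega^{\op}$-continuity conditions (Sublem.~\ref{sublem:reversedCondSatisfied}). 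Your approach instead builds $b$ explicitly via the Markov-type tail bound $b(x)([0,a])=\max\{0,1-b'(x)/((a{+}1)\varepsilon)\}$ and verifies the local inequality of Def.~\ref{def:distRSM} by a short hands-on computation in which the additive slack $\varepsilon$ exactly compensates the denominator shift. The paper's route is more structural --- the comparison map $\ppts$ is reusable (it is indeed reused for the multiplicative case in \S\ref{subsec:soundMRSapp}), and the resulting $b$ is the optimal one, independent of the particular $b'$ --- whereas your route is more elementary and self-contained, avoiding the dual fixed-point machinery entirely and giving an explicit, probabilistically transparent witness.
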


\noindent This proposition is an immediate corollary of the following lemma.

\begin{mylemma}\label{lem:prop:rankDomPTSConnectD}
We assume the conditions in Prop.~\ref{prop:rankDomPTSConnect}.
We define a function $\ppts:\Rpts\to \Rptsp$ by 
\begin{displaymath}
\ppts(\gamma)\;=\;
\varepsilon\cdot \sum_{a\in \supp(\gamma)}a\cdot \gamma(a)\,.
\end{displaymath}
Then the following statements hold. 
%
\begin{enumerate}
\item\label{item:lem:prop:rankDomPTSConnect1}
$\qptsp\circ \ppts\leq \qpts$ (here $\leq$ denotes the pointwise extension over $\Setsto{(\Rptsp)}{\Rpts}$). 


\item\label{item:lem:prop:rankDomPTSConnect2} 
Let $b':X\to \Rptsp$ and assume $b'\leqRptsp \Phi_{c,\rptspe}(b')$.
As $\rpts:\Fpts\Rpts\to \Rpts$ is corecursive (Prop.~\ref{prop:rankDomPTS}),
there exists a unique function $\uniquefp{c}_{\rpts}:X\to \Rpts$ such that 
$\uniquefp{c}_{\rpts}=\Phi_{c,\rpts}(\uniquefp{c}_{\rpts})$.
For this function, 
we have
$b'\leqRptsp \ppts\circ \uniquefp{c}_{\rpts}$\,.
\end{enumerate}
\begin{displaymath}
\small
 \begin{xy}
 \xymatrix@R=1.8em@C=2.6em{
 {\Fpts  X} \ar@{}[dr]|{=} 
 \ar[r]_{\Fpts \uniquefp{c}_{\rpts}} 
  \ar@/^.8em/[rr]^(.5){\Fpts b'}
 \ar@/^2.4em/[rrr]^(.85){\Fpts\,\sem{\mu\sigmapts}_c}
  & {\Fpts \Rpts} \ar[d]_{\rpts}   \ar@/^.8em/[rr]^(.5){\Fpts\qpts} 
  \ar[r]_{\Fpts \ppts}
  & {\Fpts \Rptsp} \ar[d]_{\rptspe} \ar@{}[dr]|{\sqsubseteq} 
 \ar[r]_{\Fpts \qptsp}  & {\Fpts \Omegapts} \ar[d]_{\sigmapts} \\
 {X} \ar[u]_{c}  \ar[r]^{\uniquefp{c}_{\rpts}} \ar@/_.8em/[rr]_(.5){b'}
 \ar@/_2.2em/[rrr]_(.85){\sem{\mu\sigmapts}_c} & {\Rpts} \ar[r]^{\ppts} 
 \ar@/_.8em/[rr]_(.5){\qpts}
 & {\Rptsp} \ar[r]^{\qptsp} & {\Omegapts} 
 }
 \end{xy}
\end{displaymath}
\end{mylemma}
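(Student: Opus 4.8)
The plan is to prove the two parts separately. Part~\ref{item:lem:prop:rankDomPTSConnect1} is a direct case analysis, while Part~\ref{item:lem:prop:rankDomPTSConnect2} rests on the fact---emphasized by the middle square of the accompanying diagram---that $\ppts$ is an $\Fpts$-algebra homomorphism from $\rpts$ to $\rptspe$, combined with a monotone-convergence identification of $\ppts\circ\uniquefp{c}_{\rpts}$ as a \emph{least} fixed point.

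For Part~\ref{item:lem:prop:rankDomPTSConnect1} I would fix $\gamma\in\Rpts=\dist\Ninf$ and split on whether $\gamma$ puts mass on $\infty$. If $\gamma(\infty)>0$ then $\ppts(\gamma)=\varepsilon\sum_{a\in\supp(\gamma)}a\cdot\gamma(a)=\infty$, so $\qptsp(\ppts(\gamma))=0\leq\qpts(\gamma)$; if $\gamma(\infty)=0$ then $\qpts(\gamma)=\gamma([0,\infty))=1$, which already dominates $\qptsp(\ppts(\gamma))\in\{0,1\}$. That this is an inequality rather than an equality is exactly what records the gap between positive almost-sure and almost-sure reachability.

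For Part~\ref{item:lem:prop:rankDomPTSConnect2} I would first check by a short computation that $\ppts\circ\rpts=\rptspe\circ\Fpts\ppts$: the two composites agree on $(\Gamma,t)$ for $t=1$ (both return $0$), and for $t=0$, reindexing by $b=a-1$ gives $\sum_{a\geq 1}a\sum_{\gamma}\Gamma(\gamma)\gamma(a-1)=\sum_{\gamma}\Gamma(\gamma)\bigl(\sum_{b}b\,\gamma(b)+\sum_{b}\gamma(b)\bigr)$, whence (using total mass $1$ and the $+\varepsilon$) the two sides coincide, both becoming $\infty$ precisely when some $\gamma$ in the support carries mass at $\infty$. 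Composing this with the defining equation $\uniquefp{c}_{\rpts}=\rpts\circ\Fpts\uniquefp{c}_{\rpts}\circ c$ shows that $\ppts\circ\uniquefp{c}_{\rpts}$ is a fixed point of $\Phi_{c,\rptspe}$. Now, the hypothesis $b'\leqRptsp\Phi_{c,\rptspe}(b')$ makes $b'$ a post-fixed point for $\leqRptsp$, i.e.\ a pre-fixed point for the ordinary order on $\Rptsp=\nonnegrealsinf$ (recall $\leqRptsp$ is the reverse order). So it suffices to prove that $\ppts\circ\uniquefp{c}_{\rpts}$ is the \emph{least} fixed point of $\Phi_{c,\rptspe}$ in the ordinary order: then (KT$\mu$) of Cor.~\ref{cor:KTCCmunu}, applied to the pre-fixed point $b'$, yields $\ppts\circ\uniquefp{c}_{\rpts}\leq b'$ pointwise, which is exactly $b'\leqRptsp\ppts\circ\uniquefp{c}_{\rpts}$. (Monotonicity of $\Phi_{c,\rptspe}$ and completeness of the lattice, in either order, come from Prop.~\ref{prop:pseudoRD}.) To establish leastness I would iterate $\Phi_{c,\rptspe}$ from the ordinary bottom $0$ and prove by induction on $n$ that $\Phi_{c,\rptspe}^{n}(0)(x)=\varepsilon\sum_{a\in\Ninf}\min(a,n)\cdot\uniquefp{c}_{\rpts}(x)(a)$, using the fixed-point equation for $\uniquefp{c}_{\rpts}(x)$ and the telescoping identity $\min(b+1,n+1)=\min(b,n)+1$. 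Taking the supremum over $n$ and invoking monotone convergence for nonnegative series (so that $\sup_n\sum_a\min(a,n)\gamma(a)=\sum_a a\,\gamma(a)$) identifies $\sup_n\Phi_{c,\rptspe}^{n}(0)$ with $\ppts\circ\uniquefp{c}_{\rpts}$; by Thm.~\ref{thm:KTCC}.2 this supremum is the least ordinary fixed point.

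The main obstacle is precisely this leastness step. The tempting shortcut---transporting the approximant $\bigsqcup_n\Phi_{c,\rpts}^n(\botRpts)=\uniquefp{c}_{\rpts}$ through $\ppts$---fails, because $\ppts$ is \emph{not} continuous: a sequence of distributions can keep positive mass at $\infty$ (forcing $\ppts=\infty$ at every stage) while its $\leqRpts$-supremum has all its mass at finite values. This discontinuity is why the algebra-homomorphism fact alone delivers only the easy inequality $\sup_n\Phi_{c,\rptspe}^n(0)\leq\ppts\circ\uniquefp{c}_{\rpts}$, and why the explicit truncated-expectation induction, together with monotone convergence, is genuinely needed to obtain the reverse inequality and hence the desired bound $b'\leqRptsp\ppts\circ\uniquefp{c}_{\rpts}$.
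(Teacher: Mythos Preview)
Your proof is correct, and for Part~\ref{item:lem:prop:rankDomPTSConnect1} it essentially coincides with the paper's computation. For Part~\ref{item:lem:prop:rankDomPTSConnect2}, however, you take a genuinely different route. Both you and the paper begin with the algebra-homomorphism identity $\ppts\circ\rpts=\rptspe\circ\Fpts\ppts$ (which the paper isolates as a sublemma). From there you proceed elementarily: your explicit formula $\Phi_{c,\rptspe}^n(0)(x)=\varepsilon\sum_a\min(a,n)\,\uniquefp{c}_{\rpts}(x)(a)$ is verified by the telescoping $\min(b{+}1,n{+}1)=\min(b,n)+1$ together with the fixed-point equation for $\uniquefp{c}_{\rpts}$, and monotone convergence then identifies the $\omega$-th iterate with $\ppts\circ\uniquefp{c}_{\rpts}$, whence leastness and Knaster--Tarski finish. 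The paper instead builds a \emph{dual ranking-domain} package: it proves that $\ppts$ is monotone, top-preserving and $\omega^{\op}$-continuous, and that $\Phi_{c,\rpts}$ is $\omega^{\op}$-continuous, and then re-runs the proof of the main soundness theorem (Thm.~\ref{thm:soundnessranking}) with the order reversed, using $\uniquefp{c}_{\rpts}$ as the ``ranking arrow'' to obtain $\sem{\nu\rptspe}_c\leqRptsp\ppts\circ\uniquefp{c}_{\rpts}$; combined with $b'\leqRptsp\sem{\nu\rptspe}_c$ (Knaster--Tarski), this gives the claim.

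What each approach buys: yours is shorter and self-contained, needing only the fixed-point identity for $\uniquefp{c}_{\rpts}$ and monotone convergence for nonnegative series. The paper's approach is more structural and illustrates that the same categorical soundness machinery applies in the dual direction; in particular it pinpoints that, although you are right that $\ppts$ fails \emph{upward} $\omega$-continuity (your counterexample with shrinking mass at $\infty$ is exactly the obstruction), $\ppts$ \emph{is} $\omega^{\op}$-continuous, which is what makes the dual argument go through starting from $\topRpts=\delta_0$ rather than from $\botRpts$.
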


\begin{mysublemma}\label{sublem:1611301449}
For the function $\ppts:\Rpts\to \Rptsp$ in Lem.~\ref{lem:prop:rankDomPTSConnectD}, 
we have the following.
\begin{equation}\label{eq:keylem:prop:rankDomPTSConnect}
\ppts\circ \rpts=\rptspe\circ \Fpts \ppts
\end{equation}
\end{mysublemma}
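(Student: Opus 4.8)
\noindent\textit{Proof proposal (Sublem.~\ref{sublem:1611301449}).}
The plan is to verify the equation of arrows $\Fpts\Rpts \to \Rptsp$ pointwise, evaluating both composites at an arbitrary element $(\Gamma,t) \in \Fpts\Rpts = \dist^2\Ninf \times \{0,1\}$. Since $\Fpts = \dist(\place)\times\{0,1\}$, the functorial action unravels as $\Fpts\ppts(\Gamma,t) = (\dist\ppts(\Gamma),t)$, where $\dist\ppts(\Gamma)$ is the pushforward distribution on $\Rptsp=[0,\infty]$. The case $t=1$ is immediate: the left-hand side gives $\ppts(\rpts(\Gamma,1)) = \ppts(\delta_0) = 0$ because $\rpts(\Gamma,1)=\delta_0$, while the right-hand side gives $\rptspe(\dist\ppts(\Gamma),1)=0$ by definition. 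So the real work is the case $t=0$.

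For $t=0$, first I would introduce the mean distribution $\bar\gamma = \sum_{\gamma\in\supp(\Gamma)}\Gamma(\gamma)\cdot\gamma \in \dist\Ninf$ and record two observations. Writing $E[\cdot]$ for the (extended) expectation of a distribution over $\Ninf$ or $\Rptsp$, one has $\ppts(\gamma)=\varepsilon\cdot E[\gamma]$ directly from the definition of $\ppts$. Second, the value $\rho := \rpts(\Gamma,0)$ is precisely the one-step up-shift of $\bar\gamma$: a reindexing $a\mapsto a-1$ in the defining formula shows $\rho(a)=\bar\gamma(a-1)$ for finite $a\ge 1$, $\rho(0)=0$, and $\rho(\infty)=\bar\gamma(\infty)$ (under the convention $\infty-1=\infty$). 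With these in hand, the left-hand side is $\ppts(\rho)=\varepsilon\cdot E[\rho]$, and summing the shifted weights gives $E[\rho]=E[\bar\gamma]+(1-\bar\gamma(\infty))$. For the right-hand side I would use the change-of-variables identity $\sum_v v\cdot(\dist\ppts(\Gamma))(v)=\sum_\gamma \Gamma(\gamma)\cdot\ppts(\gamma)$, together with linearity of expectation in the form $\sum_\gamma\Gamma(\gamma)\cdot E[\gamma]=E[\bar\gamma]$ (a Tonelli/monotone-convergence interchange of non-negative sums), to obtain $\rptspe(\dist\ppts(\Gamma),0)=\varepsilon\cdot E[\bar\gamma]+\varepsilon$.

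Comparing, the two expressions differ only in the summand $\varepsilon(1-\bar\gamma(\infty))$ versus $\varepsilon$, i.e. formally by $\varepsilon\cdot\bar\gamma(\infty)$. The main obstacle, and the only genuinely delicate point, is the arithmetic of $\infty$: the apparent discrepancy vanishes exactly when $\bar\gamma(\infty)=0$, and when $\bar\gamma(\infty)>0$ both sides equal $\infty$, since then $E[\bar\gamma]=\infty$ and $\varepsilon>0$. Handling these subcases carefully—and checking that a divergent but atom-free expectation (that is, $E[\bar\gamma]=\infty$ with $\bar\gamma(\infty)=0$) also forces both sides to be $\infty$—closes the equality in $[0,\infty]$ and completes the proof. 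Everything else is routine reindexing and interchange of summation.
\qed
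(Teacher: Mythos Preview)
Your proposal is correct and follows essentially the same direct-computation approach as the paper: evaluate both sides at $(\Gamma,t)$, dispatch $t=1$ trivially, and for $t=0$ reduce each side to $\varepsilon$ times the expected value of the mixture plus $\varepsilon$. The only organizational difference is that the paper carries the sum over all of $\Ninf$ (including $a=\infty$) in a single algebraic chain, relying implicitly on identities like $(\infty+1)\cdot c=\infty\cdot c+1\cdot c$ in $[0,\infty]$, whereas you separate out the atom at $\infty$ and close with an explicit case split on $\bar\gamma(\infty)$; your treatment is arguably more careful about the extended arithmetic, but the substance is the same.
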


%
%
%
\begin{proof}
Let $(\Gamma,t)\in \Fpts \Rpts=\dist^2\Ninf\times\{0,1\}$. 
If $t=1$, then by the definitions of $\rpts$ and $\rptspe$, we have 
\begin{displaymath}
\ppts\circ \rpts(\Gamma,t)\,=\,\rptspe\circ \Fpts \ppts(\Gamma,t)\,=\,0\,.
\end{displaymath}

Let $t=0$. 
Then we have:
\allowdisplaybreaks[4]
\begin{align*}
&\ppts\circ \rpts(\Gamma,t) \\
&= \varepsilon\cdot \sum_{a\in\Ninf\setminus\{0\}} a\cdot \rpts(\Gamma,t)(a)  \hspace{-2.5cm}& (\text{by def.\ of $\ppts$}) \\
&=\varepsilon\cdot\!\!\sum_{a\in\Ninf\setminus\{0\}}\!\! a\cdot\!\!\sum_{\gamma\in\supp(\Gamma)}\!\!\Gamma(\gamma)\cdot\gamma(a-1) \hspace{-2.5cm} & (\text{by def.\ of $\rpts$}) \\
&=\varepsilon\cdot\!\!\sum_{a\in\Ninf}\!\! (a+1)\cdot\sum_{\gamma\in\supp(\Gamma)}\Gamma(\gamma)\cdot\gamma(a)  \hspace{-2.5cm}& \\
&=\varepsilon\cdot\sum_{a\in\Ninf} a\cdot\sum_{\gamma\in\supp(\Gamma)}\Gamma(\gamma)\cdot\gamma(a) \hspace{-2.5cm}& \\
&\qquad+\varepsilon\cdot\sum_{a\in\Ninf}\sum_{\gamma\in\supp(\Gamma)}\Gamma(\gamma)\cdot\gamma(a) \hspace{-2.5cm}& \\
&=\sum_{a\in\Ninf}\Bigl(  a\cdot\varepsilon\cdot\!\!\!\!\sum_{\gamma\in\supp(\Gamma)}\!\!\!\!\Gamma(\gamma)\cdot\gamma(a)\Bigr)+\varepsilon  \hspace{-.0cm}& \\
&=\!\!\!\!\sum_{\gamma\in\supp(\Gamma)}\!\!\!\!\Bigl(\Gamma(\gamma)\cdot\varepsilon\cdot\sum_{a\in\Ninf}a\cdot\gamma(a)\Bigr)+\varepsilon  \hspace{-.0cm}& \\
&=\!\!\!\!\sum_{\gamma\in\supp(\Gamma)}\!\!\!\!\Bigl(\Gamma(\gamma)\cdot\ppts(\gamma)\Bigr)+\varepsilon  \hspace{-.0cm}
&\text{(by def.\ of $\ppts$)} \\
%
%
&=\!\!\!\!\!\!\sum_{b\in\supp(\dist\ppts(\Gamma))}\!\!\!\!\!\bigl(b\cdot\! \!\!\sum_{\gamma'\in\ppts^{-1}(b)}\!\!\Gamma(\gamma') \bigr)+\varepsilon  \hspace{-2.5cm}& \\
&=\sum_{b\in\supp(\dist\ppts(\Gamma))}\bigl(b\cdot \dist\ppts(\Gamma)(b) \bigr)+\varepsilon  \hspace{-2.5cm} & \\
&=\rptspe\circ \Fpts \ppts(\Gamma,t) & \text{(by def.\ of $\rptspe$)}\,.
\end{align*}
This concludes the proof.
\end{proof}

\begin{mysublemma}\label{sublem:reversedCondSatisfied}
The $\Fpts$-algebra $\rptspe:\Fpts \Rptsp\to \Rptsp$ in Prop.~\ref{prop:rankDomPTSTradD} satisfies 
the following conditions.
\begin{enumerate}
%
\item\label{asm:behDomain4and3Dual}
The poset $(\Rptsp,\leqRptsp)$ 
is a complete lattice.
%
%
%

\item\label{asm:behDomain1and2Dual}
For each $\Fpts$-coalgebra $c:X\to \Fpts X$, the function $\Phi_{c,\rptspe}:\Setsto{X}{\Rptsp}\to\Setsto{X}{\Rptsp}$ in
Def.~\ref{def:lfpsem} is monotone with respect to the pointwise extension of $\leqRptsp$. 
\end{enumerate}
Moreover the triple 
$(\rpts:\Fpts\Rpts\to \Rpts,\ppts:\Rpts\to \Rptsp,\leqRpts)$ (see Prop.~\ref{prop:rankDomPTS} and 
Lem.~\ref{lem:prop:rankDomPTSConnectD}) satisfies 
the following conditions.
\begin{enumerate}
\setcounter{enumi}{2}

\item\label{item:def:rankingdom2Dual}
We have $\ppts\circ \rpts\geqRptsp \rptspe\circ \Fpts \ppts$ between arrows $\Fpts\Rpts\to \Rptsp$.

\item\label{item:def:rankingdom4Dual}
The following conditions are satisfied by
$\rpts$: 
\begin{enumerate}
\item\label{item:def:rankingdom40Dual}
the poset $(\Rpts,\leqRpts)$ has the greatest element $\topRpts$;

\item\label{item:def:rankingdom43Dual}
the poset $(\Rpts,\leqRpts)$ is $\omega^\op$-complete (i.e.\ for each decreasing sequence
$\varphi_0\geqRpts \varphi_1\geqRpts\cdots$ in $\Rpts$, the infimum $\bigsqcap_{i\in\omega}\varphi_i$ exists);

\item\label{item:def:rankingdom41and2NewDual}
for each $c:X\to \Fpts X$, the function $\Phi_{c,\rpts}:\Setsto{X}{(\Rpts)}\to\Setsto{X}{(\Rpts)}$ 
(Def.~\ref{def:Phicsigma}) is monotone and moreover $\omega^\op$-continuous, 
i.e.\ for each decreasing sequence $f_0\geqRpts f_1\geqRpts\cdots$ in $\Setsto{X}{(\Rpts)}$ wrt.\ the pointwise extension of $\leqRpts$, we have
$\Phi_{c,\rpts}\bigl(\bigsqcap_{i\in\omega}f_i\bigr)\;=\;\bigsqcap_{i\in\omega}\Phi_{c,\rpts}(f_i)$\,.



\end{enumerate}

\item\label{item:def:rankingdom3Dual}
The function $\ppts:\Rpts\to \Rptsp$ is monotone (i.e.\ $a\leqRpts b\,\Rightarrow \ppts(a)\leqRptsp \ppts(b)$),
top-preserving (i.e.\ $\ppts(\topRpts)=\topRptsp$) and
$\omega^\op$-continuous (i.e.\ 
each decreasing sequence $\varphi_0\geqRpts \varphi_1\geqRpts\cdots$ in $\Rpts$,
we have $\ppts(\bigsqcap_{i\in\omega}\varphi_i)=\bigsqcap_{i\in\omega}\ppts(\varphi_i)$). 
Note that in the last equality, the infimum in the left-hand side is take wrt.\ $\leqRpts$ while the one in
the right-hand side is taken wrt.\ $\leqRptsp$.

\item\label{item:def:rankingdom7Dual}
The algebra $\rpts:\Fpts\Rpts\to \Rpts$ is corecursive.
\end{enumerate}

\begin{displaymath}
 \begin{xy}
 \xymatrix@R=2.0em@C=2.6em{
 {\Fpts X} 
 \ar[r]_{\Fpts b} 
 \ar@/^.8em/[rr]^{\Fpts b'}
 & {\Fpts \Rpts} \ar[d]_{\rpts} \ar@{}[dr]|{\sqsupseteq} 
 \ar[r]_{\Fpts \ppts}  & {\Fpts \Rptsp} \ar[d]_{\rptspe} \\
 {X} \ar[u]_{c}  \ar[r]^{b} \ar@/_.8em/[rr]_{b'}  & {\Rpts} \ar[r]^{\ppts} & {\Rptsp} 
 }
 \end{xy}
\end{displaymath}
\end{mysublemma}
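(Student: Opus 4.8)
The plan is to dispatch most of the seven conditions by citing results already established, and to concentrate the real work on the two $\omega^{\op}$-continuity claims. Conditions~\ref{asm:behDomain4and3Dual} and~\ref{asm:behDomain1and2Dual} (completeness of $(\Rptsp,\leqRptsp)$ and monotonicity of $\Phi_{c,\rptspe}$) are exactly what was verified for the pseudo-ranking-domain $(\rptspe,\qptsp,\leqRptsp)$ in Prop.~\ref{prop:pseudoRD}, so I would simply appeal to it. Condition~\ref{item:def:rankingdom2Dual} is immediate: Sublem.~\ref{sublem:1611301449} gives the \emph{equality} $\ppts\circ\rpts=\rptspe\circ\Fpts\ppts$, which a fortiori yields the required inequality $\ppts\circ\rpts\geqRptsp\rptspe\circ\Fpts\ppts$. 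Condition~\ref{item:def:rankingdom7Dual} (corecursiveness of $\rpts$) is Prop.~\ref{prop:rankDomPTS}, via Lem.~\ref{lem:rCorecNewDPTSPTS}. Finally, the lattice facts in Condition~\ref{item:def:rankingdom4Dual}---existence of the greatest element $\topRpts=\delta_0$ in~\ref{item:def:rankingdom40Dual} and $\omega^{\op}$-completeness in~\ref{item:def:rankingdom43Dual}---follow from the complete-lattice structure of $(\Rpts,\leqRpts)$ proved in Lem.~\ref{lem:orderRposetDNew}, and the monotonicity half of~\ref{item:def:rankingdom41and2NewDual} is Lem.~\ref{lem:tfSeqWellDefPTSD}.

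The substance of the proof is therefore the $\omega^{\op}$-continuity of $\Phi_{c,\rpts}$ in~\ref{item:def:rankingdom41and2NewDual} and of $\ppts$ in~\ref{item:def:rankingdom3Dual}, together with the two easy facts about $\ppts$ (monotonicity and top-preservation) in Condition~\ref{item:def:rankingdom3Dual}. My main tool is the tail-sum rewriting of $\ppts$ in terms of cumulative distribution functions, namely $\ppts(\gamma)=\varepsilon\sum_{a=0}^{\infty}\bigl(1-\gamma([0,a])\bigr)$, which is the standard expectation formula for $\Ninf$-valued random variables (mass at $\infty$ sends the sum to $\infty$, consistently with $\ppts(\gamma)=\varepsilon\sum_{a}a\cdot\gamma(a)$). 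This formula expresses $\ppts$ directly through the quantities $\gamma([0,a])$ on which $\leqRpts$ is defined, so monotonicity and top-preservation are read off at once: $\gamma\leqRpts\gamma'$ means $\gamma([0,a])\le\gamma'([0,a])$ for all $a$, hence $\ppts(\gamma)\ge\ppts(\gamma')$, i.e.\ $\ppts(\gamma)\leqRptsp\ppts(\gamma')$; and $\ppts(\topRpts)=\ppts(\delta_0)=0=\topRptsp$.

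For the continuity claims I would first record that, dually to the supremum computation in the proof of Lem.~\ref{lem:orderRposetDNew}, the infimum of a $\leqRpts$-decreasing sequence $\varphi_0\geqRpts\varphi_1\geqRpts\cdots$ has cumulative distribution $(\bigsqcap_{i}\varphi_i)([0,a])=\inf_i\varphi_i([0,a])$ (this is nondecreasing in $a$, so it realises a genuine distribution by Sublem.~\ref{sublem:LSmeasDNew}), and that $\leqRptsp$-infima are ordinary suprema in $[0,\infty]$. Then $\omega^{\op}$-continuity of $\ppts$ reduces, via the tail-sum formula, to interchanging $\inf_i$ with $\sum_{a}$, i.e.\ to the identity $\sum_{a}\sup_i\bigl(1-\varphi_i([0,a])\bigr)=\sup_i\sum_{a}\bigl(1-\varphi_i([0,a])\bigr)$, which is monotone convergence for the increasing sequence $1-\varphi_i([0,a])$. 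For $\Phi_{c,\rpts}$, writing $c(x)=(\varphi,t)$, its CDF is $\Phi_{c,\rpts}(f)(x)([0,a])=\sum_{x'}\varphi(x')\,f(x')([0,a-1])$ for $t=0$ and $a\ge 1$ (the cases $a=0$ and $t=1$ being constant), so continuity reduces to interchanging $\inf_i$ with the sum $\sum_{x'}\varphi(x')(\cdots)$; since $\varphi$ is a probability distribution this is a dominated-convergence argument with summable dominating function $\varphi(x')$.

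The main obstacle---more a matter of care than of genuine difficulty---is controlling these interchanges uniformly when mass sits at $\infty$ and when the state space $X$ is infinite: the two sums ($\sum_{a}$ over $\Ninf$ and $\sum_{x'}$ over $X$) must be arranged so that the relevant monotone/dominated convergence theorems apply, and one must verify that the interchanged expressions agree as CDFs for every $a\in\mathbb{N}$ before invoking Sublem.~\ref{sublem:LSmeasDNew} to upgrade this to equality of the underlying distributions. Once these convergence interchanges are justified, all seven conditions are in place.
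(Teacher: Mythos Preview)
Your proposal is correct and matches the paper's own proof essentially step for step: the same citations dispatch conditions~\ref{asm:behDomain4and3Dual}--\ref{item:def:rankingdom2Dual}, \ref{item:def:rankingdom40Dual}--\ref{item:def:rankingdom43Dual}, the monotonicity half of~\ref{item:def:rankingdom41and2NewDual}, and~\ref{item:def:rankingdom7Dual}; and the two $\omega^{\op}$-continuity claims are handled exactly as in the paper, via the CDF description of $\leqRpts$-infima together with dominated convergence for $\Phi_{c,\rpts}$ and the tail-sum rewriting $\ppts(\gamma)=\varepsilon\sum_{a\ge 0}(1-\gamma([0,a]))$ plus monotone convergence for $\ppts$. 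The only cosmetic difference is that the paper carries out explicit case splits on whether the limiting distribution places positive mass at $\infty$, whereas you observe (correctly) that the tail-sum formula already returns $\infty$ in that case, so the split is unnecessary.
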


\begin{proof}\mbox{}\nobreak
%
\subsubsection*{\ref{asm:behDomain4and3Dual}} Easy. 

\subsubsection*{\ref{asm:behDomain1and2Dual}} 
Already proved in Prop.~\ref{prop:pseudoRD}.

\subsubsection*{\ref{item:def:rankingdom2Dual}}
Immediate from Sublem.~\ref{sublem:1611301449}.

\subsubsection*{\ref{item:def:rankingdom4Dual}} 
It is easy to see that the Dirac distribution $\delta_0$ concentrated at $0$ is the greatest element
in $(\Rpts,\leqRpts)$.
Hence Cond.~\ref{item:def:rankingdom40Dual} is satisfied.
Cond.~\ref{item:def:rankingdom43Dual} is immediate from that 
$(\Setsto{X}{(\Rpts)},\leqRpts)$ is a complete lattice (Lem.~\ref{lem:orderRposetDNew}). 

We prove that Cond.~\ref{item:def:rankingdom41and2NewDual} is satisfied.
Monotonicity of $\Phi_{c,\rpts}$ is already proved in Prop.~\ref{prop:rankDomPTS}.
We prove that $\Phi_{c,\rpts}$ is $\omega^\op$-continuous.
To this end, it suffices to prove the following equality for each $x\in X$.
\begin{equation}\label{eq:omegacontPhipts}
\Phi_{c,\rpts}\bigl(\bigsqcap_{i\in\omega}f_i\bigr)(x)\;=\;\bigsqcap_{i\in\omega}\bigl(\Phi_{c,\rpts}(f_i)(x)\bigr)
\end{equation}
Note that $\bigsqcap$ on the right-hand side denotes the infimum with respect to $\leqRpts$ while
that on the left-hand side denotes the infimum with respect to its pointwise extension.
Let 
$(\delta,t)=c(x)$.

If $t=1$, then by the definition of $\rpts$, we have 
$\Phi_{c,\rpts}(f)(x)=\delta_0$ for each $f:X\to \Rpts$.
Hence we have (\ref{eq:omegacontPhipts}).

If $t=0$, 
for each 
$a\in\mathbb{N}$, we have:
\allowdisplaybreaks[4]
\begin{align*}
& \Bigl(\bigsqcap_{i\in\omega}\bigl(\Phi_{c,\rpts}(f_i)(x)\bigr)\Bigr)([0,a])\hspace{-10cm}& \\
&= \inf_{i\in\omega}\bigl(\Phi_{c,\rpts}(f_i)(x)([0,a])\bigr) \hspace{-10cm}& (\text{see the proof of Lem.~\ref{lem:orderRposetDNew}}) \\
&= \inf_{i\in\omega}\sum_{x\in\supp(\delta)}\delta(x)\cdot \bigl(f_i(x)([0,a-1])\bigr)  \hspace{-10cm} &
  (\text{by def.\ of $\rpts$})\\
&= \sum_{x\in\supp(\delta)}\delta(x)\cdot\bigl(\inf_{i\in\omega}f_i(x)([0,a-1])\bigr) \hspace{-3.5cm}&\\
& &(\text{by the dominated convergence theorem})\\
&= \sum_{x\in\supp(\delta)}\delta(x)\cdot\bigl(\bigsqcap_{i\in\omega}f_i\bigr)(x)([0,a-1])\hspace{-10cm} &\\
& & (\text{see the proof of Lem.~\ref{lem:orderRposetDNew}}) 
\\
&= \Phi_{c,\rpts}\bigl(\bigsqcap_{i\in\omega}f_i\bigr)(x)([0,a]) \hspace{-10cm}
& (\text{by def.\ of $\Phi_{c,r_{\pts}}$})\,.
\end{align*}
By Sublem.~\ref{sublem:LSmeasDNew}, this implies 
(\ref{eq:omegacontPhipts}).

\subsubsection*{\ref{item:def:rankingdom3Dual}} 
We first prove that $\ppts:\Rpts\to\Rptsp$ is monotone. Let $\varphi_1,\varphi_2\in\Rpts$ and assume that 
$\varphi_1\leqRpts \varphi_2$.

If $\varphi_2(\infty)=C>0$, then we have $\varphi_2([0,a])\leq 1-C$ for each $a\in\mathbb{N}$.
By the definition of $\leqRpts$, this implies $\varphi_1([0,a])\leq 1-C$ for each $a$, and therefore
we have $\varphi_1(\infty)>0$.
Hence in this case, we have $\ppts(\varphi_1)=\ppts(\varphi_2)=\infty$.

Let $\varphi_2(\infty)=0$.
Then we have:
\begin{align*}
&\ppts(\varphi_2) \\
&= \varepsilon\cdot\sum_{a=1}^{\infty} a\cdot\varphi_2(a) & (\text{by def.\ of $\ppts$}) \\
&= \lim_{n\to\infty} \varepsilon\cdot\sum_{a=1}^{n} \bigl(a\cdot\varphi_2([0,a])-a\cdot\varphi_2([0,a-1])\bigr)\hspace{-10cm} & \\
&= \lim_{n\to\infty} \varepsilon\cdot\Bigl(n\cdot\varphi_2([0,n])-\sum_{a=1}^{n}1\cdot\varphi_2([0,a-1])\Bigr) \hspace{-4cm}& \\
&= \lim_{n\to\infty} \varepsilon\cdot\sum_{a=1}^{n}\varphi_2([a,n]) \hspace{-10cm}& \\
&= \varepsilon\cdot\sum_{a=1}^{\infty}\varphi_2([a,\infty]) \hspace{-10cm}& \\
&= \varepsilon\cdot\sum_{a=1}^{\infty}\bigl(1-\varphi_2([0,a-1])\bigr) \hspace{-10cm}& \\
&\leq \varepsilon\cdot\sum_{a=1}^{\infty}\bigl(1-\varphi_1([0,a-1])\bigr) \hspace{-10cm}& (\text{by $\varphi_1\leqRpts \varphi_2$}) \\
&=\ppts(\varphi_1) & (\text{as the transformation for $\varphi_2$})\,.
\end{align*}
Hence we have $\ppts(\varphi_1)\leqRptsp\ppts(\varphi_2)$ and therefore $\ppts$ is monotone.

By the definition of $\ppts$, $\ppts(\delta_0)=0$. Hence $\ppts$ is top-preserving.

We prove that $\ppts$ is $\omega^\op$-continuous.
Let $\varphi_0\geqRpts \varphi_1\geqRpts\cdots$ be a decreasing sequence in $\Rpts$.
%

Assume $\Bigl(\bigsqcap_{i\in\omega}\bigl(\varphi_i\bigr)\Bigr)(\infty)=C>0$.
Then for all $b\in\mathbb{N}$, there exists $i_b\in\omega$ such that 
$1-\varphi_i([0,b])\geq \frac{C}{2}$ (c.f.\ the proof of Lem.~\ref{lem:orderRposetDNew}).
Therefore for each $b\in\mathbb{N}$, we have:
\begin{align*}
\bigsqcap_{i\in\omega}\ppts(\varphi_i) 
&\geq\ppts(\varphi_{i_b}) \\
&= \varepsilon\cdot\sum_{a=0}^{\infty}a\cdot \varphi_i(a)  & (\text{by def.\ of $\ppts$}) \\
&\geq\varepsilon\cdot b\cdot (1-\varphi([0,b])) \\ 
&\geq\varepsilon\cdot b\cdot (1-\frac{C}{2}))\,.
\end{align*}
%
Hence we have
\begin{displaymath}
{\textstyle
\ppts\Bigl(\bigsqcap_{i\in\omega}\bigl(\varphi_i\bigr)\Bigr)\;=\;\infty\;=\;
\bigsqcap_{i\in\omega}\ppts(\varphi_i)}\,.
\end{displaymath}

Assume 
$\Bigl(\bigsqcap_{i\in\omega}\bigl(\varphi_i\bigr)\Bigr)(\infty)=0$.
Then we have:
\allowdisplaybreaks[4]
\begin{align*}
&\ppts\Bigl(\bigsqcap_{i\in\omega}\bigl(\varphi_i\bigr)\Bigr) \\
&= \varepsilon\cdot\sum_{a=0}^\infty a\cdot\Bigl(\bigsqcap_{i\in\omega}\varphi_i\Bigr)(a) \hspace{-2cm}  & (\text{by def.\ of $\ppts$}) \\
&= \varepsilon\cdot\sum_{a=1}^\infty \Bigl(1-\bigl(\bigsqcap_{i\in\omega}\varphi_i\bigr)([0,a-1])\Bigr) \hspace{-10cm}&\\
& & (\text{as in the proof of monotonicity of $\ppts$}) \\
&= \varepsilon\cdot\sum_{a=1}^\infty \Bigl(1-\inf_{i\in\omega}\bigl(\varphi_i([0,a-1])\bigr)\Bigr) \hspace{-10cm}&\\
& & (\text{see the proof of Lem.~\ref{lem:orderRposetDNew}}) \\
&= \sup_{i\in\omega}\varepsilon\cdot\sum_{a=1}^\infty \Bigl(1-\bigl(\varphi_i([0,a-1])\bigr)\Bigr) \hspace{-10cm}&\\
&= \sup_{i\in\omega}\ppts(\varphi_i) \hspace{-10cm} & (\text{as in the above}) \\
&= \bigsqcap_{i\in\omega}\ppts(\varphi_i) \hspace{-10cm} & (\text{by def.\ of $\leqRptsp$}) \\
\end{align*}
Hence we have $\omega^\op$-continuity.

\subsubsection*{\ref{item:def:rankingdom7Dual}} 
It is already proved in Lem.~\ref{lem:rCorecNewDPTSPTS}.
\end{proof}

\begin{proof}[Proof (Lem.~\ref{lem:prop:rankDomPTSConnectD})] \mbox{}
\subsubsection*{\ref{item:lem:prop:rankDomPTSConnect1}}
For each $\varphi\in \Rpts$, we have:
\begin{align*}
&\qptsp\circ \ppts(\varphi) \\
&= 
\qptsp\bigl(\sum_{a\in \mathbb{N}}a\cdot \varphi(a)\bigr)
& (\text{by def.\ of $\ppts$}) \\
%
&= \begin{cases}
1 & (\sum_{a\in \mathbb{N}}a\cdot \varphi(a)<\infty) \\
0 & (\text{otherwise})
\end{cases} & (\text{by def.\ of $\qptsp$}) \\
&\leq \varphi([0,\infty)) \\
&= \qpts(\varphi) &  (\text{by def.\ of $\qpts$})\,.
\end{align*}
Hence Cond.~\ref{item:lem:prop:rankDomPTSConnect1} is satisfied.

\subsubsection*{\ref{item:lem:prop:rankDomPTSConnect2}}
\begin{displaymath}
 \begin{xy}
 \xymatrix@R=2.0em@C=2.6em{
 {\Fpts X} \ar@{}[dr]|{=} \ar[r]_{\Fpts \uniquefp{c}_{\rpts}} 
 \ar@/^.8em/[rr]^{\Fpts\,\llbracket  \nu r'_{\pts,\varepsilon}\rrbracket_c}
 & {\Fpts \Rpts} \ar[d]_{r_{\pts}} \ar@{}[dr]|{\sqsupseteq} 
 \ar[r]_{\Fpts p}  & {\Fpts \Rptsp} \ar[d]_{r'_{\pts,\varepsilon}} \\
 {X} \ar[u]_{c}  \ar[r]^{\uniquefp{c}_{\rpts}} \ar@/_2.5em/[rr]_{b'} \ar@/_.8em/[rr]_{\llbracket \nu r'_{\pts,\varepsilon} \rrbracket_c} & {\Rpts} \ar[r]^{p} & {\Rptsp} 
 }
 \end{xy}
\end{displaymath}
By Sublem.~\ref{sublem:reversedCondSatisfied}.\ref{asm:behDomain4and3Dual}--\ref{asm:behDomain1and2Dual},
the $\Fpts$-modality $\rptspe:\Fpts\Rptsp\to \Rptsp$ satisfy dual conditions of Asm.~\ref{asm:behDomainrank}.
Hence using the dual statement of Prop.~\ref{prop:sigmaLFP}, we can show that 
$\Phi_{c,\rptspe}:\Setsto{X}{\Rptsp}\to\Setsto{X}{\Rptsp}$ has the greatest fixed point 
$\sem{\nu \rptspe}_c:X\to \Rptsp$ with respect to the pointwise extension of $\leqRptsp$.
By the Knaster-Tarski theorem, we have
\begin{equation}\label{eq:1612011458}
b'\leqRptsp \sem{\nu\rptspe}_c\,.
\end{equation}

By Sublem.~\ref{sublem:reversedCondSatisfied}.\ref{item:def:rankingdom2Dual}--\ref{item:def:rankingdom7Dual},
the triple $(\rpts:\Fpts\Rpts\to \Rpts,\ppts:\Rpts\to \Rptsp,\leqRptsp)$ satisfies the dual conditions of 
the axioms of a ranking domain (Def.~\ref{def:rankingdom}), except that the length of a transfinite sequence in
$\Setsto{X}{(\Rpts)}$
 is restricted to $\omega$.
Note here that $\uniquefp{c}_{\rpts}:X\to \Rpts$ satisfies $\uniquefp{c}_{\rpts}\geqRpts \Phi_{c,\rpts}(\uniquefp{c}_{\rpts})$ by its definition.
Therefore in a similar manner to the proof of Thm.~\ref{thm:soundnessranking},
we can prove 
\begin{equation}\label{eq:1612011457}
\sem{\nu \rptspe}_c \leqRptsp \ppts\circ \uniquefp{c}_{\rpts}\,.
\end{equation}
By (\ref{eq:1612011457}) and (\ref{eq:1612011458}),
we have $b' \leqRptsp \ppts\circ \uniquefp{c}_r$.
\end{proof}

\begin{proof}[Proof (Prop.~\ref{prop:rankDomPTSConnect})]
Immediate from Lem.~\ref{lem:prop:rankDomPTSConnectD}.
\auxproof{
We have:
\begin{align*}
&\exists b':X\to \Rptsp.\; b'\leqRptsp \Phi_{c,\rptspe}(b')\hspace{-.5cm} \\
&\Rightarrow\; \exists b:X\to \Rpts.\; \\
&\qquad\quad b=\Phi_{c,\rpts}(b),\, \;\;\text{and}\;\; \\
&\qquad\quad 
\qptsp\circ b'\leqRpts \ppts\circ \uniquefp{c}_{\rpts}
& (\text{by Lem.~\ref{lem:prop:rankDomPTSConnectD}.\ref{item:lem:prop:rankDomPTSConnect2}})\\
&\Rightarrow\; \exists b:X\to \Rpts.\; \\
&\qquad\quad b=\Phi_{c,\rpts}(b),\, \;\;\text{and}\;\; \\
&\qquad\quad 
\qptsp\circ b'\leq \qpts\circ b
& (\text{by Lem.~\ref{lem:prop:rankDomPTSConnectD}.\ref{item:lem:prop:rankDomPTSConnect1}})\\
&\Rightarrow\; \exists b:X\to \Rpts.\; \\
&\qquad\quad b\leqRpts \Phi_{c,\rpts}(b),\, \;\;\text{and}\;\; \\
&\qquad\quad \forall x\in X.\, (\qptsp\circ b'(x)=1\,\Rightarrow\, \qpts\circ b(x)=1)\,. \hspace{-2.5cm}&
\qed
\end{align*}
}
%
\end{proof}

We can now prove the soundness of additive ranking supermartingale (Thm.~\ref{thm:soundnessRankConv}) 
using our categorical framework as follows.
\begin{align*}
&\exists b'.\,
\text{$b'$ is an $\varepsilon$-additive ranking supermartingale}
\;\text{and}\; b'(x)\neq\infty\\ 
&\overset{\text{Prop.~\ref{prop:rankDomPTSTradD}}}{\Leftrightarrow} 
\exists b'.\,
b'\leqRptsp \Phi_{c,\rptspe}(b')
\text{ and } 
q'_{\pts}\circ b'(x)=1 \\
& \overset{\text{Prop.~\ref{prop:rankDomPTSConnect}}}{\Rightarrow} 
 \exists b.\,\text{$b$ is a ranking arrow wrt.\ $(\rpts,\qpts,\leqRpts)$},\;\text{and}\; \\
& \qquad\qquad\qquad\qquad\qquad\qquad\qquad\qquad\qquad q_{\pts}\circ b(x)=1 \\
& \overset{\text{Thm.~\ref{thm:soundnessranking}}}{\Rightarrow} \llbracket\mu\sigmapts\rrbracket_{c^{G,\Acc}} (x)=1 \\
&\overset{\text{Prop.~\ref{prop:constBehSituationPTS}}}{\Leftrightarrow}  \Reach_{M,\Acc}(x)=1
\end{align*}

\vspace{0.5\baselineskip}
\subsection{\textbf{Soundness of Multiplicative Ranking Supermartingale, Categorically}}\label{subsec:soundMRSapp}

\begin{myproposition}\label{prop:rankDomPTSConnectMult}
We define triples $(\rpts,\qpts,\leqRpts)$ and
$(\rptspa,\qptsp,\leqRptspm)$ as in Prop.~\ref{prop:rankDomPTS} and Prop.~\ref{prop:rankDomPTSTradDMult} respectively.
Let $c:X\to \Fpts X$ be an $\Fpts$-coalgebra. 
Then we have:
\begin{align*}
&\exists b'':X\to \Rptspm.\; b''\leqRptspm \Phi_{c,\rptspa}(b'') \\
&\quad\Rightarrow\; \exists b:X\to \Rpts .\; \\
&\qquad\qquad\qquad b \leqRpts \Phi_{c,\rpts}(b),\;\;\text{and}\; \\
&\qquad\qquad\qquad \forall x\in X.\; (q'_{\pts}\circ b'(x)=1\,\Rightarrow q_{\pts}\circ b(x)=1)\,.
\end{align*}
\end{myproposition}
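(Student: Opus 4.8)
The plan is to follow exactly the route of Prop.~\ref{prop:rankDomPTSConnect}: exhibit a ``bridge'' map from the distribution-valued domain $\Rpts$ into the multiplicative pseudo-domain $\Rptspm$ that is a homomorphism from $\rpts$ to $\rptspa$, and then transport the assumed post-fixed point of $\Phi_{c,\rptspa}$ onto the fixed point $\uniquefp{c}_{\rpts}$ of $\Phi_{c,\rpts}$ by a dual-soundness argument in the spirit of Lem.~\ref{lem:prop:rankDomPTSConnectD}. Concretely, fixing $\alpha\in(0,1)$ and $\delta>0$ as in Prop.~\ref{prop:rankDomPTSTradDMult}, I would define $p\colon \Rpts\to\Rptspm$ by
\[
 p(\gamma)\;=\;\alpha\delta\cdot{\textstyle\sum_{a\in\supp(\gamma)}}\alpha^{-a}\cdot\gamma(a),
\]
with the convention $\alpha^{-\infty}=\infty$ (so $p(\gamma)=\infty$ whenever $\gamma(\infty)>0$). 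Since $\alpha^{-a}\geq 1$ for every $a$ and $\gamma$ is a distribution, $p(\gamma)\geq \alpha\delta$, so $p$ indeed lands in $\Rptspm=[\alpha\delta,\infty]$; note also $p(\delta_0)=\alpha\delta$. The geometric weighting $\alpha^{-a}$ here plays the role the linear weight $a$ played in the additive bridge $\ppts$.

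The first substantive step---the multiplicative analogue of Sublem.~\ref{sublem:1611301449}---is the identity $p\circ\rpts=\rptspa\circ\Fpts p$. For $t=1$ both sides equal $\alpha\delta$, using $\rpts(\Gamma,1)=\delta_0$ and $p(\delta_0)=\alpha\delta$. For $t=0$ one computes, using the one-step shift built into $\rpts$, that $p(\rpts(\Gamma,0))=\alpha\delta\sum_{a\geq 1}\alpha^{-a}\sum_{\gamma}\Gamma(\gamma)\gamma(a-1)=\alpha^{-1}\sum_{\gamma}\Gamma(\gamma)\,p(\gamma)=\rptspa(\dist p(\Gamma),0)$, the factor $\alpha^{-1}$ arising precisely from re-indexing $a\mapsto a-1$ against the geometric weight. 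This is exactly what makes the multiplicative step $\tfrac1\alpha(\,\cdot\,)$ match the shift of $\rpts$, just as the constant weight matched the additive step $(\cdot)+\varepsilon$.

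Next I would reprove, for $p$ and $\rptspa$, the dual order-theoretic package of Sublem.~\ref{sublem:reversedCondSatisfied}: that $(\Rptspm,\leqRptspm)$ is a complete lattice and $\Phi_{c,\rptspa}$ is monotone (already in Prop.~\ref{prop:pseudoRDMult}); that $\rpts$ is corecursive, $\Rpts$ is $\omega^{\op}$-complete with greatest element $\delta_0$, and $\Phi_{c,\rpts}$ is monotone and $\omega^{\op}$-continuous (these carry over unchanged from Prop.~\ref{prop:rankDomPTS} and Sublem.~\ref{sublem:reversedCondSatisfied}); and that $p$ is monotone, top-preserving ($p(\delta_0)=\alpha\delta$, the $\leqRptspm$-greatest element) and $\omega^{\op}$-continuous. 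With these in hand the proof closes as in Lem.~\ref{lem:prop:rankDomPTSConnectD}: treating $\rptspa$ as a dual modality, a dual of Prop.~\ref{prop:sigmaLFP} yields a greatest fixed point $\sem{\nu\rptspa}_c$, and Knaster--Tarski gives $b''\leqRptspm\sem{\nu\rptspa}_c$ for the post-fixed point $b''$; the dual of Thm.~\ref{thm:soundnessranking} applied to $(\rpts,p,\leqRptspm)$ and the (pre-fixed) point $\uniquefp{c}_{\rpts}$ gives $\sem{\nu\rptspa}_c\leqRptspm p\circ\uniquefp{c}_{\rpts}$; hence $b''\leqRptspm p\circ\uniquefp{c}_{\rpts}$. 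Setting $b=\uniquefp{c}_{\rpts}$ gives $b\leqRpts\Phi_{c,\rpts}(b)$ (it is a genuine fixed point), and the implication on $q$-values follows from $\qptsp\circ p\leq\qpts$: if $q'_\pts\circ b''(x)=1$ then $b''(x)<\infty$, so $p(b(x))\leq b''(x)<\infty$ (recall $\leqRptspm$ reverses the real order), whence $q_\pts(b(x))=1$.

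The main obstacle I anticipate is not the architecture---which is forced by the additive template---but the bookkeeping of infinite sums and the $\infty$ boundary in the identity $p\circ\rpts=\rptspa\circ\Fpts p$ and in the $\omega^{\op}$-continuity of $p$. One must justify interchanging $\sum_a$ with $\sum_\gamma$ (monotone convergence), handle the case $\gamma(\infty)>0$ separately, and re-run the telescoping argument used for $\ppts$ in Sublem.~\ref{sublem:reversedCondSatisfied}; the geometric factor $\alpha^{-a}$ makes these estimates slightly more delicate than the linear factor of the additive case, but the same convergence arguments apply.
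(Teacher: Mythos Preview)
Your approach is correct, but the paper takes a different and considerably shorter route. Rather than building a fresh bridge $p:\Rpts\to\Rptspm$ with geometric weights and re-running the entire dual-soundness package of Sublem.~\ref{sublem:reversedCondSatisfied}, the paper factors through the additive case: it defines a map $\pptsp:\Rptspm\to\Rptsp$ by $\pptsp(a)=\varepsilon\bigl((\log_{1/\alpha}(a/\delta))+1\bigr)$ and shows, using concavity of the logarithm (Jensen's inequality), that $\pptsp\circ\rptspa\leqRptsp\rptspe\circ\Fpts\pptsp$. This turns any multiplicative post-fixed point $b''$ into an additive post-fixed point $\pptsp\circ b''$, after which Prop.~\ref{prop:rankDomPTSConnect} applies verbatim. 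The whole proof is then one line.

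What each approach buys: the paper's route is economical---it reuses all of the heavy lifting already done for the additive case and only adds a short lax-homomorphism lemma, at the cost of introducing an intermediate domain. Your route is more self-contained and yields a \emph{strict} homomorphism $p\circ\rpts=\rptspa\circ\Fpts p$ (not merely an inequality), which is structurally cleaner and parallels Sublem.~\ref{sublem:1611301449} exactly; but it obliges you to reprove monotonicity and $\omega^{\op}$-continuity of the new bridge $p$, which---as you note---requires a telescoping identity of the form $p(\varphi)/(\alpha\delta)=1+(\alpha^{-1}-1)\sum_{k\geq 0}\alpha^{-k}(1-\varphi([0,k]))$ and a monotone-convergence argument. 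Both are valid; the paper's is faster, yours is more direct.
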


\noindent This proposition is an immediate corollary of Prop.~\ref{prop:rankDomPTSConnect} and the following lemma.

\begin{mylemma}\label{lem:rankDomPTSConnectMult2}
We define triples $(\rptspe,\qptsp,\leqRptsp)$ and
$(\rptspa,\qptsp,\leqRptsp)$ as in Prop.~\ref{prop:rankDomPTSTradD} and Prop.~\ref{prop:rankDomPTSTradDMult} respectively.
We define a function $\pptsp:\Rptspm\to\Rptsp$ as follows.
\begin{equation*}
\pptsp(a)=\varepsilon\cdot \Biggl(\Bigl(\log_{\frac{1}{\alpha}}\frac{a}{\delta}\Bigr)+1\Biggr)
\end{equation*}
Let $c:X\to \Fpts X$ be an $\Fpts$-coalgebra. 
Then we have:
\begin{align*}
&\exists b'':X\to \Rptspm.\; b''\leqRptspm \Phi_{c,\rptspa}(b'') \\
&\quad \Rightarrow\; \pptsp\circ b'' \leqRptsp \Phi_{c,\rptspe}(\pptsp\circ b'')
\end{align*}
\end{mylemma}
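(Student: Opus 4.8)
The plan is to show that $\pptsp$ converts the $\alpha$-multiplicative supermartingale inequality into the $\varepsilon$-additive one, the underlying mechanism being that $\pptsp$ is (up to the affine reparametrisation by $\varepsilon$ and $\delta$) a logarithm in base $1/\alpha$, and logarithms turn the multiplicative contraction into an additive decrease. First I would unfold both the hypothesis and the goal into pointwise inequalities in the \emph{standard} order of $[0,\infty]$, keeping in mind that $\leqRptspm$ and $\leqRptsp$ are both the reversed orders. Writing $c(x)=(\tau(x),t)$, the hypothesis $b''\leqRptspm\Phi_{c,\rptspa}(b'')$ unfolds, by the definitions of $\Fpts$ and $\rptspa$, to $b''(x)\geq\alpha\delta$ when $t=1$ (automatic, since $b''$ takes values in $[\alpha\delta,\infty]$) and to
\[
\alpha\cdot b''(x)\;\geq\;\sum_{x'\in\supp(\tau(x))}\tau(x)(x')\cdot b''(x')
\]
when $t=0$. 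Dually, the goal $\pptsp\circ b''\leqRptsp\Phi_{c,\rptspe}(\pptsp\circ b'')$ reduces to $\pptsp(b''(x))\geq 0$ for $t=1$ (automatic, as $\log_{1/\alpha}(a/\delta)\geq\log_{1/\alpha}\alpha=-1$ shows $\pptsp$ maps $[\alpha\delta,\infty]$ into $[0,\infty]=\Rptsp$) and to
\[
\pptsp(b''(x))\;\geq\;\Bigl(\sum_{x'\in\supp(\tau(x))}\tau(x)(x')\cdot\pptsp(b''(x'))\Bigr)+\varepsilon
\]
when $t=0$. Thus the whole lemma reduces to the single $t=0$ inequality.

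Second, I would substitute $\pptsp(a)=\varepsilon\log_{1/\alpha}(a/\delta)+\varepsilon$ into the $t=0$ goal. Using $\sum_{x'}\tau(x)(x')=1$ to evaluate the constant contribution on the right and dividing through by $\varepsilon>0$, the goal simplifies to
\[
\log_{1/\alpha}\!\bigl(b''(x)/\delta\bigr)\;\geq\;\Bigl(\sum_{x'}\tau(x)(x')\,\log_{1/\alpha}\!\bigl(b''(x')/\delta\bigr)\Bigr)+1 .
\]
This is the clean form that isolates exactly what the logarithm must buy us.

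Third --- and this is the crux --- I would derive this inequality by combining concavity of $\log_{1/\alpha}$ (valid since $1/\alpha>1$) with the multiplicative hypothesis. Jensen's inequality applied to the probability distribution $\tau(x)$ gives
\[
\sum_{x'}\tau(x)(x')\,\log_{1/\alpha}\!\bigl(b''(x')/\delta\bigr)\;\leq\;\log_{1/\alpha}\!\Bigl(\tfrac{1}{\delta}\textstyle\sum_{x'}\tau(x)(x')\,b''(x')\Bigr).
\]
Monotonicity of $\log_{1/\alpha}$ together with the hypothesis $\sum_{x'}\tau(x)(x')b''(x')\leq\alpha\,b''(x)$ then bounds the right-hand side by $\log_{1/\alpha}\!\bigl(\alpha\,b''(x)/\delta\bigr)=\log_{1/\alpha}(\alpha)+\log_{1/\alpha}\!\bigl(b''(x)/\delta\bigr)$, and the identity $\log_{1/\alpha}(\alpha)=-1$ yields precisely the claimed bound. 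This establishes the $t=0$ inequality and hence the lemma.

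The main obstacle will be the careful treatment of the infinite values and the possibly countable sums, rather than the algebra. I must check that Jensen's inequality is legitimate for the (possibly countably supported) distribution $\tau(x)$ against the concave function $\log_{1/\alpha}$, and that the degenerate cases are consistent: if some $b''(x')=\infty$ with $\tau(x)(x')>0$, the hypothesis forces $\sum_{x'}\tau(x)(x')b''(x')=\infty\leq\alpha\,b''(x)$, hence $b''(x)=\infty$, so both sides of the additive inequality equal $\infty$ and it holds trivially. Once these edge cases are dispatched, the finite case is exactly the elementary change-of-base computation above, so the proof is complete.
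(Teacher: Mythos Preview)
Your proof is correct and uses essentially the same mechanism as the paper: concavity of $\log_{1/\alpha}$ (Jensen's inequality) turns the multiplicative contraction into the additive decrease, with the edge cases handled as you describe. The paper organises the computation slightly differently, first isolating as a sublemma the pointwise algebra inequality $\pptsp\circ\rptspa \geq \rptspe\circ\Fpts\pptsp$ (in the standard order) and then combining it with monotonicity of $\pptsp$ and the hypothesis via a two-line diagram chase, whereas you inline these steps into a single direct calculation; the mathematical content is the same.
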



\noindent
\begin{minipage}{0.6\hsize}
\begin{mysublemma}\label{sublem:rankDomPTSConnectMult3}
For the function $\pptsp:\Rptspm\to\Rptsp$ in Lem.~\ref{lem:rankDomPTSConnectMult2},
we have $\pptsp\circ \rptspa\leqRptsp \rptspe\circ \Fpts\pptsp$\,.
\end{mysublemma}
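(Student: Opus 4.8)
The plan is to verify the claimed inequality pointwise on $\Fpts\Rptspm=\dist\Rptspm\times\{0,1\}$. Since $\leqRptsp$ is the \emph{reverse} of the usual order on $[0,\infty]$, the statement $\pptsp\circ\rptspa\leqRptsp\rptspe\circ\Fpts\pptsp$ unfolds, for each $(\psi,t)\in\dist\Rptspm\times\{0,1\}$, into the ordinary numerical inequality
\[
\pptsp\bigl(\rptspa(\psi,t)\bigr)\;\geq\;\rptspe\bigl(\Fpts\pptsp(\psi,t)\bigr).
\]
First I would dispose of the accepting case $t=1$, which is immediate: there $\rptspa(\psi,1)=\alpha\delta$ and $\rptspe(\,\cdot\,,1)=0$, and because $\log_{1/\alpha}\frac{\alpha\delta}{\delta}=\log_{1/\alpha}\alpha=-1$ we get $\pptsp(\alpha\delta)=\varepsilon(-1+1)=0$; both sides vanish and equality holds.

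For the non-accepting case $t=0$ I would abbreviate the mean of $\psi$ by $E=\sum_{a\in\supp(\psi)}a\cdot\psi(a)$ and expand both sides. On the left, $\rptspa(\psi,0)=E/\alpha$, and using $\log_{1/\alpha}\frac{1}{\alpha}=1$ one computes $\pptsp(E/\alpha)=\varepsilon\bigl(\log_{1/\alpha}\frac{E}{\delta}+2\bigr)$. On the right, $\Fpts\pptsp(\psi,0)=(\dist\pptsp(\psi),0)$, and since the pushforward mean obeys $\sum_{b}b\cdot\dist\pptsp(\psi)(b)=\sum_{a\in\supp(\psi)}\pptsp(a)\cdot\psi(a)$ together with $\sum_{a}\psi(a)=1$, one obtains $\rptspe(\dist\pptsp(\psi),0)=\varepsilon\bigl(\sum_{a\in\supp(\psi)}\psi(a)\log_{1/\alpha}\frac{a}{\delta}+2\bigr)$. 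Cancelling the common factor $\varepsilon$ and the additive constant $2$, the entire statement reduces to
\[
\log_{1/\alpha}\frac{E}{\delta}\;\geq\;\sum_{a\in\supp(\psi)}\psi(a)\,\log_{1/\alpha}\frac{a}{\delta}.
\]

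The crux—and the only step that is not bookkeeping—is this last inequality, which I would obtain as an instance of \emph{Jensen's inequality}. Because $\alpha\in(0,1)$ we have $1/\alpha>1$, so $\log_{1/\alpha}$ is a positive scalar multiple of $\ln$ and hence \emph{concave}; applying Jensen to the random variable $a/\delta$ distributed according to $\psi$ yields $\log_{1/\alpha}\bigl(\sum_{a}\psi(a)\frac{a}{\delta}\bigr)\geq\sum_{a}\psi(a)\log_{1/\alpha}\frac{a}{\delta}$, and by linearity of expectation the argument of the outer logarithm is exactly $E/\delta$. The only care needed is at the boundary of $\Rptspm=[\alpha\delta,\infty]$: each $a/\delta\geq\alpha>0$ keeps the logarithm well defined, and the degenerate case where $\psi$ assigns positive mass to $\infty$ is handled separately, since there both sides evaluate to $\infty$ and the inequality is trivial. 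I expect no difficulty beyond invoking concavity correctly and keeping track of these edge conventions.
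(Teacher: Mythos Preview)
Your proposal is correct and follows essentially the same route as the paper's proof: a case split on $t$, direct computation for $t=1$, and for $t=0$ unwinding both sides to reduce the claim to Jensen's inequality for the concave function $\log_{1/\alpha}$ (what the paper phrases as ``$\log_{1/\alpha}(\place)$ is an upward convex function''). Your explicit treatment of the $\infty$-mass edge case is a small addition the paper leaves implicit, but otherwise the arguments coincide.
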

\end{minipage}
\begin{minipage}{0.4\hsize}
\scriptsize
 \begin{xy}
 \xymatrix@R=1.2em@C=1.4em{
 {F \Rptspm} \ar@{}[dr]|{\sqsubseteq} \ar[d]_{\rptspa} \ar@{->}[r]^{F \pptsp} 
 & {F \Rptsp}  \ar[d]_{\rptspe} \\
 {\Rptspm}   \ar@{->}[r]_{\pptsp}
  & {\Rptsp}} 
 \end{xy}
\end{minipage}

\begin{proof}
Let $(\psi,t)\in\Fpts\Rptspm=\dist\Rptspm\times\{0,1\}$.
By the definition of $\leqRptsp$, it suffices to prove 
$\pptsp\circ \rptspa(\psi,t)\geq \rptspe\circ \Fpts\pptsp(\psi,t)$.

If $t=1$, then we have:
\begin{align*}
&\pptsp\circ \rptspa(\psi,t) \\
&= \pptsp(\alpha\delta) & (\text{by def.\ of $\rptspa$})\\
&= 0 & (\text{by def.\ of $\pptsp$}) \\
&= \rptspe(\dist\pptsp(\psi),t) & (\text{by def.\ of $\rptspe$}) \\
&= \rptspe\circ \Fpts\pptsp(\psi,t)\,. &
\end{align*}

Let $t=0$. Note that $\sum_{a\in\supp(\psi)}\psi(a)=1$.
Hence we have:
\begin{align*}
&\pptsp\circ \rptspa(\psi,t) \\
&= \pptsp\Bigl(\frac{1}{\alpha}\cdot\bigl(\sum_{a\in\supp(\psi)}a\cdot \psi(a) \bigr)\Bigr) & (\text{by def.\ of $\rptspa$})\\
&= \varepsilon\cdot\Biggl(\log_{\frac{1}{\alpha}}\frac{1}{\delta}\Bigl(\frac{1}{\alpha}\cdot\bigl(\sum_{a\in\supp(\psi)}a\cdot \psi(a) \bigr)\Bigr)+1\Biggr) \hspace{-2.5cm}& \\
& & (\text{by def.\ of $\pptsp$})\\
&= \varepsilon\cdot\Biggl(\Bigl(\log_{\frac{1}{\alpha}}\bigl(\sum_{a\in\supp(\psi)}\frac{a}{\delta}\cdot \psi(a) \bigr)\Bigr)+2 \Biggr)\hspace{-2cm}& \\
&\geq \varepsilon\cdot\Biggl(\Bigl(\sum_{a\in\supp(\psi)}(\log_{\frac{1}{\alpha}}\frac{a}{\delta})\cdot \psi(a)\Bigr)+2\Biggr)\hspace{-2cm} & \\
& &\hspace{-3.8cm}(\text{$\log_{\frac{1}{\alpha}}(\place)$ is an upward convex function}) \\
&= \Biggl(\sum_{a\in\supp(\psi)}\Bigl(\varepsilon\cdot\bigl((\log_{\frac{1}{\alpha}}\frac{a}{\delta})+1\bigr)\Bigr)\cdot \psi(a)\Biggr)+\varepsilon \hspace{-2.5cm}& \\
&= \rptspe\circ \Fpts\pptsp(\varphi,t) & (\text{by def.\ of $\pptsp$ and $\rptspe$})\,.
\end{align*}
This concludes the proof.
\end{proof}

\begin{proof}[Proof (Lem.~\ref{lem:rankDomPTSConnectMult2})]
\begin{equation*}
 \begin{xy}
 \xymatrix@R=1.6em@C=1.8em{
{FX}  \ar@{}[dr]|{\rotatebox{90}{$\sqsubseteq$}} \ar@{->}[r]^{F b''} &
 {F \Rptspm} \ar@{}[dr]|{\sqsubseteq} \ar[d]|{\smash{\rptspa}\raisebox{-1.5mm}{\vphantom{0.5mm}}} \ar@{->}[r]^{F \pptsp} 
 & {F \Rptsp}  \ar[d]_{\rptspe} \\
 {X} \ar[r]_{b''} \ar[u]_{c} &
 {\Rptspm}   \ar@{->}[r]_{\pptsp}
  & {\Rptsp}} 
 \end{xy}
 \end{equation*}
Let $b'':X\to \Rptspm$ and assume that  $b''\leqRptspm \Phi_{c,\rptspa}(b'')$.
Then we have:
\begin{align*}
&\pptsp\circ b'' \\
&\leqRptsp \pptsp\circ \rptspa\circ \Fpts b''\circ c & (\text{by the assumption})\\
&\leqRptsp \rptspe\circ \Fpts\pptsp\circ \Fpts b''\circ c & (\text{by Sublem.~\ref{sublem:rankDomPTSConnectMult3}})\\
&= \Phi_{c,\rptspe}(\pptsp\circ b'')  & (\text{by definition}) \,.
\end{align*}
This concludes the proof.
\end{proof}

\begin{proof}[Proof (Prop.~\ref{prop:rankDomPTSConnectMult})]
Immediate from Lem.~\ref{lem:rankDomPTSConnectMult2} and Prop.~\ref{prop:rankDomPTSConnect}.
\end{proof}

We can now prove the soundness of multiplicative ranking supermartingale 
using our categorical framework as follows.
\begin{align*}
&\exists b''.\,
\text{$b''$ is an $\alpha$-multiplicative ranking supermartingale},\;\text{and}\; \\
& \qquad\qquad\qquad\qquad\qquad\qquad\qquad\qquad\qquad b''(x)\neq\infty\\ 
&\overset{\text{Prop.~\ref{prop:rankDomPTSTradDMult}}}{\Leftrightarrow} 
\exists b'.\,
b'\leqRptsp \Phi_{c,\rptspe}(b')
\text{ and } 
q'_{\pts}\circ b'(x)=1 \\
& \overset{\text{Prop.~\ref{prop:rankDomPTSConnect}}}{\Rightarrow} 
 \exists b.\,\text{$b$ is a ranking arrow wrt.\ $(\rpts,\qpts,\leqRpts)$},\;\text{and}\; \\
& \qquad\qquad\qquad\qquad\qquad\qquad\qquad\qquad\qquad q_{\pts}\circ b(x)=1 \\
& \overset{\text{Thm.~\ref{thm:soundnessranking}}}{\Rightarrow} \llbracket\mu\sigmapts\rrbracket_{c^{G,\Acc}} (x)=1 \\
&\overset{\text{Prop.~\ref{prop:constBehSituationPTS}}}{\Leftrightarrow}  \Reach_{M,\Acc}(x)=1
\end{align*}

%

\vspace{0.5\baselineskip}
\subsection{\textbf{Proof of Prop.~\ref{prop:weakrankDomPTS}}}
We prove that $\rptswg$ is a corecursive algebra separately.

\begin{mylemma}\label{lem:tfSeqWellDefPTSDWeak}
The algebra $\rptswg:\Fpts \Rptsw\to \Rptsw$ in Prop~\ref{prop:weakrankDomPTS} 
satisfies 
Cond.~\ref{item:def:rankingdom4} in Def.~\ref{def:rankingdom}.
\end{mylemma}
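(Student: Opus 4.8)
The plan is to verify separately the two sub-conditions that make up Cond.~\ref{item:def:rankingdom4}, exactly mirroring the earlier proofs of Lem.~\ref{lem:tfSeqWellDefTPG} and Lem.~\ref{lem:tfSeqWellDefPTSD}. The first sub-condition, Cond.~\ref{item:def:rankingdom40and3}, asks that $(\Rptsw,\leqptsw)=([0,1],\leq)$ be a complete lattice; this is immediate, since every subset of the closed interval $[0,1]$ has both a supremum and an infimum lying in $[0,1]$ (with $\sup\emptyset=0$ and $\inf\emptyset=1$). It follows at once that the pointwise order on $\Setsto{X}{\Rptsw}$ is a complete lattice as well.

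The substance of the lemma is Cond.~\ref{item:def:rankingdom41and2}: for each coalgebra $c:X\to\Fpts X$ the map $\Phi_{c,\rptswg}(f)=\rptswg\circ\Fpts f\circ c$ is monotone in $f$. First I would unfold $\Phi_{c,\rptswg}$ into a concrete formula. Writing $c(x)=(\varphi,t)\in\dist X\times\{0,1\}$ and using $\Fpts f=\dist f\times\{0,1\}$ together with the pushforward formula for $\dist f$, the weighting $\sum_a a\cdot(\dist f(\varphi))(a)$ collapses to the expectation $\sum_{x'\in\supp(\varphi)}\varphi(x')\cdot f(x')$ (the summands with value $a=0$ contribute nothing). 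Hence
\begin{displaymath}
\Phi_{c,\rptswg}(f)(x)=
\begin{cases}
1 & (t=1)\\
\gamma\cdot\sum_{x'\in\supp(\varphi)}\varphi(x')\cdot f(x') & (t=0)\,.
\end{cases}
\end{displaymath}
Monotonicity is then read off directly: if $f_1\leqptsw f_2$ pointwise, the case $t=1$ gives equality, while in the case $t=0$ the non-negativity of $\gamma$ and of each weight $\varphi(x')$ yields $\gamma\sum_{x'}\varphi(x')f_1(x')\leq\gamma\sum_{x'}\varphi(x')f_2(x')$, so $\Phi_{c,\rptswg}(f_1)\leqptsw\Phi_{c,\rptswg}(f_2)$.

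There is no genuine obstacle here; the statement is a routine verification whose only delicate point is that $\supp(\varphi)$ may be countably infinite, so the sums above are infinite series. I would dispose of this by noting that each such series has non-negative terms and is dominated by $\sum_{x'}\varphi(x')=1$, hence converges absolutely and is monotone in $f$ termwise — precisely what the argument requires. This is exactly the same bookkeeping used in Lem.~\ref{lem:tfSeqWellDefPTSD}, so no new technique is needed.
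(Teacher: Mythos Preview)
Your proposal is correct and follows essentially the same approach as the paper's proof: verify that $([0,1],\leq)$ is a complete lattice, then unfold $\Phi_{c,\rptswg}$ into the explicit expectation formula and read off monotonicity termwise. Your version is in fact slightly more careful than the paper's, which omits the factor $\gamma$ in the displayed computation and writes ``$=$'' where ``$\leq$'' is meant in the step justified by $b_1\leqRptsw b_2$.
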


\begin{proof}
%
It is easy to see that Cond.~\ref{item:def:rankingdom40and3} is satisfied.

We prove that Cond.~\ref{item:def:rankingdom41and2} is satisfied.
Let $b_1,b_2:X\to \Rptsw$ and assume that $b_1\leqRptsw b_2$.
Let $x\in X$ and assume that $c(x)=(\delta,t)\in \Fpts X=\dist X\times\{0,1\}$.
%
%
Then we have:
\begin{align*}
&\rptswg\circ \Fpts b_1(\delta,t) \\
&= \begin{cases}
1 & (t=1) \\
\sum_{x\in X} \delta(x)\cdot b_1(x)
& (t=0)
\end{cases} \hspace{-1cm}& \\
& &(\text{by def.\ of $\Fpts$ and $\rptswg$}) \\
&= \begin{cases}
1 & (t=1) \\
\sum_{x\in X} \delta(x)\cdot b_2(x)
& (t=0)
\end{cases} \hspace{-1cm}& (\text{by $b_1\leqRptsw b_2$}) \\
&= \rptswg\circ \Fpts b_2(\delta,t) &(\text{by def.\ of $\rptswg$ and $\Fpts$})\,.
\end{align*}
Therefore Cond.~\ref{item:def:rankingdom41and2} is satisfied.
%
\end{proof}

\begin{mylemma}\label{lem:corecAlgProb}
The algebra $\rptswg:\Fpts\Rptsw\to \Rptsw$ in Prop.~\ref{prop:weakrankDomPTS} is corecursive.
\end{mylemma}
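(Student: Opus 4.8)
The plan is to mimic the structure of Lemmas~\ref{lem:corecAlgNodet} and~\ref{lem:rCorecNewDPTSPTS}: fix an arbitrary coalgebra $c\colon X\to\Fpts X$ and establish that $\Phi_{c,\rptswg}\colon\Setsto{X}{\Rptsw}\to\Setsto{X}{\Rptsw}$ has a unique fixed point. For existence I would invoke what is already in place: by Lemma~\ref{lem:tfSeqWellDefPTSDWeak} the map $\Phi_{c,\rptswg}$ is monotone on the complete lattice $(\Setsto{X}{\Rptsw},\leqRptsw)$ obtained from the pointwise order on $[0,1]$, so the Cousot--Cousot approximation sequence (Thm.~\ref{thm:KTCC}.2) stabilises at the least fixed point. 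The substantive part is therefore uniqueness.

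For uniqueness I would exploit the scaling factor $\gamma<1$, which is exactly the feature distinguishing $\rptswg$ from the non-corecursive additive algebra $\rptspe$ of Prop.~\ref{prop:pseudoRD}. Given two fixed points $f_1,f_2\colon X\to[0,1]$, set $d=\sup_{x\in X}|f_1(x)-f_2(x)|$; this is finite since $d\le 1$. Writing $c(x)=(\delta,t)$, accepting states ($t=1$) give $f_1(x)=f_2(x)=1$, while for $t=0$ one has $f_i(x)=\gamma\sum_{x'}\delta(x')f_i(x')$, whence
\[
|f_1(x)-f_2(x)|=\gamma\,\Bigl|\textstyle\sum_{x'}\delta(x')\bigl(f_1(x')-f_2(x')\bigr)\Bigr|\;\le\;\gamma\sum_{x'}\delta(x')\,d\;=\;\gamma d,
\]
using $\sum_{x'}\delta(x')=1$. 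Since this bound holds at every $x$ (trivially so at accepting states), taking the supremum yields $d\le\gamma d$, and $1-\gamma>0$ forces $d=0$, i.e.\ $f_1=f_2$.

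The computations are all routine, so I do not anticipate a genuine obstacle; the one point worth stating carefully is that $\Phi_{c,\rptswg}$ indeed maps into $[0,1]$ (it does, because $\gamma\le 1$ and $\delta$ is a probability distribution) so that the lattice machinery applies and $d$ is well defined. Conceptually the key observation is that $\gamma<1$ turns $\Phi_{c,\rptswg}$ into a sup-norm contraction on the non-accepting part of $X$; this is what collapses the gap between the least and greatest fixed points and makes the algebra corecursive, in contrast to the additive case where two distinct homomorphisms coexist.
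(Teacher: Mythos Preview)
Your proposal is correct and follows essentially the same approach as the paper: existence via the Cousot--Cousot sequence on the complete lattice $(\Setsto{X}{\Rptsw},\leqRptsw)$, and uniqueness via the contraction estimate $\sup_x|f_1(x)-f_2(x)|\le\gamma\sup_x|f_1(x)-f_2(x)|$ forced by the scaling factor $\gamma<1$. The paper's only cosmetic difference is that it takes $f$ to be the least fixed point and works with the signed difference $h=g-f\ge 0$ rather than the absolute value, but the computation is identical.
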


\begin{proof}
Let $c:X\to \Fpts X$ be an $\Fpts$-coalgebra.
For each $x\in X$ such that $c(x)=(\delta,t)\in\Fpts X$, we write $c_1(x)$ and $c_2(x)$ for
$\delta\in\dist\Rptsw$ and $t\in\{0,1\}$ respectively.
We prove that $\Phi_{c,\rptswg}$ has a 
unique fixed point.

We first show that $\Phi_{c,\rptswg}$ has a fixed point.
By Lem.~\ref{lem:tfSeqWellDefPTSDWeak},
the poset $(\Setsto{X}{\Rptsw},\leqRptsw)$ (here $\leqRptsw$ denotes the pointwise extension) 
is a complete lattice.
Therefore we can construct a transfinite sequence 
\begin{multline*}
\botRptsw\leqRptsw \Phi_{c,\rptswg}(\botRptsw) \leqRptsw \cdots 
\leqRptsw \Phi_{c,\rptswg}^{\mathfrak{a}}(\botRptsw)\leqRptsw \cdots
\end{multline*}
as in
Thm.~\ref{thm:KTCC}.2.
By Thm.~\ref{thm:KTCC}.2,
there exists an ordinal $\mathfrak{n}$ such that
$\Phi_{c,\rptswg}^\mathfrak{n}(\botRptsw)$ is the least fixed point of $\Phi_{c,\rptswg}$. 
Let $f=\Phi_{c,\rptswg}^\mathfrak{n}(\botRptsw)$.

It remains to show that $f$ is the unique fixed point of $\Phi_{c,\rptswg}$.
Let $g:X\to \Rptsw$ be a fixed point of $\Phi_{c,\rptswg}$.
As $f$ is the least fixed point, 
we have $f(x)\leqRptsw g(x)$ for each $x\in X$. 
We now define $h:X\to \Rptsw$ by $h(x)=g(x)-f(x)$.
Then we have:
\allowdisplaybreaks[4]
\begin{align*}
&\sup_{x\in X}h(x) \\
&= \sup_{x\in X}\bigl(g(x)-f(x)\bigr)  \hspace{-5.5cm}& \text{(by def.\ of $h$)} \\
&= \sup_{x\in X}\bigl(\Phi_{c,\rptswg}(g)(x)-\Phi_{c,\rptswg}(f)(x)\bigr)  \hspace{-5.5cm}& \\
& &\text{($f$ and $g$ are fixed points)} \\
&\leq \sup_{x\in X}\bigl(\gamma\cdot\sum_{x'\in X}c_1(x)(x')\cdot g(x') \hspace{-5.5cm}\\
&\qquad\qquad\qquad-\gamma\cdot\sum_{x'\in X}c_1(x)(x')\cdot f(x')\bigr)\hspace{-5.5cm} & \\
& \qquad\qquad\text{(by def.\ of $\rptswg$ and that}  \hspace{-5.5cm}\\
& & c_2(x)=1\,\Rightarrow\,\Phi_{c,r}(f)(x)=\Phi_{c,r}(g)(x)=1) \\
%
%
%
&= \gamma\cdot\sup_{x\in X}\sum_{x'\in X}c_1(x)(x')\cdot(g(x')-f(x')) \hspace{-5.5cm} & \\
&= \gamma\cdot\sup_{x\in X}\sum_{x'\in X}c_1(x)(x')\cdot h(x') \hspace{-5.5cm}& (\text{by def.\ of $h$})\\
&\leq \gamma\cdot\sup_{x\in X}\sup_{x'\in X}h(x')  \hspace{-5.5cm}& (\text{by $\sum_{x'\in X}c_1(x)(x')\leq 1$})\\
&= \gamma\cdot \sup_{x\in X}h(x) \hspace{-5.5cm}\,.
\end{align*}
As $0\leq \gamma<1$, we have $\sup_{x\in X}h(x)=0$, and 
this implies 
$f=g$. 
This concludes the proof.
\end{proof}

\begin{proof}[Proof (Prop.~\ref{prop:weakrankDomPTS})]
We prove that $(\rptswg,\qptsw,\leqRptsw)$ satisfies the conditions in Def.~\ref{def:rankingdom}.

\subsubsection*{Cond.~\ref{item:def:rankingdom2}}
Let $(\delta,t)\in \Fpts\Rptsw=\dist \Rptsw\times\{0,1\}$.
Then we have:
\allowdisplaybreaks[4]
\begin{align*}
&\qptsw\circ \rptswg(\delta,t) \\
&= \begin{cases}
\qptsw(1) & (t=1) \\
\qptsw\bigl(\gamma\cdot\sum_{a\in \supp(\delta)}a\cdot \delta(a)\bigr) & (t=0)
\end{cases} 
& (\text{by def.\ of $\rptswg$})\\
&= \begin{cases}
1 & (t=1) \\
\gamma\cdot\sum_{a\in \supp(\delta)}a\cdot \delta(a) & (t=0)
\end{cases}
& (\text{by def.\ of $\qptsw$})\\
&\leq \begin{cases}
1 & (t=1) \\
\sum_{a\in \supp(\delta)}a\cdot \delta(a) & (t=0)
\end{cases}
& (\text{by $\gamma<1$})\\
&=\sigmapts(\delta,t)
& (\text{by def.\ of $\sigmapts$})\\
&=\sigmapts\circ \Fpts\qptsw(\delta,t)
& (\text{by def.\ of $\qptsw$})\,.
\end{align*}
Hence Cond.~\ref{item:def:rankingdom2} is satisfied.
%
 

\subsubsection*{Cond.~\ref{item:def:rankingdom4}}
We have already proved in Lem.~\ref{lem:tfSeqWellDefPTSDWeak}.

\subsubsection*{Cond.~\ref{item:def:rankingdom3}}
Immediate from $\qptsw=\id_{\Rptsw}$.

%
%


\subsubsection*{Cond.~\ref{item:def:rankingdom7}}
It is already proved in Lem.~\ref{lem:corecAlgProb}.

Hence $(\rptswg,\qpts,\leqRptsw)$ is a ranking domain.
\end{proof}

\vspace{0.5\baselineskip}
\subsection{\textbf{Towards Further Examples}}\label{sec:nonTotalRD}
Now that we have a general categorical axiomatization of ranking functions (\S{}\ref{sec:catTheoryRF}), we would like to exploit it in deriving further examples of ``ranking functions'' that are previously unknown, hoping that they will provide novel proof methods for various liveness properties. In the previous section we derived two variations of ranking supermartingales.  Towards further examples, here we indicate a possible direction. 

We can say that not many concrete examples are known of corecursive algebras. Nevertheless, the following ``closure properties'' can be used to derive new examples. 

\begin{mylemma}\label{lem:closurePropOfCorecAlg} 
Let $F\colon \mathbb{C}\to\mathbb{C}$ be a functor.
\begin{enumerate}
%
 
 \item\label{item:lem:closurePropOfCorecAlg3}{} (\cite{adamekHM14corecAlg})
  Consider 
 the well-known construction of 
 the \emph{final sequence}:
 \begin{math}
  1 \stackrel{!}{\leftarrow}
  F1 \stackrel{F!}{\leftarrow}
  F^2 1 \stackrel{F^{2}!}{\leftarrow}
  \cdots
  \stackrel{}{\leftarrow}
  F^{\mathfrak{a}}1
  \stackrel{}{\leftarrow}
  \cdots
 \end{math}, where we define $F^{\mathfrak{a}}1$ for an arbitrary ordinal $\mathfrak{a}$ (we assume enough limits and let $F^{\mathfrak{a}}1=\lim_{\mathfrak{b}<\mathfrak{a}}F^{\mathfrak{b}}1$ for a limit ordinal $\mathfrak{a}$). 
Then,  for each  $\mathfrak{a}$, the algebra $F^{\mathfrak{a}}!\colon F(F^{\mathfrak{a}} 1)\stackrel{}{\to} F^{\mathfrak{a}} 1$ is corecursive.

 \item\label{item:lem:closurePropOfCorecAlg4}{} (dual of~\cite{caprettaUV06recursivecoalgebra}) 
 Let $\tau\colon F\Rightarrow G$ be a natural transformation. Then a corecursive $G$-algebra $s \colon GR \to R$ induces a corecursive $F$-algebra $s\circ \tau_{R}\colon FR\to R$.
 \qed
\end{enumerate}
\end{mylemma}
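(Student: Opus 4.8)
\textbf{Part (\ref{item:lem:closurePropOfCorecAlg4}).} The plan here is a direct transport of corecursion equations across $\tau$, and it is short enough to carry out in full. Given an $F$-coalgebra $c\colon X\to FX$, I would form the $G$-coalgebra $\tau_X\circ c\colon X\to GX$ and show that an arrow $f\colon X\to R$ is a coalgebra-algebra homomorphism from $c$ to $s\circ\tau_R$ exactly when it is one from $\tau_X\circ c$ to $s$. The key step is naturality of $\tau$ at $f$, namely $\tau_R\circ Ff = Gf\circ\tau_X$, which gives
\[
(s\circ\tau_R)\circ Ff\circ c \;=\; s\circ Gf\circ\tau_X\circ c \;=\; s\circ Gf\circ(\tau_X\circ c).
\]
Hence $f = (s\circ\tau_R)\circ Ff\circ c$ holds iff $f = s\circ Gf\circ(\tau_X\circ c)$. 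Since $s$ is corecursive, the right-hand equation has a unique solution, so the left-hand one does too. There is no real obstacle in this part.

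\textbf{Part (\ref{item:lem:closurePropOfCorecAlg3}).} I would argue by transfinite induction on $\mathfrak{a}$ that each connecting algebra $F^{\mathfrak{a}}!\colon F(F^{\mathfrak{a}}1)\to F^{\mathfrak{a}}1$ is corecursive. The base case $\mathfrak{a}=0$ is the algebra $!\colon F1\to 1$; since $1$ is terminal, there is exactly one arrow $X\to 1$ for every $c\colon X\to FX$, so the corecursion equation has a unique (trivial) solution. For the successor step I would isolate the reusable lemma that \emph{$F$ preserves corecursiveness of algebras}: if $r\colon FR\to R$ is corecursive then so is $Fr\colon F(FR)\to FR$ (note $F^{\mathfrak{a}+1}! = F(F^{\mathfrak{a}}!)$). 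Fixing $c\colon X\to FX$ and letting $g:=\uniquefp{c}_r$ be the unique solution of $g = r\circ Fg\circ c$, uniqueness is forced because any $f\colon X\to FR$ with $f = Fr\circ Ff\circ c$ satisfies $r\circ f = r\circ F(r\circ f)\circ c$, so $r\circ f = g$ and hence $f = F(r\circ f)\circ c = Fg\circ c$. For existence one sets $f := Fg\circ c$ and verifies the equation by applying $F$ to the fixed-point equation for $g$, which yields $Fg = F(r\circ Fg)\circ Fc$ and therefore $Fr\circ Ff\circ c = F(r\circ Fg)\circ Fc\circ c = Fg\circ c = f$.

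For a limit ordinal $\mathfrak{l}$, with $F^{\mathfrak{l}}1 = \lim_{\mathfrak{b}<\mathfrak{l}}F^{\mathfrak{b}}1$, the plan is to assemble the stagewise solutions into a limit solution. Given $c\colon X\to FX$, the inductive hypothesis provides unique $g_{\mathfrak{b}}\colon X\to F^{\mathfrak{b}}1$ for each $\mathfrak{b}<\mathfrak{l}$; since each connecting map is an $F$-algebra homomorphism, postcomposing it with $g_{\mathfrak{b}'}$ again solves the lower corecursion equation, so uniqueness forces the family $(g_{\mathfrak{b}})$ to be a cone, which then induces $g\colon X\to F^{\mathfrak{l}}1$ by the universal property of the limit. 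Projecting back to each stage would show that $g$ both satisfies and is the unique solution of the corecursion equation for $F^{\mathfrak{l}}!$. I expect the limit step to be the main obstacle: one must control how $F$ interacts with the limit cone --- through the canonical comparison $F\bigl(\lim_{\mathfrak{b}<\mathfrak{l}}F^{\mathfrak{b}}1\bigr)\to\lim_{\mathfrak{b}<\mathfrak{l}}F^{\mathfrak{b}+1}1$ --- in order to verify $g = (F^{\mathfrak{l}}!)\circ Fg\circ c$ coordinatewise. This bookkeeping is the technical core of \cite{adamekHM14corecAlg}, which I would invoke for the delicate verification while reproducing the base and successor cases as above.
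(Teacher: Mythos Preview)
The paper does not actually prove this lemma: the statement ends with \qed\ immediately after the two items, relying entirely on the citations to \cite{adamekHM14corecAlg} and \cite{caprettaUV06recursivecoalgebra}. So there is no ``paper's own proof'' to compare against beyond those references.

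Your argument is correct and supplies what the paper omits. Part~(\ref{item:lem:closurePropOfCorecAlg4}) is complete and is exactly the standard transport argument via naturality. In Part~(\ref{item:lem:closurePropOfCorecAlg3}), your base case and successor step are fully sound; in particular, the lemma that $F$ preserves corecursiveness (hence $F^{\mathfrak{a}+1}! = F(F^{\mathfrak{a}}!)$ is corecursive whenever $F^{\mathfrak{a}}!$ is) is verified correctly by your existence/uniqueness computation. Your treatment of the limit step is honest: assembling the stagewise solutions into a cone by uniqueness and then invoking the universal property is the right shape, and the remaining coordinatewise verification through the comparison map $F(\lim_{\mathfrak{b}<\mathfrak{l}}F^{\mathfrak{b}}1)\to \lim_{\mathfrak{b}<\mathfrak{l}}F^{\mathfrak{b}+1}1$ is precisely the technical content supplied in \cite{adamekHM14corecAlg}, so deferring to it there is appropriate.
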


Lem.~\ref{lem:closurePropOfCorecAlg}.\ref{item:lem:closurePropOfCorecAlg3}--\ref{item:lem:closurePropOfCorecAlg4}
together suggest
the following workflow. For dynamical systems modeled as  $F$-coalgebras, we pick a natural transformation $\tau\colon F\Rightarrow G$ and an ordinal $\mathfrak{a}$---the former \emph{abstracts} (or \emph{collapses}) $F$-behaviors into $G$-behaviors that are supposedly simpler. We then use the set $G^{\mathfrak{a}}1$ of ``$G$-behaviors up-to $\mathfrak{a}$'' as the ranking domain. The set carries a corecursive $G$-algebra by Lem.~\ref{lem:closurePropOfCorecAlg}.\ref{item:lem:closurePropOfCorecAlg3}; and via $\tau$ it carries a corecursive $F$-algebra, too (Lem.~\ref{lem:closurePropOfCorecAlg}.\ref{item:lem:closurePropOfCorecAlg4}). 

An example of such ``behavioral'' ranking domains is in~\S{}\ref{sec:detailBinTree}.
 The set $\Rtree$ consists of: (unlabeled) trees that are possibly countably branching and of finite depth; and the special element $\bot$ that designates non-well-foundedness. It is shown that, for the problem of universal reachability of tree automata, we can indeed use  the set $\Rtree$  to form a (categorical) ranking domain. Here we use the functors $F=\bigl(\,\coprod_{n\in\mathbb{N}\cup\{\omega\}}\Sigma_n\times(\place)^{n}\,\bigr)\times\{0,1\}$ and $G=1+\coprod_{n\in\mathbb{N}\cup\{\omega\}}(\place)^{n}$, and $\tau\colon F\Rightarrow G$ collapses the elements $(t,1)\in FX$---i.e.\ accepting states---to the unique element of $1$ in $GX$. 

In fact the set $\Rtree$ is not precisely the outcome $G^{\omega}1$ of the workflow described above: $G^{\omega}1$ contains all the finite and infinite trees (with suitable branching degrees) as separate elements; but in  $\Rtree$ a single element $\bot$ stands for all the ``non-well-founded'' trees that contains at least one infinite branch. A categorical description of such collapse is our future work; so is a general order with which we can equip $G^{\omega}1$.

\vspace{0.5\baselineskip}
\subsection{\textbf{A Ranking Domain for Tree Automata}}\label{sec:detailBinTree}
\begin{mynotation}\label{nota:trees}
We write $\langle\rangle$ for the empty sequence in $\mathbb{N}^*$.
For $w,w'\in\mathbb{N}^{*}\cup\mathbb{N}^\omega$, we write $w\treeprefix w'$ if $w$ is a prefix of $w'$.
\end{mynotation}

\begin{mydefinition}[unlabeled tree]\label{def:unlabeledtree}
An \emph{unlabeled 
tree} is a set $D\subseteq \mathbb{N}^*$ 
that satisfies the following conditions.
\begin{enumerate}
\item\label{item:def:tree1}
The empty sequence is in $D$, i.e.\ $\empseq\in D$.

\item\label{item:def:tree2}
The set $D$ is prefix-closed, i.e.\ $\forall w\in\mathbb{N}^*.\;\forall i\in\mathbb{N}.\;wi\in D\,\Rightarrow\,w\in D$.

\item\label{item:def:tree3}
The set $D$ is downward-closed, i.e\ $\forall w\in\mathbb{N}^*.\; \forall i\in\mathbb{N}.\;\forall j\leq i.\;
wi\in D\,\Rightarrow wj\in T$.
\end{enumerate}
An unlabeled tree $D$ is said to be \emph{finite-depth} if it satisfies the following additional condition.
\begin{enumerate}
\setcounter{enumi}{3}
\item\label{item:def:tree4}
The set $D$ has no strictly increasing sequence with respect to $\treeprefix$, i.e.\
\begin{displaymath}
\forall v\in\mathbb{N}^\omega.\; |\{w\in D\,\mid\,w\treeprefix v\}|<\infty\,.
\end{displaymath}
\end{enumerate}
We write $\Tree$ (resp.\ $\Treefin$) for the set of all unlabeled trees (resp.\ unlabeled finite-depth trees).

For an unlabeled tree $D$, we define a function $\branch_D:D\to\mathbb{N}\cup\{\infty\}$ as follows.
\begin{equation*}
\branch_D(w)=\max\{i\mid wi\in D\}\,.
\end{equation*}

The \emph{prefix order} over $\Tree$ is a partial order $\treeprefix$ that is defined by
\begin{displaymath}
D_1\treeprefix D_2\;\defarrow\,
\left(\begin{aligned}
&D_1\subseteq D_2,\;\text{and}\\
&\forall w\in D_1.\; 
\branch_{D_1}(w)\in\{0,\branch_{D_2}(w)\}
\end{aligned}
\right)\,.
\end{displaymath}

For a (possibly infinite) family of unlabeled trees $(D_i\in\Tree)_{0\leq i< \mathfrak{a}}$ where $\mathfrak{a}\leq\omega$, 
we define a new tree $\treecomb(D_1,D_2,\ldots)\subseteq\mathbb{N}^*$ as follows.
\begin{displaymath}
\treecomb(D_1,D_2,\ldots)
\;=\;
\{\empseq\}\cup\{iw\mid i<\mathfrak{a}, w\in D_i\} 
\end{displaymath}
\end{mydefinition}

\begin{myremark}\label{rem:finitenessTree}
We note that $D\in\Treefin$ is assumed to be \emph{finite-depth}, but it is not assumed to be \emph{finitely-branching}.
For example, we have
$(\{\empseq\}\cup\{\langle n\rangle \mid n\in\mathbb{N}\})\in \Treefin$
where $\langle n\rangle$ denotes a sequence whose length is $1$.
\end{myremark}

\begin{mydefinition}[ranked alphabet]\label{def:rankedAlph}
A \emph{ranked alphabet} is a pair $\Sigma=(\Sigma,|\place|)$ of a set $\Sigma$ and an \emph{arity function} 
$|\place|:\Sigma\to\mathbb{N}\cup\{\infty\}$.
For each $a\in\Sigma$, $|a|\in\mathbb{N}\cup\{\infty\}$ is called the \emph{arity} of $a$.
For each $n\in\mathbb{N}\cup\{\infty\}$, we write $\Sigma_n$ for $\{a\in\Sigma\mid |a|=n\}\subseteq \Sigma$.
\end{mydefinition}

\begin{mydefinition}[labeled tree]\label{def:labeledtree}
Let $\Sigma$ be a ranked alphabet.
A \emph{$\Sigma$-labeled 
tree} is a pair $T=(D,l)$ of
an unlabeled tree $D\subseteq\Tree$ and a \emph{labeling function} $l:D\to\Sigma$ 
that 
respects the arity, i.e.\
\begin{equation*}
\forall w\in D.\; l(w)=\branch_D(w)\,.
\end{equation*}
%
%
%
%
The unlabeled tree $D$ is called the \emph{domain} of $T$.
%
We write $\TreeSigma$  for the set of all $\Sigma$-labeled trees.
\end{mydefinition}

\begin{mydefinition}[$\FtreeSigma$]\label{def:treeFunc}
Let $\Sigma$ be a ranked alphabet.
We define a functor $\FtreeSigma:\Sets\to\Sets$ 
by 
\begin{displaymath}
\FtreeSigma = \Bigl(\coprod_{n\in\mathbb{N}\cup\{\omega\}}\Sigma_n\times(\place)^{n}\Bigr)\times\{0,1\}\,.
\end{displaymath}
\end{mydefinition}

\begin{mynotation}\label{nota:ajiofdps}
Let 
$c:X\to \FtreeSigma X$ be an $\FtreeSigma$-coalgebra.
For each $x\in X$ where $c(x)=(\xi,t)$, we write $c_1(x)$ and $c_2(x)$ for 
$\xi\in\coprod_{n\in\mathbb{N}\cup\{\omega\}}\Sigma_n\times(\place)^{n}$ and $t\in\{0,1\}$, respectively.
\end{mynotation}

\begin{mydefinition}[run tree]\label{def:runtree}
Let $c:X\to \FtreeSigma X$ be an $\FtreeSigma$-coalgebra. 
We define a new ranked alphabet $X\times\Sigma=(X\times\Sigma,|\place|)$ by 
$|(x,a)|=|a|$.
%
For $x \in X$,
an $(X\times\Sigma)$-labeled tree $(D,l)$ 
is called a \emph{run tree of $c$ from $x$} if it 
satisfies the following conditions (here for each $w\in D$ where $l(w)=(x',a)$, $l_1(w)$ and $l_2(w)$ denote $x'$ and $a$ respectively):
\begin{enumerate}
\item $l_1(\empseq)=x$; and

\item for each $w\in D$, 
\begin{equation*}
c_1(l_1(w))=
\bigl(l_2(w),l_1(w0),l_1(w1),\ldots)\,.
\end{equation*}

\end{enumerate}

A run tree $(D,l)$ is called \emph{accepting} if it has no infinite branch labeled only with non-accepting states, i.e.\
 \begin{align*}
 \neg
 \left(\begin{aligned}
 \exists v\in\mathbb{N}^\omega.\; 
 \forall w\in\mathbb{N}^*\;\text{s.t.}\;w\treeprefix v.\; 
 (w\in D\wedge c_2(l(w))=0)
 \end{aligned}\right)\,.
 \end{align*}
%
\end{mydefinition}

\begin{myproposition}\label{prop:uniqueTree}
For an $\FtreeSigma$-coalgebra $c:X\to\FtreeSigma X$ and $x\in X$,
there exists 
a unique run tree of $c$ from $x$.
\qed
\end{myproposition}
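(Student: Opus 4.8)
The plan is to prove Proposition~\ref{prop:uniqueTree} by exhibiting the run tree explicitly and then arguing uniqueness, observing that the whole statement is essentially a reformulation of a standard corecursion/coinduction principle for the functor $\FtreeSigma$. First I would note that an $\FtreeSigma$-coalgebra $c\colon X\to\FtreeSigma X$ assigns to each $x$ a pair $(\xi,t)=c(x)$, where $\xi = (a, x_0, x_1, \ldots) \in \Sigma_n\times X^n$ records a label $a=l_2$ of some arity $n=|a|$ together with an $n$-indexed list of successor states. The run tree from $x$ is then forced: its root must be labeled $(x,a)$ with $a$ the label of $c_1(x)$, and its children must be exactly the run trees from the successor states $x_0,x_1,\ldots$ listed in $c_1(x)$. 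This is a textbook primitive corecursion, so the natural approach is to build the tree's domain and labeling by recursion on the length of addresses $w\in\mathbb{N}^*$.

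The key steps, in order, are as follows. I would define candidate data $(D,l)$ by specifying, for each finite address $w=i_1 i_2\cdots i_k\in\mathbb{N}^*$, the state $l_1(w)$ reached by following the successor list $k$ times: set $l_1(\empseq)=x$, and given $l_1(w)$ with $c_1(l_1(w))=(a,x_0,x_1,\ldots)$ of arity $n$, declare $w\in D$ admissible and set $l_1(wi)=x_i$ precisely for $i<n$, with $l_2(w)=a$. This recursion on $|w|$ is well-founded, so $(D,l)$ is well-defined. I would then verify that $D$ is an unlabeled tree in the sense of Def.~\ref{def:unlabeledtree} (it contains $\empseq$, is prefix-closed, and downward-closed because the successor indices $i<n$ form an initial segment), that $l$ respects arity so that $(D,l)$ is a genuine $(X\times\Sigma)$-labeled tree per Def.~\ref{def:labeledtree}, and that conditions (1) and (2) of Def.~\ref{def:runtree} hold by construction. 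For uniqueness, I would argue that any run tree $(D',l')$ from $x$ must satisfy $l'_1(\empseq)=x=l_1(\empseq)$ and, inductively on $|w|$, that $w\in D'\iff w\in D$ with $l'(w)=l(w)$: the defining equation $c_1(l'_1(w))=(l'_2(w),l'_1(w0),l'_1(w1),\ldots)$ pins down the label and all children of $w$ from the already-determined value $l'_1(w)$, matching the construction of $(D,l)$ exactly.

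The subtle point to handle carefully is the role of Def.~\ref{def:labeledtree}'s arity-respecting condition, namely that the branching degree $\branch_D(w)$ equals the arity $|l_2(w)|=n$, so that the children of $w$ are indexed by exactly $\{0,1,\ldots,n-1\}$ (or all of $\mathbb{N}$ when $n=\omega$). I must check that the recursion produces the correct branching width at every node and that the infinite-arity case ($n=\omega$) is accommodated, since $\FtreeSigma$ ranges over $n\in\mathbb{N}\cup\{\omega\}$. I expect this bookkeeping of branching degrees against arities to be the main (though still routine) obstacle, together with making the transfinite/length-indexed recursion rigorous; in particular one should be mild about whether $D$ is finite-depth, as Def.~\ref{def:runtree} and the proposition place no such requirement on run trees. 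Once these are settled, uniqueness follows immediately from the induction on address length, and the proposition is established.
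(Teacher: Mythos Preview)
Your proposal is correct and is exactly the standard argument one would expect; the paper itself gives no proof at all (the proposition is simply marked \qed), so there is nothing to compare against beyond noting that your explicit recursion-on-address-length construction and induction-on-address-length uniqueness argument are the natural way to unpack what the authors left implicit.
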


\begin{mydefinition}[$\RunTree_c$ and $\Reach_c$]\label{def:reachTree}
By Prop.~\ref{prop:uniqueTree}, we write $\RunTree_c(x)$ for the unique run tree of $c$ from $x$.
For an $\FtreeSigma$-coalgebra $c:X\to\FtreeSigma X$, we define a function $\Reach_c:X\to \Omegatree$ by
\begin{displaymath}
\Reach_c(x)\;=\;\begin{cases}
1 & (\text{$\RunTree_c(x)$ is accepting}) \\
0 & (\text{otherwise}) \,.
\end{cases}
\end{displaymath}
\end{mydefinition}

\begin{mydefinition}\label{def:TVDsigma}
We define an $\FtreeSigma$-modality $\sigmatree:\FtreeSigma\Omegatree\to\Omegatree$ over 
a truth-value domain $(\Omegatree,\leqOmegatree)$ by 
\begin{displaymath}
\sigmatree\bigl((a,p_1,p_2,\ldots),t\bigr)=
\begin{cases}
1 & (t=1\;\text{or}\;\forall i.\, p_i=1) \\
0 & (\text{otherwise})\,.
\end{cases}
\end{displaymath}
\end{mydefinition}

\begin{myproposition}\label{prop:BinTreeLFP}
The $\FtreeSigma$-modality $\sigmatree:\FtreeSigma\Omegatree\to\Omegatree$ in Def.~\ref{def:TVDsigma} has least fixed points.
For an $\FtreeSigma$-coalgebra $c:X\to\FtreeSigma X$, the (coalgebraic) least fixed property 
$\sem{\mu\sigmatree}_c:X\to\Omegatree$ is given 
by the function $\Reach_c:X\to\Omegatree$ in Def.~\ref{def:reachTree}.
\end{myproposition}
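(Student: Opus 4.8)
\textbf{Proof plan for Proposition~\ref{prop:BinTreeLFP}.}
The plan is to follow exactly the template used for two-player games (Prop.~\ref{prop:concreteDefTwoPlayer}) and for PTSs (Prop.~\ref{prop:constBehSituationPTS}): I will show that $\Reach_c\colon X\to\Omegatree$ is the least fixed point of the monotone function $\Phi_{c,\sigmatree}\colon\Setsto{X}{\Omegatree}\to\Setsto{X}{\Omegatree}$ from Def.~\ref{def:Phicsigma}. (I first note that $\sigmatree$ satisfies Asm.~\ref{asm:behDomainrank}---$\Omegatree=\{0,1\}$ is a complete lattice and $\sigmatree$ is manifestly monotone in each argument $p_i$---so that $\sem{\mu\sigmatree}_c$ exists by Prop.~\ref{prop:sigmaLFP}.) The proof then splits into two parts: first that $\Reach_c$ is a fixed point, and second that it is the least one.

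For the fixed-point part, I would unfold $\Phi_{c,\sigmatree}(\Reach_c)(x)=\sigmatree\circ \FtreeSigma\Reach_c\circ c(x)$. Writing $c(x)=((a,x_0,x_1,\ldots),t)$, the definition of $\sigmatree$ gives value $1$ iff $t=1$ or every child $x_i$ satisfies $\Reach_c(x_i)=1$. I then invoke uniqueness of run trees (Prop.~\ref{prop:uniqueTree}) together with the observation that the run tree of $c$ from $x$ is obtained by grafting the run trees of the children $x_i$ under the root $x$ (via $\treecomb$). An infinite branch of $\RunTree_c(x)$ labeled only with non-accepting states either starts by not leaving the accepting root (impossible if $t=1$) or continues into the run tree of some child $x_i$; hence $\RunTree_c(x)$ is accepting iff $x$ is accepting ($t=1$) or each $\RunTree_c(x_i)$ is accepting. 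This is precisely the condition computed by $\Phi_{c,\sigmatree}$, establishing $\Phi_{c,\sigmatree}(\Reach_c)=\Reach_c$.

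For minimality I argue that any fixed point $f'$ of $\Phi_{c,\sigmatree}$ satisfies $\Reach_c\leqOmegatree f'$, equivalently that $f'(x)=0$ implies $\Reach_c(x)=0$. Unfolding the fixed-point equation, $f'(x)=0$ means $t=0$ and there exists a child $x_{i}$ with $f'(x_{i})=0$. Iterating this choice produces an infinite sequence of indices witnessing an infinite branch of $\RunTree_c(x)$ all of whose nodes are non-accepting, so $\RunTree_c(x)$ is not accepting and $\Reach_c(x)=0$. This step is the analogue of the ``strategy synthesis'' arguments in the earlier propositions: from the pointwise failure of $f'$ I extract a concrete non-accepting infinite branch by dependent choice.

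The main obstacle I anticipate is the careful handling of \emph{infinite branching}: unlike the binary or finitely-branching case, a node may have countably many children (arity $\omega$), so the ``there exists a bad child'' step must select, at each level, one index $i\in\mathbb{N}$ with $f'(x_i)=0$, and I must verify that the resulting infinite branch genuinely lies in $\mathbb{N}^\omega$ and in the domain $D$ of the (unique) run tree. This requires the axiom of (dependent) choice and a precise matching between the coalgebraic successors $c_1(l_1(w))=(l_2(w),l_1(w0),l_1(w1),\ldots)$ and the combinatorial branches of $\RunTree_c(x)$; I expect the bookkeeping relating $\FtreeSigma$-transitions, the $\treecomb$ operation, and the finite-depth/acceptance conditions of Def.~\ref{def:runtree} to be the only genuinely delicate part, with everything else being a routine unfolding of definitions.
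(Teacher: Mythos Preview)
Your proposal is correct and follows essentially the same approach as the paper's own proof: first verify that $\Reach_c$ is a fixed point of $\Phi_{c,\sigmatree}$ by unfolding definitions and using that the run tree from $x$ is assembled from the run trees of its children, then establish minimality by showing that $f'(x)=0$ for any fixed point $f'$ allows one to inductively pick a child with value $0$ at each step, producing an infinite non-accepting branch of $\RunTree_c(x)$. The paper carries out exactly this construction (defining $x_i$ and $v_i$ inductively), and your anticipated obstacle concerning dependent choice in the presence of countable branching is precisely the point where the paper's inductive selection takes place.
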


\begin{proof}
We define $f:X\to\Omegatree$ by $f=\Reach_c$.
It suffices to prove that $f$ is the least fixed-point of $\Phi_{c,\sigmatree}:\Setsto{X}{(\Rtree)}\to\Setsto{X}{(\Rtree)}$.

We first show that $f$ is a fixed point of $\Phi_{c,\tree}$.
Let $x\in X$ and assume that $c(x)=((a,x_1,x_2,\ldots),t)\in\FtreeSigma X$.
Then we have:
\begin{align*}
&\sigmatree\circ \FtreeSigma f((a,x_1,x_2,\ldots),t)\\
&= \begin{cases}
1 & (t=1\;\text{or}\;\forall i.\, f(x_i)=1) \\
0 & (\text{otherwise}) 
\end{cases}
& (\text{by def.\ of $\sigmatree$}) \\
&= \begin{cases}
1 & (t=1\;\text{or}\;\forall i.\, \RunTree_c(x_i)\;\text{is accepting}) \\
0 & (\text{otherwise}) 
\end{cases} \hspace{-3cm}&\\
& & (\text{by def.\ of $f$}) \\
&= \begin{cases}
1 & (\RunTree_c(x)\;\text{is accepting}) \\
0 & (\text{otherwise}) 
\end{cases} \hspace{-3cm}&\\
& & (\text{by def.\ of $\RunTree_c$}) \\
&=f(x)\,.
\end{align*}
Hence $f$ is a fixed point of $\Phi_{c,\sigmatree}$.

It remains to show that $f$ is the least fixed point.
Let $f':X\to\Omegatree$ be a fixed point of $\Phi_{c,\sigmatree}$.
It suffices to prove that for each $x\in X$, $f'(x)=0$ implies $f(x)=0$.

Let $x\in X$ and assume $f'(x)=0$.
Let $(D,l)=\RunTree_c(x)$.
For each $i\in\mathbb{N}$, 
we inductively define $x_i\in X$ and $v_i\in\mathbb{N}$ so that $f'(x_i)=0$ for each $i$ as follows.
%
%
\begin{itemize}
\item
For $i=0$, we let $x_0=x$. 
Then
 we have $f'(x_0)=f(x)=0$.

\item
Let $i\geq 0$ and assume that we have defined $x_j$ and $v_{j-1}$ for each $j\leq i$ so that 
$f'(x_i)=0$.
Let $c(x_i)=((a_i,x_{i,1},x_{i,2},\ldots),t_i)$.
As $f'$ is a fixed point of $\Phi_{c,\sigmatree}$, we have: 
\begin{equation}\label{eq:1601071124}
0=f'(x_i) 
=(\sigmatree\circ \FtreeSigma f'\circ c)(x_i)\,.
\end{equation}
By the definition $\sigmatree$, 
this means that there exists $k$ such that $f'(x_{i,k})=0$.
We let $v_i=k$ and $x_{i+1}=x_{i,k}$.
\end{itemize}

We can prove that $x_i$ and $v_i$  satisfy the followings.
\begin{enumerate}
\item\label{item:16010711271} $\forall i.\; (v_0\ldots v_{i-1})\in D$

\item\label{item:16010711272} $\forall i.\; l(v_0\ldots v_{i-1})=x_i$

\item\label{item:16010711273} $\forall i.\; c_2(l(v_0\ldots v_{i-1}))=0$
\end{enumerate}
Here (\ref{item:16010711271}) and (\ref{item:16010711272}) are immediate from the definition.
The equation (\ref{eq:1601071124}) implies that $c_2(l(v_0\ldots v_{i-1}))=c_2(x_i)=t_i=0$ for each $i\in\mathbb{N}$.


This means that $\RunTree_c(x)$ is not accepting (see Def.~\ref{def:runtree}), and therefore we have $f(x)=0$.
This concludes the proof.
\end{proof}

\begin{myproposition}\label{prop:BinTreeAsm}
The $\Ftree$-modality $\sigmatree:\Ftree\Omegatree\to\Omegatree$ in Def.~\ref{def:TVDsigma} satisfies conditions in Asm.~\ref{asm:behDomainrank}.
\qed
\end{myproposition}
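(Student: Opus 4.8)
The plan is to verify the two clauses of Asm.~\ref{asm:behDomainrank} separately, following exactly the strategy used for two-player games in Prop.~\ref{prop:FsigmaSatisfyAsmTPG}. First, for Cond.~\ref{asm:behDomain4and3}, I would simply observe that $(\Omegatree,\leqOmegatree)=(\{0,1\},\le)$ is the two-element chain. This is trivially a complete lattice: it is finite and totally ordered, with $0$ as bottom and $1$ as top, so every subset has both a supremum and an infimum. No real argument is needed here.

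The substance lies in Cond.~\ref{asm:behDomain1and2}, the monotonicity of $\Phi_{c,\sigmatree}$ for each coalgebra $c\colon X\to\Ftree X$. Here I would invoke the sufficient condition noted right after Asm.~\ref{asm:behDomainrank}: it suffices that $\sigmatree$ be monotone with respect to a componentwise order on $\Ftree\Omegatree$ and that the functorial action of $\Ftree$ on arrows be monotone. The latter is routine, since $\Ftree$ merely copies the truth values into its argument slots and leaves the label $a$ and the flag $t$ untouched. For the former, the key observation is that the value $\sigmatree\bigl((a,p_1,p_2,\dots),t\bigr)$ of Def.~\ref{def:TVDsigma} is the disjunction of $t$ with the (possibly infinite) conjunction $\bigwedge_i p_i$; both disjunction and infinitary conjunction are monotone in each argument, so $\sigmatree$ is monotone.

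Concretely, I would unfold the definition of $\Phi_{c,\sigmatree}=\sigmatree\circ\Ftree(\place)\circ c$. Given $f_1\le f_2$ pointwise and a state $x\in X$ with $c(x)=\bigl((a,x_1,x_2,\dots),t\bigr)$, one has $\Phi_{c,\sigmatree}(f)(x)=\sigmatree\bigl((a,f(x_1),f(x_2),\dots),t\bigr)$. If $t=1$, both sides equal $1$; if $t=0$, then $\Phi_{c,\sigmatree}(f)(x)=1$ exactly when $f(x_i)=1$ for every $i$, and since $f_1(x_i)\le f_2(x_i)$, the value $1$ for $f_1$ forces the value $1$ for $f_2$. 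In either case $\Phi_{c,\sigmatree}(f_1)(x)\le\Phi_{c,\sigmatree}(f_2)(x)$, which is the required monotonicity. I do not anticipate a genuine obstacle; the only point deserving a moment's care is that the arity $n$ ranges over $\mathbb{N}\cup\{\omega\}$, so $\bigwedge_i p_i$ may be an infinite conjunction. This causes no difficulty, because the ``$\forall i$'' formulation in Def.~\ref{def:TVDsigma} treats the finite and infinite cases uniformly, and monotonicity of infinitary conjunction holds just as in the finite case.
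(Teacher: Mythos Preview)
Your proposal is correct and is exactly the natural argument the paper has in mind: the paper omits the proof entirely (just a \qed), and your approach mirrors what the paper does for the analogous two-player-game case in Prop.~\ref{prop:FsigmaSatisfyAsmTPG}, namely noting that $(\{0,1\},\le)$ is trivially a complete lattice and that $\Phi_{c,\sigmatree}$ is monotone because $\sigmatree$ is built from monotone operations (here, disjunction with $t$ and infinitary conjunction over the $p_i$).
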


%

\begin{myproposition}\label{prop:rankDomBinTree}
Let $\Sigma$ be a ranked alphabet and $\Rtree=\Treefin \cup\{\botRtree\}$.
We define an $\FtreeSigma$-algebra $\rtree:\FtreeSigma(\Rtree)\to\Rtree$,
a function $\qtree:\Rtree\to\Omegatree$ and 
a partial order $\leqRtree$ over $\Rtree$
as follows.
\begin{enumerate}
\item 
\raisebox{7.5mm}{
\begin{minipage}[t]{0.95\hsize}
\begin{multline*}
\!\!\!\!\!\!\rtree\bigl((a,D_1,D_2,\ldots),t\bigr)\\
=\begin{cases} \empseq & (t=1) \\ 
\treecomb(D_1,D_2,\ldots) & (t=0\;\text{and}\;\forall i.\; D_i\neq\botRtree) \\
\botRtree & (t=0\;\text{and}\;\exists i.\; D_i=\botRtree) \end{cases}
\end{multline*}
\end{minipage}
}

\item $\qbtr(D)=\begin{cases} 1 & (D\neq\botRtree) \\ 0 & (D=\botRtree)\,.\end{cases}$

\item $D_1\leqRtree D_2\,\defarrow\, D_1=\botRtree\;\text{or}\;(D_1,D_2\in\Treefin\;\text{and}\;D_1\treepostfix D_2)$ (note the direction).
\end{enumerate}
Then $(\rtree,\qtree,\leqRtree)$ 
is a ranking domain.
\end{myproposition}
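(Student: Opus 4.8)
The plan is to verify the four conditions of Def.~\ref{def:rankingdom} for the triple $(\rtree,\qtree,\leqRtree)$, following the same skeleton as the two-player-game case (Prop.~\ref{prop:rankDomTwoPlayerGame}), with corecursiveness of $\rtree$ (Cond.~\ref{item:def:rankingdom7}) as the only substantial step. Two of the conditions can be dispatched by a direct case analysis on the defining clauses of $\rtree$. For Cond.~\ref{item:def:rankingdom2} one checks on each $\bigl((a,D_1,D_2,\ldots),t\bigr)\in\FtreeSigma\Rtree$ that $\qtree\circ\rtree$ and $\sigmatree\circ\FtreeSigma\qtree$ agree: both are $1$ when $t=1$, both are $1$ when $t=0$ and every $D_i\neq\botRtree$, and both are $0$ when $t=0$ and some $D_i=\botRtree$. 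Thus the inequality in Cond.~\ref{item:def:rankingdom2} in fact holds as the equality $\qtree\circ\rtree=\sigmatree\circ\FtreeSigma\qtree$, so by Prop.~\ref{prop:completeness} the resulting ranking domain will automatically be \emph{complete}. For Cond.~\ref{item:def:rankingdom3}, monotonicity and strictness of $\qtree$ are immediate from $\qtree^{-1}(0)=\{\botRtree\}$ together with $\botRtree$ being the least element; continuity reduces to the observation that a supremum $\bigsqcup K$ is non-bottom iff $K$ contains a non-bottom tree (see the supremum analysis below).

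For the order-theoretic Cond.~\ref{item:def:rankingdom4} I would first analyse $(\Rtree,\leqRtree)$. Here $\botRtree$ is the bottom and the root-only tree $\{\empseq\}$ is the top. The key construction is the supremum: for $K\subseteq\Rtree$ the join $\bigsqcup K$ is the \emph{largest common pruning} (greatest lower bound in the prefix order $\treeprefix$) of the non-bottom members of $K$, and $\botRtree$ if $K$ has none. Concretely one builds this tree top-down, retaining the full branching at a node precisely when all non-bottom $D_i$ agree on it; the result lies in $\Treefin$ (a subtree of a finite-depth tree) and is readily checked to be $\treeprefix$-below every $D_i$ and $\treeprefix$-above every common pruning, i.e.\ the least upper bound in $\leqRtree$. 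Since every subset has a join, $(\Rtree,\leqRtree)$ is a complete lattice (Cond.~\ref{item:def:rankingdom40and3}). Monotonicity of $\Phi_{c,\rtree}$ (Cond.~\ref{item:def:rankingdom41and2}) then follows, as in Lem.~\ref{lem:tfSeqWellDefTPG}, from monotonicity of $\rtree$ in its tree arguments: $\treecomb$ is monotone for $\treeprefix$ componentwise, the $t=1$ clause is constant, and the $t=0$ clause returns the absorbing element $\botRtree$ as soon as any argument is $\botRtree$.

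The main obstacle is Cond.~\ref{item:def:rankingdom7}: for every $c\colon X\to\FtreeSigma X$ the map $\Phi_{c,\rtree}$ must have a \emph{unique} fixed point. I would exhibit the candidate $g\colon X\to\Rtree$ directly from the unique run tree $\RunTree_c(x)$ (Prop.~\ref{prop:uniqueTree}): let $g(x)$ be the unlabeled tree obtained by pruning $\RunTree_c(x)$ at its accepting nodes when the pruned tree is finite-depth, and $g(x)=\botRtree$ exactly when $\RunTree_c(x)$ carries an infinite branch of non-accepting states (equivalently, when $\Reach_c(x)=0$, cf.\ Prop.~\ref{prop:BinTreeLFP}). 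That $g$ is a fixed point is the same case split as for Cond.~\ref{item:def:rankingdom2}: at an accepting $x$ the pruned tree is the root-only tree $\{\empseq\}$, which is the $t=1$ value of $\rtree$; at a non-accepting $x$ with children $x_1,x_2,\ldots$, the pruned run tree of $x$ is $\treecomb$ of those of the children, while a child with value $\botRtree$ forces $g(x)=\botRtree$, matching the absorbing clause of $\rtree$.

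Uniqueness is where the argument has teeth, and I would split on whether $\RunTree_c(x)$ is well-founded. If the pruned run tree of $x$ is finite-depth, I run well-founded induction along it (legitimate because finite-depth means \emph{no infinite branch}, so the descendant relation on the pruned tree is well-founded): any fixed point $f$ satisfies $f(x)=\{\empseq\}$ at accepting leaves and $f(x)=\treecomb\bigl(f(x_1),\ldots\bigr)=g(x)$ at internal nodes, whence $f(x)=g(x)$. The delicate case is an infinite non-accepting branch from $x$, where $g(x)=\botRtree$. Suppose a fixed point had $f(x)\neq\botRtree$; following the branch $x=x^{(0)},x^{(1)},\ldots$, each node is non-accepting, so the absorbing clause forces $f(x^{(k)})\neq\botRtree$ and makes $f(x^{(k+1)})$ occur as the subtree of $f(x^{(k)})$ at one child, say child $i_k$. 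Splicing these inclusions shows $\langle i_0,i_1,i_2,\ldots\rangle$ is an infinite branch of $f(x)$, contradicting $f(x)\in\Treefin$. Hence $f(x)=\botRtree=g(x)$, so $g$ is the unique fixed point, establishing corecursiveness and completing the verification that $(\rtree,\qtree,\leqRtree)$ is a ranking domain. The one subtlety to keep in mind throughout is that ``finite-depth'' forbids infinite branches but \emph{not} unbounded (yet finite) branch lengths, so every well-foundedness step must be phrased via branches rather than a numerical depth bound.
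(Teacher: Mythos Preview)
Your proposal is correct and follows the same four-condition skeleton as the paper's proof. For Conds.~\ref{item:def:rankingdom2} and~\ref{item:def:rankingdom3} your case analyses match the paper essentially verbatim (including the observation that Cond.~\ref{item:def:rankingdom2} holds with equality, which the paper also records and exploits for completeness in Prop.~\ref{prop:treeCompleteness}). For Cond.~\ref{item:def:rankingdom4} your ``largest common pruning'' is exactly the paper's construction of the join as $\bigcup\{D'\in\Treefin\mid \forall D\in K'.\,D'\treeprefix D\}$; the paper additionally spells out the meet, but as you note, joins alone suffice for a complete lattice.

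The one genuine difference is in the corecursiveness argument (Cond.~\ref{item:def:rankingdom7}). The paper obtains \emph{existence} abstractly via Knaster--Tarski on the complete lattice, then proves \emph{uniqueness} by comparing two arbitrary fixed points $f,f'$: first it characterises the locus $\{x\mid f(x)=\botRtree\}$ operationally (exactly your infinite-non-accepting-branch argument, phrased as an equivalence), showing it is the same for every fixed point; then, on the non-$\botRtree$ part, it runs an induction on word length $|w|$ to show $w\in f(x)\Leftrightarrow w\in f'(x)$. You instead exhibit the fixed point $g$ concretely as the pruned run tree and prove $f=g$ by well-founded induction along that tree (plus the same contradiction for the $\botRtree$ case). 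These are near-duals---your well-founded induction on positions is the paper's length induction reorganised---but they buy slightly different things: your route makes the semantic content (the pruned $\RunTree_c$) explicit and directly identifies the unique homomorphism, while the paper's route stays self-contained and never needs to invoke $\RunTree_c$ or Prop.~\ref{prop:uniqueTree}.
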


\begin{mylemma}\label{lem:corecTree0}
The poset  $(\Rtree,\leqRtree)$ is a complete lattice.
\end{mylemma}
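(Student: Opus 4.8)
The plan is to prove completeness by exhibiting all \emph{suprema}: a poset in which every subset has a least upper bound is automatically a complete lattice (greatest lower bounds are then recovered as the supremum of the set of lower bounds, and the empty supremum/infimum supply the bottom/top elements). So it suffices to produce a $\leqRtree$-supremum for every $S\subseteq\Rtree$. First I would record the two endpoints. The clause ``$D_1=\botRtree$'' in the definition of $\leqRtree$ makes $\botRtree$ the least element, and since $\botRtree\notin\Treefin$ no $D\in\Treefin$ satisfies $D\leqRtree\botRtree$. Dually, $\{\empseq\}$ is $\treeprefix$-minimal (it is a truncation of every tree, using axiom~\ref{item:def:tree1} of Def.~\ref{def:unlabeledtree}), hence $\treepostfix$-maximal, so it is the $\leqRtree$-greatest element. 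Because $\botRtree$ lies below everything, it never constrains upper bounds; thus the $\leqRtree$-upper bounds of $S$ coincide with those of $S\cap\Treefin$, and computing the $\leqRtree$-supremum of $S$ reduces to computing that of the finite-depth part $S\cap\Treefin$ (with the empty family having supremum $\botRtree$).

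The crux is the following translation. On $\Treefin$ the relation $\leqRtree$ is exactly the reverse of $\treeprefix$, so a tree $U\in\Treefin$ is a $\leqRtree$-upper bound of a nonempty family $S=\{D_i\}_{i\in I}\subseteq\Treefin$ iff $U\treeprefix D_i$ for all $i$, i.e.\ iff $U$ is a \emph{common truncation} ($\treeprefix$-lower bound) of the family. The $\leqRtree$-least upper bound is therefore the \emph{greatest common truncation} of $(D_i)_i$, and I must show it exists. I would construct it directly as the ``agreement tree'' $M$: put $\empseq\in M$, and declare a node $w\in M$ internal precisely when $w\in D_i$ for all $i$ and all $D_i$ carry the same branching degree $\branch_{D_i}(w)$ at $w$; in that case extend $M$ by all children $wj$ with $0\le j\le\branch_{D_i}(w)$, and otherwise cut $w$ to a leaf.

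By induction on word length this $M$ satisfies the tree axioms of Def.~\ref{def:unlabeledtree}, and the branch-degree side-condition in the definition of $\treeprefix$ is met by construction, so $M\treeprefix D_i$ for every $i$; since $M\treeprefix D_{i_0}$ for a fixed $i_0$ and $D_{i_0}$ is finite-depth, $M$ is finite-depth too (axiom~\ref{item:def:tree4}), whence $M\in\Treefin$. Finally, for any common truncation $T$ one checks $T\treeprefix M$: along each branch $T$ may remain internal only where \emph{all} the $D_i$ agree, which is exactly the region kept internal in $M$, so $T$ is a truncation of $M$. Thus $M$ is the greatest common truncation, i.e.\ the $\leqRtree$-supremum of $S$, and every subset of $\Rtree$ has a supremum, giving the complete lattice.

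I expect the main obstacle to be the third step: pinning down the greatest common truncation and, in particular, verifying the branching-degree side-condition of $\treeprefix$, namely that $M$ may keep $w$ internal only when every $D_i$ shares the \emph{same} degree at $w$ (so that $\branch_M(w)\in\{0,\branch_{D_i}(w)\}$ holds simultaneously for all $i$). Once $M$ is in hand, finite-depthness and maximality are routine. As a remark, this supremum always lands in $\Treefin$ and never needs $\botRtree$; the element $\botRtree$ instead surfaces as the value of certain \emph{infima}, where the $\treeprefix$-merge of an incompatible or unbounded-depth family admits no finite-depth common extension and collapses to $\botRtree$, which is consistent with its intended reading as the non-well-founded point.
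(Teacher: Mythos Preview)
Your proof is correct and close in spirit to the paper's, though organized more economically. The paper constructs both suprema and infima explicitly: for the supremum of $K$ it takes $A=\bigcup\{D'\in\Treefin\mid D'\treeprefix D\text{ for all }D\in K\setminus\{\botRtree\}\}$, the union of all common truncations, and then separately builds infima by forming $B=\bigcup K$, checking a branching-compatibility condition among the members of $K$, and collapsing to $\botRtree$ when the family is incompatible or $B$ fails to be finite-depth. Your agreement tree $M$ is exactly the paper's $A$ (every common truncation sits inside $M$, and $M$ is itself a common truncation), just presented by a direct node-by-node recursion rather than as a union; your observation that $M\treeprefix D_{i_0}$ forces finite depth matches the paper's argument that $A$ is finite-depth because it is bounded above by some member of $K'$.

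The substantive difference is that you do only half the work: having produced all suprema, you invoke the standard fact that this already yields a complete lattice, whereas the paper verifies infima by hand. Your route is shorter; the paper's route buys an explicit description of the infimum, which is precisely where the collapse to $\botRtree$ for non-well-founded or incompatible families becomes visible---exactly the phenomenon you anticipate in your closing remark.
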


\begin{proof}
Let $K\subseteq\Rtree$.

\myparagraph{$K$ has the least upper bound}
We first prove that $K$ has the least upper bound.
If $K\cap\Treefin=\emptyset$, then $\botRtree$ is the least upper bound of $K$.

Assume $K\cap\Treefin\neq\emptyset$. 
If $K'=K\setminus\{\botRtree\}$ has the least upper bound,
then it is also the least upper bound of $K$.

We define $A\subseteq \mathbb{N}^*$ as follows. 
\begin{equation*}
A=\bigcup\{D'\in\Treefin\mid \forall D\in K'.\; D'\treeprefix D\}\,.
\end{equation*}
As $\{\empseq\}\treeprefix D$ holds for each $D\in K'$ and 
Cond.~\ref{item:def:tree1}--\ref{item:def:tree3}
in Def.~\ref{def:unlabeledtree} are preserved by the union,
$A\in\Tree$.
We prove that $A$ is the least upper bound of $K'$.

We first prove that $A$ is an upper bound.
Let $D\in K'$. We prove $D\leqRtree A$.
To this end, by the definition of $\treeprefix$,
it suffices to  prove $A\subseteq D$ and 
$\branch_A(w)\in\{0,\branch_D(w)\}$ for each $w\in A$.
The former is immediate from the definition of $A$.
The latter is 
satisfied because
we have $\branch_{D'}(w)\in\{0,\branch_D(w)\}$
for each $D'$ such that $D'\treeprefix D$ and $w\in D'$.

Hence $A$ is an upper bound of $K$.
We note that this also proves that $A$ is finite-depth.

It is immediate from the definition of $A$ that $A$ is the least upper bound.

\myparagraph{$K$ has the greatest lower bound}
We prove that $K$ has the greatest lower bound.
If $\botRtree\in K$, then $\bigsqcap K=\botRtree$.
If $K=\emptyset$, then $\bigsqcap K=\{\empseq\}$.

We assume $\botRtree\notin K$ and $K\neq\emptyset$.
For $D_1,D_2\in K$, if there exists $D_3\in K$ such that 
$D_1\treeprefix D_3$ and $D_2\treeprefix D_3$ then by the definition of $\treeprefix$,
we have:
\begin{align*}
\forall w\in D_1\cap D_2.\; &(\branch_{D_1}(w)\neq 0\;\text{and}\;\branch_{D_2}(w)\neq 0)\\
&\qquad\Rightarrow\;
\branch_{D_1}(w)=\branch_{D_2}(w)\,.
\end{align*}

Therefore if there exists $D_1,D_2\in K$ that satisfy
\begin{align}\label{eq:1601071738}
\exists w\in D_1\cap D_2.\;& \branch_{D_1}(w)\neq 0,\;\branch_{D_2}(w)\neq 0\;\notag\\
&\qquad\text{and}\;\branch_{D_1}(w)\neq\branch_{D_2}(w),
\end{align}
then there does not exist $D_3\in K$ such that $D_1\treeprefix D_3$ and $D_2\treeprefix D_3$.
This implies $\bigsqcap K=\botRtree$.

Assume that there does not exist $D_1,D_2\in K$ that satisfy (\ref{eq:1601071738}).
We define $B\in\mathbb{N}^*$ by 
\begin{equation*}
B=\bigcup K\,.
\end{equation*}

It is easy to see that $B\in \Tree$.
It is also easy to see that
if $B$ is not finite-depth then $\botRtree$ is the greatest lower bound of $K$.

Assume $B\in\Treefin$.
We prove that $B$ is the greatest lower bound of $K$.
Let $D\in K$. By the definition of $B$, we have $D\subseteq B$.
As there exist no $D_1,D_2\in K$ that satisfy (\ref{eq:1601071738}), we have $\branch_{D}(w)\in\{0,\branch_{B}(w)\}$ for each $w\in D$.
Hence we have $B\leqRtree D$, and therefore $B$ is a lower bound of $K$.
It is immediate from its definition that $B$ is the greatest lower bound.
\end{proof}

\begin{mylemma}\label{lem:corecTree1}
The triple $(\rtree,\qtree,\leqRtree)$ in Prop.~\ref{prop:rankDomBinTree} satisfies 
Cond.~\ref{item:def:rankingdom4} in Def.~\ref{def:rankingdom}.
\end{mylemma}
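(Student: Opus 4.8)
The plan is to verify the two sub-conditions of Cond.~\ref{item:def:rankingdom4} separately. Sub-condition~\ref{item:def:rankingdom40and3}, that $(\Rtree,\leqRtree)$ is a complete lattice, is exactly Lem.~\ref{lem:corecTree0}, so it suffices to cite it. All the real work lies in sub-condition~\ref{item:def:rankingdom41and2}: for every $\FtreeSigma$-coalgebra $c\colon X\to\FtreeSigma X$, the map $\Phi_{c,\rtree}\colon\Setsto{X}{(\Rtree)}\to\Setsto{X}{(\Rtree)}$ of Def.~\ref{def:Phicsigma} is monotone with respect to the pointwise extension of $\leqRtree$.

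To prove monotonicity I would fix $f_1,f_2\colon X\to\Rtree$ with $f_1\leqRtree f_2$ pointwise and an arbitrary $x\in X$, and compare $\Phi_{c,\rtree}(f_1)(x)=\rtree\circ\FtreeSigma f_1\circ c(x)$ with the corresponding value for $f_2$. Writing $c(x)=((a,(x_i)_i),t)$ and $D_i^{(k)}=f_k(x_i)$, this is a comparison of $\rtree((a,(D_i^{(1)})_i),t)$ with $\rtree((a,(D_i^{(2)})_i),t)$, which I would split according to the clauses defining $\rtree$ in Prop.~\ref{prop:rankDomBinTree}. If $t=1$ both values equal $\empseq$, so they agree. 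If $t=0$ and some $D_i^{(1)}=\botRtree$, then the left value is $\botRtree$, the least element, which is below anything. The only substantive case is $t=0$ with all $D_i^{(1)}\neq\botRtree$: here $f_1(x_i)\leqRtree f_2(x_i)$ forces, by the definition of $\leqRtree$, that each $D_i^{(2)}\in\Treefin$ as well, with $D_i^{(2)}\treeprefix D_i^{(1)}$, so both values are computed by the $\treecomb$ clause, and I must show $\treecomb$ is monotone.

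The key step is the claim that $\treecomb$ is monotone in each argument with respect to $\treeprefix$: if $D_i'\treeprefix D_i$ for all $i$, then $\treecomb((D_i')_i)\treeprefix\treecomb((D_i)_i)$. Applying this with $D_i'=D_i^{(2)}$ and $D_i=D_i^{(1)}$ and using the direction reversal in $\leqRtree$ (recall $D_1\leqRtree D_2$ means $D_2\treeprefix D_1$ on $\Treefin$) yields $\treecomb((D_i^{(1)})_i)\leqRtree\treecomb((D_i^{(2)})_i)$, as required. To establish the claim I would check the two conditions defining $\treeprefix$ in Def.~\ref{def:unlabeledtree}. The inclusion $\treecomb((D_i')_i)\subseteq\treecomb((D_i)_i)$ is immediate from $D_i'\subseteq D_i$ and the formula $\treecomb((D_i)_i)=\{\empseq\}\cup\{iw\mid w\in D_i\}$. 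For the branch condition I would argue position by position: at the root $\empseq$ the branching is determined solely by the arity $n=|a|$ (every tree contains $\empseq$, so all child positions below the root are present), hence it agrees for the two combined trees; at a position $iw$ the branching is $\branch_{D_i'}(w)$ on the left and $\branch_{D_i}(w)$ on the right, and $D_i'\treeprefix D_i$ gives exactly $\branch_{D_i'}(w)\in\{0,\branch_{D_i}(w)\}$.

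I expect the main obstacle to be bookkeeping rather than anything conceptual: carefully matching the branching data through $\treecomb$ (above all at the root, where the fixed arity $n$ is what guarantees agreement) and keeping straight the orientation of $\leqRtree$ against $\treeprefix$. A minor point to dispatch along the way is well-definedness of the values in $\Rtree$: $\treecomb$ of finite-depth trees is again finite-depth, since an infinite branch would descend entirely into a single child $D_i$, contradicting $D_i\in\Treefin$; thus it lands in $\Treefin$ even when $n=\omega$, in accordance with Rem.~\ref{rem:finitenessTree} that finite-depth does not entail finite-branching.
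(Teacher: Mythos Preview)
Your proposal is correct and follows the only natural route; the paper itself gives even less, dispatching Cond.~\ref{item:def:rankingdom40and3} by citing Lem.~\ref{lem:corecTree0} and then saying only that ``it is easy to see'' that Cond.~\ref{item:def:rankingdom41and2} holds. Your case split on the clauses of $\rtree$ and the monotonicity check for $\treecomb$ fill in exactly what the paper leaves implicit.
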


\begin{proof}
%
%
Cond.~\ref{item:def:rankingdom40and3} is immediate from Lem.~\ref{lem:corecTree0}.

It is easy to see that 
Cond.~\ref{item:def:rankingdom41and2} is satisfied.
\end{proof}

\begin{mylemma}\label{lem:corecTree2}
The algebra $\rtree:\FtreeSigma\Rtree\to\Rtree$ in Prop.~\ref{prop:rankDomBinTree} is a 
corecursive algebra. 
\end{mylemma}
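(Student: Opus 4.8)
The goal is to show that for every coalgebra $c\colon X\to\FtreeSigma X$ the function $\Phi_{c,\rtree}\colon\Setsto{X}{\Rtree}\to\Setsto{X}{\Rtree}$ has a \emph{unique} fixed point; this is exactly corecursiveness (cf.\ the reformulation of Def.~\ref{def:corecAlg} immediately after Def.~\ref{def:Phicsigma}). I will follow the template of Lem.~\ref{lem:corecAlgNodet} and Lem.~\ref{lem:rCorecNewDPTSPTS}: first produce one fixed point, then prove there is no other. \emph{Existence} is cheap. By Lem.~\ref{lem:corecTree1} the poset $(\Setsto{X}{\Rtree},\leqRtree)$ (pointwise order) is a complete lattice and $\Phi_{c,\rtree}$ is monotone, so Thm.~\ref{thm:KTCC}.2 yields the least fixed point as the limit of the transfinite sequence from $\botRtree$. (Alternatively one can exhibit the fixed point directly as the ``pruned run tree'' described below.) The real content is uniqueness, and there the well-foundedness encoded by $\Treefin$ does the work.

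The plan for \emph{uniqueness} is to pin down, for an arbitrary fixed point $f$, both \emph{where} $f$ takes the value $\botRtree$ and \emph{which} finite-depth tree it takes elsewhere, showing both are forced by $c$ alone. Write $N_f=\{x\mid f(x)=\botRtree\}$ and let $B=\{x\mid \RunTree_c(x)\text{ is not accepting}\}$ be the set of states from which the unique run tree (Prop.~\ref{prop:uniqueTree}) has an infinite branch labelled only with non-accepting states (Def.~\ref{def:runtree}). I claim $N_f=B$ for \emph{every} fixed point $f$. The inclusion $N_f\subseteq B$ is the easy direction: unfolding $f=\Phi_{c,\rtree}(f)$ at a state $x\in N_f$ gives $c_2(x)=0$ and some child $x_i\in N_f$, so one builds an infinite sequence $x=x^{(0)},x^{(1)},\dots$ of non-accepting states, each a child of the previous, which is precisely an infinite non-accepting branch of $\RunTree_c(x)$ (this mirrors the bad-branch construction in the proof of Prop.~\ref{prop:BinTreeLFP}). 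For the converse $B\subseteq N_f$ I argue by contradiction using the ordinal rank of finite-depth trees: if $x\in B$ but $f(x)\neq\botRtree$, then following the witnessing non-accepting branch and unfolding the fixed-point equation forces every $f(x^{(j)})\in\Treefin$, and since in $\treecomb(D_1,D_2,\dots)$ each $D_i$ is a proper subtree rooted at a child, $\mathrm{rank}\bigl(f(x^{(j)})\bigr)>\mathrm{rank}\bigl(f(x^{(j+1)})\bigr)$ for all $j$; this infinite strictly descending chain of ordinals is impossible.

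With $N_f=B$ fixed independently of $f$, it remains to show that any two fixed points $f_1,f_2$ agree on $X\setminus B$, where both take values in $\Treefin$. Here the pruned run tree (the run tree cut off at accepting nodes) is well-founded, so it carries an ordinal rank $\rho(x)$ with $\rho(x)=0$ when $c_2(x)=1$ and $\rho(x)=\sup_i(\rho(x_i)+1)$ when $c_2(x)=0$. A transfinite induction on $\rho(x)$ finishes it: if $c_2(x)=1$ the equation forces $f_k(x)=\empseq$; if $c_2(x)=0$ then every child $x_i$ lies in $X\setminus B$ with $\rho(x_i)<\rho(x)$, so the induction hypothesis gives $f_1(x_i)=f_2(x_i)$ and hence $f_1(x)=\treecomb\bigl(f_1(x_1),\dots\bigr)=\treecomb\bigl(f_2(x_1),\dots\bigr)=f_2(x)$. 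Combining the two steps yields $f_1=f_2$ on all of $X$. I expect the \textbf{main obstacle} to be the $B\subseteq N_f$ direction: it is the only place where (non-)well-foundedness is genuinely exploited, and it requires setting up the ordinal rank of possibly infinitely-branching finite-depth trees and verifying the strict rank drop along $\treecomb$ carefully—everything else is bookkeeping that parallels the run-tree analysis already carried out for Prop.~\ref{prop:BinTreeLFP}.
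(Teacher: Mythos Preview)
Your proposal is correct and follows the same overall architecture as the paper's proof: existence via the complete-lattice/monotonicity route of Lem.~\ref{lem:corecTree1}, then uniqueness by first pinning down the $\botRtree$-locus of any fixed point (as the set of states admitting an infinite non-accepting branch) and then showing agreement on the remaining states.

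The genuine difference is in the last step. The paper does \emph{not} introduce an ordinal rank on the pruned run tree; instead it proves, for $x$ with $f(x),f'(x)\in\Treefin$, that $w\in f(x)\Leftrightarrow w\in f'(x)$ by ordinary induction on the \emph{length} $n$ of the word $w\in\mathbb{N}^{n}$, unfolding the fixed-point equation once at each step. This is strictly more elementary than your transfinite induction on $\rho(x)$: it avoids setting up ordinal ranks for infinitely-branching trees and needs no well-foundedness of the pruned run tree at this point. Conversely, your route is more in keeping with the paper's ``corecursive algebra as well-foundedness classifier'' narrative, and the same rank machinery you build also cleanly discharges the $B\subseteq N_f$ direction (which the paper only asserts in one line without spelling out). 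So: the paper's argument is shorter and lower-tech for the tree-equality half, while yours makes the role of well-foundedness explicit and reuses one tool for both halves of uniqueness.
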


\begin{proof}
It suffices to show that the function $\Phi_{c,\rtree}:\Setsto{X}{(\Rtree)}\to\Setsto{X}{(\Rtree)}$ 
has a unique fixed point.

By Lem.~\ref{lem:corecTree1}, the set $(\Setsto{X}{(\Rtree)},\leqRtree)$ where 
$\leqRtree$ denotes the pointwise extension 
is a complete lattice.
Therefore by 
Thm.~\ref{thm:KTCC}, the function $\Phi_{c,\rtree}$, which is monotone by Lem.~\ref{lem:corecTree1},
has the least fixed point.

It remains to show that this is the unique fixed point.
Let $f,f':X\to\Rtree$ be fixed points of $\Phi_{c,\rtree}$.

Let $g:X\to\Rtree$ and assume $g(x)=\botRtree$.
Let $c(x)=((a,x_0,x_1,\ldots),t)$.
As $g$ is a fixed point of $\Phi_{c,\rtree}$, 
we have $(\rtree\circ\FtreeSigma g\circ c)(x)=\botRtree$.
Hence we have $t=0$ and that there exists $i$ such that $g(x_i)=\botRtree$.

Therefore we can inductively define a sequence $x^0x^1x^2\ldots\in X^\omega$ so that:
\begin{enumerate}
\item $g(x^i)=\botRtree$;
\item $c_2(x^i)=0$ for each $i\in\omega$; and
\item $c_1(x^i)=(a^i,x^i_1,x^i_2,\ldots)$ then $x^{i+1}=x^i_n$ for some $n$.
\end{enumerate}
%
Conversely, we can prove that existence of a sequence $x^0x^1x^2\ldots\in X^\omega$
that satisfies the conditions above implies $g(x)=\botRtree$.


Therefore we have: 
$f(x)=\botRtree$ if and only if $f'(x)=\botRtree$.


We now prove that for each $n\in\mathbb{N}$, $w\in \mathbb{N}^n$ and $x\in X$ such that $f(x)\neq\botRtree$ and $f'(x)\neq\botRtree$,
$w\in f(x)$ if and only if $w\in f'(x)$
by the induction on $n$.

For $n=0$,  by Def.~\ref{def:unlabeledtree}, we have $\empseq\in f(x)$ and $\empseq\in f'(x)$ for each $x$.

Let $n>0$ and assume that for each $w\in \mathbb{N}^n$ and $x\in X$ such that $f(x)\neq\botRtree$ and $f'(x)\neq\botRtree$, 
we have $w\in f(x)$ if and only if $w\in f'(x)$.
Let $x\in X$ and assume $c(x)=((a,x_0,x_1,\ldots),t)\in(\Sigma\times (\place)^m)\times\{0,1\}$.
For each $i\in\mathbb{N}$, we have:
\begin{align*}
&iw\in f(x) \\
&\Leftrightarrow\; iw\in\Phi_{c,\rtree}(f) & (\text{$f$ is a fixed point}) \\
&\Leftrightarrow\; i< m\;\text{and}\; w\in f(x_i) & (\text{by def.\ of $\rtree$}) \\
&\Leftrightarrow\; i< m\;\text{and}\; w\in f'(x_i) & (\text{by IH}) \\
&\Leftrightarrow\; iw\in\Phi_{c,\rtree}(f') & (\text{by def.\ of $\rtree$}) \\
&\Leftrightarrow\; iw\in f'(x)& (\text{$f'$ is a fixed point}).
\end{align*}

Hence we have $f(x)=f'(x)$ for each $x\in X$.
This concludes the proof.
\end{proof}

\begin{proof}[Proof (Prop.~\ref{prop:rankDomBinTree})]
We prove that $(\rtree,\qtree,\leqRtree)$ satisfies conditions in Def.~\ref{def:rankingdom}.

\subsubsection*{\ref{item:def:rankingdom2}}
Let $((a,D_0,D_1,\ldots),t)\in\FtreeSigma \Rtree$.
Then we have:
\begin{align*}
&(\qtree\circ \rtree)\bigl((a,D_0,D_1,\ldots),t\bigr)=1 \\
&\Leftrightarrow\; \rtree\bigl((a,D_0,D_1,\ldots),t\bigr)\neq\botRtree & (\text{by def.\ of $\qtree$}) \\
&\Leftrightarrow\; t=1\;\text{or}\;\forall i.\; D_i\neq \botRtree & (\text{by def.\ of $\rtree$}) \\
&\Leftrightarrow\; t=1\;\text{or}\;\forall i.\; \qtree(D_i)= 1 & (\text{by def.\ of $\qtree$}) \\
&\Leftrightarrow\; (\sigmatree\circ \FtreeSigma\qtree)\bigl((a,D_0,D_1,\ldots),t\bigr)=1 & (\text{by def.\ of $\sigmatree$})\,.
\end{align*}
Hence we have $\qtree\circ \rtree=\sigma\circ \FtreeSigma\qtree$.

\subsubsection*{\ref{item:def:rankingdom4}}
We have already proved in Lem.~\ref{lem:corecTree1}.

\subsubsection*{\ref{item:def:rankingdom3}}
Monotonicity and strictness are immediate from the definition of $\qtree$.
We prove that $\qtree$ is 
continuous.
Let $K\subseteq \Rtree$.
Then we have:
\begin{align*}
&\bigsqcup_{D\in K}\qtree(D)=0 \\
&\Leftrightarrow\;
\forall D\in K.\; \qtree(D)=0 \\
&\Leftrightarrow\;
K=\{\botRtree\} & (\text{by def.\ of $\qtree$}) \\
&\Leftrightarrow\; \bigsqcup K= \botRtree  \\ 
&\Leftrightarrow\; \qtree(\bigsqcup K)=0 &  (\text{by def.\ of $\qtree$})\,.
\end{align*}
Hence $\qtree$ is 
continuous.

\subsubsection*{\ref{item:def:rankingdom7}}
We have already proved in Lem.~\ref{lem:corecTree2}.
\end{proof}

\begin{myproposition}[completeness]\label{prop:treeCompleteness}
For an arbitrary $\FtreeSigma$-coalgebra $c:X\to\FtreeSigma X$, 
there exists a ranking arrow $b:X\to\Rtree$
with respect to the ranking domain  $(\rtree,\qtree,\leqRtree)$ in Prop.~\ref{prop:rankDomBinTree}
that satisfies the following.
\begin{equation*}
\qtree\circ b\;=\;\sem{\mu\sigmatree}_c
\end{equation*}

\end{myproposition}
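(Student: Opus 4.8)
The plan is to obtain this as a direct instance of the general completeness criterion (Prop.~\ref{prop:completeness}) applied to the tree-automata ranking domain $(\rtree,\qtree,\leqRtree)$ of Prop.~\ref{prop:rankDomBinTree}, in exact analogy with the PTS completeness result Prop.~\ref{prop:ptsCompleteness}. The point is that essentially all the work has already been done, and what remains is to assemble the pieces correctly.

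First I would invoke Lem.~\ref{lem:corecTree2}, which establishes that $\rtree$ is corecursive; this produces a unique coalgebra-algebra homomorphism $\uniquefp{c}_{\rtree}\colon X\to\Rtree$ satisfying $\uniquefp{c}_{\rtree}=\rtree\circ\FtreeSigma\uniquefp{c}_{\rtree}\circ c$, equivalently $\uniquefp{c}_{\rtree}=\Phi_{c,\rtree}(\uniquefp{c}_{\rtree})$. I would then take $b=\uniquefp{c}_{\rtree}$ as the witnessing arrow and observe that it is in particular a ranking arrow in the sense of Def.~\ref{def:rankingarrow}: since $\leqRtree$ is reflexive, the fixed-point equation gives $\uniquefp{c}_{\rtree}\leqRtree\Phi_{c,\rtree}(\uniquefp{c}_{\rtree})$, which is exactly the defining inequality of a ranking arrow.

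The key step is to supply the hypothesis of Prop.~\ref{prop:completeness}, namely the \emph{strict equality} $\qtree\circ\rtree=\sigmatree\circ\FtreeSigma\qtree$ rather than merely the inequality required by Cond.~\ref{item:def:rankingdom2}. This equality is precisely what was derived inside the verification of Cond.~\ref{item:def:rankingdom2} in the proof of Prop.~\ref{prop:rankDomBinTree}, where the chain of equivalences starting from $(\qtree\circ\rtree)((a,D_0,D_1,\ldots),t)=1$ closes with a biconditional and yields $\qtree\circ\rtree=\sigmatree\circ\FtreeSigma\qtree$. With this equality in hand, Prop.~\ref{prop:completeness} applies verbatim to $\uniquefp{c}_{\rtree}$ and delivers $\qtree\circ\uniquefp{c}_{\rtree}=\sem{\mu\sigmatree}_c$, which is the claimed identity.

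I do not expect a genuine obstacle: the proposition is an immediate corollary of Prop.~\ref{prop:completeness} and Prop.~\ref{prop:rankDomBinTree}. The only point demanding care is confirming that $\qtree\circ\rtree=\sigmatree\circ\FtreeSigma\qtree$ really is an equality and not just the one-directional ranking-domain inequality; this is where the substantive content sits, and it is discharged entirely by the tree-combinatorial case analysis ($t=1$ versus $t=0$, and the branching of $\treecomb$) already carried out for Cond.~\ref{item:def:rankingdom2}. Thus the proof reduces to citing Lem.~\ref{lem:corecTree2}, the equality from the Cond.~\ref{item:def:rankingdom2} computation, and Prop.~\ref{prop:completeness}.
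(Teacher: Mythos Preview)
Your proposal is correct and matches the paper's own proof essentially verbatim: the paper simply cites the equality $\qtree\circ\rtree=\sigmatree\circ\FtreeSigma\qtree$ (established in the verification of Cond.~\ref{item:def:rankingdom2} within the proof of Prop.~\ref{prop:rankDomBinTree}) and then invokes Prop.~\ref{prop:completeness}. Your additional unpacking of the role of corecursiveness and the choice $b=\uniquefp{c}_{\rtree}$ is exactly how Prop.~\ref{prop:completeness} is set up, so there is no divergence.
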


\begin{proof}
Immediate from $\qtree\circ \rtree=\sigmatree\circ \FtreeSigma\qtree$ (see the proof of Prop.~\ref{prop:rankDomBinTree}) 
and Prop.~\ref{prop:completeness}.
\end{proof}

\auxproof{
\begin{mylemma}\label{lem:unlabeling}
Let $\Sigma$ be a ranked alphabet and
$c:X\to \FtreeSigma X$ be an $\FtreeSigma$-coalgebra.
We define a ranked alphabet $X\times \Sigma$ as in Def.~\ref{def:runtree}.
For a $(X\times\Sigma)$-labeled tree $T=(D,l)$, 
there exists 
a unique unlabeled tree $\AccTree_c(T)$ that satisfies the following conditions.
\begin{itemize}
\item $\AccTree_c(T)\treeprefix D$
\item $\forall w\in \AccTree_c(T).\; c_2(l_1(w))=0\;\text{and}\;l_2(w)>0 \Rightarrow w0\in D_1$ 
\item $\forall w\in \AccTree_c(T).\; c_2(l_1(w))=1\Rightarrow w0\notin D_1$ 
\end{itemize}
Here $l_1:D\to X$ and $l_2:D\to \Sigma$ are defined as in Def.~\ref{def:runtree}.
\qed
\end{mylemma}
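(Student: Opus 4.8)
The plan is to exhibit an explicit candidate for $\AccTree_c(T)$, verify the three listed properties, and then derive uniqueness from the fact that the conditions pin down the branching degree at every node. Reading the statement---and interpreting the undefined symbol $D_1$ as the tree $\AccTree_c(T)$ being characterised---the object we want is the result of pruning each branch of $D$ immediately below its first accepting node. Concretely I would set
\[
\AccTree_c(T)\;=\;\bigl\{\,w\in D \;\mid\; c_2(l_1(w'))=0 \text{ for every } w'\treeprefix w \text{ with } w'\neq w \,\bigr\}\,,
\]
that is, the set of those $w\in D$ none of whose strict ancestors is accepting. Intuitively this retains every node up to and including the first accepting state on its branch, and discards everything strictly below such a state.

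First I would check that $\AccTree_c(T)$ is an unlabeled tree in the sense of Def.~\ref{def:unlabeledtree}: it contains $\empseq$ vacuously, while prefix-closedness and downward-closedness are inherited from $D$ together with the observation that whenever a child $wi$ lies in $\AccTree_c(T)$ the parent $w$ is necessarily non-accepting, so all of its $D$-children survive as well. Next I would verify the three required properties. For $\AccTree_c(T)\treeprefix D$ I use the prefix order of Def.~\ref{def:unlabeledtree}: at an accepting node the set contains no children, giving branching $0$; at a non-accepting node every $D$-child is kept, giving branching $\branch_D(w)$; so $\branch_{\AccTree_c(T)}(w)\in\{0,\branch_D(w)\}$ in all cases. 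The second property (a non-accepting node of positive arity keeps its children) follows because the arity-respecting condition of Def.~\ref{def:labeledtree} identifies positive arity with $\branch_D(w)>0$, so $w0\in D$ and, $w$ being non-accepting, $w0\in\AccTree_c(T)$. The third property is exactly the pruning clause built into the definition.

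For uniqueness I would show that any tree $A$ satisfying the three conditions has the same node set as $\AccTree_c(T)$, by induction on the length of $w$. The condition $A\treeprefix D$ forces each node of $A$ to be either a leaf or fully branching as in $D$; the accepting clause forces accepting nodes to be leaves, the non-accepting clause forces non-accepting nodes of positive arity to branch fully, and a non-accepting node of arity $0$ is a $D$-leaf and hence an $A$-leaf. Thus the branching degree of $A$ is determined at every reachable node, and propagating this downward from the root yields $A=\AccTree_c(T)$.

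The main obstacle I anticipate is bookkeeping rather than conceptual: reconciling the slightly garbled statement (the stray $D_1$, and the clause $l_2(w)>0$ which must be read as $|l_2(w)|>0$, i.e.\ $\branch_D(w)>0$) with the intended pruning construction, and then handling the prefix-order clause $\treeprefix$ with care, since it couples the dichotomy ``leaf or fully branching'' with the accepting/non-accepting case split. I note that no finiteness is claimed here---$\AccTree_c(T)$ need not lie in $\Treefin$---so the finite-depth axiom is not needed for this lemma; that refinement is precisely what will later tie $\AccTree_c(T)$ to acceptance of the run.
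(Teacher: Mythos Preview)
The paper gives no proof of this lemma: the statement sits inside an \verb|\auxproof| block that is suppressed in the compiled paper, and it terminates immediately with a \verb|\qed|. So there is nothing to compare against; your proposal is simply the natural argument, and it is correct.

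Your explicit candidate---the set of nodes of $D$ whose strict ancestors are all non-accepting---is exactly the intended object, and your verification plan (tree axioms, then the prefix-order condition, then the two branching clauses, then uniqueness by induction on word length) is the straightforward route. You have also correctly decoded the statement: $D_1$ must be read as the tree $\AccTree_c(T)$ under construction, and $l_2(w)>0$ as $|l_2(w)|>0$, i.e.\ $\branch_D(w)>0$. One small caution: the paper's definition $\branch_D(w)=\max\{i\mid wi\in D\}$ together with child-indexing from~$0$ (used in Def.~\ref{def:runtree} and in this lemma) makes the clause $\branch\in\{0,\branch_D(w)\}$ in the definition of $\treeprefix$ slightly ambiguous at leaves versus single-child nodes; but this is a notational wobble in the paper itself, and your argument goes through under any sane reading. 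Your closing remark that finite depth is not asserted here is also on point.
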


\begin{myproposition}\label{prop:concreteuniquefp}
We assume the conditions in Prop.~\ref{prop:rankDomBinTree}.
As $\rtree:\FtreeSigma\Rtree\to\Rtree$ is a corecursive algebra,
there  exists a unique function
$\uniquefp{c}_{\rtree}:X\to\Rtree$ such that $\uniquefp{c}_{\rtree}=\Phi_{c,\rtree}( \uniquefp{c}_{\rtree})$.
Here the unique function $\uniquefp{c}_{\rtree}$ is given as follows.
%
\begin{displaymath}
\uniquefp{c}_{\rtree}(x)\;=\; 
\begin{cases}
\AccTree_c\bigl(\RunTree_c(x)\bigr) \hspace{-5cm}& \\
&(\text{$\AccTree_c\bigl(\RunTree_c(x)\bigr)$ is finite-depth}) \\
\botRtree & (\text{otherwise})\,.
\end{cases}
\end{displaymath}
\vspace{-1.7\baselineskip}\\
\qed
\end{myproposition}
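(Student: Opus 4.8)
The plan is to exploit the uniqueness that is built into corecursiveness. Since $\rtree\colon\FtreeSigma\Rtree\to\Rtree$ is corecursive (Prop.~\ref{prop:rankDomBinTree}), the arrow $\uniquefp{c}_{\rtree}$ is the \emph{unique} fixed point of the endofunction $\Phi_{c,\rtree}$ (Def.~\ref{def:Phicsigma}). Hence it suffices to define $g\colon X\to\Rtree$ by the displayed right-hand side of the statement and to verify the single equation $g=\rtree\circ\FtreeSigma g\circ c$; then $g=\uniquefp{c}_{\rtree}$ follows for free. First I would fix $x\in X$, write $c(x)=\bigl((a,x_0,x_1,\dots),t\bigr)$ with arity $|a|=n\in\mathbb{N}\cup\{\omega\}$, and reduce the goal to the pointwise identity $g(x)=\rtree\bigl((a,g(x_0),g(x_1),\dots),t\bigr)$.

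Next I would record the recursive shape of run trees. From the defining clauses of a run tree (Def.~\ref{def:runtree}) together with the uniqueness statement Prop.~\ref{prop:uniqueTree}, the run tree $\RunTree_{c}(x)$ has its root labelled $(x,a)$, and for each child index $i<n$ the subtree of $\RunTree_{c}(x)$ hanging at $\langle i\rangle$ is exactly $\RunTree_{c}(x_i)$. I would then transfer this decomposition through $\AccTree_{c}$ using the characterising properties in Lemma~\ref{lem:unlabeling}: at an accepting root ($t=1$) the pruning rule forbids the successor $\langle 0\rangle$, so downward-closedness (Def.~\ref{def:unlabeledtree}) forces $\AccTree_{c}(\RunTree_{c}(x))=\{\empseq\}$; whereas at a non-accepting root ($t=0$) the rule keeps \emph{all} children and recurses into them, giving the set-level identity $\AccTree_{c}(\RunTree_{c}(x))=\treecomb\bigl(\AccTree_{c}(\RunTree_{c}(x_0)),\dots\bigr)$ as subsets of $\mathbb{N}^{*}$ (the empty family yielding $\{\empseq\}$ when $n=0$).

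The remaining work is to align the finite-depth case split of $\rtree$ with the definition of $g$. The key combinatorial fact is that $\treecomb(D_0,D_1,\dots)$ is finite-depth (Def.~\ref{def:unlabeledtree}) iff every $D_i$ is finite-depth: deleting its first letter turns an infinite branch of the combined tree into an infinite branch of exactly one $D_i$, and conversely. Feeding the decomposition above into this fact, $\AccTree_{c}(\RunTree_{c}(x))$ is finite-depth precisely when all $\AccTree_{c}(\RunTree_{c}(x_i))$ are, i.e.\ precisely when no $g(x_i)=\botRtree$. A case analysis on $t$ and on whether some $g(x_i)=\botRtree$ then matches $g(x)$ with $\rtree\bigl((a,g(x_0),g(x_1),\dots),t\bigr)$ clause by clause: $t=1$ gives $\{\empseq\}$ on both sides; $t=0$ with all $g(x_i)\neq\botRtree$ gives $\treecomb(g(x_0),\dots)$ on both sides; and $t=0$ with some $g(x_i)=\botRtree$ gives $\botRtree$ on both sides. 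This establishes $g=\Phi_{c,\rtree}(g)$ and hence $g=\uniquefp{c}_{\rtree}$.

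The hard part will be the second and third steps---not any single idea, but the set-theoretic bookkeeping that links the prefix-order ``cutting'' encoded in $\AccTree_{c}$ (whose branch condition, coming from $\treeprefix$ in Def.~\ref{def:unlabeledtree}, only permits retaining \emph{all} or \emph{none} of a node's children) to the concrete gluing $\treecomb$, together with the K\"onig-style passage between an infinite non-accepting branch through $x$ and one through some successor $x_i$. Care is also needed for the degenerate arities, namely $n=0$ (where $\treecomb$ of the empty family must reduce to $\{\empseq\}$) and $n=\omega$ (where $\treecomb$ combines a countably infinite family), both of which must be checked to interact correctly with the finite-depth condition.
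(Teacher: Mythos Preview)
The paper does not actually supply a proof of this proposition: it is stated with a bare \qed (and in fact sits inside an \texttt{auxproof} block that is suppressed in the compiled version), so there is no argument to compare yours against. Your plan---define $g$ by the displayed formula, verify $g=\Phi_{c,\rtree}(g)$ pointwise, and conclude $g=\uniquefp{c}_{\rtree}$ by uniqueness of the fixed point (corecursiveness of $\rtree$, Prop.~\ref{prop:rankDomBinTree})---is the natural and correct way to fill this gap.

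The substantive content is exactly where you locate it: (i) the one-step decomposition $\AccTree_c(\RunTree_c(x))=\treecomb\bigl(\AccTree_c(\RunTree_c(x_0)),\AccTree_c(\RunTree_c(x_1)),\dots\bigr)$ at a non-accepting root, which hinges on the ``all-or-none children'' clause built into $\treeprefix$ together with the uniqueness in Lemma~\ref{lem:unlabeling}; and (ii) the equivalence ``$\treecomb(D_0,D_1,\dots)$ is finite-depth iff every $D_i$ is,'' which is the bridge between the $\botRtree$-case of $\rtree$ and the $\botRtree$-case of $g$. Your treatment of the boundary arities $n=0$ and $n=\omega$ is also to the point. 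Nothing is missing.
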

}


\end{document}
